% Template for the submittion to:
%   Statistical Science                 [sts]
%
%Author: In this template, the places where you need to add information
%        (or delete line) are indicated by {???}.  Mostly the information
%        required is obvious, but some explanations are given in lines starting
%Author:
%All other lines should be ignored.  After editing, there should be
%no instances of ??? after this line.

% use option [preprint] to remove info line at bottom
\documentclass[sts,preprint]{imsart}

\input{./Definitions}
\usepackage{enumerate}
\usepackage[shortlabels]{enumitem}
\usepackage{fixltx2e}
\usepackage{graphicx}
\usepackage{longtable}
\usepackage{float}
\usepackage{wrapfig}
\usepackage{soul}
\usepackage{amsmath}
\usepackage{mathtools}
\usepackage{mathrsfs}
\usepackage{textcomp}
\usepackage{marvosym}
\usepackage{wasysym}
\usepackage{latexsym}
\usepackage{amssymb}
\usepackage[colorlinks,citecolor=blue,urlcolor=blue]{hyperref}
\usepackage{graphicx}
\usepackage{subfig}
\usepackage{longtable}
\usepackage{float}
\usepackage{array}
\usepackage[ruled]{algorithm}
\usepackage[noend]{algpseudocode}
\usepackage[table]{xcolor}
\usepackage{arydshln}
\usepackage[normalem]{ulem}
\usepackage{url}
\usepackage{comment}
\usepackage{color}
\usepackage{amsmath}
\usepackage[round, comma]{natbib}
\usepackage{graphicx,epsf}
%psfrag,
%\usepackage{url} % not crucial - just used below for the URL
\usepackage[margin=1in]{geometry}
\newtheorem{proposition}{Proposition}[section]

% put your definitions there:
%\startlocaldefs

\renewcommand{\mb}{\mathbf{m}}

\newcommand{\var}{\text{var}}

\newcommand{\cov}{\text{cov}}

\newcommand{\smax}{\text{s}_{\text{max}}}
%\endlocaldefs
\newcommand{\vertiii}[1]{{\left\vert\kern-0.25ex\left\vert\kern-0.25ex\left\vert #1
    \right\vert\kern-0.25ex\right\vert\kern-0.25ex\right\vert}}

\begin{document}

\begin{frontmatter}

% "Title of the paper"
\title{A Divide-and-Conquer Bayesian Approach to Large-Scale Kriging}
\runtitle{Distributed Kriging}

% indicate corresponding author with \corref{}
% \author{\fnms{John} \snm{Smith}\corref{}\ead[label=e1]{smith@foo.com}\thanksref{t1}}
% \thankstext{t1}{Thanks to somebody}
% \address{line 1\\ line 2\\ printead{e1}}
% \affiliation{Some University}

\author{\fnms{Rajarshi} \snm{Guhaniyogi}\thanksref{T2}\ead[label=e1]{rguhaniy@ucsc.edu}}
\address{\printead{e1}}
\affiliation{Department of Statistics, University of California, Santa Cruz}
\author{\fnms{Cheng} \snm{Li}\thanksref{T2}\ead[label=e2]{stalic@nus.edu.sg}}
\address{\printead{e2}}
\affiliation{Department of Statistics and Applied Probability, National University of Singapore}
\author{\fnms{Terrance} \snm{Savitsky}\thanksref{T2}\ead[label=e3]{savitsky.terrance@bls.gov}}
\address{\printead{e3}}
\affiliation{U.~S.~Bureau of Labor Statistics}
\author{\fnms{Sanvesh} \snm{Srivastava}\thanksref{T1} \thanksref{T2}\ead[label=e4]{sanvesh-srivastava@uiowa.edu}}
\address{\printead{e4}}
\affiliation{Department of Statistics and Actuarial Science, The University of Iowa}

\thankstext{T2}{Equally contributing authors.}
\thankstext{T1}{Corresponding author.}

\runauthor{Guhaniyogi, Li, Savitsky, Srivastava}

\begin{abstract}
We propose a three-step divide-and-conquer strategy within the Bayesian paradigm {that delivers} massive scalability for \emph{any} spatial process model. We partition the data into a large number of subsets, apply a readily available Bayesian spatial process model on every subset, in parallel, and optimally combine the posterior distributions estimated across all the subsets into a pseudo posterior distribution that conditions on the entire data. The combined pseudo posterior distribution {replaces the full data posterior distribution for} predicting the responses at arbitrary locations and for inference on the model parameters and spatial surface. {Based on distributed Bayesian inference,} our approach is called ``Distributed Kriging'' (DISK) and offers significant advantages in {massive data} applications where the full data are stored across multiple machines. {We show theoretically that the Bayes $L_2$-risk of the DISK posterior distribution achieves the near optimal convergence rate in estimating the true spatial surface with various types of covariance functions, and provide upper bounds for the number of subsets as a function of the full sample size.} The {model-free feature of DISK is demonstrated by scaling posterior computations in spatial process models with a stationary full-rank and a nonstationary low-rank Gaussian process (GP) prior}. A variety of simulations and a geostatistical analysis of the Pacific Ocean sea surface temperature data validate our theoretical results.
\end{abstract}

\begin{keyword}
\kwd{Distributed Bayesian inference}
\kwd{Gaussian process}
\kwd{low-rank Gaussian process}
\kwd{modified predictive process}
\kwd{massive spatial data}
\kwd{Wasserstein distance}
\kwd{Wasserstein barycenter}
\end{keyword}

\end{frontmatter}

\section{Introduction}\label{sec:intro}

A fundamental challenge in geostatistics is the analysis of massive spatially-referenced data. Massive spatial data provide scientists with an unprecedented opportunity to hypothesize and test complex theories, see for example \citet{Geletal10,CreWik11,Banetal14}. This has led to the development of complex and flexible hierarchical GP-based models that are computationally intractable for {a large number of spatial locations, denoted as $n$, }due to the $O(n^3)$ computational cost and the $O(n^2)$ storage cost. {We develop a three-step general distributed Bayesian approach, called Distributed Kriging (DISK), for {boosting} the scalability of any state-of-the-art spatial process model based on GP prior or its variants to multiple folds using the divide-and-conquer technique.}

{There is an extensive literature on {scalable} Gaussian process (GP)-based modeling of massive spatial data due to its great practical importance \citep{heaton2017methods}. We provide a brief overview of basic ideas, deferring detailed comparisons of the existing literature with DISK to Section \ref{disk-comp}. A common idea in GP-based modeling is to seek dimension-reduction by endowing the spatial covariance matrix either with a low-rank or a sparse structure. Low-rank structures represent {a} spatial surface using a small number of  \emph{a priori} chosen basis functions such that the posterior computations scale in the cubic order to the number of chosen basis functions {(rather than the number of spatial locations)}, resulting in reduced storage and computational costs. Sparse {structured} models assume that the spatial correlation between two distantly located observations is nearly zero. If the assumption is true, then little information is lost by assuming independence between data at distant locations. Another approach introduces sparsity in the inverse covariance matrix using conditional independence assumptions or composite likelihoods. Some variants of dimension-reduction methods partition the spatial domain into sub-regions containing fewer spatial locations. Each of these sub-regions is modeled using a GP which are then hierarchically combined by borrowing information across the sub-regions. }

{The proposed DISK framework does not belong to any of these classes of methods, but {it enhances the scalability of any of these methods by embedding each within the three-step DISK framework.  The outline of the DISK framework is as follows.}} {First, the $n$ spatial locations are divided into $k$ subsets such that each subset has representative data samples from all regions of the spatial domain with the $j$th subset containing $m_j$ data samples.} {Second, posterior computations are implemented in parallel on the $k$ subsets using any chosen spatial process model after raising the model likelihood to a power of $n/m_j$ in the $j$th subset.} The pseudo posterior distribution obtained using the modified likelihood is called the ``subset posterior distribution.'' {Since $j$th subset posterior distribution conditions on $(m_j/n)$-fraction of the full data, the modification of the likelihood by raising it to the power of $n/m_j$ ensures that variance of each subset posterior is of the same order (as a function of $n$) as that of the full data posterior distribution.} Third, the $k$ subset posterior distributions are combined into a single pseudo probability distribution, called the DISK pseudo posterior (henceforth,  DISK posterior), that conditions on the full data and replaces the computationally expensive full data posterior distribution for prediction and inference.

{Our novel contributions to the growing literature on distributed Bayesian inference are two-fold.} Computationally, the main innovations are in the second and third steps because {the literature on} general sampling and combination schemes {is sparse} in process-based modeling of spatial data using the divide-and-conquer technique. No restrictive data- or model-specific assumptions, such as the independence between data subsets or independence between blocks of parameters, are adopted and the DISK framework still allows principled Bayesian inference with parameter estimation, surface interpolation, and prediction. {Theoretically, we provide guarantees on the accuracy of performance in estimating the true spatial surface using the DISK posterior as a function of $n$, $k$, and analytic properties of the true spatial surface.
%Firstly, we devise the rates of Bayes risk contraction of the DISK posterior as a function of the subset sample size, $m$, under popularly used covariance matrix formulations. Secondly, we offer an in depth analysis of the bias-variance trade-off
We show that when $k$ is controlled to increase in some proper order of $n$ as $n$ tends to infinity, the Bayes $L_2$-risk of the DISK posterior achieves near minimax optimal convergence rates under different types of covariance functions.
There are some theoretical results in this direction \citep{ShaChe15,CheSha15,SzaVan17}, but DISK is the first general Bayesian framework addressing these theoretical problems with a focus on computationally efficient posterior computations in massive data applications with complex nonparametric models, while avoiding restrictive assumptions that limit wide applicability.}

%Additionally, this article takes an in depth look at various data partitioning schemes to assesses their impact on the inference.
% First, an upper bound for the risk of the DISK posterior is developed given we have the upper bounds for the risk of subset posterior distributions from existing results in \cite{VarZan11}. Second, as an extension of the bias-variance trade-off in divide-and-conquer kernel ridge regression \citep{Zhaetal15}, we provide the Bayesian analogue of this result for the DISK posterior.

{We illustrate the application of DISK for enhancing the scalability of two representative GP priors. One is the usual full-rank GP prior with a stationary covariance kernel and the other is a low-rank GP prior with a nonstationary covariance kernel called the modified predictive process (MPP) prior \citep{Finetal09}. The latter prior is commonly used for estimating nonstationary surfaces in large spatial data. MPP constructs a low-rank approximation of covariance matrix for the generating distribution of the spatial surface to reduce computation time, but if the rank is moderately large, then MPP struggles to provide accurate inference in a manageable time even for $10^4$ observations.} Our numerical results presented later establish that DISK with MPP prior scales to $10^6$ observations without compromising on either computational efficiency or accuracy in inference and prediction. An interesting empirical observation is that under a fixed computation budget the accuracy of MPP prior in detecting local surface features is enhanced by embedding it within the DISK framework in the sense that we are able to increase the spatial resolution.  We expect this conclusion to hold for all of the popular structured GP priors.

%\begin{figure}[bt]
%\centering
%\includegraphics[width=6cm]{example-image-rectangle}
%\caption{Although we encourage authors to send us the highest-quality figures possible, for peer-review purposes we are can accept a wide variety of formats, sizes, and resolutions. Legends should be concise but comprehensive – the figure and its legend must be understandable without reference to the text. Include definitions of any symbols used and define/explain all abbreviations and units of measurement.}
%\end{figure}

\subsection{DISK and Existing Methods for GP-Based Modeling of Massive Spatial Data}
\label{disk-comp}

{The DISK framework does not compete with existing methods for analyzing massive spatial data, but aims to boost their scalability using the divide-and-conquer technique. With this in mind, we compare DISK with existing approaches for GP-based spatial modeling based on variants of dimension-reduction technique and refer to \citet{heaton2017methods} for a more comprehensive review.} Low-rank structures on the spatial covariance matrix {are the most widely used tool for computationally efficient spatial computation.} They represent the spatial surface using $r$  \emph{apriori} chosen basis functions with associated computational complexity of $O(nr^2+r^3)$ \citep{CreJoh08,Banetal08,Finetal09,Guhetal11,Banetal10,SanHua12,wikle2010low}; however, practical considerations entail that $r$ grows roughly as $O(\sqrt{n})$ for accurate estimation, implying that $O(nr^2)$ flops are also expensive in low-rank structures. In fact, with a small $(r/n)$-ratio, scientists have observed shortcomings in many of the above methods for approximating GPs such as the propensity to oversmooth the data \citep{stein2014limitations,simpson2012order}. {DISK offers a solution to this problem. If $m_j \ll n$, then $(r/m_j)$-ratio is relatively large on the subsets, yielding accurate and computationally efficient inference using subset posteriors. Our theoretical results guarantee that the DISK posterior has better accuracy than any subset posterior, which can potentially outperform the full data posterior estimated using the same prior. Our simulations empirically confirm  this claim for the MPP prior.}

A specific form of sparse structure uses compactly supported covariance functions to create sparse spatial covariance matrices that approximate the full covariance matrix \citep{Kauetal08,Furetal12}. They are useful for parameter estimation and interpolation of the response (``kriging''), but not for more general inference on the latent processes due to an expensive determinant computation of the massive covariance matrix. An alternative approach is to introduce sparsity in the inverse covariance (precision) matrix of the GP likelihoods using products of lower dimensional conditional distributions \citep{Vec88, Rueetal09, Steetal04}, or via composite likelihoods \citep{Eidetal14, bai2012joint}.
There are recent approaches, extending these ideas, that can introduce sparsity in the inverse covariance (precision) matrix of process realizations and hence enable ``kriging'' at arbitrary locations  \citep{Datetal15, guinness2016permutation}. In related literature on computer experiments, localized approximations of GP models are proposed, see, for example,  \cite{gramacy2015local}. DISK relaxes the trade-off between computation time and the accuracy in modeling a spatial surface.  In current practice, approximation methods are used with the intent to make the computations feasible at the expense of accuracy.  Reduced rank simplifications of the covariance matrix may produce over-smoothing that limits the ability to detect local features, while sparse covariance structures may underestimate correlations.  Yet, both reduced rank and sparse covariance structures may be easily embedded in our DISK framework to dramatically scale the computations such that the degree of approximation required may be notably reduced, which we demonstrate in the sequel.

{The remaining variants of dimension-reduction methods combine the benefits of low-rank and sparse structure covariance functions.} Examples include non-stationary models \citep{Banetal14} and multi-level and multi-resolution models \citep{gelfand2007multilevel, Nycetal15, Kat16, GuhanSanso2017}. These models usually achieve scalability by assuming block-independence at some level of the hierarchy, usually across sub-regions, but may lose scalability when they borrow information across sub-regions. Multi-resolution models are in general difficult to implement,  do not generally come with desirable theoretical guarantees concerning large sample behavior, and may become less amenable to various modification to suit different applications. In contrast, DISK makes no independence assumptions across subregions to accomplish predictions at new locations on a spatial surface and can fit a multiresolution model in each subset for enhancing its scalability. \citet{lindgren2011explicit} proposed an approximation based on viewing a GP with Mat\'ern
covariance as the solution to the corresponding stochastic partial differential equation, but
this approach is only applicable to covariance functions of Mat\'ern type and may not be applicable in scaling GP with low-rank kernels.

There is a class of methods, of which DISK is a member, that divide the data into a large number of subsets, draw inference in parallel on the subsets, and combine the inferences by some mechanism that approximates the inference conditional on the full data. \citet{barbian2017spatial} propose combining point estimates of spatial parameters obtained from different subsamples, but they do not provide combined inference on the spatial processes or predictions. Similarly, \citet{heaton2017nonstationary} partition the spatial domain and assume independence between the data in different partitions. Although computationally attractive, assuming independence across subdomains may trigger loss in predictive uncertainty as demonstrated in \cite{heaton2017methods}. In a similar effort to the DISK posterior, \cite{guhaniyogi2017meta} propose drawing subset inferences and combine the posterior distributions in subsets using the idea of ``meta-posterior''. This approach has an added advantage over that of \citet{heaton2017nonstationary} in that it does not assume independence across data blocks and enables prediction with accurate characterization of uncertainty \citep{heaton2017methods}; however, it produces desirable inference \emph{only} when a stationary GP model is fitted in each subset and is not accurate in estimation of the spatial surface when nonstationary low-rank models (e.g. MPP) are fitted in each subset. This limits the applicability of the {meta-posterior}. Also, \citet{guhaniyogi2017meta} do not offer any theoretical guidance on choosing the number of subsets for optimal inference on the  spatial surface. The proposed DISK framework fills both these gaps. Our experiments also demonstrate that the DISK posterior provides accurate uncertainty quantification unlike
some of the divide-and-conquer approaches popularly used in the machine learning literature such as Consensus Monte Carlo \citep{Scoetal16}.

{Sampling algorithms are computationally inefficient in massive data settings, so this has motivated significant interest in developing general approaches to scalable Bayesian inference using the divide-and-conquer technique. The DISK framework builds on the recent works that combine the subset posterior distributions through their geometric centers, such as the mean or the median, and guarantee wide applicability under general assumptions \citep{Minetal14,Srietal15,Lietal16,Minetal17,SavSri16,Srietal17}. A major limitation of the current distributed approaches is that the theory and practice is limited to parametric models. {By contrast}, the DISK framework is tuned for accurate and computationally efficient posterior inference in nonparametric Bayesian models based on GP priors. In particular, we develop (a) a new approach to modify the likelihood for computing the subset posterior distribution of an unknown function, an infinite-dimensional parameter, (b) generalizations of existing algorithms for a full-rank and a low-rank GP prior to general MCMC samples from a subset distribution with modified likelihood, {and (c) theoretical guarantees on the convergence rate of the DISK posterior to the true function, and guidance on choosing $k$ depending on the covariance function and $n$, such that the DISK posterior maintains near minimax optimal performance as $n$ tends to infinity}.

%The modification of the likelihood is straightforward in parametric models, but it requires a careful study for GP-based models \citep{SzaVan17}.
}

% A closely related recent approach to ours \cite{Lietal16} on combining subset posterior distributions also does not go beyond random effect models in longitudinal data. Extending these approaches to a stochastic process poses immense theoretical challenges, mainly due to dealing with an infinite dimensional function. Additionally, none of these previous divide and conquer approaches provide theoretical results on the optimal choice of $k$. The current DISK framework provides nontrivial extension to the earlier theory in facilitating divide and conquer for the spatial process models, as well as adds new theory in establishing the optimal choice of $k$.

The remainder of the manuscript evolves as follows. In Section 2 we outline a Bayesian hierarchical mixed model framework that incorporates models based on both the full-rank and the low-rank GP priors. Our DISK approach will work with posterior MCMC samples from such models. Section 3 develops the framework for DISK, discusses how to compute the DISK posterior distribution, and offers theoretical insights into the DISK for general GPs and their approximations. A detailed simulation study followed by an analysis of the Pacific ocean sea surface temperature data are illustrated in Section 4 to justify the use of DISK for real data. Finally, Section 5 discusses what DISK achieves, and proposes a number of future directions to explore. Proofs of the theoretical results in Section 3 are offered in the supplementary material. It also offers additional theoretical results concerning convergence rate of the DISK posterior.

\section{Bayesian inference in GP-based spatial models}\label{Sec: Pooled_Bayesian_Spatial_Regression}

Consider the univariate spatial regression model for the data observed at location $\sbb$ in a compact domain $\Dcal$,
\begin{align}\label{parent_proc}
  y(\sbb)=\xb(\sbb)^T \betab + w(\sbb) + \epsilon(\sbb),
\end{align}
where $y(\sbb)$ and $\xb(\sbb)$ are the response and a $p\times 1$ predictor vector respectively at $\sbb$, $\betab$ is a $p\times 1$ predictor coefficient, $w(\sbb)$ is an unknown spatial function $w(\cdot)$ at $\sbb$, and {$\epsilon(\sbb)$ is the white-noise process $\epsilon(\cdot)$ at $\sbb$, which is independent of $w(\cdot)$. The Bayesian implementation of the model in  \eqref{parent_proc} customarily assumes  (a) that $\betab$ {apriori} follows N($\mub_{\betab}$, ${\boldsymbol \Sigma}_{\betab}$) and (b) that $w(\cdot)$ and $\epsilon(\cdot)$ apriori follow mean 0 GPs with covariance functions $C_{\alphab}(\sbb_1,\sbb_2)$ and $D_{\alphab}(\sbb_1,\sbb_2)$ that model  $\cov\{w(\sbb_1),w(\sbb_2)\}$ and $\cov\{\epsilon(\sbb_1), \epsilon(\sbb_2)\}$, respectively, where
$\alphab$ are the process parameters indexing the two families of covariance functions and $\sbb_1, \sbb_2 \in \Dcal$; therefore, the model parameters are $\Omegab = \{\alphab, \betab\}$.
%If $\betab=0$ in \eqref{parent_proc}, then we obtain the setup for Bayesian nonparametric regression using GP prior, with $\sbb$ as covariates and $y(\sbb)$ as the response.
The training data consists of predictors and responses observed at $n$ spatial locations, denoted as $\Scal = \{\sbb_1, \ldots, \sbb_n\}$. }

Standard Markov chain Monte Carlo (MCMC) algorithms exist for performing posterior inference on $\Omegab$ and the values of $w(\cdot)$ at a given set of locations $\Scal^* = \{\sbb_1^*, \ldots, \sbb_l^*\}$, where $\Scal^* \cap \Scal = \emptyset$, and for predicting $y(\sbb^*)$ for any $\sbb^* \in \Scal^*$ \citep{Banetal14}. Given $\Scal$, the prior assumptions on $w(\cdot)$ and $\epsilon(\cdot)$ imply that $\wb^T = \{w(\sbb_1), \ldots, w(\sbb_n)\}$ and $\epsilonb^T = \{\epsilon(\sbb_1), \ldots, \epsilon(\sbb_n)\}$   are independent and follow $N \left\{ \zero, \Cb({\alphab}) \right\}$ and $N \left\{ \zero, \Db({\alphab}) \right\}$, respectively, with the $(i,j)$th entries of $\Cb(\alphab)$ and $\Db(\alphab)$ are $C_{\alphab}(\sbb_i,\sbb_j)$ and $D_{\alphab}(\sbb_i,\sbb_j)$, respectively.
The hierarchy in \eqref{parent_proc} is completed by assuming that $ \alphab $ {apriori} follows a distribution with density $\pi(\alphab)$. The MCMC algorithm for sampling $\Omegab$, $\wb^{*T} = \{w(\sbb_1^*), \ldots, w(\sbb_l^*)\}$, and $\yb^{*T} = \{y(\sbb_1^*), \ldots, y(\sbb_l^*)\}$ cycle through the following three steps until sufficient MCMC samples are drawn post convergence:
\begin{enumerate}
\item Integrate over $\wb$ in \eqref{parent_proc} and
  \begin{enumerate}
  \item sample $\betab$ given  $\yb,\Xb,\alphab$ from $N(\mb_{\betab}, \Vb_{\betab})$, where
    \begin{align}
      \label{eq:samp-beta}
        \Vb_{\betab} = \left\{ \Xb^T \Vb(\alphab)^{-1} \Xb + \Sigmab_{\betab}^{-1}   \right\}^{-1}, \quad \mb_{\betab} = \Vb_{\betab} \left\{ \Xb^{T} \Vb(\alphab)^{-1} \yb +  \Sigmab_{\betab}^{-1} \mub_{\betab}   \right\},
    \end{align}
    where $\Xb=[\xb(\sbb_1): \cdots :\xb(\sbb_n)]^T$ is the $n\times p$ matrix of predictors, with $p < n$, and $ \Vb(\alphab) = \Cb(\alphab) + \Db(\alphab)$; and
  \item sample $\alphab$ given $\yb,\Xb,\betab$ using the Metropolis-Hastings algorithm with a normal random walk proposal.
  \end{enumerate}
\end{enumerate}
\begin{enumerate}
 \setcounter{enumi}{1}
\item Sample $\wb^*$ given $\yb,\Xb,\alphab,\betab$ from $N(\mb_{*}, \Vb_*)$, where
  \begin{align}
    \label{eq:samp-w}
    \Vb_* = \Cb_{*,*}(\alphab) - \Cb_{*}(\alphab) \Vb(\alphab)^{-1} \Cb_{*}(\alphab)^T, \quad \mb_* = \Cb_{*}(\alphab) \Vb(\alphab)^{-1} (\yb - \Xb \betab),
  \end{align}
  $\Cb_*(\alphab)$  and $\Cb_{*,*}(\alphab)$ are $l \times n$ and $l \times l$ matrices, respectively, and the  $(i,j)$th entries of $\Cb_{*,*}(\alphab)$ and $\Cb_*(\alphab)$ are $C_{\alphab}(\sbb_i^*,\sbb_j^*)$ and $C_{\alphab}(\sbb_i^*,\sbb_j)$, respectively.
\item Sample $\yb^*$ given $\alphab,\betab,\wb^*$ from $N\left\{ \Xb^* \betab + \wb^*, \Db(\alphab) \right\}$, where
  $\Xb^{*T} = [\xb(\sbb_1^*) : \cdots : \xb(\sbb_l^*)]$.
\end{enumerate}
Many Bayesian spatial models can be formulated in terms of \eqref{parent_proc} by assuming different forms of $C_{\alphab}(\sbb_1,\sbb_2)$ and $D_{\alphab}(\sbb_1,\sbb_2)$; see \cite{Banetal14} and supplementary material for details on the MCMC algorithm. Irrespective of the form of $\Db(\alphab)$, if no additional assumptions are made on the structure of $\Cb(\alphab)$, then the three steps require $O(n^3)$ flops in computation and $O(n^2)$ memory units in storage in every MCMC iteration. Spatial models with this form of posterior computations are based on a \emph{full-rank} GP prior. In practice, if $n \geq 10^4$, then posterior computations in a model based on a full-rank GP prior are infeasible due to numerical issues in matrix inversions involving an unstructured $\Cb(\alphab)$.

%There are methods that impose a sparse or low-rank structure on $\Cb(\alphab)$ to address this issue \citep{Banetal14}.
{There are methods which either impose a low-rank structure or a sparse structure on $\Cb(\alphab)$ to address this computational issue \citep{Banetal14}. Methods with a low-rank structure on $\Cb(\alphab)$ expresses $\Cb(\alphab)$ in terms of $r \ll n$ basis functions (with $r=O(\sqrt{n})$ is desirable for accurate inference), in turn inducing a \emph{low-rank} GP prior. Again,
a class of sparse structure uses compactly supported covariance functions to create $\Cb(\thetab)$ with overwhelming zero entries \citep{Kauetal08,Furetal12}, where as another variety of sparse structure imposes a Markov random field model on the joint distribution of $\yb$ \citep{Vec88, Rueetal09, Steetal04} or $\wb$ \citep{Datetal15, guinness2016permutation}. We use the MPP prior as a representative example of this broad class of computationally efficient methods.} Let $\mathcal{S}^{(0)}=\{\sbb_{1}^{(0)},...,\sbb_{r}^{(0)}\}$ be a set of $r$ locations, known as the ``knots,'' which may or may not intersect with $\Scal$. Let $\cb(\sbb, \mathcal{S}^{(0)})= \{ C_{\alphab}(\sbb,\sbb_{1}^{(0)}), \ldots,C_{\alphab}(\sbb,\sbb_{r}^{(0)}) \}^T$ be an $r\times 1$ vector and $\Cb(\mathcal{S}^{(0)})$ be an $r\times r$ matrix whose $(i,j)$th entry is $C_{\alphab}(\sbb_{i}^{(0)},\sbb_{j}^{(0)})$. Using $\cb(\sbb_{1}, \mathcal{S}^{(0)}), \ldots, \cb(\sbb_{n}, \mathcal{S}^{(0)})$ and $\Cb(\mathcal{S}^{(0)})$, define the diagonal matrix $\deltab =\diag\{\delta(\sbb_1), \ldots, \delta(\sbb_n)\}$ with
%\begin{align}
$\delta(\sbb_i) =  C_{\alphab}(\sbb_i,\sbb_i) - \cb^T(\sbb_i, \mathcal{S}^{(0)}) \Cb(\mathcal{S}^{(0)})^{-1}\cb(\sbb_i, \Scal^{(0)})$,  $i = 1, \ldots, n$.
%\label{eq:defs}
%\end{align}
Let $\one({\ab=\bb}) = 1$ if $\ab=\bb$ and 0 otherwise. Then, MPP is a GP with  covariance function
\begin{align}
  \label{eq:mpp-mdl-cov}
  \tilde C_{\alphab}(\sbb_1,\sbb_2) = \cb^T(\sbb_1, \mathcal{S}^{(0)}) \Cb(\mathcal{S}^{(0)})^{-1}\cb(\sbb_2, \mathcal{S}^{(0)}) + \delta(\sbb_1) \one({\sbb_1=\sbb_2}), \quad \sbb_1, \sbb_2 \in \Dcal,
\end{align}
where $\tilde C_{\alphab}(\sbb_1,\sbb_2)$ depends on the covariance function of the parent GP and the selected $r$  knots, which define $\Cb(\mathcal{S}^{(0)})$, $\cb^T(\sbb_1, \mathcal{S}^{(0)})$, and $\cb^T(\sbb_2, \mathcal{S}^{(0)})$. We have used a  $\, \tilde {   } \,$ in \eqref{eq:mpp-mdl-cov} to distinguish the covariance function of a low-rank GP prior from that of its parent full-rank GP. If $\tilde \Cb(\alphab)$ is a matrix with $(i,j)$th entry $\tilde C_{\alphab}(\sbb_i,\sbb_j)$, then the posterior computations using MPP, a low-rank GP prior, replace $\Cb(\alphab)$ by $\tilde \Cb(\alphab)$ in the steps 1(a), 1(b), and 2. The (low) rank $r$ structure imposed by $\Cb(\mathcal{S}^{(0)})$ implies that $\tilde \Cb(\alphab)^{-1}$ computation requires $O(nr^2)$ flops using the Woodbury formula \citep{Har97}; however, massive spatial data require that $r = O(\sqrt{n})$, leading to the computational inefficiency of low-rank methods.
The next section develops our DISK framework, which uses the divide-and-conquer technique to scale the posterior computations using full-rank and low-rank GP priors.

\section{Distributed Kriging}

\subsection{First step: partitioning of spatial locations}
\label{sec:partitioning-step}

We partition the $n$ spatial locations into $k$ subsets. The value of $k$ depends on the chosen covariance function used in the spatial model, and it is set to be large enough to ensure computationally efficient posterior computations on any subset. The default partitioning scheme is to randomly allocate the locations into $k$ {possibly overlapping} subsets (referred to as the random partitioning scheme hereon) to ensure that each subset has representative data samples from all subregions of the domain.
% we specify a technical condition later which ensures that every subset has locations from all regions of the spatial domain
% We also investigate with more sophisticated partitioning scheme and found similar inference with the default random partitioning scheme, please see Section~\ref{experiments} for more details.
% For theoretical and notational simplicity, we also assume that the $k$ subsets are non-overlapping and that every subset has $m$ spatial locations so that $n = mk$.
{Let $\Scal_j= \{\sbb_{j1}, \ldots, \sbb_{jm_j}\}$ denote the set of $m_j$ spatial locations in subset $j$ ($j=1, \ldots, k$). A spatial location can belong to multiple subsets so that  $\sum_{j=1}^{k} m_j \geq n$ but $\cup_{j=1}^k \mathcal{S}_j = \mathcal{S}$, where $\sbb_{ji} = \sbb_{i'}$ for some $\sbb_{i'} \in \Scal$ and for every $i=1, \ldots, m_j$ and $j=1, \ldots, k$. Denote the data in the $j$th partition as  $\{\yb_j, \Xb_j\}$  ($j = 1,\ldots, k$), where $\yb_j = \{y(\sbb_{j1}), \ldots, y(\sbb_{jm_j})\}^T$ is a $ m_j \times 1$ vector and $\Xb_j =
[\xb(\sbb_{j1}): \cdots :\xb(\sbb_{jm_j})]^T$ is a $m_j \times p$ matrix of predictors corresponding to the spatial locations in $\Scal_j$ with $p < m_j$.}  {In modern grid or cluster computing environments, all the machines in the network have similar computational power, so the performance of DISK is optimized  by choosing similar values of $m_1, \ldots, m_k$. }

{One can choose more sophisticated partitioning schemes than random partitioning. For example, it is possible to cluster the data based on centroid clustering \citep{knorr2000bayesian} or hierarchical clustering based on spatial
gradients \citep{anderson2014identifying,heaton2017nonstationary}, and then construct subsets such that each subsets contains representative data samples from each cluster. Detailed exploration later shows that even random partitioning leads to desirable inference in the various simulation settings and in the sea surface data example, hence inferential improvement with any other sophisticated partitioning should be marginal in these examples. {Perhaps more sophisticated blocking methods may provide further improvement in the cases where spatial locations are drawn based on specific designs; for example, sophisticated partitioning schemes have inferential benefits when a sub-domain shows substantial local behavior compared to the others \citep{GuhanSanso2017}, or sampled locations are chosen based on a specific survey design. Since they are {atypical} examples in the spatial context, we will pursue them elsewhere in greater detail.}}

%\sout{The choice of $m_j$ relies on the computational ability of $k$ servers employed for parallel computation. If all servers have similar computational power, one may accrue maximum computational benefit by choosing similar values for $m_1,...,m_k$.}

The univariate spatial regression models using either a full-rank or a low-rank GP prior for the data observed at any location $\sbb_{ji} \in \Scal_j \subset \Dcal$ is given by
\begin{align}\label{eq:mdl-j}
  y(\sbb_{ji}) = \xb(\sbb_{ji})^T \betab + w(\sbb_{ji}) + \epsilon(\sbb_{ji}), \quad i=1, \ldots, m_j.
\end{align}
Let $\wb_j^T = \{w(\sbb_{j1}), \ldots, w(\sbb_{jm_j})\}$ and $\epsilonb_j^T = \{\epsilon(\sbb_{j1}), \ldots, \epsilon(\sbb_{jm_j})\}$ be the realizations of GP $w(\cdot)$ and white-noise process $\epsilon(\cdot)$, respectively, in the $j$th subset.
{After marginalizing over $\wb_j$ in the GP-based model for the $j$th subset, the likelihood of $\Omegab = \{\alphab, \betab\}$ is given by
%\begin{align}
%  \label{eq:llk-j}
  $\ell_{j}(\Omegab) = N \{\yb_j \mid \Xb_j \betab, \Vb_j(\alphab)\}$,
%\end{align}
where $\Vb_j(\alphab) = \Cb_j(\alphab) + \Db_j(\alphab)$ and  $\Vb_j(\alphab) = \tilde \Cb_j(\alphab) + \Db_j(\alphab)$ for full-rank and low-rank GP priors, respectively, and $\Cb_j(\alphab), \tilde \Cb_j(\alphab), \Db_j(\alphab)$ are obtained by extending the definitions of $\Cb(\alphab), \tilde \Cb(\alphab), \Db(\alphab)$ to the $j$th subset. In a model based on full-rank or low-rank GP prior, the likelihood of $\wb_j$ given $\yb_j$, $\Xb_j$, and $\Omegab$  is
%\begin{align}
  %\label{eq:llk-w-j}
  $\ell_{j }(\wb_j) = N \{\yb_j - \Xb_j \betab \mid \wb_j, \Db_j(\alphab) \}.$
%\end{align}
The likelihoods in $\ell_{j}(\Omegab)$ and $\ell_{j}(\wb_j)$ are used to define the
posterior distributions for $\betab, \alphab, \wb^*$,  $\yb^*$ based on a full-rank or a low-rank GP prior in subset $j$ and are called $j$th subset posterior distributions.}

\subsection{Second step: sampling from subset posterior distributions}
\label{sec:stoch-appr}

We define subset posterior distributions by modifying the likelihoods in $\ell_{j}(\Omegab)$ and $\ell_{j}(\wb_j)$. More precisely, the density of the $j$th subset posterior distribution of $\Omegab$ is given by
\begin{align}\label{eq:fullsub-j}
\pi_{m_j} (\Omegab \mid \yb_j) &= \frac{\{\ell_{j }(\Omegab) \}^{n/m_j}\pi(\Omegab) } {\int \{\ell_{j }(\Omegab) \}^{n/m_j}\pi(\Omegab) d \Omegab},
\end{align}
where we assume that $\int \{\ell_{j }(\Omegab) \}^{n/m_j}\pi(\Omegab) d \Omegab < \infty$,
and the subscript `$m_j$' denotes that the density conditions on $m_j$ data samples in the $j$th subset. The modification of likelihood to yield the subset posterior density in \eqref{eq:fullsub-j} is called \emph{stochastic approximation} \citep{Minetal14}. Raising the likelihood to the power of $n/m_j$ is equivalent to replicating every $y(\sbb_{ji})$ $n/m_j$ times ($i=1, \ldots, m_j$), so stochastic approximation accounts for the fact that the $j$th subset posterior distribution conditions on a $(m_j/n)$-fraction of the full data and ensures that its variance is of the same order (as a function of $n$) as that of the full data posterior distribution. Unlike parametric models, stochastic approximation in spatial regression models has not been studied previously in the literature. We address this gap next.

With the proposed stochastic approximation in (\ref{eq:fullsub-j}), the full conditional densities of $j$th subset posterior distributions for prediction and inference follow from their full data counterparts. The $j$th full conditional densities of $\betab$ and $\alphab$ in the GP-based models are
\begin{align}
  \label{eq:sub-j}
  \pi_{m_j} (\betab \mid \yb_j, \alphab) &= \frac{\{\ell_{j }(\Omegab) \}^{n/m_j}\pi(\betab) } {\int \{\ell_{j }(\Omegab) \}^{n/m_j}\pi(\betab) d \betab}, \quad
  \pi_{m_j} (\alphab \mid \yb_j, \betab) = \frac{\{\ell_{j}(\Omegab) \}^{n/m_j}\pi(\alphab) } {\int \{\ell_{j }(\Omegab) \}^{n/m_j}\pi(\alphab) d \alphab},
\end{align}
where $\pi(\betab) = N(\mub_{\betab}, {\boldsymbol \Sigma}_{\betab})$, $\pi(\alphab)$ is the prior density of $\alphab$, and we assume that $\int \{\ell_{j}(\Omegab) \}^{n/m_j}\pi(\betab) d \betab$
and $\int \{\ell_{j }(\Omegab) \}^{n/m_j}\pi(\alphab) d \alphab$ respectively are finite.
The $j$th full conditional densities of $\yb^*$ and $\wb^*$ are calculated after modifying the likelihood of $\wb_j$ using stochastic approximation. Given $\yb_j$, $\Xb_j$, and $\Omegab$, straightforward calculation yields that the $j$th subset posterior predictive density of $\wb^*$ is $\pi_{m_j} (\wb^* \mid \yb_j, \Omegab) = N(\wb^* \mid \mb_{j*}, \Vb_{j*})$, with
%and $\pi_{m \mpp} (\tilde{\wb}^* \mid \yb_j, \Omegab) = N(\tilde{\wb}^* \mid \tilde \mb_{j*}, \tilde \Vb_{j*})$ respectively, where
  \begin{align}
    \label{eq:sub-w-j}
    &\Vb_{j*} = \Cb_{*,*}(\alphab) - \Cb_{*j}(\alphab) \Vb_j(\alphab)^{-1} \Cb_{*j}(\alphab)^T, \quad \mb_{j*} = \Cb_{*j}(\alphab) \Vb_j(\alphab)^{-1} (\yb_j - \Xb_j \betab),%\nonumber\\
    %\tilde \Vb_{j*} &= \tilde \Cb_{*,*} - \tilde \Cb_{*j} \tilde \Vb_j(\alphab)^{-1} \tilde \Cb_{*j}^T, \quad \tilde \mb_{j*} = \tilde \Cb_{*j} \tilde \Vb_j(\alphab)^{-1} (\yb_j -\Xb_j \betab),  \nonumber\\
    %&\qquad\qquad\qquad\qquad\qquad \Vb_j(\alphab) = \Cb_j(\alphab) + k^{-1} \Db_j(\alphab), %\quad \tilde \Vb_j(\alphab) = \tilde \Cb_{j} + k^{-1} \Db_j(\alphab),
  \end{align}
  where $\Vb_j(\alphab) = \Cb_j(\alphab) + (n/m_j)^{-1} \Db_j(\alphab)$ and $\Vb_j(\alphab) = \tilde \Cb_j(\alphab) + (n/m_j)^{-1} \Db_j(\alphab)$ for full-rank and low-rank GP priors, respectively,  and
  $\Cb_{*,*}(\alphab), \Cb_{*j}(\alphab)$ are $l \times l$, $l \times m_j$ matrices obtained by extending the definition in \eqref{eq:samp-w} to subset $j$ for full-rank and low-rank GP priors with covariance functions $C_{\alphab}(\cdot, \cdot)$ and $\tilde C_{\alphab}(\cdot, \cdot)$, respectively. We note that the stochastic approximation exponent, $n/m_j$, scales $\Db_j(\alphab)$ in $\Vb_j(\alphab)$
  so that the uncertainty in subset and full data posterior distributions are of the same order (as a function of $n$). The $j$th subset posterior predictive density of $\yb^*$ given the MCMC samples of $\wb^*$  and $\Omegab$ in the $j$th subset is
 $N\{\yb^* \mid \Xb^* \betab + \wb^*, \Db_j(\alphab)\}$.
  We employ the same three-step sampling algorithm, as earlier introduced, specialized to subset $j$  ($j = 1,\ldots,k$), sampling $\{\betab, \alphab, \yb^*, \wb^*\}$ in each subset across multiple MCMC iterations; see supplementary material for detailed derivations of subset posterior sampling algorithms in the full-rank and low-rank GP priors. The computational complexity of $j$th subset posterior computations follows from their full data counterparts if we replace $n$ by $m_j$. Specifically, the computational complexities
for sampling a subset posterior distribution are $O(m^3)$ and $O(mr^2)$ flops per iteration if the model in \eqref{eq:mdl-j} uses a full-rank or a low-rank GP prior, respectively, where $m = \max_{j} m_j$. Performing subset posterior computations in parallel across $k$ servers also alleviates the need to store large covariance matrices.

The combination of subset posteriors outlined below is more widely applicable compared to other divide-and-conquer type approaches as it is free of model- or data-specific assumptions, such as independence of samples in training data, except that every subset posterior distribution has a density and has finite second moments.

\subsection{Third step: combination of subset posterior distributions}
\label{combinescheme}

%\subsubsection{Wasserstein distance and Wasserstein barycenter}
%\subsubsection{Combination scheme}
{The combination step relies on the notion of Wasserstein barycenter, as used in some related scalable Bayes literature for independent data \citep{Lietal16,Srietal17}. We first provide some background on this topic.} Let $(\Theta, \rho)$ be a complete separable metric space and $\Pcal(\Theta)$ be the space of all probability measures on $\Theta$. The Wasserstein space of order $2$ is a set of probability distributions defined as
$\Pcal_2(\Theta) = \{\mu \in \Pcal(\Theta): \int_{\Theta} \rho^2(\theta, \theta_0) \mu(d \theta) < \infty \}$,
where $\theta_0 \in \Theta$ is arbitrary and $\Pcal_2(\Theta) $ does not depend on the choice of $\theta_0$. The Wasserstein distance of order 2, denoted as $W_2$, is a metric on $\Pcal_2(\Theta)$. Let $\mu, \nu$ be two probability measures in $\Pcal_2(\Theta)$ and $\Pi (\mu, \nu)$ be the set of all probability measures on $\Theta \times \Theta$ with marginals $\mu$ and $\nu$, then $W_2$ distance between $\mu$ and $\nu$ is defined as
$W_2(\mu, \nu) = \{\underset{\pi \in \Pi (\mu, \nu)} {\mathrm{inf}} \, \int_{\Theta \times \Theta} \rho^2(x, y) \, d \pi(x, y) \}^{1/2}.$
Let $\nu_1, \ldots, \nu_k \in \Pcal_2(\Theta)$, then the Wasserstein barycenter of $\nu_1, \ldots, \nu_k$ is defined as
\begin{align}
  \overline \nu = \underset{ \nu \in \Pcal_2(\Theta)} {\argmin} \, \frac{1}{k} \sum_{j=1}^{k} W_2^2 (\nu, \nu_j). \label{eqn:w-bary}
\end{align}
It is known that $\overline \nu$ exists and is unique \citep{AguCar11}.

%{A remarkable property of the optimization problem in \eqref{eqn:w-bary} is that $\nu_1, \ldots, \nu_k$ are allowed to be dependent. }

{
In the DISK framework, for any parameter of interest $\theta$, either a scalar or a vector, the DISK posterior is defined to be the Wasserstein barycenter of the $k$ subset posterior distributions of $\theta$. Here, $\theta$ can be taken as $\betab$, $\alphab$, $\wb^*$, $\yb^*$, their individual components, or any functionals of these parameters. In other words, for our DISK approach, $\nu_1, \ldots, \nu_k$ in \eqref{eqn:w-bary} are taken as the $k$ subset posterior distributions of $\theta$. Hence the DISK posterior, mathematically computed from the Wasserstein barycenter $\overline \nu$ in \eqref{eqn:w-bary}, provides a general notion of obtaining the mean of $k$ possibly dependent subset posterior distributions. For Bayesian inference, the exact subset posteriors of $\theta$ ($\nu_1, \ldots, \nu_k$ in \eqref{eqn:w-bary}) are analytically intractable in general, but they can be well approximated by the subset posterior MCMC samples of $\theta$, and we can conveniently estimate the empirical version of the Wasserstein barycenter $\overline \nu$ by efficiently solving a sparse linear program as described in \citep{CutDou14,Srietal15,Staetal17}. It has been shown that for independent data, the Wasserstein barycenter is a preferable choice to several other combination methods \citep{Lietal16,Srietal17}; for example, directly averaging over many subset posterior densities with different means can usually result in an undesirable multimodal pseudo posterior distribution, but the Wasserstein barycenter does not have this problem and can recover a unimodal posterior; see, for example, Figure 1 in \citet{Srietal17}. Besides, it does not rely on the asymptotic normality of
the subset posterior distributions as in other approches, such as consensus Monte Carlo \citep{Scoetal16}.

If $\theta$ represents a one-dimensional functional of interest (a functional of $\betab$, $\alphab$, $\wb^*$, or $\yb^*$), then the DISK posterior of $\theta$ can be easily obtained by averaging empirical subset posterior quantiles \citep{Lietal16}. This is because the $W_2$ distance between two univariate distributions is the same as the $L_2$ distance between their quantile functions (Lemma 8.2 of \citealt{BicFre81}). In particular, let $\nu$ and $\nu_j$ be the full posterior and $j$th subset posterior distribution of $\theta$, and $\overline \nu$ be the Wasserstein barycenter of $\nu_1, \ldots, \nu_k$ as in \eqref{eqn:w-bary}. For any $q\in (0,1)$, let $\hat {\nu}_{j}^q$ be the $q$th empirical quantile of $\nu_j$ based on the MCMC samples from $\nu_j$, and $\hat {\overline \nu}^q$ be the $q$th quantile of the empirical version of $\overline \nu$. Then, $\hat {\overline \nu}^q$ can be computed as
\begin{align}
  \label{eq:emp-quant}
  \hat {\overline \nu}^q = \frac{1}{k} \sum_{j=1}^k \hat {\nu}_{j}^q, \quad q = \xi, 2\xi, \ldots, 1 - \xi,
\end{align}
where $\xi$ is the grid-size of the quantiles \citep{Lietal16}. If the $\xi$-grid is fine enough in \eqref{eq:emp-quant}, then the parameter MCMC samples from the marginal DISK distribution are obtained by inverting the empirical distribution function supported on the quantile estimates.

In practice, the primary interest often lies in the marginal distributions of model parameters and predicted values; that is, the posterior of some one-dimensional functional $\theta$. Therefore, the univariate Wasserstein barycenter obtained by averaging quantiles in \eqref{eq:emp-quant} accomplishes this with great generality and convenient implementation. For this reason, in the following sections, we only focus on the case where $\theta$ is one-dimensional and use \eqref{eq:emp-quant} to compute the DISK posterior through its empirical quantiles. Nonetheless, we emphasize that the DISK posterior for a multivariate $\theta$ can still be efficiently computed using the sparse linear program for Wasserstein barycenters as described in \citet{CutDou14,Srietal15,Staetal17}; however,  these methods are computationally expensive and lead to only a small amount of the improvement over the univariate quantile combination in \eqref{eq:emp-quant} as revealed by our experiments.
}

%our experiments reveal that marginally, this only results in a small improvement over

A key feature of the DISK combination scheme is that given the subset posterior MCMC samples, the combination step is agnostic to the choice of a model. Specifically,  given MCMC samples from the $k$ subset posterior distributions,
\eqref{eq:emp-quant} remains the same for models based on a full-rank GP prior, a low-rank GP prior, such as MPP, or any other model described in Section \ref{disk-comp}. Since the averaging over $k$ subsets takes $O(k)$ flops and $k < n$, the total time for computing the empirical quantile estimates of the DISK posterior in inference or prediction requires $O(k) + O(m^3)$ and $O(k) + O({rm^2})$  flops in models based on full-rank and low-rank GP priors, {respectively}. Assuming that we have abundant computational resources, $k$ is chosen large enough so that $O(m^3)$ computations are feasible. This would enable applications of the DISK framework in models based on both full-rank and low-rank GP priors in massive $n$ settings.

\subsection{{Bayes $L_2$-risk of DISK: {convergence rates} and the choice of $k$}}
\label{sec:risk-fixed-n}

In the divide-and-conquer Bayesian setup, it is already known that when the data are independent and identically distributed (i.i.d.), the combined posterior distribution using the Wasserstein barycenter of subset posteriors approximates the full data posterior distribution at a near optimal parametric rate, under certain conditions as $n, k,  m_1, \ldots, m_k \rightarrow \infty$ \citep{Lietal16,Srietal17}; however, in models based on spatial process, data are not i.i.d. and inference on the infinite dimensional true spatial surface is of primary importance. Few formal theoretical results are available in this nonparametric divide-and-conquer Bayes setup. A notable exception is the recent paper \citep{SzaVan17}, which shows that combination using Wasserstein barycenter has optimal Bayes risk and adapts to the smoothness of $w_0(\cdot)$, the true but unknown $w(\cdot)$, in the Gaussian white noise model. The Gaussian white noise model is a special case of \eqref{parent_proc} with additional smoothness assumptions on $w_0(\cdot)$.

%, and reveal a new Bayesian bias-variance trade-off phenomenon that can guide the choice for the number of subsets $k$
{We investigate the theoretical properties of the DISK predictive posterior of the mean surface $\xb(\cdot)^T \betab + w(\cdot)$. For ease of presentation, we assume that $m_1 = \cdots = m_k=m$. Determining the appropriate order for $k$ in terms of $n$ is one of the key issues for all divide-and-conquer statistical methods. If $k$ is too small, then the biases in subset posterior distributions are small due to a large subset size $m$, while the overall variance of the DISK posterior is large due to too few subsets. In contrast, if $k$ is very large, then the biases in subset posterior distributions are large due to a small subset size $m$, while the variance of the DISK posterior can be small due to the large number of subsets.
Ideally, $k$ should be controlled to increase in some order of $n$, such that the bias and the variance can be balanced and the Bayes $L_2$-risk of the DISK posterior can be minimized.
}
%there exists some optimal choice of $k$

We formally explain the model setup for our theory development. Suppose that the data generation process follows the model \eqref{parent_proc} with the true parameter value $\Omegab_0=(\alphab_0,\betab_0)$ and the true spatial surface $w_0(\cdot)$. We focus on the Bayes $L_2$-risk of the DISK predictive posterior for the mean function in \eqref{parent_proc}; that is, $\xb(\sbb^*)^T \betab + w(\sbb^*)$ for any testing location $\sbb^* \in \Scal$. To ease the complexity of our theory, we first present two theorems below for the simplified model
\begin{align} \label{theory-mdl}
y(\sbb_i)= w(\sbb_i) + \epsilon(\sbb_i), \quad \epsilon(\sbb_i) \sim N(0, \tau^2), \quad i = 1, \ldots, n, \quad  w(\cdot) \sim \text{GP}\{0, C_{{\alphab}}(\cdot, \cdot) \}.
\end{align}
Compared to the spatial model \eqref{parent_proc}, the model \eqref{theory-mdl} does not contain the regression term $\xb(\sbb)^T \betab$; however, our theory includes this regression term later by modifying the covariance function; see our Corollary \ref{cor-splm} below. {Our theoretical setup is a general one that subsumes GP priors with Matern covariance kernel \citep{stein2012interpolation} and the wide class of low-rank GP priors. While $\tau^2$ and $\alphab$ are unknown in practice and are assigned prior distributions, our setup in \eqref{theory-mdl} assumes that $\tau^2$ and $\alphab$ are known because this is a common assumption in the asymptotic study of GP-based regression; see \citet{VarZan11} and the references therein. Furthermore, it is known that the theoretical results obtained by assuming $\tau^2$ and $\alphab$ as known constants are equivalent to those obtained by assigning priors with bounded supports on these parameters \citep{van2008rates,van2009adaptive}.} Generalization to unboundedly supported priors is well known to be a difficult in Bayesian GP theory, especially for $\alphab$, and only partial solutions are available \citep{van2009adaptive,yang2017supnorm}.

We introduce some definitions used in stating the results in this section. Let $\PP_{\sbb}$ be a design distribution of $\sbb$ over $\Dcal$, $L_2(\PP_{\sbb})$ be the $L_2$ space under $\PP_{\sbb}$, the inner product in $L_2(\PP_{\sbb})$ is defined as $\langle f, g\rangle_{L_2(\PP_{\sbb})} = \EE_{\PP_{\sbb}} (fg)$ for any $f, g \in L_2(\PP_{\sbb})$. {For any $f\in L_2(\PP_{\sbb})$ and $\sbb \in \Dcal$, define the linear operator $(T_{\alphab}f)(\sbb)=\int_{\Dcal} C_{\alphab}(\sbb,\sbb')f(\sbb')d\PP_{\sbb}(\sbb')$. According to the Mercer's theorem, there exists an orthonormal basis $\left\{\phi_i(\sbb)\right\}_{i=1}^\infty$ in $L_2(\PP_{\sbb})$, such that $C_{\alphab}(\sbb, \sbb') = \sum_{i=1}^{\infty} \mu_i \phi_i(\sbb) \phi_i(\sbb')$, where $\mu_1\geq \mu_2 \geq \ldots\geq 0$ are the eigenvalues and $\{\phi_i(\sbb)\}_{i=1}^\infty$ are the eigenfunctions of $T_{\alphab}$. }
The trace of the kernel $C_{\alphab}$ is defined as $\tr(C_{\alphab})=\sum_{i=1}^{\infty} \mu_i$.
Any $f\in L_2(\PP_{\sbb})$ has the series expansion $f(\sbb) = \sum_{i=1}^{\infty} \theta_i \phi_i(\sbb)$, where $\theta_i = \langle f,\phi_i\rangle_{L_2(\PP_{\sbb})} $. The reproducing kernel Hilbert space (RKHS) $\HH$ attached to $C_{\alphab}$ is the space of all functions $f \in L_2(\PP_{\sbb})$ such that the $\HH$-norm $\|f\|_{\HH} = \sum_{i=1}^{\infty} \theta_i^2 / \mu_i<\infty$. The RKHS $\HH$ is the completion of the linear space of functions defined as $\sum_{i=1}^I a_i C_{\alphab}(\sbb_i, \cdot)$, where $I$ is a positive integer, $\sbb_i \in \Dcal$, and  $ a_i \in \RR$ ($i=1, \ldots, I$); see \cite{VarZan08b} for more details on RKHS.

For theory development, we consider a random design scheme with the observed locations $\Scal=\left\{\sbb_1,\ldots,\sbb_n\right\}$ and $\Scal^*=\{\sbb^*\}$. We assume that the locations $\sbb_1, \ldots, \sbb_n, \sbb^*$ are independently drawn from the same sampling distribution $\PP_{\sbb}$. We further impose the following assumptions.

\begin{enumerate}[label=A.\arabic*]
\item \label{r1} (RKHS) The true function $w_0$ is an element of the RKHS $\HH$ attached to the kernel $C_{\alphab}$.
\item \label{r2} (Trace class kernel) $\tr(C_{\alphab})<\infty$.
\item \label{r3} (Moment condition) There are positive constants $\rho$ and, with a slight abuse of notation, $r \geq 2$ such that $\EE_{\PP_{\sbb}} \{\phi_i^{2r}(\sbb)\}  \leq \rho^{2r}$ for every $i$, and $\text{var} \left\{\epsilon(\sbb)\right\} = \tau^2 <\infty$ for any $\sbb\in \Dcal$.
\end{enumerate}
In general, the RKHS $\HH$ can be a much smaller space relative to the support of the GP prior. While we use $w_0\in \HH$ in Assumption \ref{r1} mainly for technical simplicity, this assumption can be possibly relaxed by considering sieves with increasing $\HH$-norms; {in the same vein as} Assumption B$'$ and Theorem 2 in \citet{Zhaetal15}. We expect that similar convergence rate results to our Theorems \ref{bayes-risk} and \ref{bv3cases} can be shown for much larger classes of functions than $\HH$; see the additional posterior convergence theory in Section 2 of supplementary material. In Assumption \ref{r2}, $\tr(C_{\alphab})$ measures the size of the covariance kernel and imposes conditions on the regularity of functions that DISK can learn. Assumption \ref{r3} controls the error in approximating $C_{\alphab}(\sbb, \sbb')$ by a finite sum, and the superscript $r$ here should not be confused with the number of knots in low-rank GP priors. Our results are valid for any error distribution with a finite variance, and it holds trivially for the model in \eqref{theory-mdl}.

We begin by examining the Bayes $L_2$-risk of the DISK posterior for estimating $w_0$ in \eqref{theory-mdl}. Under the setup of \eqref{theory-mdl}, let $\EE_{\sbb^*}$, $\EE_{0}$, $\EE_{\Scal}$, and $\EE_{0 \mid \Scal}$  respectively be the expectations with respect to the distributions of $\sbb^*$, $(\Scal, \yb)$, $\Scal$, and $\yb$ given $\Scal$. If $\overline w(\sbb^*)$ is a random variable that follows the DISK posterior for estimating $w_0(\sbb^*)$, then $\overline w(\sbb^*)$ has the density $N(\overline m, \overline v)$, where
\begin{align}
  \label{eq:e3}
  \overline m = \frac{1}{k} \sum_{j=1}^k \cb_{j,*}^T (\Cb_{j, j} + \tfrac{\tau^2}{k} \Ib)^{-1}\yb_j , \;
  \overline v^{1/2} = \frac{1}{k}\sum_{j=1}^k v_j^{1/2}, \;
  v_j = c_{*,*} - \cb_{j,*}^T (\Cb_{j,j} + \tfrac{\tau^2}{k} \Ib)^{-1} \cb_{j, *},
\end{align}
$c_{*,*} = \cov\{w(\sbb^*), w(\sbb^*)\}$, and $\cb_{j,*}^T = [\cov\{w(\sbb_{j1}), w(\sbb^*)\}, \ldots, \cov\{w(\sbb_{jm}), w(\sbb^*)\}]$. The Bayes $L_2$-risk of the DISK posterior
in estimating $w_0$ is $\EE_{0} \left[ \EE_{\sbb^*} \{\overline w(\sbb^*) - w_0(\sbb^*)\}^2 \right]$. {This risk can be used to quantify how quickly the DISK posterior concentrates around the unknown true surface $w_0(\cdot)$ as the total sample size $n$ increases to infinity.} It is straightforward to show (see the proof of Theorem \ref{bayes-risk} in the Supplementary Material) that this Bayes $L_2$-risk can be decomposed into the squared bias, the variance of subset posterior means, and the variance of DISK posterior terms as
{\begin{align}
&\text{bias}^2 =  \EE_{\sbb^*} \EE_{\Scal} \{  \cb_{*}^T(k \Lb + \tau^2 \Ib)^{-1}\wb_0 - w_0(\sbb^*)\}^2, \quad \nonumber \\
&\text{var}_{\text{mean}} =  \tau^2 \EE_{\sbb^*} \EE_{\Scal} \left\{\cb^T_* (k \Lb + \tau^2 \Ib)^{-2} \cb_* \right\}, \quad \nonumber \\
&\text{var}_{\text{DISK}} =  \EE_{\sbb^*}\EE_{\Scal}  (\overline v(\sbb^*)), \label{asymp-mse}
\end{align}}
where  $\overline v(\sbb^*) = \EE_{0 \mid \Scal} \left[ \var \{\overline w(\sbb^*) \mid \yb \} \right]$, $\cb^T_* = (\cb^T_{1,*}, \ldots, \cb^T_{k,*})$,  $\wb_{0j} = \{w_0(\sbb_{j1}), \ldots, w_0(\sbb_{jk})\}$ ($j=1, \ldots, k$), $\wb_0^T = (\wb_{01}, \ldots, \wb_{0k})$, and $\Lb$ is a block-diagonal matrix with $\Cb_{1,1}, \ldots, \Cb_{k,k}$ along the diagonal. The next theorem provides theoretical upper bounds for each of the three terms in \eqref{asymp-mse}.
\begin{theorem}\label{bayes-risk}
If Assumptions \ref{r1}--\ref{r3} hold, then
\begin{align}
  \label{eq:bayesrisk-ineq}
  \text{Bayes } L_2\text{ risk} &= \EE_{\Scal} \EE_{0|\Scal} \EE_{\sbb^*} \{\overline w(\sbb_*) - w_0(\sbb_*)\}^2 =  \text{bias}^2 + \text{var}_{\text{mean}} + \text{var}_{\text{DISK}} ,\nonumber \\
  \text{bias}^2 & \leq \frac{8\tau^2}{n} \| w_0 \|_{\HH}^2 + \| w_0 \|_{\HH}^2 \; \underset{d \in \NN}{\inf} \, \left[ \frac{8n}{\tau^2} \rho^4 \tr(C_{\alphab}) \tr(C_{\alphab}^{d}) + \tr(C_{\alphab}) R(m,n,d,r) \right],\nonumber \\
  \text{var}_{\text{mean}} &\leq \frac{2 n + 4 \| w_0 \|^2_{\HH} }{k} \inf_{d\in \NN} \left[\left\{ \frac{12n}{\tau^2} \rho^4 \tr(C_{\alphab}) +1\right\}\tr(C_{\alphab}^d)  + R(m,n,d,r) \right]  \nonumber \\
  &~~~ +
                  12 \frac{\tau^2}{n} \gamma \left( \frac{\tau^2}{n} \right) +   \frac{12}{k} \frac{\tau^2}{n} \| w_0 \|_{\HH}^2 ,  \nonumber \\
  \text{var}_{\text{DISK}} &\leq 3 \frac{\tau^2}{n} \gamma \left( \frac{\tau^2}{n} \right) + \underset{d \in \NN}{\inf} \,\left[ \left\{ \frac{4n}{\tau^2}\tr(C_{\alphab}) + 1 \right\} \tr(C_{\alphab}^d) + \tr(C_{\alphab}) R(m,n,d,r) \right],
\end{align}
where $\NN$ is the set of all positive integers, $A$ is a global positive constant that does not depend on any of the quantities here, and
\begin{align*}
& b(m, d, r) = \max \left( \sqrt{\max(r, \log d)}, \frac{\max(r,\log d)}{m^{1/2 - 1/r}} \right), \\
& R(m,n,d,r)= \left\{  \frac{A\rho^2  b(m, d, r) \gamma(\tfrac{\tau^2}{n})}{\sqrt{m}} \right\}^r,\\
& \gamma(a) = \sum_{i=1}^{\infty} \frac{\mu_i}{\mu_i+a} \text{ for any } a>0, \quad \tr(C^d_{\alphab}) = \sum_{i=d+1}^{\infty} \mu_i.
\end{align*}
\end{theorem}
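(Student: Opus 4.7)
The plan is to exploit the Gaussianity of $\overline{w}(\sbb^*)\mid\yb\sim N(\overline{m},\overline{v})$, which reduces the Bayes $L_{2}$-risk to a bias--variance analysis of each subset kernel-ridge-regression estimator followed by aggregation across the $k$ subsets. My first step is to verify the three-way decomposition in \eqref{asymp-mse}. By the tower property of conditional expectation,
\[
\EE_{0\mid\Scal}\{\overline{w}(\sbb^*)-w_{0}(\sbb^*)\}^{2}=\{\EE_{0\mid\Scal}\overline{m}-w_{0}(\sbb^*)\}^{2}+\var(\overline{m}\mid\Scal)+\overline{v}.
\]
Substituting $\yb_{j}=\wb_{0j}+\epsilonb_{j}$ into \eqref{eq:e3} and using the block-diagonal structure of $\Lb$ to rewrite $\overline{m}=\cb_{*}^{T}(k\Lb+\tau^{2}\Ib)^{-1}\yb$ in compact form, taking $\EE_{\sbb^*}\EE_{\Scal}$ of the three resulting pieces reproduces exactly $\text{bias}^{2}$, $\text{var}_{\text{mean}}$, and $\text{var}_{\text{DISK}}$.

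The core analytic tools are spectral truncation combined with matrix concentration. I would invoke Mercer's theorem to expand $C_{\alphab}=\sum_{i=1}^{\infty}\mu_{i}\phi_{i}\otimes\phi_{i}$, splitting the spectrum into a head of size $d$ and a tail of trace $\tr(C_{\alphab}^{d})$. Writing $\Cb_{j,j}=\Psi_{d,j}\Lambda_{d}\Psi_{d,j}^{T}+E_{d,j}$, where $\Psi_{d,j}\in\RR^{m\times d}$ collects the leading eigenfunctions evaluated on subset $j$ and $\Lambda_{d}=\diag(\mu_{1},\ldots,\mu_{d})$, the empirical Gram matrix $\hat{\Xi}_{j}=m^{-1}\Psi_{d,j}^{T}\Psi_{d,j}$ concentrates on $\Ib_{d}$; under the $2r$-moment Assumption \ref{r3}, a matrix Rosenthal inequality yields $\EE\|\hat{\Xi}_{j}-\Ib_{d}\|_{\mathrm{op}}^{r}\lesssim\{\rho^{2}b(m,d,r)/\sqrt{m}\}^{r}$. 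This is the source of every $R(m,n,d,r)$ remainder and also explains the piecewise form of $b(m,d,r)$ (Bernstein regime when $\log d\lesssim m^{1-2/r}$, moment regime otherwise). To transfer the concentration to the three displayed terms, I would use the resolvent identity $(A+\lambda\Ib)^{-1}-(B+\lambda\Ib)^{-1}=(A+\lambda\Ib)^{-1}(B-A)(B+\lambda\Ib)^{-1}$ with $\lambda=\tau^{2}/k$, passing from the random $(\Cb_{j,j}+(\tau^{2}/k)\Ib)^{-1}$ to its population analogue built from $T_{\alphab}$. The crucial observation is that the subset-level regularization $\tau^{2}/k$ equals $m\cdot(\tau^{2}/n)$ per sample, so the effective-dimension function $\gamma(\tau^{2}/n)=\tr\{T_{\alphab}(T_{\alphab}+(\tau^{2}/n)I)^{-1}\}$ emerges at the \emph{full-data} scale after aggregation rather than at the subset scale.

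The three bounds then fall out as follows. For $\text{bias}^{2}$, Assumption \ref{r1} expands $w_{0}=\sum_{i}\theta_{i}\phi_{i}$ with $\sum_{i}\theta_{i}^{2}/\mu_{i}=\|w_{0}\|_{\HH}^{2}$; a standard oracle-inequality argument on the projected target recovers the $8\tau^{2}\|w_{0}\|_{\HH}^{2}/n$ leading term, while the resolvent identity contributes the $\rho^{4}(n/\tau^{2})\tr(C_{\alphab})\tr(C_{\alphab}^{d})$ perturbation piece. For $\text{var}_{\text{mean}}$, the noise-dependent part of $\overline{m}$ decomposes as $k^{-1}\sum_{j}$ of independent subset-level linear functionals of $\epsilonb_{j}$, producing the explicit $1/k$ prefactor on the leading term; the residual $(\tau^{2}/n)\gamma(\tau^{2}/n)$ summand then arises from replacing the random subset resolvents by their population analogues. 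For $\text{var}_{\text{DISK}}$, the averaged predictive variance is $k^{-1}\sum_{j}\{c_{*,*}-\cb_{j,*}^{T}(\Cb_{j,j}+(\tau^{2}/k)\Ib)^{-1}\cb_{j,*}\}$, which a standard kernel-ridge identity combined with the same substitution reduces to $(\tau^{2}/n)\gamma(\tau^{2}/n)$ up to the truncation remainder. The main obstacle I anticipate is the uniform-in-$j$ concentration together with the delicate rescaling of regularization: because each subset has only $m=n/k$ points and the eigenfunctions are controlled only in an $L_{2r}$-moment sense rather than uniformly, the matrix Rosenthal bound must be applied carefully and then transported through the nonlinear inverse $(\Cb_{j,j}+(\tau^{2}/k)\Ib)^{-1}$ without losing the cancellations that ultimately produce $\gamma(\tau^{2}/n)$ rather than $\gamma(\tau^{2}/m)$ in the leading terms.
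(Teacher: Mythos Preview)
Your proposal is correct and follows essentially the same route as the paper: the three-term decomposition, spectral truncation at level $d$, concentration of the subset empirical Gram matrices $m^{-1}\Psi_{d,j}^{T}\Psi_{d,j}$ around $\Ib_{d}$ under the $2r$-moment assumption (this is exactly Lemma~10 of \citet{Zhaetal15}, which the paper invokes), and the rescaling $\tau^{2}/k=m\cdot\tau^{2}/n$ that produces $\gamma(\tau^{2}/n)$ at the full-data scale. Two technical points differ. First, rather than the resolvent identity, the paper works on the good event $\Ecal_{1}=\{\vertiii{\Qb^{-1}(m^{-1}\Phib^{jT}\Phib^{j}-\Ib)\Qb^{-1}}\le 1/2\}$ with $\Qb=(\Ib+\tfrac{\tau^{2}}{n}\Mb^{-1})^{1/2}$, where the relevant inverse is controlled by a factor $2$, and handles $\Ecal_{1}^{c}$ crudely via $\PP(\Ecal_{1}^{c})\le R(m,n,d,r)$ together with the a~priori bound $\|\EE_{0\mid\Scal}\Delta_{j}\|_{\HH}\le\|w_{0}\|_{\HH}$ from the KRR optimality condition; this crude step on $\Ecal_{1}^{c}$ is what produces the factors $\mu_{1}\le\tr(C_{\alphab})$ and $2km+4\|w_{0}\|_{\HH}^{2}$ multiplying $R(m,n,d,r)$ in the three bounds. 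Second, your treatment of $\text{var}_{\text{DISK}}$ via a ``standard kernel-ridge identity'' undersells what is actually needed: after the head/tail split, the paper rewrites the head contribution by Woodbury as $\tr(\Bb)$ with $\Bb=\{\Mb^{-1}+\Phib^{jT}(\Cb_{j,j}^{\uparrow}+\tfrac{\tau^{2}}{k}\Ib)^{-1}\Phib^{j}\}^{-1}$, where $\Cb_{j,j}^{\uparrow}$ is the tail-kernel piece of $\Cb_{j,j}$, and then introduces a \emph{second} good event $\Ecal_{2}=\{\tfrac{k}{\tau^{2}}\Cb_{j,j}^{\uparrow}\preceq\tfrac{1}{4}\Ib\}$ to neutralize the tail-kernel perturbation inside the inverse; Markov's inequality on $\tr(\Cb_{j,j}^{\uparrow})$ gives $\PP(\Ecal_{2}^{c})\le (4n/\tau^{2})\tr(C_{\alphab}^{d})$, which is the origin of the $(4n/\tau^{2})\tr(C_{\alphab})\tr(C_{\alphab}^{d})$ term in the $\text{var}_{\text{DISK}}$ bound.
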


%In Theorem \ref{bayes-risk}, we consider the divide-and-conquer scheme in the Bayesian context, and our risk bounds involve two different variance terms. with an important difference.

These bounds are similar to the bounds obtained in Theorem 1 of \citet{Zhaetal15} for the frequentist divide-and-conquer estimator in kernel ridge regression. Although the upper bounds in \eqref{eq:bayesrisk-ineq} appear very complicated and involve many terms, the dominant term among them is $\frac{\tau^2}{n} \gamma \left( \frac{\tau^2}{n} \right)$, where the function $\gamma(\cdot)$ is related to the ``effective dimensionality" of the covariance kernel $C_{\alphab}$ \citep{Zha05}. This term determines how fast the Bayes $L_2$-risk converges to zero, as long as $k$ is chosen to be some proper order of $n$ such that all the other terms in the upper bounds of \eqref{eq:bayesrisk-ineq} can be made negligible compared to $\frac{\tau^2}{n} \gamma \left( \frac{\tau^2}{n} \right)$.

In contrast to the frequentist literature such as \citet{Zhaetal15}, a significant difference in our Theorem \ref{bayes-risk} is that our risk bounds involve two different variance terms. While our analysis naturally introduces the variance term $\text{var}_{\text{DISK}}$ that corresponds to the variance of the DISK posterior distribution, any frequentist kernel regression method only finds a point estimate of $w_0$ and thus does not include this variance term.  As a by-product of the proof of Theorem \ref{bayes-risk}, the upper bound for $\text{var}_{\text{DISK}}$ can be used to show that the integrated predictive variance of GP decreases to zero as the subset sample size $m\to\infty$ for various types of covariance kernels. A closely related work by \citet{GraGar15} studies the asymptotic behavior for the mean squared error of GP, but unrealistically assumes that the error variance increases with the sample size, which prevents their predictive variance of GP from converging to zero.

%From a practical perspective, our analysis shows that DISK's performance, unlike frequentist kernel ridge regressions, does not depend on choosing a regularization parameter of appropriate order.

%With an emphasis on practice, our interest lies in developing results that provide guidelines for choosing $k$ as a function of $n$ to obtain desirable inference. denoted as $k^*$, to

Each of the three upper bounds in Theorem \ref{bayes-risk} can be made close to zero as $n$ increases to $\infty$ and $k$ is chosen to appropriately depend on $n$. {The next theorem finds the appropriate order for $k$ in terms of $n$, such that the DISK posterior achieves nearly minimax optimal rates in its Bayes $L_2$-risk \eqref{eq:bayesrisk-ineq}, for three types of commonly used covariance kernels, (i) degenerate covariance kernels, (ii) covariance kernels with exponentially decaying eigenvalues, and (iii) covariance kernels with polynomially decaying eigenvalues. }
The covariance kernel $C_{\alphab}$ is a degenerate kernel of rank $d^*$ if there is some constant positive integer $d^*$ such that $\mu_1\geq \mu_2 \geq \ldots \geq \mu_{d^*}>0$ and $\mu_{d^*+1}=\mu_{d^*+2}=\ldots=\mu_{\infty}=0$. The covariance kernels in subset of regressors approximation \citep{QuiRas05} and predictive process \citep{Banetal08} are degenerate with their ranks equaling the number of ``inducing points'' and knots, respectively. The squared exponential kernel is very popular in machine learning. Its RKHS belongs to the class of RKHSs of kernels with exponentially decaying eigenvalues. Similarly, the class of RKHSs of kernels with polynomially decaying eigenvalues includes the Sobolev spaces with different orders of smoothness and the RKHS of the Mat\'ern kernel. {This kernel is most relevant for spatial applications, but we provide the other two results for a more general audience.}
\begin{theorem}\label{bv3cases}
If Assumptions \ref{r1}--\ref{r3} hold and $r>4$ in Assumption \ref{r3}, then, as $n \rightarrow \infty$,
\begin{itemize}
\item[(i)] if $C_{\alphab}$ is a degenerate kernel of rank $d^*$ and $k \leq c n^{\frac{r-4}{r-2}}/(\log n)^{\frac{2r}{r - 2}}$ for some constant $c>0$, then the Bayes $L_2$-risk of DISK posterior satisfies $\EE_{\sbb^*} \EE_{\Scal} \EE_{0|\Scal} \{\overline w(\sbb^*) - w_0(\sbb^*)\}^2 = O\left( n^{-1}\right)$;
\item[(ii)] if $\mu_i \leq c_{1\mu} \exp\left(-c_{2\mu} i^\kappa \right)$ for some constants $c_{1\mu}>0, c_{2\mu}>0, \kappa>0$ and all $i\in \NN$, and for some constant $c>0$, $k\leq c n^{\frac{r-4}{r-2}}/(\log n)^{\frac{2(r\kappa+r-1)}{\kappa(r-2)}}$, then the Bayes $L_2$-risk of DISK posterior satisfies $\EE_{\sbb^*} \EE_{\Scal} \EE_{0|\Scal} \{\overline w(\sbb^*) - w_0(\sbb^*)\}^2 = O\left\{ (\log n)^{1/\kappa} /n \right\}$; and
\item[(iii)] if $\mu_i \leq c_{\mu} i^{-2\nu}$ for some constants $c_{\mu}>0,\nu>\tfrac{r-1}{r-4}$ and all $i\in \NN$, and for some constant $c>0$, $k\leq c n^{\frac{(r-4)\nu-(r-1)}{(r-2)\nu}} / (\log n)^{\frac{2r}{r-2}}$, then the Bayes $L_2$-risk of DISK posterior satisfies $\EE_{\sbb^*} \EE_{\Scal} \EE_{0|\Scal} \{\overline w(\sbb^*) - w_0(\sbb^*)\}^2= O\left( n^{-\frac{2\nu-1}{2\nu}}\right)$.
\end{itemize}
\end{theorem}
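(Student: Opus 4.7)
The plan is to substitute the assumed eigenvalue decay rates for each of the three kernel classes into the upper bounds of Theorem \ref{bayes-risk}, and then to choose the truncation index $d$ and the upper bound on $k$ so that every summand in those bounds is at most of the order of the leading ``effective dimension'' term $\tfrac{\tau^2}{n}\gamma(\tfrac{\tau^2}{n})$.

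First I would evaluate $\gamma(\tau^2/n)$ and $\tr(C^d_{\alphab})$ with an appropriate $d$ in each case. In case (i), taking $d=d^*$ makes $\tr(C^d_{\alphab})=0$ exactly and $\gamma(\tau^2/n)\leq d^*$, yielding the target rate $\tfrac{\tau^2}{n}\gamma(\tfrac{\tau^2}{n})=O(n^{-1})$. In case (ii), choosing $d\asymp (\log n)^{1/\kappa}$ so that $\mu_d$ is of order $1/n$ gives $\gamma(\tau^2/n)\asymp (\log n)^{1/\kappa}$ and $n\,\tr(C^d_{\alphab})=O(1)$, producing $(\log n)^{1/\kappa}/n$. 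In case (iii), the natural choice $d\asymp n^{1/(2\nu)}$ produces $\gamma(\tau^2/n)\asymp n^{1/(2\nu)}$ and $n\,\tr(C^d_{\alphab})\asymp n^{1/(2\nu)}$, so the leading term matches the claimed $n^{-(2\nu-1)/(2\nu)}$.

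Next I would verify that each of the remaining additive pieces in the bounds of Theorem \ref{bayes-risk} becomes lower order under the stated restriction on $k$. Since Assumption \ref{r2} forces $\tr(C_{\alphab})$ to be a finite constant, the factor $n\rho^4\tr(C_{\alphab})\tr(C^d_{\alphab})$ inside the infima reduces to $O(n\,\tr(C^d_{\alphab}))$, which is $O(1)$ or smaller by the truncation choice; in $\text{var}_{\text{mean}}$ it is divided by $k$, so growth of $k$ controls that subset-mean variance. The remainder $R(m,n,d,r)$ is at most a constant times $\{b(m,d,r)\gamma(\tau^2/n)/\sqrt{m}\}^r$, and with $r>4$ and $m$ large one has $b(m,d,r)\lesssim \sqrt{\log d}$, so $R(m,n,d,r)$ is of order $m^{-r/2} (\log n)^{r/2} \gamma(\tau^2/n)^r$.

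The crucial step is the balance: $k$ must be big enough that $k^{-1}\{n\,\tr(C^d_{\alphab})+\gamma(\tau^2/n)\}$ is of smaller order than the target rate, but also small enough that $m=n/k$ is big enough to make $m^{-r/2}(\log n)^{r/2}\gamma(\tau^2/n)^r$ negligible relative to the target rate. In case (i), with $\gamma=O(1)$, this forces $m^{-r/2}(\log n)^{r/2}=o(n^{-1})$, giving $k\lesssim n^{(r-4)/(r-2)}/(\log n)^{2r/(r-2)}$. Parallel calculations in (ii) and (iii), carrying the extra logarithmic penalty $(\log n)^{1/\kappa}$ from $\gamma$ in (ii) and the polynomial penalty $n^{1/(2\nu)}$ in (iii), produce the thresholds on $k$ stated in the theorem. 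Once these thresholds are verified, summing $\text{bias}^2$, $\text{var}_{\text{mean}}$, and $\text{var}_{\text{DISK}}$ from Theorem \ref{bayes-risk} gives the three announced rates.

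The main obstacle I expect is exactly this balancing step: the remainder $R(m,n,d,r)$ enters with the $r$th power of $\gamma(\tau^2/n)$, which is what makes case (iii) require the smoothness condition $\nu>(r-1)/(r-4)$ and produces the somewhat awkward exponents $(r-4)/(r-2)$ and $\{(r-4)\nu-(r-1)\}/\{(r-2)\nu\}$ in the admissible $k$. Tracking how $\gamma$, $\tr(C^d_{\alphab})$, and the polylog factors hidden in $b(m,d,r)$ interact when equating the two competing constraints on $k$ is where care is needed; all the rest is a direct substitution into Theorem \ref{bayes-risk}.
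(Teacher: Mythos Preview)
Your overall strategy---plug the eigenvalue decay into Theorem~\ref{bayes-risk} and balance terms---is the right one and is what the paper does. The gap is in your choice of the truncation level $d$ in cases (ii) and (iii).

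You pick $d$ at the scale where $\mu_d\asymp \tau^2/n$, i.e.\ $d\asymp(\log n)^{1/\kappa}$ or $d\asymp n^{1/(2\nu)}$. That is the right scale for \emph{bounding} $\gamma(\tau^2/n)$, but $\gamma(\tau^2/n)$ does not depend on the truncation parameter $d$ in the $\inf_d$ of Theorem~\ref{bayes-risk}; these are two separate objects. With your $d$, the additive term $n\,\tr(C_{\alphab}^d)$ that sits inside $\text{bias}^2$ and $\text{var}_{\text{DISK}}$ (with no $1/k$ in front) is only $O(1)$ in case (ii) and actually grows like $n^{1/(2\nu)}$ in case (iii)---you note the latter yourself. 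Since these terms are not divided by $k$, no constraint on $k$ can fix them, and your risk bound does not reach the announced rate; in case (iii) it even diverges. The same problem hits the $(n/k)\cdot n\,\tr(C_{\alphab}^d)$ piece of $\text{var}_{\text{mean}}$: with your $d$ in case (iii) this forces $k\gtrsim n^2$, which is impossible.

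The repair, and what the paper does, is to choose $d$ much larger---$d=n^2$ for exponentially decaying eigenvalues and $d\asymp n^{3/(2\nu-1)}$ for polynomial decay---so that $n\,\tr(C_{\alphab}^d)=O(n^{-3})$ or so and becomes negligible. The only price of inflating $d$ is that $b(m,d,r)\lesssim \log d\asymp \log n$, and it is this factor, fed into $R(m,n,d,r)$, that produces the $(\log n)^{2r/(r-2)}$-type powers in the stated thresholds for $k$. Your case (i) argument with $d=d^*$ is fine (and in fact gives a slightly cleaner bound on $k$ than the paper's choice $d=n$).
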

The rate of decay of the $L_2$-risks in (i) and (ii) with $\kappa=2$ are known to be minimax optimal \citep{Rasetal12,Yanetal17a}, whereas the rate of decay of the $L_2$-risk in (iii) is slightly larger than the minimax optimal rate by a factor of $n^{\frac{1}{2 \nu (2 \nu + 1)}}$. Since $\Dcal$ is compact in all spatial applications, $r$ in Assumption \ref{r3} can be chosen as large as possible. Letting $r\rightarrow \infty$, the upper bounds on $k$ in (i), (ii), and (iii) reduce to $k = O\{n / (\log n)^{2} \} $, $k = O\{n / (\log n)^{2/\kappa}\} $, and $k= O\{n^{\frac{\nu - 1}{\nu}} / (\log n)^2\}$, respectively.

%The main advantage of the DISK posterior over its non-Bayesian counterparts is that it achieves optimal performance in the first two cases and a near optimal performance in the third case while being free of tuning parameter selection.

Now we generalize the results in Theorems \ref{bayes-risk} and \ref{bv3cases} to the model \eqref{parent_proc}. Besides \ref{r1}--\ref{r3}, we further make the following assumption on $\xb(\cdot)$ and the prior on $\betab$:
\begin{enumerate}[label=B.\arabic*]
\item \label{sp} All $p$ components of $\xb(\cdot)$ are non-random functions in $\Scal$. The prior on $\betab$ is $N(\mub_{\betab}, \Sigmab_{\betab})$ and it is independent of the prior on $w(\cdot)$, which is $\text{GP}\{0,C_{\alphab}(\cdot,\cdot)\}$.
\end{enumerate}
By the normality and joint independence in Assumption \ref{sp}, it is straightforward to show that the mean function $\xb(\sbb)^T \betab + w(\sbb)$ has a GP prior $\text{GP}\{\xb(\cdot)^T \mub_{\betab}, \check C_{\alphab}(\cdot,\cdot)\}$, where the modified covariance kernel $\check C_{\alphab}$ is given by
\begin{align} \label{modified-kernel}
\check C_{\alphab}(\sbb_1, \sbb_2) &= \cov \left\{\xb(\sbb_1)^T \betab + w(\sbb_1), ~ \xb(\sbb_2)^T \betab + w(\sbb_2) \right\} \nonumber \\
&= \xb (\sbb_1)^T \Sigmab_{\betab}  \xb (\sbb_2) + C_{\alphab}(\sbb_1, \sbb_2),
\end{align}
for any $\sbb_1,\sbb_2\in \Scal$. With this modified covariance kernel, we have the following corollary:
\begin{corollary}\label{cor-splm}
If Assumption \ref{sp} holds, and Assumptions \ref{r1}--\ref{r3} also hold with all $C_{\alphab}$ replaced by $\check C_{\alphab}$ defined in \eqref{modified-kernel}, then the conclusions of Theorems \ref{bayes-risk} and \ref{bv3cases} hold for the Bayes $L_2$-risk of the mean surface $\xb(\cdot)^T \betab + w(\cdot)$ in the model \eqref{parent_proc}.
\end{corollary}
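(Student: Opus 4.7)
The plan is to reduce model \eqref{parent_proc} to an instance of the simplified model \eqref{theory-mdl} by absorbing the linear regression into a single reparameterized Gaussian process, then invoking Theorems \ref{bayes-risk} and \ref{bv3cases} directly with the modified kernel $\check C_{\alphab}$.

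First, I would exploit Assumption \ref{sp}. Because $\betab \sim N(\mub_{\betab}, \Sigmab_{\betab})$ is independent of the zero-mean GP prior on $w(\cdot)$, the aggregated mean surface $\tilde w(\sbb) := \xb(\sbb)^T \betab + w(\sbb)$ is itself Gaussian with prior mean $\xb(\cdot)^T \mub_{\betab}$ and covariance kernel exactly $\check C_{\alphab}$ from \eqref{modified-kernel}. The data model thus rewrites as $y(\sbb_i) = \tilde w(\sbb_i) + \epsilon(\sbb_i)$, which matches \eqref{theory-mdl} except for the nonzero prior mean. Second, I would center the problem by the deterministic function $\xb(\cdot)^T \mub_{\betab}$. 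Setting $\tilde y(\sbb) = y(\sbb) - \xb(\sbb)^T \mub_{\betab}$, $\tilde w_c(\sbb) = \tilde w(\sbb) - \xb(\sbb)^T \mub_{\betab}$, and $\tilde w_{0c}(\sbb) = \xb(\sbb)^T(\betab_0 - \mub_{\betab}) + w_0(\sbb)$, the centered model $\tilde y(\sbb_i) = \tilde w_c(\sbb_i) + \epsilon(\sbb_i)$ with $\tilde w_c \sim \gp\{0, \check C_{\alphab}\}$ and true signal $\tilde w_{0c}$ is precisely of the form \eqref{theory-mdl}. The corollary's hypothesis supplies Assumptions \ref{r1}--\ref{r3} for $\check C_{\alphab}$ and, in particular, places $\tilde w_{0c}$ in the RKHS of $\check C_{\alphab}$, since $\xb(\cdot)^T(\betab_0 - \mub_{\betab})$ lies in the RKHS generated by the finite-rank kernel $\xb^T \Sigmab_{\betab} \xb$ and RKHSs add under a sum of kernels.

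Third, I would verify that DISK is equivariant under this translation. On each subset, the Gaussian conjugate update—with $\betab$ marginalized against its prior and the likelihood tempered by the stochastic-approximation exponent $n/m$—shows that the subset posterior of $\xb(\sbb^*)^T \betab + w(\sbb^*)$ based on $(\yb_j,\Xb_j)$ equals the subset posterior of $\tilde w_c(\sbb^*)$ based on $(\tilde \yb_j, \Xb_j)$ translated by the deterministic constant $\xb(\sbb^*)^T \mub_{\betab}$. Because the Wasserstein barycenter in \eqref{eqn:w-bary} is equivariant under a common additive shift of all the input measures (equivalently, the quantile-averaging combination \eqref{eq:emp-quant} is linear in each quantile), the DISK posterior of the mean surface in the original parametrization is the DISK posterior of $\tilde w_c(\sbb^*)$ shifted by the same constant. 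Hence the Bayes $L_2$-risks in the two parametrizations coincide, and applying Theorems \ref{bayes-risk} and \ref{bv3cases} to the centered problem immediately yields the bias--variance decomposition \eqref{eq:bayesrisk-ineq} and each of the three convergence-rate regimes, with all quantities $\mu_i$, $\tr(\cdot)$, and $\|\cdot\|_{\HH}$ now referring to $\check C_{\alphab}$.

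The main obstacle is the equivariance step: one must check carefully that marginalizing out $\betab$ under its Gaussian prior inside the subset likelihood yields exactly the Gaussian conditional that one would obtain by treating $\tilde w(\cdot)$ \emph{ab initio} as a GP with kernel $\check C_{\alphab}$, rather than as the sum of two independent components. This is a standard but bookkeeping-heavy application of the Woodbury identity applied to $\Vb_j(\alphab) + \Xb_j \Sigmab_{\betab} \Xb_j^T$ and its cross-covariance with $\xb(\sbb^*)^T \Sigmab_{\betab} \Xb_j^T + \cb_{j,*}^T$. Once this identity is verified on each subset, translation equivariance of the Wasserstein barycenter makes the reduction to Theorems \ref{bayes-risk} and \ref{bv3cases} immediate and there is no further analytic work to do.
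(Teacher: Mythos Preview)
Your proposal is correct and follows the same route the paper takes: the paper does not supply a separate proof of this corollary, treating it as immediate once one observes (as in \eqref{modified-kernel}) that under Assumption~\ref{sp} the aggregate $\xb(\cdot)^T\betab + w(\cdot)$ is itself a Gaussian process with mean $\xb(\cdot)^T\mub_{\betab}$ and covariance $\check C_{\alphab}$, so that model \eqref{parent_proc} collapses to an instance of \eqref{theory-mdl} with the modified kernel. Your centering step and the translation-equivariance check for the Wasserstein barycenter make explicit the two points the paper leaves implicit, and the Woodbury bookkeeping you flag is exactly what confirms that the subset posteriors computed via the three-step sampler (marginalizing $\betab$ separately) coincide with those one would get by treating $\tilde w$ as a single $\gp\{0,\check C_{\alphab}\}$ from the outset.
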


\subsection{Inference for spatial range autocorrelation}
\label{sec:spatialcorrelation}
 
Besides the estimation of mean surface $w_0(\cdot)$, it is also important to have valid Bayesian inference for the spatial correlation. We demonstrate that the DISK estimate of the correlation structure is close to the true posterior correlation structure in the simulation example of Section \ref{sec:simulation-2-GP}; see Figure \ref{fig:cov_fun}. In the following, we describe some heuristics; a detailed study is left for future research. To explain why our combination scheme preserves the spatial range autocorrelation, let us consider a special case of model \eqref{parent_proc} where $\betab=0$, $\sigma^2$ and $\tau^2$ are known and fixed, and only the spatial range parameter $\phi$ is unknown and needs to be estimated. In this case, $\phi$ completely determines the correlation structure; therefore, for each data subset, the log likelihood function for the model in \eqref{parent_proc} can be written as
\begin{align} \label{eq:prop1}
    \log L_j(\phi) = k \log \ell_j(\phi) &= -\frac{km}{2} \log 2\pi - \frac{k}{2} \log \vert \Rb_j \vert - \frac{k}{2} \tr \left( \Rb_j^{-1} \yb_j \yb_j^T \right),
\end{align}
where $\ell_j(\phi)$ is the likelihood function without stochastic approximation, $\Rb_j$ is the shorthand for $\Rb_j(\phi)= \Cb_j(\phi) + \tau^2\Ib$ and $\Cb_j(\phi)$ is $\Cb_j(\alphab)$ with a fixed and known $\sigma^2$. If $\sigma^2$ is fixed at the true value, $\tau^2=0$, and $\phi$ is assigned a prior on the closed interval $[\phi_l, \phi_u]$ ($0<\phi_l<\phi_u<\infty$) that includes the true value $\phi_0$, then the maximum likelihood estimator (MLE) of $\phi$ over $\phi\in [\phi_l, \phi_u]$ is consistent and asymptotically normal for some commonly used covariance functions such as Mat\'ern (\citealt{KauSha13}). Motivated by the existing frequentist theory on MLE, we conjecture that a Bernstein-von Mises (BvM) theorem under fixed $\sigma^2$ and $\tau^2$ possibly holds for $\phi$. Let $\Rb'_j(\phi) = \partial \Rb_j (\phi) / \partial \phi$ and $\Rb''_j(\phi) = \partial^2 \Rb_j(\phi) / \partial \phi^2$ be the matrices consisting of component-wise derivatives with respect to $\phi$. Let $\PP_{\phi_0}$ be the probability measure of the subset data $\yb_j$ for all $j=1,\ldots,k$ and also the full data $\yb$.

From \eqref{eq:prop1}, we can derive the derivatives of log-likelihood function:
\begin{align}
    \label{eq:prop2}
    \frac{d \log L_j(\phi)}{d \phi} &= k\frac{d \log \ell_j(\phi)}{d \phi}  = - \frac{k}{2} \tr \left( \Rb_j^{-1} \Rb_j' \right) + \frac{k}{2} \tr(\Rb_j^{-1} \Rb_j' \Rb^{-1}_j \yb_j \yb_j^T ),
    \nonumber \\
    \frac{d^2 \log L_j(\phi)}{d\phi^2} &= k\frac{d^2 \log \ell_j(\phi)}{d \phi^2} =  - \frac{k}{2} \tr \left( \Rb_j^{-1} \Rb_j'' - \Rb_j^{-1} \Rb_j' \Rb_j^{-1} \Rb_j'   \right)  \nonumber \\
   &\qquad \qquad \qquad ~~ - \frac{k}{2} \tr \{(2 \Rb^{-1}_j \Rb_j' \Rb_j^{-1} \Rb_j' \Rb^{-1}_j - \Rb_j^{-1} \Rb_j'' \Rb^{-1}_j ) \yb_j \yb_j^T\}.
\end{align}
At the true parameter $\phi=\phi_0$, the second derivative $d^2 \log \ell_j(\phi)/d \phi^2$ simplifies to $-\frac{1}{2}\tr \left\{ \Rb_j^{-1}(\phi_0) \Rb_j^{'}(\phi_0) \Rb_j^{-1}(\phi_0)\Rb_j^{'}(\phi_0)  \right\}$ because $\EE_{\phi_0}(\yb_j\yb_j^T)=\Rb_j$. In the special case where $\tau^2=0$ and $C_{\alphab}(\cdot,\cdot)$ is the Mat\'ern covariance function with smoothness parameter $\nu$, one can derive from Theorem 2 of \citet{KauSha13} that the asymptotic variance of the MLE of $\phi$ is $\phi_0^2/(2\nu^2)$. This implies that for the Mat\'ern covariance function,
$$\lim_{m\to\infty} \frac{1}{2m}\tr \left\{ \Rb_j^{-1}(\phi_0) \Rb_j^{'}(\phi_0) \Rb_j^{-1}(\phi_0)\Rb_j^{'}(\phi_0)  \right\} = 2\nu^2/\phi_0^2.$$

Let $\Pi_{\phi,j}$ ($j=1,\ldots,k$) be the subset posteriors of $\phi$ after stochastic approximation. Let $\overline \Pi_{\phi,n}$ be the DISK posterior of $\phi$, which is the Wasserstein barycenter of $\Pi_{\phi,j}$'s and can be obtained by averaging the quantiles of $\Pi_{\phi,j}$'s over $j=1,\ldots,k$ (similar to $\overline \nu$ in \eqref{eq:emp-quant}). Let $\Pi_{\phi,n}$ be the full data posterior of $\phi$. Then, for the special case where the covariance function is Mat\'ern and $\tau^2=0$, based on Theorem 2 of \citet{KauSha13}, we conjecture that under certain regularity conditions, one can follow the techniques used in \citet{Lietal16} and show that both $\overline \Pi_{\phi,n}$ and $\Pi_{\phi,n}$ follow the BvM theorem and are asymptotically close to two normal distributions, each with the variance $2\nu^2/\phi_0^2$ but possibly different means. We conjecture that a similar result to \citet{Lietal16} holds, in the sense that the means between the two normal limits are of the order $O_p(1/\sqrt{m})$, and hence $m^{1/2}\cdot W_2\left(\overline \Pi_{\phi,n}, \Pi_{\phi,n}\right)\to 0 $ as $m\to\infty$, in the $\PP_{\phi_0}$-probability if the data $\yb$ has the probability measure $\PP_{\phi_0}$. Furthermore, based on Theorem 3 of \citet{KauSha13}, we conjecture that with high $\PP_{\phi_0}$-probability, for any $\phi$ drawn from either the true posterior $\Pi_{\phi,n}$ or the DISK posterior $\overline \Pi_{\phi,n}$, the Gaussian process predictive variance at any testing location $\sbb^*\in \Scal$ is asymptotically close to the true Gaussian process predictive variance with the true range parameter $\phi_0$, in the sense that the ratio of these two predictive variances tends to 1 as $m\to\infty$. If $\sigma^2$ is unknown, it is well known that $\sigma^2$ and $\phi$ cannot be identified \citep{Zha04} but a microergodic parameter like $\sigma^2\phi^{2\nu}$ for the Mat\'ern covariance functions can be identified. In this case, we conjecture that a similar BvM theorem can be shown for the microergodic parameter and similar conclusions about the equivalence of prediction variances can be obtained. We leave the thorough investigation on these issues for general covariance functions to future research.

\section{Experiments}
\label{experiments}

\subsection{Simulation setup}

We compare DISK with its competitors on synthetic data based on its performance in learning the process parameters, interpolating the unobserved  spatial surface, and predicting at new locations. This section presents three simulation studies. The first (\emph{Simulation 1}) and second  (\emph{Simulation 2}) simulations represent moderately large dataset with $12,025$ locations and the third (\emph{Simulation 3}) simulation  analyzes a large dataset with $1,002,025$ locations. In first two simulations, we randomly select the data at $n=10^4$ locations for model fitting and $l=2025$ locations for predictions, while in Simulation 3 the training and test data are of size $n= 10^6$ and $l = 2025$.

For all three simulations, the response is simulated at $(n+l)$ locations as
\begin{align}
  \label{eq:sim1}
  y(\sbb_i) = \beta_0 + w_0(\sbb_i) + \epsilon_i, \quad \epsilon_i \sim N(0, \tau_0^2), \quad i = 1, \ldots, n+l.
\end{align}
Simulations 1 and 3 follow the data generation scheme described in \cite{gramacy2015local}. Specifically, we set $\Dcal = [-2, 2] \times [-2, 2] \subset \RR^2$ and uniformly sample $(n+l)$ spatial locations $\sbb_i = (s_{i1}, s_{i2})$ in $\Dcal$ ($i = 1, \ldots, n+l$). For any $s \in [-2, 2]$,  define the function $f_0(s) = e^{-(s-1)^2} + e^{-0.8(s+1)^2} - 0.05 \sin \{8(s + 0.1)\}$ and set $w_0(\sbb_i)= - f_0(s_{i1})  f_0(s_{i2})$. Although the function $w_0(\cdot)$ simulated in this way is theoretically infinitely smooth,
the response surface simulated from (\ref{eq:sim1}) exhibits complex local behavior, which is challenging to capture using spatial process-based models as we demonstrate later. Simulation 2 generates $\{w_0(\sbb_1),\ldots,w_0(\sbb_{n+l})\}$ from a GP(0,{$\sigma_0^2\exp\{-\phi_0 \|\sbb_i-\sbb_j\|)\}$}. This is a more familiar simulation scenario in the spatial context with the generated spatial surface being continuous everywhere but differentiable nowhere. As argued earlier, simulating data from an ordinary Gaussian process is expensive, hence we refrain from a massive simulation study under the scenario of \emph{Simulation 2}. {For  Simulations 1 and 3, the intercept $\beta_0$ and true error variance $\tau_0^2$ are set to $1$ and $0.01$, respectively.} For \emph{Simulation 2}, $\beta_0$, $\tau_0^2$, $\phi_0$ and $\sigma_0^2$ are taken to be $1$, $0.1$, $9$ and $1$ to keep long range spatial dependence and high spatial variance to nugget ratio. The inferential and predictive results for all simulations are based on $10$ replications. We compare DISK with a number of Bayesian and non-Bayesian spatial models in both simulations:
%\begin{enumerate}
(i) LatticeKrig \citep{Nycetal15} using the \texttt{LatticeKrig} package in {R} \citep{R17}  with 3 resolutions \citep{Nycetal16};
(ii) nearest neighbor Gaussian process (NNGP) using the \texttt{spNNGP} package in {R} with the number of nearest neighbors (NN) as 5, 15, and 25 \citep{Datetal15};
(iii) full-rank Gaussian process (GP) using the \texttt{spBayes} package in {R} \citep{Finetal15} with the full data;
(iv) modified predictive process (MPP) using the \texttt{spBayes} package in {R} with the full data; and
(v) locally approximated Gaussian process (laGP) using the \texttt{laGP} package in {R} \citep{gramacy2015local}.
%\end{enumerate}

All five methods produce results in Simulation 1 and 2, but the first four methods fail due to numerical issues in Simulation 3. While laGP is not designed for full scale Bayesian inference, it is used as the benchmark for predictive point estimates and associated standard errors in \eqref{eq:sim1} due to its popularity in fitting computer models. For the spatial process-based Bayesian models, we employ \eqref{parent_proc} with only an intercept $\beta$, putting a $N(0, 100)$ prior on $\beta$, a GP prior on $w(\cdot)$, and $\text{IG}(2, 0.1)$ prior on $\tau^2$, where IG$(a, b)$  is the Inverse-Gamma distribution with mean $b/(a-1)$ and variance $b / \{(a-1)^2 (a-2)\}$ for $a > 2$. In model fitting, we assume an exponential correlation in the random field given by $\cov\{w(\sbb),w(\sbb')\} = \sigma^2 e^{-\phi \| \sbb-\sbb' \|}, \; \sbb, \sbb' \in \Dcal$ and put IG(2, 2) prior on $\sigma^2$ and a uniform prior on $\phi$. The MPP prior on $w(\cdot)$ is fitted by setting the rank $r$ as 200 and 400, respectively, where the $r$ knots are selected randomly from the domain $\Dcal$. NNGP is chosen as a representative example of the current state-of-the-art Bayesian method for inference and predictions in massive spatial data.

%\eqref{eq:d-ab} and \eqref{eq:d-wy},

{The three-step DISK framework is applied using the full-rank GP and the low-rank MPP priors using the algorithm outlined in {Section \ref{combinescheme}}, yielding DISK (GP) and DISK (MPP) procedures, respectively. {For all our simulations, DISK combines the subset marginal posteriors by averaging their quantiles, as described in Section \ref{combinescheme}, and we set $\xi = 10^{-4}$ in Equation \eqref{eq:emp-quant}.} We use consensus Monte Carlo (CMC; \cite{Scoetal16}), semiparametric density product (SDP; \cite{NeiWanXin13}) and meta kriging (MK; \cite{guhaniyogi2017meta}) as representative  competitors for model-free subset posterior aggregation to highlight the advantages of DISK. Similar to DISK, these three approaches also operate in three steps. In {steps} 1 and 2, the MPP-based model in \eqref{eq:mdl-j} is fitted on every subset for CMC, SDP and MK using the \texttt{spBayes} package. {Unlike DISK, the methods of CMC, SDP and MK do not employ stochastic approximation in the sampling step}. Third, we use \texttt{parallelMCMCcombine} package with the default setting \citep{MirCon14} for combining subset posterior MCMC samples in CMC and SDP, yielding CMC(MPP) and SDP (MPP) procedures respectively and \texttt{Mposterior} package for combining subset inferences in MK to yield MK (MPP). SDP (MPP) fails due to numerical issues when the posterior MCMC samples for predictions and the surface are combined. Further, Simulation 1 shows less than competitive performance for MK (MPP) in estimating the spatial surface and posterior distribution of the parameters, hence Simulations 2 and 3 do not use it as a competitor. Identical priors, covariance functions, ranks, and knots are used for the non-distributed process models and their distributed counterparts for a fair comparison.}

All experiments are run on an Oracle Grid Engine cluster with 2.6GHz 16 core compute nodes. The non-distributed methods (LatticeKrig, GP, MPP, NNGP, and laGP) and the distributed methods (CMC, DISK, {MK}, and SDP) are allotted memory resources of 64GB and 16GB, respectively. Every MCMC sampling algorithm runs for 15,000 iterations, out of which the first 10,000 MCMC samples are discarded as burn-in MCMC samples and the rest of the chain is thinned by collecting every fifth MCMC sample. Convergence of the chains to their stationary distributions is confirmed using trace plots. All the interpolated spatial surfaces are obtained using the \texttt{MBA} package in \texttt{R}.

We compare the quality of prediction and estimation of  spatial surface at predictive locations $\Scal^*=\{\sbb_1^*,\ldots,\sbb_l^*\}$. If $w_0(\sbb_{i'}^*)$ is the value of the spatial surface at $\sbb_{i'}^* \in \Scal^*$, the estimates of bias, variance, and Bayes $L_2$-risk in estimating $w_0(\cdot)$ are defined as
\begin{align}
  \label{eq:sim2}
  {\text{bias}}^2 = \frac{1}{l} \sum_{i'=1}^l \{\hat w(\sbb_{i'}^*) - w_0(\sbb_{i'}^*)\}^2, ~~ {\text{var}} = \frac{1}{l} \sum_{i'=1}^l \hat {\text{var}} \{w(\sbb_{i'}^*)\}, ~~ {L_2\text{-risk}} = {\text{bias}}^2 +  {\text{var}},
\end{align}
where $\hat w(\sbb_{i'}^*)$ and $\hat {\text{var}} \{w(\sbb_{i'}^*)\}$ denote the estimate of $w_0(\sbb_{i'}^*)$  obtained using any distributed or non-distributed methods and its variance, respectively. For sampling-based methods, we set $\hat w(\sbb_{i'}^*)$ and $\hat {\text{var}} \{w(\sbb_{i'}^*)\}$ to be the median and the variance of the posterior MCMC samples for $ w(\sbb_{i'}^*)$, respectively, for $i'=1, \ldots, l$. We also estimate the point-wise 95\% credible or confidence intervals (CIs) of $w(\sbb^*_{i'})$  and predictive intervals (PIs) of $y(\sbb^*_{i'})$ for every $\sbb_{i'} \in \Scal^*$ and compare the CI and PI coverages and lengths for every method.  Also, the point predictive performance at the locations in  $\Scal^*$ are compared across competitors using the mean square prediction error (MSPE) defined as MSPE$= \sum_{i'=1}^l \{\hat y(\sbb_{i'}^*) - y(\sbb_{i'}^*)\}^2 / l$. Finally, we compare the performance of all the methods for parameter estimation using the posterior medians or point estimates and the 95\% CIs for $\beta$, $\sigma^2$, $\tau^2$, and $\phi$.

\subsection{Simulated data analysis}
\label{sim}

\subsubsection{Simulation 1: moderately large spatial data.}
\label{sec:simulation-1}

{We fit DISK (GP) for $k = 10, 20, 30, 40, 50$; CMC (MPP), SDP (MPP), MK (MPP) and DISK (MPP) for $k=10, 20$, along with other competitors. MPP using DISK, CMC, MK and SDP are fitted with the number of subsets below $20$ to ensure that the sample size in each subset $m$ is bigger than the number of knots $r$.} Focusing on the estimation of $w_0(\sbb^*)$ for $\sbb^* \in \Scal^*$, CMC and DISK have smaller biases and larger variances than their non-distributed counterparts if $k \leq 30$ (Table \ref{inf-w-e4}). The DISK estimator's variance decreases and bias increases with increasing $k$, resulting in a decreasing $L_{2}-$risk in the estimation of $w_0$ initially and increasing after $k = 20$ (Figure \ref{fig:bias-var-mse}), which empirically verifies Bayesian bias-variance trade-off revealed in {our theory}. The point-wise coverage of 95\% CIs in DISK are similar to that of the non-distributed methods, except MPP and NNGP, for $k \leq 30$ and are above the nominal value for all $k$. On the other hand, coverage of CMC and MPP are below the nominal value for every $k$ and $r$ and NNGP fails to cover $w_0$ across all replications. The length of 95\% CIs in DISK (GP) and DISK (MPP) are very close to that of their non-distributed version, whereas CMC's and NNGP's CIs greatly underestimate the posterior uncertainty. DISK (GP) and DISK (MPP) with $k=20$ are {among} the best performers, while MK (MPP) exhibits higher Bayes $L_2$-risk and wider 95\% CIs' with different choices of $r$ and $k$. {Since estimates of $w_0(\sbb^*)$ are not directly obtained for laGP or LatticeKrig from the \texttt{laGP} and \texttt{LatticeKrig} packages respectively, we subtract the estimated fixed effects from the predicted values to provide a rough estimate of $w_0(\sbb^*)$ for both of them. While such a strategy yields reasonable point estimates of $w_0(\sbb^*)$ from both of them, characterization of uncertainty of $w_0(\sbb^*)$ is perhaps less unreliable. Hence we refrain from interpreting coverage and length of laGP and LatticeKrig in Table \ref{inf-w-e4} any further, rather investigate the predictive performance of these methods vis-a-vis other competitors.} {All methods perform well in terms of MSPE and coverages and lengths of 95\% PIs (Table \ref{inf-p-e4a}). DISK (GP) and DISK (MPP) (with $k=10,20$) are more precise in the estimation of $\betab$ compared to their non-distributed versions. The most closely related competitor MK (MPP) shows competitive performance in terms of prediction, but much wider credible intervals for parameters, perhaps due to not accounting for stochastic approximation in subsets. Other competitors exhibit a similar degree of accuracy in estimating $\tau^2$, $\sigma^2$, and $\phi$ with the accuracy of parameter estimates and uncertainty quantification suffering beyond $k\geq 30$.}

%The results for DISK's uncertainty quantification in parameter estimation and prediction results agree with those observed in the inference on the spatial surface (Table \ref{inf-p-e4a}). All methods perform well in terms of MSPE and coverages and lengths of 95\% PIs. DISK (GP) and DISK (MPP) (with $k=10,20$) are more precise in the estimation of $\betab$ compared to their non-distributed versions. All competitors exhibit a similar degree of accuracy in estimating $\tau^2$, $\sigma^2$, and $\phi$. Providing an empirical confirmation of Theorem \ref{bayes-risk}, increasing the number of subsets beyond 30 leads to poorer parameter estimation and uncertainty quantification due to a greater bias in the estimation of $w_0$.

An interesting feature of our comparisons is the dramatic difference between the performance of DISK (MPP) and MPP with the same choices of knots (Figure \ref{fig:disksurf10k-sim1}). The performance of MPP in spatial surface and parameter estimation  using full data is sensitive to the choice of $r$, suffering greatly when $r=200$. Contrary to this, DISK (MPP) has substantially smaller $L_2$-risk with the same number of knots used in each subset and its  $w_0$ estimate is almost indistinguishable from the true spatial surface. The performance of MPP is unstable when $r/n$ is low due to poorly conditioned covariance matrix. While the poor performance of full data MPP is attributed to this fact, using $r$ knots in \emph{each} subset of size $m$ results in relatively high $r/m$ ratio in each subset of DISK (MPP), which contributes to its strikingly superior performance.

While running NNGP with the \texttt{spNNGP} package, the inference from NNGP marginalized over $w(\sbb)$'s using the ``response'' option closely matches DISK in terms of inference on parameter estimates; however, $\wb^*$ cannot be estimated with ``response'' option, so we employ un-marginalized NNGP using the ``sequential'' option in the package. This results in severe auto-correlation among the latent variables and $\beta$, yielding NNGP's poor performance across all three choices of the number of nearest neighbors. In an ongoing work, this issue is addressed by employing conjugate gradient algorithms to estimate latent variables in NNGP.

\begin{figure}[h]
 \centering
 \includegraphics[width=\textwidth]{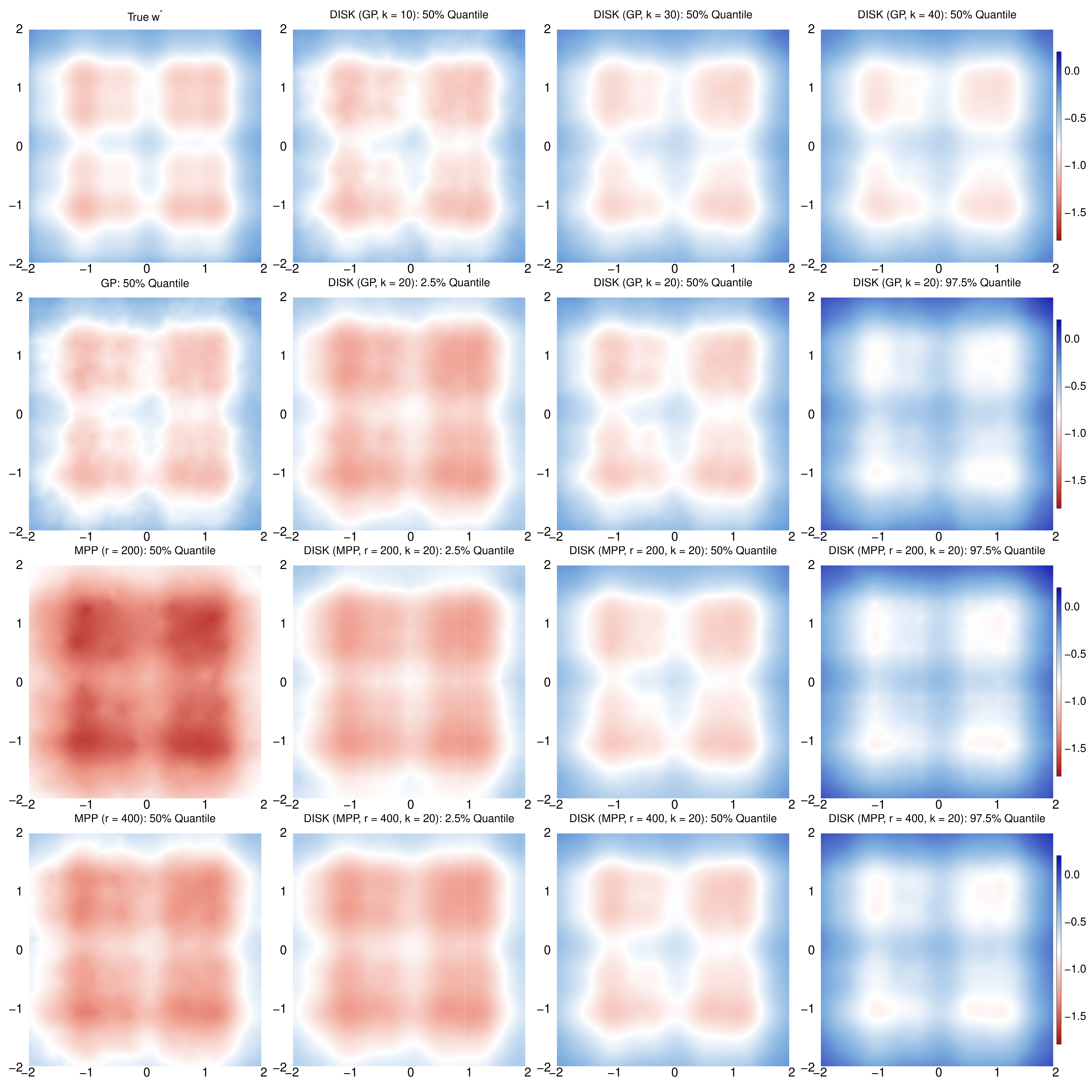}
 \caption{The spatial surface $w_0$ at the locations in $\Scal^*$ for all the competing full Bayesian methods (except NNGP) in {Simulation 1}.
   The 2.5\%, 50\%, and 97.5\% quantile surfaces, respectively, represent pointwise quantiles of the posterior distribution for $w(\sbb^*)$ for every $\sbb^* \in \Scal^*$, where
   the 50\% quantile of $w(\sbb^*)$ is the estimate of $w_0(\sbb^*)$ and the 2.5\% and 97.5\% quantiles quantify uncertainty. The true spatial surface $w_0(\sbb^*), \sbb^* \in \Scal^*$ is in the first row and column. The remaining entries in the first column are the estimates of $w_0(\sbb^*), \sbb^* \in \Scal^*$ obtained using the full data posterior distributions with full-rank GP prior and MPP prior with $r=200, 400$. The remaining entries in the first row are the estimates of $w_0(\sbb^*), \sbb^* \in \Scal^*$ obtained using DISK with GP prior and $k=10, 30, 40$, respectively. All other entries provide point estimates and quantify uncertainty in inference on $w_0(\sbb^*), \sbb^* \in \Scal^*$ with $k=20$ and GP prior (second row), MPP prior with rank 200 (third row) and MPP prior with rank 400 (fourth row). }
 \label{fig:disksurf10k-sim1}
\end{figure}

\begin{figure}[h]
 \centering
 \includegraphics[width=\textwidth]{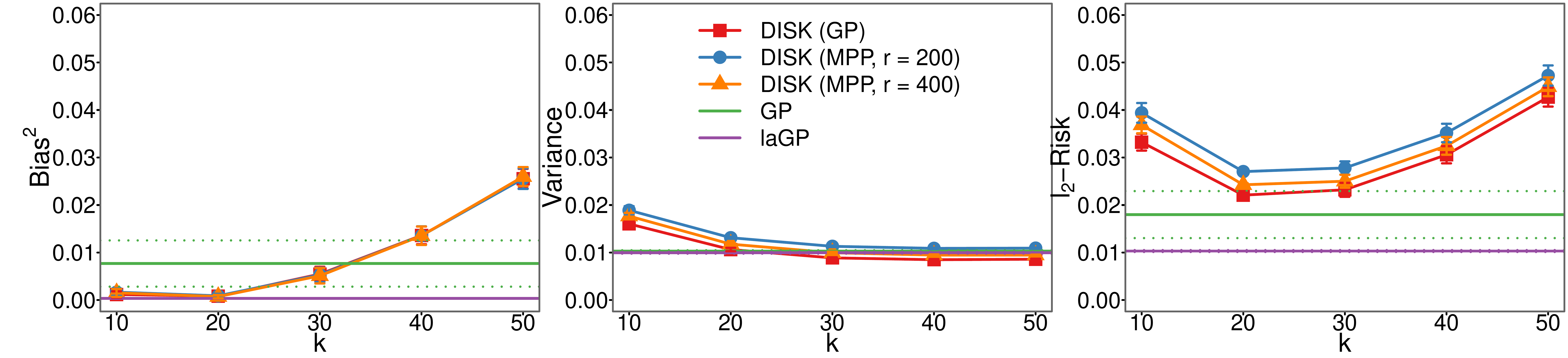}
 \caption{The empirical estimate of bias, variance, and Bayes $L_2$-risk in estimating the spatial surface $w_0$ at the locations in $\Scal^*$  in {Simulation 1}.
   The GP is the theoretical benchmark and laGP is the state-of-the-art method for estimation. The solid lines represent averaged values and the dotted lines and arrows represent one standard deviation error for the full-GP over 10 simulation replications.}
 \label{fig:bias-var-mse}
\end{figure}

\begin{table}[ht]
  \caption{Inference on the values of spatial surface at the locations in $\Scal^*$ in Simulation 1. The numbers in parentheses are standard deviations over 10 simulation replications. The bias, variance, and Bayes $L_2$-risk in the estimation of $w_0$ are defined in \eqref{eq:sim2} and the coverage and credible intervals are calculated pointwise for the locations in $\Scal^*$}
  \label{inf-w-e4}
  \centering
{\tiny
\begin{tabular}{|r|c|c|c|c|c|}
  \hline
 & Bias$^2$ & Variance& $L_2$-Risk & 95\% CI  Coverage & 95\% CI Length \\
  \hline
  laGP & 0.0004 (0.0000) & 0.0100 (0.0002) & 0.0103 (0.0002) & 1.0000 (0.0000) & 0.3890 (0.0036)  \\
  LatticeKrig & 0.0002 (0.0000) & 0.0003 (0.0000) & 0.0005 (0.0000) & 0.9867 (0.0033) & 0.0703 (0.0006)  \\
  GP & 0.0077 (0.0049) & 0.0103 (0.0002) & 0.0180 (0.0049) & 1.0000 (0.0002) & 0.3943 (0.0036)  \\
  MPP ($r=200$) & 0.3732 (0.3671) & 0.0110 (0.0002) & 0.3842 (0.3671) & 0.0000 (0.0000) & 0.4061 (0.0036)  \\
  MPP ($r=400$) & 0.0623 (0.0369) & 0.0105 (0.0002) & 0.0727 (0.0370) & 0.2946 (0.4662) & 0.3976 (0.0037)  \\
  \hline
   NNGP & \multicolumn{5}{c|}{} \\
  \hline
  NN$ = 5$ & 0.4213 (0.1373) & 0.0021 (0.0002) & 0.4233 (0.1373) & 0.0000 (0.0000) & 0.1778 (0.0079)  \\
  NN$ = 15$ & 0.4822 (0.0666) & 0.0013 (0.0001) & 0.4835 (0.0666) & 0.0000 (0.0000) & 0.1421 (0.0067) \\
  NN$ = 25$ & 0.4887 (0.0668) & 0.0013 (0.0001) & 0.4900 (0.0668) & 0.0000 (0.0000) & 0.1398 (0.0032) \\
  \hline
   CMC (MPP) & \multicolumn{5}{c|}{} \\
  \hline
  $r=200$, $k=10$ & 0.0020 (0.0006) & 0.0416 (0.0005) & 0.0436 (0.0006) & 0.8854 (0.0527) & 0.1429 (0.0010) \\
  $r=200$, $k=20$ & 0.0090 (0.0029) & 0.0402 (0.0006) & 0.0493 (0.0031) & 0.1265 (0.1027) & 0.1026 (0.0009) \\
%  $r=200$, $k=30$ & 0.0040 (0.0011) & 0.0382 (0.0005) & 0.0422 (0.0014) & 0.3805 (0.1220) & 0.0886 (0.0006) \\
%  $r=200$, $k=40$ & 0.0021 (0.0005) & 0.0364 (0.0005) & 0.0385 (0.0009) & 0.5738 (0.0822) & 0.0807 (0.0006) \\
%  $r=200$, $k=50$ & 0.0019 (0.0001) & 0.0346 (0.0005) & 0.0364 (0.0004) & 0.6191 (0.0219) & 0.0755 (0.0004) \\
  \hline
   CMC (MPP) & \multicolumn{5}{c|}{} \\
  \hline
  $r=400$, $k=10$ & 0.0031 (0.0013) & 0.0424 (0.0006) & 0.0455 (0.0014) & 0.7710 (0.1315) & 0.1398 (0.0013) \\
  $r=400$, $k=20$ & 0.0013 (0.0005) & 0.0409 (0.0006) & 0.0422 (0.0009) & 0.8255 (0.0987) & 0.1005 (0.0009) \\
%  $r=400$, $k=30$ & 0.0010 (0.0001) & 0.0391 (0.0005) & 0.0401 (0.0005) & 0.8267 (0.0168) & 0.0866 (0.0005) \\
%  $r=400$, $k=40$ & 0.0017 (0.0004) & 0.0374 (0.0005) & 0.0391 (0.0005) & 0.6876 (0.0556) & 0.0788 (0.0006) \\
%  $r=400$, $k=50$ & 0.0035 (0.0006) & 0.0357 (0.0005) & 0.0392 (0.0005) & 0.4635 (0.0598) & 0.0738 (0.0004) \\
  \hline
   DISK (GP) & \multicolumn{5}{c|}{} \\
  \hline
  $k=10$ & 0.0012 (0.0007) & 0.0160 (0.0007) & 0.0332 (0.0018) & 1.0000 (0.0000) & 0.4971 (0.0104) \\
  $k=20$ & 0.0008 (0.0005) & 0.0106 (0.0004) & 0.0221 (0.0005) & 1.0000 (0.0000) & 0.4041 (0.0070) \\
  $k=30$ & 0.0055 (0.0016) & 0.0089 (0.0001) & 0.0232 (0.0015) & 1.0000 (0.0000) & 0.3694 (0.0026) \\
  $k=40$ & 0.0136 (0.0019) & 0.0085 (0.0001) & 0.0306 (0.0018) & 0.9946 (0.0048) & 0.3612 (0.0026) \\
  $k=50$ & 0.0255 (0.0021) & 0.0086 (0.0001) & 0.0427 (0.0020) & 0.7949 (0.0572) & 0.3626 (0.0022) \\
  \hline
   DISK (MPP) & \multicolumn{5}{c|}{} \\
  \hline
  $r=200$, $k=10$ & 0.0017 (0.0008) & 0.0189 (0.0009) & 0.0394 (0.0021) & 1.0000 (0.0000) & 0.5388 (0.0122) \\
  $r=200$, $k=20$ & 0.0009 (0.0004) & 0.0131 (0.0002) & 0.0270 (0.0004) & 1.0000 (0.0000) & 0.4477 (0.0039) \\
%  $r=200$, $k=30$ & 0.0053 (0.0015) & 0.0113 (0.0002) & 0.0278 (0.0014) & 1.0000 (0.0000) & 0.4148 (0.0029) \\
%  $r=200$, $k=40$ & 0.0136 (0.0020) & 0.0109 (0.0001) & 0.0352 (0.0019) & 0.9974 (0.0032) & 0.4064 (0.0027) \\
%  $r=200$, $k=50$ & 0.0256 (0.0022) & 0.0109 (0.0001) & 0.0473 (0.0021) & 0.8622 (0.0408) & 0.4064 (0.0024) \\
  \hline
   DISK (MPP) & \multicolumn{5}{c|}{} \\
  \hline
  $r=400$, $k=10$ & 0.0015 (0.0008) & 0.0177 (0.0007) & 0.0369 (0.0017) & 1.0000 (0.0000) & 0.5211 (0.0099) \\
  $r=400$, $k=20$ & 0.0007 (0.0004) & 0.0118 (0.0002) & 0.0243 (0.0003) & 1.0000 (0.0000) & 0.4253 (0.0031) \\
%  $r=400$, $k=30$ & 0.0051 (0.0016) & 0.0100 (0.0002) & 0.0250 (0.0014) & 1.0000 (0.0000) & 0.3913 (0.0038) \\
%  $r=400$, $k=40$ & 0.0135 (0.0019) & 0.0095 (0.0001) & 0.0325 (0.0019) & 0.9970 (0.0045) & 0.3813 (0.0029) \\
%  $r=400$, $k=50$ & 0.0260 (0.0020) & 0.0095 (0.0001) & 0.0449 (0.0020) & 0.8366 (0.0429) & 0.3806 (0.0017) \\
   \hline
  MK (MPP) & \multicolumn{5}{c|}{} \\
  \hline
  $r=200$, $k=10$ & 0.0410 (0.0224) & 0.3777 (0.0293) & 0.4196 (0.0373) & 1.0000 (0.0000) & 2.4180 (0.0897) \\
  $r=200$, $k=20$ & 0.0234 (0.0081) & 0.4064 (0.0190) & 0.4298 (0.0209) & 1.0000 (0.0000) & 2.5139 (0.0628) \\
%  $r=200$, $k=30$ & 0.0053 (0.0015) & 0.0113 (0.0002) & 0.0278 (0.0014) & 1.0000 (0.0000) & 0.4148 (0.0029) \\
%  $r=200$, $k=40$ & 0.0136 (0.0020) & 0.0109 (0.0001) & 0.0352 (0.0019) & 0.9974 (0.0032) & 0.4064 (0.0027) \\
%  $r=200$, $k=50$ & 0.0256 (0.0022) & 0.0109 (0.0001) & 0.0473 (0.0021) & 0.8622 (0.0408) & 0.4064 (0.0024) \\
  \hline
   MK (MPP) & \multicolumn{5}{c|}{} \\
  \hline
  $r=400$, $k=10$ & 0.0247 (0.0113) & 0.3871 (0.0200) & 0.4118 (0.0225) & 1.0000 (0.0000) & 2.4576 (0.0641) \\
  $r=400$, $k=20$ & 0.0143 (0.0003) & 0.4254 (0.0148) & 0.4398 (0.0167) & 1.0000 (0.0000) & 2.5794 (0.0473) \\
%  $r=400$, $k=30$ & 0.0051 (0.0016) & 0.0100 (0.0002) & 0.0250 (0.0014) & 1.0000 (0.0000) & 0.3913 (0.0038) \\
%  $r=400$, $k=40$ & 0.0135 (0.0019) & 0.0095 (0.0001) & 0.0325 (0.0019) & 0.9970 (0.0045) & 0.3813 (0.0029) \\
%  $r=400$, $k=50$ & 0.0260 (0.0020) & 0.0095 (0.0001) & 0.0449 (0.0020) & 0.8366 (0.0429) & 0.3806 (0.0017) \\
   \hline
\end{tabular}
}
\end{table}

\begin{table}[h]
  \caption{
    Parametric inference and prediction in Simulation 1. For parametric inference posterior medians are provided along with
    the 95\% credible intervals (CIs) in the parentheses, where available. Similarly mean squared prediction errors (MSPEs) along with length and coverage of 95\% predictive intervals (PIs) are presented, where available. The upper and lower quantiles of 95\% CIs and PIs are averaged over 10 simulation replications, with the numbers in parentheses for the last three columns denoting standard deviations across replications;  `-' indicates that the parameter estimate or prediction is not provided by the software or the competitor}
  \label{inf-p-e4a}
\centering
{\tiny
\begin{tabular}{|r|c|c|c|c|c|c|c|}
  \hline
  & $\beta$ & $\sigma^2$ & $\tau^2$ & $\phi$ & MSPE & Coverage & Length \\
  \hline
  Truth &  1.00 &   -  &    0.01 &    - &  - & - & - \\
  laGP &   - &   -  &    - &    - & 0.010 (0.000) & 0.94 (0.01) & 0.39 (0.00) \\
  LatticeKrig   &   - &   - &    - & - & 0.010 (0.000) & 0.95 (0.01) & 0.39 (0.00) \\
  GP & 1.08 (0.50, 1.65) & 0.12 (0.10, 0.14) & 0.009 (0.009, 0.010) & 0.115 (0.107, 0.135) & 0.010 (0.000) & 0.95 (0.01) & 0.39 (0.00) \\
  MPP ($r=200$) & 1.56 (0.99, 2.15) & 0.15 (0.13, 0.18) & 0.008 (0.007, 0.008) & 0.119 (0.110, 0.133) & 0.010 (0.000) & 0.95 (0.01) & 0.41 (0.00) \\
  MPP ($r=200$) & 1.23 (0.61, 1.84) & 0.16 (0.13, 0.19) & 0.008 (0.008, 0.008) & 0.120 (0.110, 0.148) & 0.010 (0.000) & 0.95 (0.01) & 0.40 (0.00) \\
  \hline
  NNGP & \multicolumn{7}{c|}{} \\
  \hline
  NN$=5$ & 0.36 (0.36, 0.36) & 0.29 (0.29, 0.29) & 0.009 (0.009, 0.009) & 0.123 (0.123, 0.123) & 0.011 (0.000) & 0.94 (0.01) & 0.39 (0.02) \\
  NN$=15$ & 0.31 (0.31, 0.31) & 0.17 (0.17, 0.17) & 0.009 (0.009, 0.009) & 0.113 (0.113, 0.113) & 0.010 (0.000) & 0.95 (0.01) & 0.40 (0.01) \\
  NN$=25$ & 0.30 (0.30, 0.30) & 0.16 (0.16, 0.16) & 0.009 (0.009, 0.009) & 0.112 (0.112, 0.112) & 0.010 (0.000) & 0.95 (0.01) & 0.40 (0.01) \\
  \hline
  CMC (MPP) & \multicolumn{7}{c|}{} \\
  \hline
  $r=200$, $k=10$ & 0.99 (0.74, 1.23) & 0.23 (0.22, 0.25) & 0.006 (0.005, 0.006) & 0.112 (0.107, 0.119) & 0.011 (0.000) & 0.50 (0.01) & 0.14 (0.00) \\
  $r=200$, $k=20$ & 1.09 (0.89, 1.28) & 0.29 (0.27, 0.31) & 0.007 (0.006, 0.007) & 0.109 (0.106, 0.114) & 0.011 (0.000) & 0.38 (0.01) & 0.10 (0.00) \\
%  $r=200$, $k=30$ & 1.05 (0.89, 1.21) & 0.30 (0.28, 0.32) & 0.007 (0.007, 0.007) & 0.111 (0.108, 0.115) & 0.011 (0.000) & 0.32 (0.01) & 0.09 (0.00) \\
%  $r=200$, $k=40$ & 1.02 (0.88, 1.16) & 0.31 (0.30, 0.33) & 0.008 (0.007, 0.008) & 0.108 (0.106, 0.112) & 0.011 (0.000) & 0.30 (0.01) & 0.08 (0.00) \\
%  $r=200$, $k=50$ & 0.99 (0.86, 1.12) & 0.32 (0.31, 0.34) & 0.008 (0.008, 0.008) & 0.109 (0.107, 0.114) & 0.012 (0.000) & 0.27 (0.01) & 0.08 (0.00) \\
  \hline
  CMC (MPP) & \multicolumn{7}{c|}{} \\
  \hline
  $r=400$, $k=10$ & 1.04 (0.77, 1.30) & 0.26 (0.24, 0.28) & 0.006 (0.006, 0.007) & 0.108 (0.105, 0.114) & 0.011 (0.000) & 0.49 (0.01) & 0.14 (0.00) \\
  $r=400$, $k=20$ & 1.02 (0.82, 1.22) & 0.29 (0.27, 0.32) & 0.007 (0.007, 0.007) & 0.112 (0.109, 0.119) & 0.010 (0.000) & 0.38 (0.01) & 0.10 (0.00) \\
%  $r=400$, $k=30$ & 0.99 (0.83, 1.16) & 0.30 (0.28, 0.32) & 0.007 (0.007, 0.008) & 0.115 (0.111, 0.121) & 0.011 (0.000) & 0.32 (0.01) & 0.09 (0.00) \\
%  $r=400$, $k=40$ & 0.98 (0.83, 1.13) & 0.32 (0.30, 0.33) & 0.008 (0.007, 0.008) & 0.112 (0.109, 0.116) & 0.011 (0.000) & 0.29 (0.01) & 0.08 (0.00) \\
%  $r=400$, $k=50$ & 0.95 (0.83, 1.08) & 0.33 (0.31, 0.34) & 0.008 (0.008, 0.009) & 0.111 (0.109, 0.115) & 0.011 (0.000) & 0.27 (0.01) & 0.07 (0.00) \\
  \hline
  SDP (MPP) & \multicolumn{7}{c|}{} \\
  \hline
  $r=200$, $k=10$ & 0.98 (0.75, 1.23) & 0.23 (0.22, 0.25) & 0.006 (0.005, 0.006) & 0.112 (0.106, 0.118) & - & - & - \\
  $r=200$, $k=20$ & 1.08 (0.89, 1.27) & 0.29 (0.27, 0.31) & 0.007 (0.006, 0.007) & 0.109 (0.106, 0.113) & - & - & - \\
%  $r=200$, $k=30$ & 1.05 (0.89, 1.21) & 0.30 (0.28, 0.32) & 0.007 (0.007, 0.007) & 0.111 (0.108, 0.115) & - & - & - \\
%  $r=200$, $k=40$ & 1.01 (0.87, 1.16) & 0.31 (0.30, 0.33) & 0.008 (0.007, 0.008) & 0.108 (0.106, 0.111) & - & - & - \\
%  $r=200$, $k=50$ & 0.99 (0.86, 1.12) & 0.33 (0.31, 0.34) & 0.008 (0.008, 0.008) & 0.110 (0.107, 0.112) & - & - & - \\
  \hline
  SDP (MPP) & \multicolumn{7}{c|}{} \\
  \hline
  $r=400$, $k=10$  & 1.04 (0.79, 1.29) & 0.26 (0.24, 0.28) & 0.006 (0.006, 0.007) & 0.109 (0.104, 0.113) & - & - & - \\
  $r=400$, $k=20$  & 1.02 (0.83, 1.21) & 0.30 (0.27, 0.32) & 0.007 (0.007, 0.007) & 0.113 (0.108, 0.118) & - & - & - \\
%  $r=400$, $k=30$  & 0.99 (0.84, 1.15) & 0.31 (0.29, 0.33) & 0.007 (0.007, 0.008) & 0.115 (0.110, 0.120) & - & - & - \\
%  $r=400$, $k=40$  & 0.97 (0.83, 1.11) & 0.32 (0.30, 0.34) & 0.008 (0.007, 0.008) & 0.112 (0.109, 0.116) & - & - & - \\
%  $r=400$, $k=50$  & 0.95 (0.82, 1.08) & 0.33 (0.31, 0.35) & 0.008 (0.008, 0.009) & 0.112 (0.108, 0.115) & - & - & - \\
  \hline
  DISK (GP) & \multicolumn{7}{c|}{} \\
  \hline
  $k=10$ & 1.03 (0.80, 1.26) & 0.21 (0.17, 0.24) & 0.009 (0.008, 0.009) & 0.124 (0.111, 0.147) & 0.010 (0.000) & 0.95 (0.01) & 0.41 (0.00) \\
  $k=20$ & 0.98 (0.82, 1.15) & 0.22 (0.17, 0.26) & 0.008 (0.008, 0.009) & 0.142 (0.121, 0.179) & 0.010 (0.000) & 0.96 (0.01) & 0.42 (0.00) \\
  $k=30$ & 0.93 (0.80, 1.07) & 0.20 (0.16, 0.24) & 0.008 (0.008, 0.008) & 0.171 (0.140, 0.219) & 0.010 (0.000) & 0.96 (0.00) & 0.43 (0.00) \\
  $k=40$ & 0.88 (0.78, 1.00) & 0.18 (0.15, 0.23) & 0.008 (0.007, 0.008) & 0.201 (0.162, 0.252) & 0.010 (0.000) & 0.97 (0.01) & 0.44 (0.00) \\
  $k=50$ & 0.84 (0.75, 0.94) & 0.17 (0.14, 0.21) & 0.007 (0.007, 0.008) & 0.231 (0.186, 0.285) & 0.011 (0.000) & 0.97 (0.00) & 0.46 (0.00) \\
  \hline
  DISK (MPP) & \multicolumn{7}{c|}{} \\
  \hline
  $r=200$, $k=10$ & 1.03 (0.80, 1.27) & 0.21 (0.18, 0.24) & 0.009 (0.008, 0.009) & 0.120 (0.109, 0.144) & 0.010 (0.000) & 0.97 (0.01) & 0.44 (0.00) \\
  $r=200$, $k=20$ & 0.98 (0.82, 1.16) & 0.22 (0.17, 0.26) & 0.008 (0.008, 0.009) & 0.140 (0.119, 0.177) & 0.010 (0.000) & 0.97 (0.01) & 0.46 (0.00) \\
%  $r=200$, $k=30$ & 0.93 (0.80, 1.07) & 0.20 (0.16, 0.25) & 0.008 (0.008, 0.008) & 0.170 (0.137, 0.217) & 0.011 (0.000) & 0.98 (0.00) & 0.47 (0.00) \\
%  $r=200$, $k=40$ & 0.89 (0.78, 1.00) & 0.18 (0.15, 0.23) & 0.008 (0.007, 0.008) & 0.201 (0.161, 0.252) & 0.011 (0.000) & 0.97 (0.01) & 0.47 (0.02) \\
%  $r=200$, $k=50$ & 0.84 (0.75, 0.94) & 0.17 (0.14, 0.21) & 0.007 (0.007, 0.008) & 0.232 (0.186, 0.286) & 0.011 (0.000) & 0.98 (0.00) & 0.49 (0.00) \\
  \hline
  DISK (MPP) & \multicolumn{7}{c|}{} \\
  \hline
  $r=400$, $k=10$ & 1.03 (0.80, 1.27) & 0.21 (0.18, 0.24) & 0.009 (0.008, 0.009) & 0.119 (0.109, 0.143) & 0.010 (0.000) & 0.96 (0.01) & 0.42 (0.00) \\
  $r=400$, $k=20$ & 0.98 (0.82, 1.16) & 0.22 (0.17, 0.26) & 0.008 (0.008, 0.009) & 0.140 (0.119, 0.181) & 0.010 (0.000) & 0.97 (0.01) & 0.44 (0.00) \\
%  $r=400$, $k=30$ & 0.93 (0.80, 1.07) & 0.20 (0.16, 0.25) & 0.008 (0.008, 0.008) & 0.169 (0.137, 0.217) & 0.010 (0.000) & 0.97 (0.01) & 0.45 (0.00) \\
%  $r=400$, $k=40$ & 0.89 (0.78, 1.00) & 0.18 (0.14, 0.23) & 0.008 (0.007, 0.008) & 0.200 (0.159, 0.254) & 0.011 (0.000) & 0.97 (0.01) & 0.46 (0.00) \\
%  $r=400$, $k=50$ & 0.84 (0.75, 0.94) & 0.17 (0.13, 0.20) & 0.007 (0.007, 0.008) & 0.232 (0.187, 0.287) & 0.011 (0.000) & 0.98 (0.00) & 0.47 (0.00) \\
  \hline
  MK (MPP) & \multicolumn{7}{c|}{} \\
  \hline
  $r=200$, $k=10$ & 1.12 (0.03, 2.28) & 0.41 (0.21, 0.58) & 0.007 (0.004, 0.008) & 0.058 (0.051, 0.127) & 0.010 (0.000) & 0.96 (0.01) & 0.47 (0.00) \\
  $r=200$, $k=20$ & 1.16 (0.04, 2.33) & 0.46 (0.27, 0.66) & 0.006 (0.004, 0.008) & 0.060 (0.051, 0.112) & 0.011 (0.000) & 0.97 (0.01) & 0.49 (0.00) \\
%  $r=200$, $k=30$ & 0.93 (0.80, 1.07) & 0.20 (0.16, 0.25) & 0.008 (0.008, 0.008) & 0.170 (0.137, 0.217) & 0.011 (0.000) & 0.98 (0.00) & 0.47 (0.00) \\
%  $r=200$, $k=40$ & 0.89 (0.78, 1.00) & 0.18 (0.15, 0.23) & 0.008 (0.007, 0.008) & 0.201 (0.161, 0.252) & 0.011 (0.000) & 0.97 (0.01) & 0.47 (0.02) \\
%  $r=200$, $k=50$ & 0.84 (0.75, 0.94) & 0.17 (0.14, 0.21) & 0.007 (0.007, 0.008) & 0.232 (0.186, 0.286) & 0.011 (0.000) & 0.98 (0.00) & 0.49 (0.00) \\
  \hline
  MK (MPP) & \multicolumn{7}{c|}{} \\
  \hline
  $r=400$, $k=10$ & 1.12 (0.07, 2.27) & 0.41 (0.22, 0.60) & 0.007 (0.005, 0.008) & 0.068 (0.052, 0.129) & 0.010 (0.000) & 0.96 (0.00) & 0.44 (0.00) \\
  $r=400$, $k=20$ & 1.12 (-0.10, 2.38) & 0.49 (0.26, 0.71) & 0.007 (0.005, 0.009) & 0.063 (0.050, 0.125) & 0.010 (0.000) & 0.97 (0.00) & 0.48 (0.00) \\
%  $r=400$, $k=30$ & 0.93 (0.80, 1.07) & 0.20 (0.16, 0.25) & 0.008 (0.008, 0.008) & 0.169 (0.137, 0.217) & 0.010 (0.000) & 0.97 (0.01) & 0.45 (0.00) \\
%  $r=400$, $k=40$ & 0.89 (0.78, 1.00) & 0.18 (0.14, 0.23) & 0.008 (0.007, 0.008) & 0.200 (0.159, 0.254) & 0.011 (0.000) & 0.97 (0.01) & 0.46 (0.00) \\
%  $r=400$, $k=50$ & 0.84 (0.75, 0.94) & 0.17 (0.13, 0.20) & 0.007 (0.007, 0.008) & 0.232 (0.187, 0.287) & 0.011 (0.000) & 0.98 (0.00) & 0.47 (0.00) \\
  \hline
\end{tabular}
}%
\end{table}

\subsubsection{Simulation 2: moderately large data with a rough spatial surface.}
\label{sec:simulation-2-GP}

\begin{figure}[h]
  \centering
  \includegraphics[scale = 0.25]{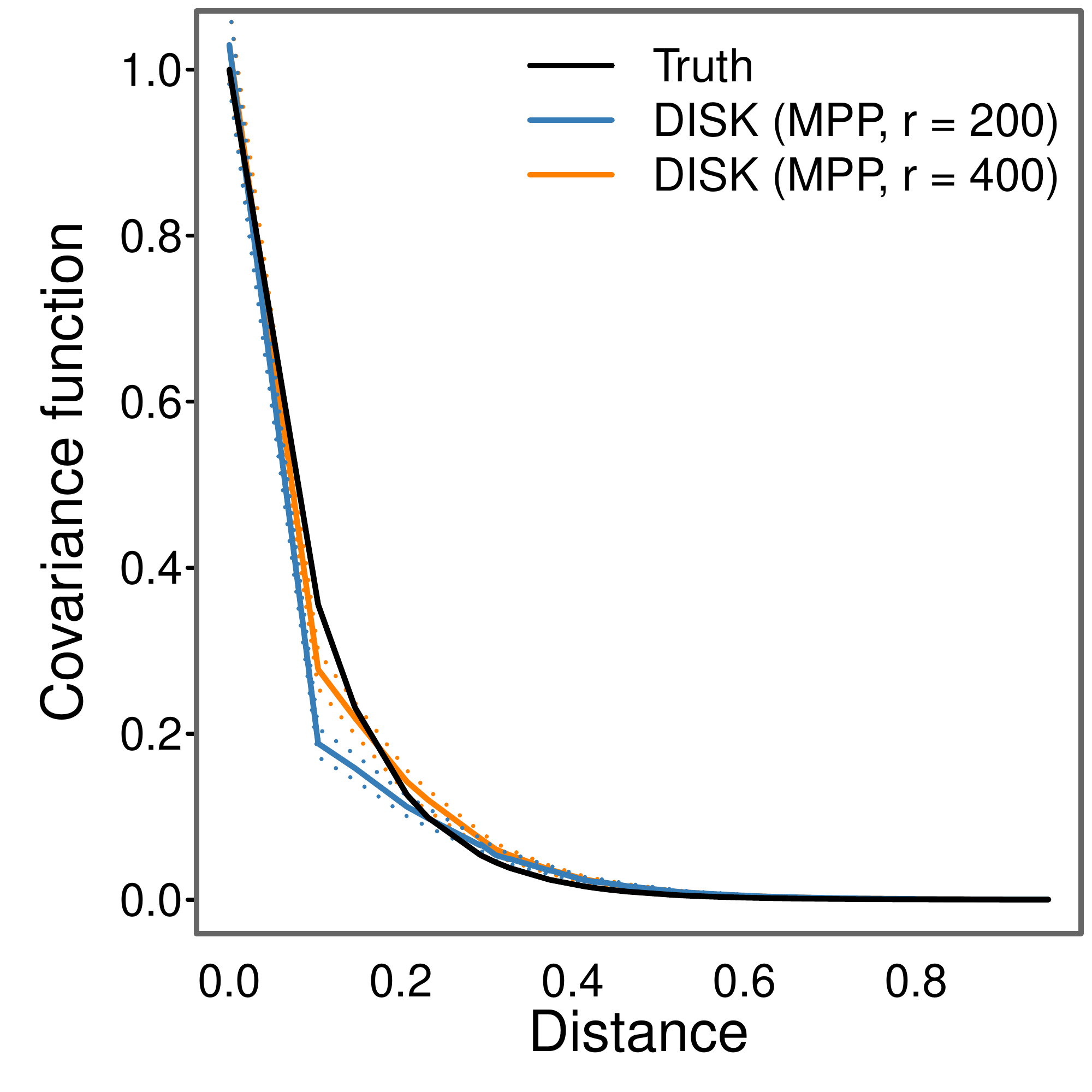}
  \caption{Covariance function of DISK-MPP as a function distance with 200 and 400 knots (r), respectively. The true covariance function is $\cov\{w_0(\sbb_i), w_0(\sbb_j)\} = \exp(-10 \|\sbb_i - \sbb_j \|_2)$.}
  \label{fig:cov_fun}
\end{figure}

Our second simulation example provides performance of DISK (MPP) with various {choices} of the number of knots when data are simulated from a Gaussian random field with nowhere differentiable  surface.  Based on the best results in Simulation 1, we use DISK (MPP) with $k=20$ and laGP as our only competitor. The bias, variance, and $L_2$-risk estimates of DISK (MPP) and laGP {show a similar pattern as in {Simulation 1}} (Table~\ref{inf-i-ns-e4a}). For both DISK (MPP) with $r=200$ and $r=400$ knots, the coverage of 95\% CI turn out to be nominal. The posterior median of all parameters are close to the true value with 95\% CIs covering the true value for $\beta,\sigma^2,\phi$, and the same is true for predictions and 95\% predictive intervals; however, DISK (MPP) slightly overestimates $\tau^2$ (Table~\ref{inf-p-ns-e4a}). This is expected given that the MPP prior applied to the full data tends to overestimate $\tau^2$.  The predictive inference of DISK shows desirable point prediction with precise characterization of uncertainty. We also emphasize that the full Bayesian inference from DISK (MPP) is computationally extremely efficient and takes less than 2 hours; {see supplementary materials for detailed comparisons}. %is accompanied by a run time of less than $2$ hours. {run times needed to be changed given that SS has upgraded everything to C++. Also could you please recheck the Coverage and length for laGP. If laGP produces massive under-coverage, it needs to be emphasized.}.

{The DISK posterior of the covariance function from modified predictive process is plotted and contrasted with the true correlation function in Figure \ref{fig:cov_fun}. While earlier articles \citep{SanHua12} point out discrepancy between estimated correlation function of the modified predictive process and the true data generating correlation function, Figure \ref{fig:cov_fun} shows that DISK posterior of the correlation function recovers the true correlation quite accurately. Moreover, the accuracy increases once we employ more knots in each subset.} Table~\ref{inf-i-ns-e4a} reveals that laGP offers better $L_2$-risk estimates with comparable quantification of uncertainties for  surface estimation; however, it needs to be emphasized that while running laGP using the available package, the $\beta$ parameter is fixed at its true value. In fact, as noted before, laGP does not offer posterior estimates of parameters which are readily available from DISK. Additionally, strong local variability in the generated nowhere differentiable true spatial surface naturally prefers a nearest neighbor approach over a low-rank approach. {Since DISK (MPP) can be conceptualized as a computationally convenient alternative to MPP, a low-rank approach, with large number of knots, it is understandable that it may appear to be less effective in estimating the  surface than laGP in this case. Nevertheless, the model free nature of the DISK approach allows us to fit a nearest neighbor approach, including NNGP, on each subset to improve inference and expedite computations by multiple folds. We plan to carefully investigate this DISK feature elsewhere.}

\begin{table}[ht]
\caption{Inference on the values of spatial surface at the locations in $\Scal^*$ in Simulation 2.
    The numbers in parentheses are standard deviations over 10 simulation replications. The bias, variance, and Bayes $L_2$-risk in the estimation of $w_0$ are defined in \eqref{eq:sim2} and the coverage and length of 95\% credible intervals are calculated pointwise for the locations in $\Scal^*$}\label{inf-i-ns-e4a}
\centering
{\tiny
\begin{tabular}{|r|l|l|l|l|l|}
  \hline
& Bias$^2$ & Variance& $L_2$-Risk & Coverage & Length \\
  \hline
laGP                   & 0.4059 (0.0130) & 0.4910 (0.0086) & 0.8969 (0.0189) & 0.9670 (0.0029) & 2.7216 (0.0237) \\
DISK (MPP),$k=20,r=200$ & 0.8133 (0.0264) & 0.8791 (0.0619) & 1.6923 (0.0793) & 0.9598 (0.0067) & 3.6735 (0.1311) \\
DISK (MPP),$k=20,r=400$ & 0.7295 (0.0256) & 0.8219 (0.0571) & 1.5515 (0.0756) & 0.9645 (0.0066) & 3.5531 (0.1245) \\
   \hline
\end{tabular}
}
\end{table}

\begin{table}[ht]
\caption{Parameter estimates along with 95\% credible intervals from DISK(MPP) with $r=200$ and $r=400$ in Simulation 2. For parametric inference posterior medians are provided along with the 95\% credible intervals (CIs) in the parentheses , where available. Similarly mean squared prediction errors (MSPEs) along with length and coverage of 95\% predictive intervals (PIs) are presented, where available. The upper and lower quantiles of 95\% CIs and PIs are averaged over 10 simulation replications, with the numbers in parentheses for the last three columns denoting standard deviations across replications;  `-' indicates that the parameter estimate or prediction is not provided by the software or the competitor.}\label{inf-p-ns-e4a}
\centering
{\tiny
\begin{tabular}{|r| l |l|l|l|l|l|l|l}
  \hline
 & $\beta$ & $\sigma^2$ & $\tau^2$ & $\phi$ & MSPE & Coverage & Length & $\log_{10}$(Time) \\
  \hline
Truth & 1 & 1 & 0.1 & 9 & - & - & - & -\\
laGP & -  &  -  &  -   &   -  & 0.50 (0.0120) & 0.21 (0.0115) & 0.38 (0.0003) & 0.79 (0.0193) \\
   DISK, $k=20,r=200$ & 1.00 (0.96, 1.03) & 0.97 (0.92, 1.03) & 0.14 (0.12, 0.17) & 9.28 (8.94, 9.68) & 0.90 (0.0210) & 0.96 (0.0027) & 3.97 (0.0773) & 2.39 (0.0011) \\
   DISK, $k=20,r=400$ & 1.00 (0.96, 1.03) & 0.97 (0.92, 1.03) & 0.14 (0.11, 0.17) & 9.32 (8.96, 9.73) & 0.82 (0.0194) & 0.96 (0.0035) & 3.86 (0.0717) & 2.51 (0.0262) \\
   \hline
\end{tabular}
}
\end{table}

\subsubsection{Simulation 3: large spatial data.}
\label{sec:simulation-2:-large}

Our ultimate goal is to apply DISK in massive data settings, so we evaluate its performance when $n=10^6$. As mentioned earlier, massive size of the data in Simulation 3 prohibits the fitting of models based on full-rank and low-rank GPs, including MPP, LatticeKrig, and NNGP due to numerical issues, leaving only laGP as a feasible competitor. Since Simulation 1 demonstrates similar performance of DISK (MPP) and DISK (GP) with DISK (MPP) having a smaller run time, we use only DISK (MPP) for comparisons with laGP in Simulation 3.
An identical three-step strategy for fitting DISK (MPP) is employed as in Simulation 1 but with $k=500$ and with $r=400$ and $r=600$. Notably, $r$ is increased from $200$ and $400$ in Simulation 1 to $400$ and $ 600$ to account for the larger subset size in this simulation, maintaining a high $r/m$ ratio.

The results for DISK's uncertainty quantification in parameter estimation and prediction agree with those observed in Simulation 1, but, unlike Simulation 1, DISK outperforms laGP in the estimation of $w_0(\sbb^*)$ for $\sbb^* \in \Scal^*$ for both choices of $r$ (Tables \ref{inf-w-e5} and \ref{inf-p-e6}). The point estimates of $\beta$ and $\tau^2$ are close to their true values with narrow 95\% CIs. For $\tau^2$, the CI misses the truth, which is expected given that the full data GP in Simulation 1 underestimates $\tau^2$. The bias, variance, and Bayes $L_2$-risk of DISK (MPP) for both $r$s are lower than those of laGP. The coverages of 95\% CIs for laGP and DISK (MPP) are the same but the lengths of 95\% CIs in DISK (MPP) are smaller than those of laGP for both $r$s (see Table~\ref{inf-w-e5}). We conclude that DISK (MPP) matches the performance of laGP in delivering predictive inference, while it outperforms laGP in terms of estimation of $w_0(\sbb^*)$ for $\sbb^* \in \Scal^*$.

\begin{table}[ht]
  \caption{Inference on the values of spatial surface at the locations in $\Scal^*$ in Simulation 3. The numbers in parentheses are standard deviations over 10 simulation replications. The bias, variance, and $L_2$-risk in the estimation of $w_0$ are defined in \eqref{eq:sim2} and the coverage and length of 95\% credible intervals are calculated pointwise for the locations in $\Scal^*$.}
  \label{inf-w-e5}
  \centering
  {\tiny
    \begin{tabular}{|r|c|c|c|c|c|}
      \hline
      & Bias$^2$ & Variance& $L_2$-Risk & Coverage & Length \\
      \hline
      laGP & 0.0002 (0.0000) & 0.0100 (0.0000) & 0.0102 (0.0000) & 1.0000 (0.0000) & 0.3905 (0.0006) \\
      DISK, $k=500$, $r=400$ & 0.0002 (0.0000) & 0.0030 (0.0000) & 0.0061 (0.0000) & 1.0000 (0.0000) & 0.2132 (0.0002) \\
      DISK, $k=500$, $r=600$ & 0.0001 (0.0000) & 0.0026 (0.0000) & 0.0052 (0.0000) & 1.0000 (0.0000) & 0.1977 (0.0002) \\
      \hline
    \end{tabular}
  }
\end{table}

\begin{table}[h]
  \caption{
    Parametric inference and prediction in Simulation 3. For parametric inference posterior medians are provided along with
    the 95\% credible intervals (CIs) in the parentheses, where available. Similarly mean squared prediction errors (MSPEs) along with length and coverage of 95\% predictive intervals (PIs) are presented, where available. The upper and lower quantiles of 95\% CIs and PIs are averaged over 10 simulation replications, with the numbers in parentheses for the last three columns denoting standard deviations across replications;  `-' indicates that the parameter estimate or prediction is not provided by the software or the competitor.
  }
  \label{inf-p-e6}
\centering
{\tiny
\begin{tabular}{|r|c|c|c|c|c|c|c|}
  \hline
  & $\beta$ & $\sigma^2$ & $\tau^2$ & $\phi$  \\
  \hline
  Truth &  1.00 &   -  &    0.01 &    -  \\
  %laGP &   - &       - &       - &       - \\
   DISK (MPP) ($r=400$, $k=500$)  & 1.01 (0.98, 1.04) & 0.16 (0.15, 0.17) & 0.008 (0.008, 0.008) & 0.13 (0.13, 0.13) \\
   DISK (MPP) ($r=600$, $k=500$)  & 1.01 (0.98, 1.04) & 0.16 (0.15, 0.17) & 0.008 (0.008, 0.008) & 0.13 (0.13, 0.13) \\
  \hline
  & MSPE & Coverage & Length   & \\
  \hline
   laGP                         &  0.010 (0.0003) & 0.94 (0.0040) & 0.39 (0.0000) & \\
   DISK (MPP) ($r=400$, $k=500$)  &  0.009 (0.0003) & 0.96 (0.0030) & 0.41 (0.0000) & \\
   DISK (MPP) ($r=600$, $k=500$)  &  0.009 (0.0002) & 0.95 (0.0040) & 0.40 (0.0000) & \\
  \hline
\end{tabular}
}%
\end{table}

\subsection{Real data: Sea Surface Temperature data}
\label{sst}

%One of most important ecological issues concerning our planet is climate change.
%It is generally accepted that the earth's climate will change in response to radiative forces induced by the changes in atmospheric gases, cloud temperature, sea surface temperature (SST), water vapor, among others.
A description of the evolution and dynamics of the SST is a key component of the study of the earth's climate. SST data  (in centigrade) from ocean samples have been collected by voluntary observing ships, buoys, and military and scientific cruises for decades. During the last 20 years or so, the SST database has been complemented by regular streams of remotely sensed observations from satellite orbiting the earth. A careful quantification of variability of SST data is important for climatological research, which includes determining the formation of sea breezes and sea fog and calibrating measurements from weather satellites \citep{Dietal08}. A number of articles have appeared to address this issue in recent years; see \cite{berliner2000long,LemSan09,wikle2011polynomial}.

We consider the problem of capturing the spatial trend and characterizing the uncertainties in the SST in the west coast of mainland U.S.A., Canada, and Alaska between $40^\circ$--$65^\circ$ north latitudes and $100^\circ$--$180^\circ$ west longitudes. The dataset is obtained from NODC World Ocean Database (\url{https://www.nodc.noaa.gov/OC5/WOD/pr_wod.html}). Due to our focus on spatial modeling, we ignore the temporal component. After screening the data for quality control, we choose a random subset of about $1,000,800$ spatial observations over the selected domain. From the selected observations, we randomly select $10^6$ observations as training data and the remaining observations are used to compare the performance of DISK and its competitors.  We replicate this setup ten times. The selected domain is large enough to allow considerable spatial variation in SST from north to south and provides an important first step in extending these models for analyzing global-scale SST database.

The SST data in the selected domain shows a clear decreasing trend in SST with increasing latitude (Figure~\ref{fig:sst-pred}). Based on this observation, we add latitude as a linear predictor in the univariate spatial regression model \eqref{parent_proc} to explain the long-range directional variability in the SST.
% The $i$th spatial location in SST data is $\sbb_i= (\text{long}_i, \text{lat}_i)$, where long$_i$ and lat$_i$ represents the $i$th longitude and latitude in training data. Thus the model in \eqref{parent_proc} for SST data is
%y(\sbb_i)= \beta_0 + \beta_1 \,  \text{lat}_i + w(\sbb_i) + \epsilon (\sbb_i), \quad \epsilon(\sbb_i) \sim N(0, \tau^2), \quad i = 1, \ldots, 10^6
%\begin{align}
%  \label{eq:sst-mdl}
 % y(\sbb_i)= \beta_0 + \beta_1 \,  \text{lat}_i + w(\sbb_i) + \epsilon (\sbb_i), \quad \epsilon(\sbb_i) \sim N(0, \tau^2), \quad i = 1, \ldots, 10^6.
%\end{align}
%We use the SST data at 800 test locations, denoted as $\Scal^*$, for comparing the performance of DISK and its competitors in terms of inference and prediction.
The setup is identical to Simulation 2, except for the presence of the latitude predictor with the corresponding coefficient $\beta_1$.
We assign $N(0, 100)$ prior to $\beta_1$, and the remaining priors and DISK competitors are identical to those in Simulation 2. For application of CMC (MPP), DISK (MPP), and SDP (MPP), we follow the three-step strategy used in Simulation 1 but with $k=300$ and $r=400$. Due to the lack of ground truth for estimating $w_0(\sbb^*)$, we compare the four methods in terms of their inference on $\Omegab$ and prediction of $y(\sbb^*)$ for $\sbb^* \in \Scal^*$ in terms of MSPE and the length and coverage of 95\% posterior PIs.

DISK (MPP) outperforms CMC (MPP) and SDP (MPP) in predictions while closely matching the results of laGP, the current state-of-the-art method for modeling massive spatial data.
The  50\%, 2.5\%, and 97.5\% quantiles of the posterior distributions for $\Omegab$, $w(\sbb^*)$ and $y(\sbb^*)$ for every $\sbb^* \in \Scal^*$ are used for estimation and  uncertainty quantification. CMC (MPP), SDP (MPP) and DISK (MPP)  agree closely in their inference on $\Omegab$, but SDP (MPP) fails to provide any result for $w(\sbb^*)$ or $y(\sbb^*)$ due to the large size of $\Scal^*$ (Table \ref{tbl:sst-pred}). For every $\sbb^* \in \Scal^*$, CMC's and DISK's estimates of $w(\sbb^*)$ and $y(\sbb^*)$ agree closely, but CMC severely underestimates uncertainty in $w(\sbb^*)$ and $y(\sbb^*)$ (Figures \ref{fig:sst-pred} and \ref{fig:sst-w} and Table \ref{tbl:sst-pred}). The pointwise predictive coverages of laGP and DISK match their nominal levels; however, the 95\% posterior PIs of DISK are wider than those of laGP because DISK accounts for uncertainty due to the error term (Figure \ref{fig:sst-pred} and Table \ref{tbl:sst-pred}).
%Our results in this section agree very closely with our results in Simulation 2, so we conclude that DISK (MPP) yields state-of-the-art predictive inference, spatial surface interpolation, and parameter estimation with 95\% credible intervals.
As a whole, SST data analysis reinforces our findings on DISK as a computationally efficient, flexible, and fully Bayesian inferential tool.

\begin{table}[h]
  \caption{Parametric inference and prediction in SST data. CMC, SDP, and DISK use  MPP-based  modeling with $r=400$ on $k=300$ subsets.
    For parametric inference posterior medians are provided along with
    The 95\% credible intervals (CIs) in the parentheses, where available. Similarly mean squared prediction errors (MSPEs) along with length and coverage of 95\% predictive intervals (PIs) are presented, where available. The upper and lower quantiles of 95\% CIs and PIs are averaged over 10 simulation replications, with the numbers in parentheses for the last three columns denoting standard deviations across replications;  `-' indicates that the parameter estimate or prediction is not provided by the software or the competitor}
  \label{tbl:sst-pred}
  \centering
  {\tiny
    \begin{tabular}{|r|c|c|c|c|c|}
      \hline
      & $\beta_0$ & $\beta_1$ & $\sigma^2$ & $\tau^2$ & $\phi$ \\
      \hline
      laGP & - & - & - & - & - \\
      CMC & 31.78 (31.19, 32.37) & -0.35 (-0.36, -0.34) & 12.22 (11.78, 12.69) & 0.110 (0.108, 0.112) & 0.021 (0.020, 0.022) \\
      SDP & 31.67 (31.45, 31.82) & -0.35 (-0.36, -0.34) & 14.42 (13.29, 14.80) & 0.110 (0.108, 0.112) & 0.021 (0.020, 0.022) \\
      DISK & 32.34 (31.74, 32.95) & -0.32 (-0.33, -0.31) & 11.83 (11.23, 12.45) & 0.184 (0.182, 0.185) & 0.039 (0.037, 0.041) \\
      \hline
      & MSPE & Coverage & Length & & \\
      \hline
      laGP & {0.25} (0.00) & {0.95} (0.00) & {2.35} (0.00) & & \\
      CMC & 0.41 (0.00) & 0.13 (0.00) & 0.14 (0.00) &  & \\
      SDP & - & - & - & &  \\
      DISK & 0.41 (0.00) & {0.95} (0.00) & 2.67 (0.00) &  &  \\
      \hline
    \end{tabular}
  }
\end{table}

\begin{figure}[h]
  \centering
  \includegraphics[width=\textwidth]{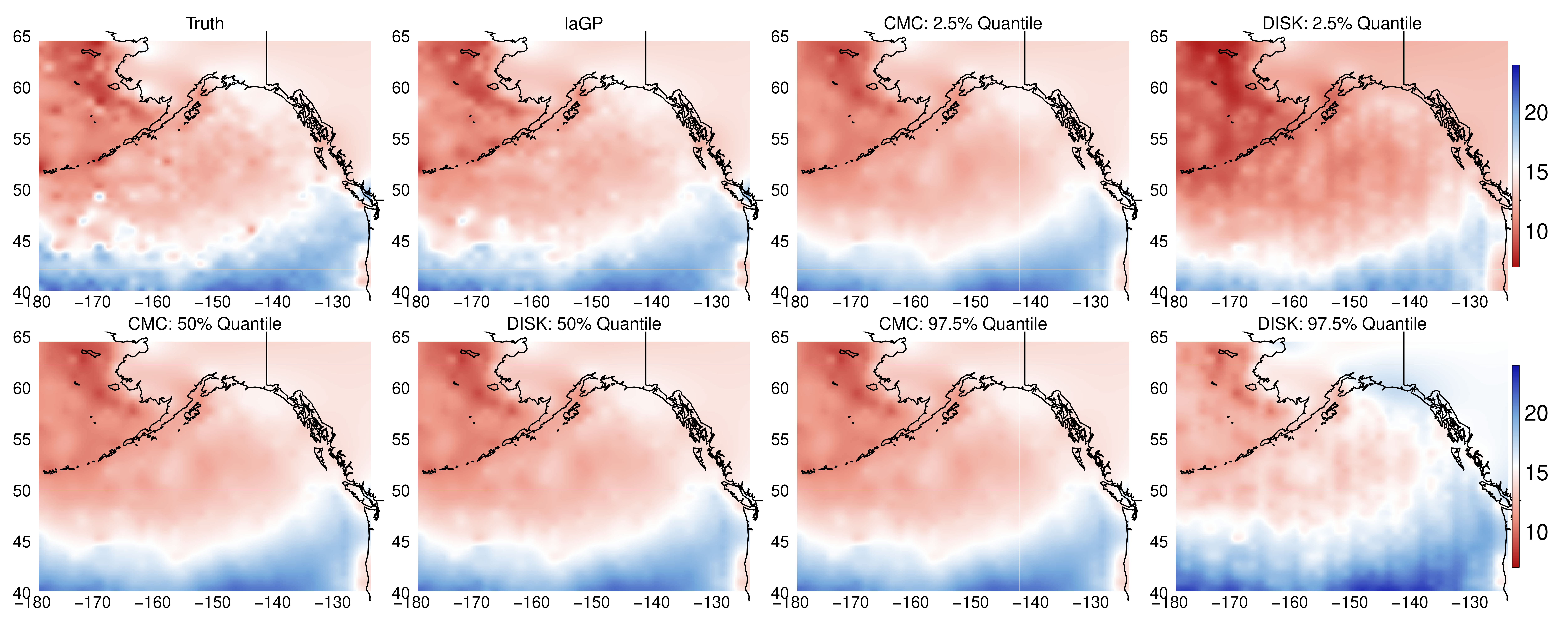}
  \caption{Predication of sea surface temperatures at the locations in $\Scal^*$. Negative longitude means degree west from Greenwich.
    CMC and DISK use  MPP-based  modeling with $r=400$ on $k=300$ subsets and laGP uses the `nn' method. The 2.5\%, 50\%, and 97.5\% quantile surfaces, respectively, represent pointwise quantiles of the posterior distribution for $y(\sbb^*)$ for every $\sbb^* \in \Scal^*$. }
  \label{fig:sst-pred}
\end{figure}

\begin{figure}[h]
  \centering
  \includegraphics[width=\textwidth]{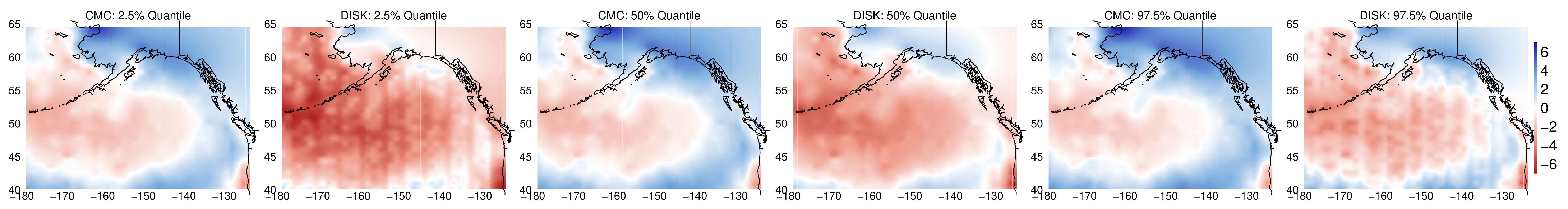}
  \caption{Interpolated spatial surface $w$ at the locations in $\Scal^*$. Negative longitude means degree west from Greenwich.
    CMC and DISK use  MPP-based  modeling with $r=400$ on $k=300$ subsets. The 2.5\%, 50\%, and 97.5\% quantile surfaces, respectively, represent pointwise quantiles of the posterior distribution for $w(\sbb^*)$ for every $\sbb^* \in \Scal^*$.}
  \label{fig:sst-w}
\end{figure}

\section{Discussion}
\label{sec:discussion}

This article presents a novel distributed Bayesian approach for kriging with massive data using the divide-and-conquer technique.
%Distributed computation is the key in its ability to enhance the computational scalability of any spatial model, so we have called our approach Distributed Kriging (DISK).
 We provide explicit upper bound on the number of subsets $k$ depending on the analytic properties of the spatial surface, so that the Bayes $L_2$-risk of the DISK posterior is nearly minimax optimal. We have confirmed this empirically via simulated and real data analyses, where DISK compares well with state-of-the-art methods. Additional theoretical results in the supplementary material shed light on the posterior convergence rate of the DISK posterior.

%Our theory has focused mostly on the estimation of $w_0$. Following the common practice, our theory assumes that the parameters $\betab$, $\tau_0^2$, $\sigma$, $\phi$ are fixed. In practice, these parameters are also estimated under the unified Bayesian framework. It is therefore desirable, though technically much more complicated, to develop a more general theory that treats all these parameters as unknown.

The simplicity and generality of the DISK framework enable scaling of any spatial model. For example, recent applications have confirmed that the NNGP prior requires modifications if scalability is desired for even a few millions of locations \citep{Finetal17}. In future, we aim to scale ordinary NNGP and other multiscale approaches to tens of millions of locations with the DISK framework. Another important future work is to extend the DISK framework for scalable modeling of multiple correlated outcomes observed over massive number of locations.

This article focuses on developing the DISK framework for spatial modeling due to the motivating applications from massive geostatistical data. The DISK framework, however, is applicable to any mixed effects model where the random effects are assigned a GP prior, which includes Bayesian nonparametric regression using GP prior.
%If we set $\betab=0$ in \eqref{parent_proc}, univariate spatial regression with MPP prior reduces to nonparametric regression using the sparse GP prior obtained using the Fully Independent Training Conditional (FITC) approximation \citep{SneGha06}. Our empirical investigations for the MPP prior provide evidence that the DISK framework offers an approach for enhancing the scalability and accuracy of the FITC approximation if the covariates belong to a two dimensional space.
We plan to explore more general applications in the future with high dimensional covariates.

\section*{Acknowledgment}

We thank Professor David B. Dunson of Duke University for inspiring many questions that we have addressed in this work and Professor Sudipto Banerjee of UCLA for helpful conversations. Cheng Li's research is supported by the National University of Singapore start-up grant (R155000172133). This material is based on research supported by the Office of Naval Research under Award Number ONR-BAA N00014-18-1-2741. 

\clearpage

\renewcommand\thesection{\arabic{section}}
\renewcommand\thesubsection{\thesection.\arabic{subsection}}
\renewcommand\thesubsubsection{\thesubsection.\arabic{subsubsection}}

\setcounter{section}{0}

\begin{center}
\textbf{\LARGE Supplementary Material for A Divide-and-Conquer Bayesian Approach to Large-Scale Kriging}
\end{center}

\section{Proof of Theorems in Section 3.4}

Recall that the spatial regression model with a GP prior considered in Section 2 is
\begin{align}
y(\sbb_i)= w(\sbb_i) + \epsilon(\sbb_i), \quad \epsilon(\sbb_i) \sim N(0, \tau^2), \quad i = 1, \ldots, n, \quad  w(\cdot) \sim \text{GP}\{0, C_{\alphab}(\cdot, \cdot) \}. \label{atheory-mdl}
\end{align}
Writing this model for the $n$ locations in $\Scal$ gives
\begin{align}
  \label{eq:ap1}
  \yb = \wb_0 + \epsilonb, \quad \epsilonb \mid \Scal \sim N(\zero, \tau^2 \Ib), \quad \yb \mid \Scal \sim N(\wb_0, \tau^2 \Ib),
\end{align}
where $\wb_0 = \{w_0(\sbb_1), \ldots, w_0(\sbb_n)\}$ and $\epsilonb = \{\epsilon(\sbb_1), \ldots, \epsilon(\sbb_n)\}$ are the true value of the residual spatial surface and white noise realized at the locations in $\Scal$. We can write the model in a similar format for each data subset. Let $\sbb \in \Dcal$ be a location, $w_0(\sbb)$ be the true value of the residual spatial surface, $\EE_{\sbb^*}$, $\EE_{0}$, $\EE_{\Scal}$, and $\EE_{0 \mid \Scal}$  respectively be the expectations with respect to the distributions of $\sbb^*$, $(\Scal, \yb)$, $\Scal$, and $\yb$ given $\Scal$.
If $\overline w(\sbb^*)$ is a random variable that follows the DISK posterior for estimating $w_0(\sbb^*)$, then $\overline w(\sbb^*)$ has the density $N(\overline m, \overline v)$, where
\begin{align}
  \label{eq:ae3}
  \overline m = \frac{1}{k} \sum_{j=1}^k \cb_{j,*}^T (\Cb_{j, j} + \tfrac{\tau^2}{k} \Ib)^{-1}\yb_j , \;
  \overline v^{1/2} = \frac{1}{k}\sum_{j=1}^k v_j^{1/2}, \;
  v_j = c_{*,*} - \cb_{j,*}^T (\Cb_{j,j} + \tfrac{\tau^2}{k} \Ib)^{-1} \cb_{j, *},
\end{align}
where $c_{*,*}=C_{\alphab}(\sbb^*,\sbb^*)$, and $\cb_{j,*}^T = \cb_j^T(\sbb^*) = [C_{\alphab}(\sbb_{j1}, \sbb^*), \ldots, C_{\alphab}(\sbb_{jm}, \sbb^*)]$. In the proofs below, without confusion, we use the notation $\cb_{j,*}$ and $\cb_j(\sbb^*)$ interchangeably.

%$c_{*,*} = \cov\{w(\sbb^*), w(\sbb^*)\}$, and $\cb_{j,*}^T = [\cov\{w(\sbb_{j1}), w(\sbb^*)\}, \ldots, \cov\{w(\sbb_{jm}), w(\sbb^*)\}]$.

The Bayes $L_2$-risk in estimating $w_0$ using the DISK posterior is defined as
\begin{align} \label{eq:ap2}
\EE_{0} \left[ \EE_{\sbb^*} \{\overline w(\sbb^*) - w_0(\sbb^*)\}^2 \right]  \overset{(i)}{=} \EE_{\Scal}  \int_{\Dcal} \EE_{0 \mid \Scal}\{\overline w(\sbb^*) - w_0(\sbb^*)\}^2 \PP_{\sbb}(d \sbb^*),
\end{align}
where $(i)$ follows from Fubini's theorem. Using bias-variance decomposition,
\begin{align*}
\EE_{0 \mid \Scal} \{ \overline w(\sbb^*) - w_0(\sbb^*)\}^2  &= \EE_{0 \mid \Scal} \left[\overline w(\sbb^*) - \EE_{0 \mid \Scal} \{\overline w(\sbb^*)\} +  \EE_{0 \mid \Scal} \{\overline w(\sbb^*)\} - w_0(\sbb^*) \right]^2\\
&= \left[ \EE_{0 \mid \Scal} \{\overline w(\sbb^*)\} - w_0(\sbb^*) \right]^2  + \EE_{0 \mid \Scal} \left[\overline w(\sbb^*) - \EE_{0 \mid \Scal} \{\overline w(\sbb^*)\}\right]^2   \\
&\equiv  \text{bias}_{0 \mid \Scal}^2\{\overline w(\sbb^*)\} + \var_{0 \mid \Scal}\{\overline w(\sbb^*)\} .
\end{align*}
If $\cb_j^T(\cdot) = [\cov\{w(\cdot), w(\sbb_{j1})\}, \ldots, \cov\{w(\cdot), w(\sbb_{jm})\}] = \{C_{\alphab}(\sbb_{j1}, \cdot), \ldots, C_{\alphab}(\sbb_{jm}, \cdot)\}$,  $\cb^T(\cdot) = \{\cb_1^T(\cdot), \ldots, \cb_k^T(\cdot)\}$, $\wb_{0j}^T = \{w_0(\sbb_{j1}), \ldots, w_0(\sbb_{jm})\}$, and $\wb_{0}^T = \{\wb^T_{01}, \ldots, \wb_{0k}^T\}$,
then the distribution of $\overline w(\sbb^*)$ in \eqref{eq:ae3} implies that
\begin{align*}
\EE_{0 \mid \Scal} \{\overline w(\sbb^*) \}  &= \frac{1}{k} \sum_{j=1}^k \cb_j^T (\sbb^*) \left(\Cb_{j, j} + \tfrac{\tau^2 }{k} I \right)^{-1} \wb_{0j} = \cb_j^T(\sbb^*) (k \Lb + \tau^2 \Ib)^{-1} \wb_0,\\
\var_{0 \mid \Scal}\{\overline w(\sbb^*)\} &= \var_{0 \mid \Scal} \left[ \EE \{\overline w(\sbb^*) \mid \yb\} \right] + \EE_{0 \mid \Scal} \left[ \var \{\overline w(\sbb^*) \mid \yb \} \right] \\
&= \tau^2 \cb^T(\sbb^*) (k \Lb + \tau^2 \Ib)^{-2} \cb(\sbb^*) + \overline v(\sbb^*),
\end{align*}
where $\Lb$ is a block-diagonal matrix with $\Cb_{1,1}, \ldots, \Cb_{k,k}$ along the diagonal;
therefore, the Bayes $L_2$-risk in \eqref{eq:ap2} can be decomposed into three parts:
\begin{align}
  \label{eq:eap4}
\EE_{\sbb^*} \EE_{\Scal} \{  \cb_{*}^T(k \Lb + \tau^2 \Ib)^{-1}\wb_0 - w_0(\sbb^*)\}^2 + \tau^2 \EE_{\sbb^*} \EE_{\Scal} \left\{\cb^T_* (k \Lb + \tau^2 \Ib)^{-2} \cb_* \right\}  + \EE_{\sbb^*}\EE_{\Scal}  \overline v(\sbb^*),
\end{align}
which correspond to $\text{bias}^2$, ${\var}_{\text{mean}}$ and ${\var}_{\text{DISK}}$ in Theorem 1.

\subsection{Proof of Theorem 1}

The next three sections find upper bounds for each of the three terms in \eqref{eq:eap4}. The conclusion of Theorem 1 follows directly by combining the three upper bounds.

%We first cite a lemma from \citet{Zhaetal15} which will be useful in deriving our bounds.
%\begin{lemma}[\citet{Zhaetal15} Lemma 10]\label{zhanglemma10}
%For any given $j\in \{1,\ldots,k\}$ and $d\in \NN$, let $\Phib^j \in \RR^{m \times d}$ be a matrix such that its $(i,h)$ entry is $\Phib^j_{ih} =  \phi_{h}(\sbb_{ji})$, $i = 1, \ldots, m$, $h = 1, \ldots, d$. Let $\Mb = \text{diag}\left(\mu_1,\ldots,\mu_d\right)$ and $\Qb = \left( \Ib + \frac{\tau^2}{km} \Mb^{-1} \right)^{1/2}$. For any given $\delta>0$, define the event
%\begin{align}\label{eset1}
%& \Ecal_1 = \left\{ \vertiii{ \Qb^{-1}\left( \frac{1}{m}\Phib^{j^T}  \Phib^j - \Ib \right) \Qb^{-1} } \leq 1/2\right\},
%\end{align}
%Then under Assumptions B.1-B.3, for some large constant $A\geq 100$,
%\begin{align}  \label{eq:26}
%\PP \left( \Ecal_1^c \right) \leq \left\{A \max \left( \sqrt{\max(r, \log d)}, \frac{\max(r, \log d)}{m^{1/2 - 1/r}} \right)  \frac{ \rho^2 \gamma(\tfrac{\tau^2_0}{km})}{\sqrt{m}} \right\}^r =  \left\{  \frac{A b(m, d, r) \rho^2 \gamma(\tfrac{\tau^2_0}{km})}{\sqrt{m}} \right\}^r,
%\end{align}
%where $b(m,d,r)$ and $\gamma(\cdot)$ are defined in Theorem 3.1.
%\end{lemma}

\subsubsection{An upper bound for the squared bias}
\label{sec:lemma-6-zhangs}

Consider the squared-bias term in \eqref{eq:eap4}. For ease of presentation, assume that $\{\sbb_1, \ldots, \sbb_n\}$ are relabeled to $$\{\sbb_{11}, \ldots, \sbb_{1m}, \ldots, \sbb_{k1}, \ldots, \sbb_{km}\}$$ corresponding to the $k$ subsets. Define $\xi_{\sbb_{ji}}(\cdot) = C_{\alphab}(\sbb_{ji}, \cdot)$,
\begin{align}
  \label{eq:14}
  \wb^T_0 &= \left( \langle w_0, \xi_{\sbb_{11}} \rangle_{\HH}, \ldots, \langle w_0, \xi_{\sbb_{1m}} \rangle_{\HH}, \ldots, \langle w_0, \xi_{\sbb_{k1}} \rangle_{\HH}, \ldots, \langle w_0, \xi_{\sbb_{km}} \rangle_{\HH} \right) \nonumber \\
   &\equiv  (\wb_{01}^T, \ldots,  \wb_{0k}^T), \nonumber\\
  \cb^T(\cdot) &= (\xi_{\sbb_{11}}, \ldots, \xi_{\sbb_{1m}}, \ldots, \xi_{\sbb_{k1}}, \ldots, \xi_{\sbb_{km}})  \nonumber \\
  &= \{\cb_1^T(\cdot), \ldots,  \cb_k^T(\cdot)\} \equiv (\cb_1^T, \ldots,  \cb_k^T).
\end{align}
The following lemma provides an upper bound on the squared bias of the DISK posterior.
\begin{lemma}\label{lem-sup-1}
If Assumptions A.1--A.3 in the main paper hold, then for some global constant $A>0$,
\begin{align*}
& \EE_{\sbb^*} \EE_{\Scal} \{  \cb_{*}^T(k \Lb + \tau^2 \Ib)^{-1}\wb_0 - w_0(\sbb^*)\}^2 \leq \\
& \frac{8\tau^2}{n} \| w_0 \|_{\HH}^2 + \| w_0 \|_{\HH}^2 \; \underset{d \in \NN}{\inf} \, \left[ \frac{8n}{\tau^2_0} \rho^4 \tr(C_{\alphab}) \tr(C_{\alphab}^{d}) +  \mu_1 \left\{  \frac{A b(m, d, r) \rho^2 \gamma(\tfrac{\tau^2_0}{n})}{\sqrt{m}} \right\}^r \right].
\end{align*}
\end{lemma}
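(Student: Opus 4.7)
My plan is to follow the divide-and-conquer kernel ridge regression (KRR) bias analysis of \citet{Zhaetal15}, adapted to our subset-posterior construction in Section 3.2. The starting point is that $\Lb$ is block-diagonal, so the inner quantity splits exactly into an average over subsets:
\[
\cb_{*}^T(k\Lb+\tau^2\Ib)^{-1}\wb_0 - w_0(\sbb^*)
\;=\; \frac{1}{k}\sum_{j=1}^{k}\Big\{\cb_{j,*}^T\big(\Cb_{j,j}+\tfrac{\tau^2}{k}\Ib\big)^{-1}\wb_{0j} - w_0(\sbb^*)\Big\},
\]
and each summand is precisely the pointwise bias of a KRR predictor fitted on $m$ points with effective regulariser $\lambda = \tau^2/(km) = \tau^2/n$ (this is how the rate $\tau^2/n$ will enter the first summand of the lemma). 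Applying Jensen's inequality and using exchangeability of the subsets under the random partitioning scheme reduces the problem to bounding $\EE_{\sbb^*}\EE_{\Scal_1}\{\text{bias}_1^2(\sbb^*)\}$ for a single subset.

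Next I will pass to the operator-theoretic formulation in the RKHS $\HH$. Introducing the empirical covariance operator $\hat T_m = m^{-1}\sum_{i=1}^m \xi_{\sbb_{1i}}\otimes \xi_{\sbb_{1i}}$ and the population operator $T_{\alphab}$ from the Mercer expansion, the reproducing property together with Assumption \ref{r1} ($w_0\in\HH$) lets me express the subset bias at $\sbb^*$ as $\lambda\langle w_0,(\hat T_m+\lambda\Ib)^{-1}\xi_{\sbb^*}\rangle_\HH$. Averaging over $\sbb^*\sim \PP_{\sbb}$ converts this into a Hilbert--Schmidt-type quantity in which the ``oracle'' contribution $\|\lambda(T_{\alphab}+\lambda\Ib)^{-1}w_0\|_{L_2(\PP_{\sbb})}^2 \le \lambda\|w_0\|_\HH^2$ at $\lambda=\tau^2/n$ supplies the leading $8\tau^2\|w_0\|_\HH^2/n$ term, after absorbing the universal numerical constants that arise when one replaces $\hat T_m$ by $T_{\alphab}$.

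The infimum-over-$d$ part of the bound comes from controlling the residual $\hat T_m - T_{\alphab}$ via a rank-$d$ truncation in the Mercer basis $\{\phi_i\}$. The tail beyond $d$ yields the factor $\tr(C_{\alphab}^{d})=\sum_{i>d}\mu_i$, which combined with the trace-class bound of Assumption \ref{r2} and the factor $n/\tau^2$ coming from the resolvent produces the contribution $\tfrac{8n}{\tau^2}\rho^4\tr(C_{\alphab})\tr(C_{\alphab}^{d})$. For the head I will apply a matrix Rosenthal-type $r$-th moment inequality to the empirical $d\times d$ second-moment matrix $m^{-1}\sum_{\ell}\phi(\sbb_{1\ell})\phi(\sbb_{1\ell})^T$, using Assumption \ref{r3} to obtain $\EE\phi_i^{2r}\le \rho^{2r}$; this generates the factor $R(m,n,d,r)=\{A\rho^2 b(m,d,r)\gamma(\tau^2/n)/\sqrt m\}^r$, multiplied by $\mu_1$ from the norm of $\xi_{\sbb^*}$ on the top eigendirection. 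Infimising over $d\in\NN$ and combining gives the stated inequality. I expect the main obstacle to be producing the sharp $\gamma(\tau^2/n)$ factor instead of a crude $1/\lambda$ bound in the concentration step; this will require a careful use of the resolvent identity $(\hat T_m+\lambda\Ib)^{-1}-(T_{\alphab}+\lambda\Ib)^{-1} = (\hat T_m+\lambda\Ib)^{-1}(T_{\alphab}-\hat T_m)(T_{\alphab}+\lambda\Ib)^{-1}$ together with a weighted variant of the moment inequality so that the spectral tail enters as the effective dimension $\gamma$ rather than as the full trace.
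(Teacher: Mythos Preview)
Your proposal is correct and follows essentially the same route as the paper, which itself adapts the bias analysis of \citet{Zhaetal15}: block-diagonality plus Jensen reduces to a single subset, the KRR normal equation $(\hat\Sigmab_j+\lambda\Ib)\,\EE_{0\mid\Scal}(\Delta_j)=-\lambda w_0$ with $\lambda=\tau^2/n$ is derived, and the argument splits into the first $d$ Mercer coordinates versus the tail.

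Two small calibrations are worth flagging. First, the paper does not literally use the resolvent identity you plan to invoke; instead it works in coordinates via $\Qb=(\Ib+\lambda\Mb^{-1})^{1/2}$ and splits on the event $\Ecal_1=\{\vertiii{\Qb^{-1}(m^{-1}\Phib^{jT}\Phib^j-\Ib)\Qb^{-1}}\le 1/2\}$. The sharp $\gamma(\tau^2/n)$ factor you are worried about enters through the bound on $\PP(\Ecal_1^c)$ (this is where the matrix moment inequality is applied, with the $\Qb^{-1}$ weighting producing $\gamma$), not through a perturbation expansion of the resolvent itself; so your ``weighted moment inequality'' is exactly right, but it is used to control a bad-event probability rather than to bound a resolvent difference directly. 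Second, the factor $\mu_1$ does not come from $\xi_{\sbb^*}$: it comes from the crude comparison $\|f\|_{L_2}^2\le \mu_1\|f\|_\HH^2$ applied on $\Ecal_1^c$, together with the a~priori bound $\|\EE_{0\mid\Scal}(\Delta_j)\|_\HH\le\|w_0\|_\HH$ that follows immediately from the normal equation and positivity of $\hat\Sigmab_j$. You will need this a~priori $\HH$-bound explicitly both for the tail $\|\deltab_j^{\uparrow}\|_2^2\le\mu_{d+1}\|w_0\|_\HH^2$ and for the $\Ecal_1^c$ contribution, so make sure to state it early.
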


\begin{proof}
%This is an extension of Lemma 6 in \citet{Zhaetal15} for the DISK posterior and it is proved following similar arguments.
Based on the term $\cb_{*}^T(k \Lb + \tau^2 \Ib)^{-1}\wb_0$ in \eqref{eq:eap4}, we define $\Delta_j$ ($j = 1, \ldots, k$) and $\Delta$ as
\begin{align} \label{eq:12}
\Delta_j(\cdot) &= \yb_j^T (\Cb_{j,j} + \tfrac{\tau^2}{k} \Ib)^{-1} \cb_j (\cdot) - w_0(\cdot) \equiv \tilde w_j (\cdot) - w_0(\cdot), \nonumber\\
\Delta(\cdot) &= \yb^T (k \Lb + \tau^2 \Ib)^{-1} \cb (\cdot) - w_0(\cdot)
= \frac{1}{k} \sum_{j=1}^k \left\{ \tilde w_j (\cdot) - w_0(\cdot) \right\}
= \frac{1}{k} \sum_{j=1}^k \Delta_j(\cdot),
\end{align}
so that $\EE_{0 \mid \Scal} (\Delta) = \wb_0^T(k \Lb + \tau^2 \Ib)^{-1} \cb(\cdot) - w_0(\cdot) = k^{-1} \sum_{j=1}^k  \EE_{0 \mid \Scal} (\Delta_j)$ and $\EE_{\Scal} \|\EE_{0 \mid \Scal} (\Delta) \|_2^2$ yields the $\text{bias}^2$ term in \eqref{eq:eap4}. Jensen's inequality implies that $\|\EE_{0 \mid \Scal} (\Delta) \|_2^2 \leq k^{-1} \sum_{j=1}^k \|\EE_{0 \mid \Scal} (\Delta_j) \|_2^2$, so we only need to find upper bounds for $\| \EE_{0 \mid \Scal} (\Delta_j) \|^2_2$ ($j=1, \ldots, k$).

We can recognize that the optimization problem below has $\tilde w_j(\cdot)$ defined in \eqref{eq:12} as its solution,
\begin{align}
  \label{eq:ba4}
  {\text{argmin}}_{w\in \Hcal}\sum_{i=1}^m \frac{\left\{ w(\sbb_{ji}) - y(\sbb_{ji}) \right\}^2}{2\tau^2/k} + \frac{1}{2} \| w \|_{\HH}^2, \quad j = 1, \ldots, k.
\end{align}
Differentiating \eqref{eq:ba4} and taking expectations with respect to $\EE_{0 \mid \Scal}$ implies that
\begin{align}
  \label{eq:ba6}
  & \sum_{i=1}^m  \EE_{0 \mid \Scal} \left\{ \tilde w_j(\sbb_{ji}) - y(\sbb_{ji}) \right\} \, \xi_{\sbb_{ji}} + \frac{\tau^2}{k} \EE_{0 \mid \Scal} (\tilde w_{j}) \nonumber \\
  & = \sum_{i=1}^m \langle \EE_{0 \mid \Scal} (\Delta_j), \xi_{\sbb_{ji}} \rangle_{\HH} \, \xi_{\sbb_{ji}}  + \frac{\tau^2}{k} \EE_{0 \mid \Scal} (\tilde w_{j}) = 0,
\end{align}
where the last inequality follows because $y(\sbb_{ji}) = \langle w_0, \xi_{\sbb_{ji}} \rangle_{\HH} + \langle \epsilon, \xi_{\sbb_{ji}} \rangle_{\HH}$ and $\langle \EE_{0 \mid \Scal}(\epsilon), \xi_{\sbb_{ji}} \rangle_{\HH} = 0$.  Using \eqref{eq:12},  $\Delta_j = \tilde w_j - w_0$, $\EE_{0 \mid \Scal} (\tilde w_j)   = \EE_{0 \mid \Scal} (\Delta_j) + w_0$, and dividing by $m$ in \eqref{eq:ba6}, we obtain that
% \begin{align}
%   \label{eq:13}
%   \EE_{0 \mid \Scal} \tilde w (\sbb_{ji}) &=  \Delta_j(\sbb_{ji}) + w_0(\sbb_{ji})\nonumber\\
%   \EE_{0 \mid \Scal} \left\{ \tilde w_{ji} - y(\sbb_{ji}) \right\}  &= \Delta(\sbb_{ji})  - \EE_{0 \mid \Scal} \left\{ \epsilon (\sbb_{ji}) \right\} = \Delta_j(\sbb_{ji}) = \langle \Delta_j, \xi_{\sbb_{ji}} \rangle_{\HH}
% \end{align}
% From equation \eqref{eq:ba6},
\begin{align}
  \label{eq:16}
    \frac{1}{m}\sum_{i=1}^m  \langle \EE_{0 \mid \Scal} (\Delta_j), \xi_{\sbb_{ji}} \rangle_{\HH} \, \xi_{\sbb_{ji}}  + \frac{\tau^2}{km}  \EE_{0 \mid \Scal} (\Delta_j) = - \frac{\tau^2}{km} w_0.
\end{align}
If we define the $j$th sample covariance operator as $\hat \Sigmab_j = \frac{1}{m} \sum_{j=1}^m \xi_{\sbb_{ji}} \otimes \xi_{\sbb_{ji}}$, then \eqref{eq:16} reduces to
\begin{align}
  \label{eq:17}
  \left( \hat \Sigmab_j  +  \tfrac{\tau^2}{km} \Ib \right) \EE_{0 \mid \Scal} (\Delta_j) = - \frac{\tau^2}{km} w_0 \implies \|\EE_{0 \mid \Scal}(\Delta_j) \|_{\HH} \leq \|  w_0 \|_{\HH}, \quad j = 1, \ldots, k,
\end{align}
where the last inequality follows because $\hat \Sigmab_j$ is a positive semi-definite matrix.%, and \eqref{eq:12} implies

The rest of the proof finds an upper bound for $\| \EE_{0 \mid \Scal}(\Delta_j) \|_2^2$. We now reduce this problem to a finite dimensional one indexed by a chosen $d \in \NN$. Let $\deltab_{j} = (\delta_{j1}, \ldots, \delta_{jd}, \delta_{j(d+1)}, \ldots, \delta_{j \infty}) \in L_2(\NN)$ such that
\begin{align}
  \label{eq:ba1}
  &\EE_{0 \mid \Scal} (\Delta_j) = \sum_{i=1}^{\infty} \delta_{ji} \phi_i, \quad \delta_{ji} = \langle \EE_{0 \mid \Scal} (\Delta_j),\nonumber \\
  & \phi_i \rangle_{L^2(\PP)}, \quad \| \EE_{0 \mid \Scal} (\Delta_j) \|_2^2 = \sum_{i=1}^{\infty} \delta^2_{ji}, \quad j =1, \ldots, k.
\end{align}
Define the vectors $\deltab_j^{\downarrow}=(\delta_{j1}, \ldots, \delta_{jd})$ and  $\deltab_j^{\uparrow}=(\delta_{j(d+1)}, \ldots, \delta_{j \infty})$, so $\| \EE_{0 \mid \Scal} (\Delta_j) \|_2^2 = \| \deltab_j^{\downarrow} \|_2^2 + \| \deltab_j^{\uparrow} \|_2^2$ and we upper bound $\| \EE_{0 \mid \Scal} (\Delta_j) \|_2^2$ by separately upper bounding $\| \deltab^{\downarrow}_j \|_2^2$ and $\| \deltab_j^{\uparrow} \|_2^2 $. Using the expansion $C_{\alphab}(\sbb, \sbb') = \sum_{j=1}^{\infty} \mu_j \phi_j(\sbb) \phi_j(\sbb')$ for any $\sbb, \sbb' \in \Dcal$, we have the following upper bound for $\| \deltab_j^{\uparrow} \|_2^2$:

\begin{align}
  \label{eq:ba2}
  \| \deltab_j^{\uparrow} \|_2^2  = \frac{\mu_{d+1}}{\mu_{d+1}} \sum_{i=d+1}^{\infty} \delta_{ji}^2 \leq \mu_{d+1} \sum_{i=d+1}^{\infty} \frac{\delta_{ji}^2} {\mu_{i}} \overset{(i)}{\leq} \mu_{d+1} \| \EE_{0 \mid \Scal} (\Delta_j) \|_{\HH}^2 \overset{(ii)}{\leq} \mu_{d+1} \| w_0 \|^2_{\HH},
\end{align}
where $(i)$ follows because $\| \EE_{0 \mid \Scal} (\Delta_j) \|_{\HH}^2 = \sum_{i=1}^{\infty} \delta_{ji}^2/\mu_{i}$ and $(ii)$ follows from \eqref{eq:17}.

We then derive an upper bound for $\| \deltab_j^{\downarrow} \|_2^2$. Let $\Mb = \diag(\mu_1, \ldots, \mu_d)\in \RR^{d \times d}$, $\Phib^j \in \RR^{m \times d}$ be a matrix such that
\begin{align}
  \label{eq:ba3}
  \Phib^j_{ih} =  \phi_{h}(\sbb_{ji}), \quad i = 1, \ldots, m, \quad h = 1, \ldots, d, \quad j = 1, \ldots, k,
\end{align}
 $w_0 = \sum_{i=1}^{\infty} \theta_i \phi_i$, and the tail error vector $\vb_j = (v_{j1},\ldots, v_{jm})^T \in \RR^m$ ($j=1, \ldots, k$) such that
\begin{align*}
  v_{ji} = \sum_{h = d+1}^{\infty} \delta_{jh} \phi_h(\sbb_{ji}), \quad i = 1, \ldots, m.
\end{align*}
For any $g \in \{1, \ldots, d\}$, taking the $\HH$-inner product with respect $\phi_g$ in \eqref{eq:17} yields
\begin{align}
  \label{eq:18}
  & \left\langle  \left( \frac{1}{m} \sum_{i=1}^m \xi_{\sbb_{ji}} \otimes \xi_{\sbb_{ji}}  +  \tfrac{\tau^2}{km} \Ib \right) \EE_{0 \mid \Scal} (\Delta_j), \phi_g \right\rangle_{\HH}   \nonumber \\
  & = - \frac{\tau^2}{km} \langle w_0, \phi_g  \rangle_{\HH} = - \frac{\tau^2}{km}  \frac{\theta_g}{\mu_g}, \quad j = 1, \ldots, k.
\end{align}
Expanding the left hand side in \eqref{eq:18}, we obtain that
\begin{align*}
 &\frac{1}{m}\sum_{i=1}^m \langle \phi_g, \xi_{\sbb_{ji}} \rangle_{\HH} \EE_{0 \mid \Scal } \left\{ \Delta_j(\sbb_{ji}) \right\} + \frac{\tau^2}{km} \langle \phi_g, \EE_{0 \mid \Scal } (\Delta_j)  \rangle_{\HH}
  \nonumber \\
  &= \frac{1}{m}\sum_{i=1}^m \phi_g(\sbb_{ji}) \EE_{0 \mid \Scal} \left\{ \Delta_j(\sbb_{ji}) \right\} + \frac{\tau^2}{km} \frac{\delta_{jg}}{\mu_g}.
\end{align*}
The term $\frac{1}{m}\sum_{i=1}^m \phi_g(\sbb_{ji}) \EE_{0 \mid \Scal} \left\{ \Delta_j(\sbb_{ji}) \right\} $ on the right hand side is
\begin{align}
  \label{eq:20}
  &=   \frac{1}{m}\sum_{i=1}^m \Phib_{ig}^j \sum_{h = 1}^{d} \delta_{jh} \phi_h(\sbb_{ji}) + \frac{1}{m}\sum_{i=1}^m \Phib_{ig}^j \sum_{h = d+1}^{\infty} \delta_{jh} \phi_h(\sbb_{ji})
  \nonumber \\
  &=  \frac{1}{m} \sum_{h = 1}^{d}  \delta_{jh} \sum_{i=1}^m \Phib_{ig}^j  \Phib_{ih}^j + \frac{1}{m} \sum_{i=1}^m \Phib_{ig}^j v_{ji} \nonumber\\
  &= \frac{1}{m} \sum_{h = 1}^{d}  \delta_{jh} \left( \Phib^{j^T}  \Phib^j \right)_{gh} + \frac{1}{m} \sum_{i=1}^m \left(  \Phib^{j^T} v_{j}\right)_g \nonumber\\
  &= \frac{1}{m}\left(\Phib^{j^T}  \Phib^j \deltab^{\downarrow}  \right)_g + \frac{1}{m} \left(  \Phib^{j^T} \vb_j\right)_g.% \quad g=1, \ldots, d.
\end{align}
Substitute \eqref{eq:20} in \eqref{eq:18} for $g = 1, \ldots, d$ to obtain that
\begin{align}
  \label{eq:21}
  \frac{1}{m}\Phib^{j^T}  \Phib^j \deltab_j^{\downarrow} + \frac{1}{m}\Phib^{j^T} \vb_j + \frac{\tau^2}{km} \Mb^{-1} \deltab_j^{\downarrow} &= - \frac{\tau^2}{km} \Mb^{-1} \thetab^{\downarrow} \nonumber\\
  \left( \frac{1}{m}\Phib^{j^T}  \Phib^j + \frac{\tau^2}{km} \Mb^{-1}  \right)  \deltab_j^{\downarrow} &= - \frac{\tau^2}{km} \Mb^{-1} \thetab^{\downarrow}  -  \frac{1}{m}\Phib^{j^T} \vb_j.
\end{align}

The proof is completed by showing that the right hand side expression in \eqref{eq:21} gives an upper bound for $\| \deltab_j^{\downarrow} \|_2^2$. Define $\Qb = \left( \Ib + \frac{\tau^2}{km} \Mb^{-1} \right)^{1/2}$, then
\begin{align*}
  &\frac{1}{m}\Phib^{j^T}  \Phib^j + \frac{\tau^2}{km} \Mb^{-1} = \Ib +   \frac{\tau^2}{km} \Mb^{-1} +   \frac{1}{m}\Phib^{j^T}  \Phib^j - \Ib \\
  &= \Qb \left\{ \Ib + \Qb^{-1} \left( \frac{1}{m}\Phib^{j^T}  \Phib^j - \Ib \right) \Qb^{-1}\right\} \Qb
\end{align*}
and using this in \eqref{eq:21} gives
\begin{align}
  \label{eq:23}
   \left\{ \Ib + \Qb^{-1} \left( \frac{1}{m}\Phib^{j^T}  \Phib^j - \Ib \right) \Qb^{-1}\right\} \Qb  \deltab_j^{\downarrow} &= - \frac{\tau^2}{km} \Qb^{-1} \Mb^{-1} \thetab^{\downarrow}  -  \frac{1}{m} \Qb^{-1} \Phib^{j^T} \vb_j.
\end{align}
%Following Section \ref{sec:lemma-6-zhangs} and Lemma 10 in \citet{Zhaetal15},
Now we define the $\PP$-measureable event
\begin{align}\label{eset1}
& \Ecal_1 = \left\{ \vertiii{ \Qb^{-1}\left( \frac{1}{m}\Phib^{j^T}  \Phib^j - \Ib \right) \Qb^{-1} } \leq 1/2\right\},
\end{align}
where $\vertiii{\cdot}$ is the matrix operator norm.  We have that $\Ib + \Qb^{-1} \left( \frac{1}{m}\Phib^{j^T}  \Phib^j - \Ib \right) \Qb^{-1} \succeq (1/2) \Ib$  whenever $\Ecal_1$ occurs. Furthermore, when $\Ecal_1$ occurs, \eqref{eq:23} implies that
\begin{align*}
  %\label{eq:28}
  & \| \deltab_j^{\downarrow} \|_2^2 \leq  \| \Qb \deltab_j^{\downarrow} \|_2^2 \leq 4 \left\| \frac{\tau^2}{km} \Qb^{-1} \Mb^{-1} \thetab^{\downarrow} + \frac{1}{m} \Qb^{-1} \Phib^{j^T} \vb_j\right\|_2^2 \\
  & \leq 8 \left\| \frac{\tau^2}{km} \Qb^{-1} \Mb^{-1} \thetab^{\downarrow} \right\|_2^2 + 8 \left\| \frac{1}{m} \Qb^{-1} \Phib^{j^T} \vb_j\right\|_2^2,
\end{align*}
where the last inequality follows because $(a + b)^2 \leq 2 a^2 + 2 b^2$ for any $a, b \in \RR$.

Since $\Ecal_1$ is $\PP$-measureable,  $\EE_{\Scal} \left(  \| \deltab_j^{\downarrow} \|_2^2 \right)  =   \EE_{\Scal} \left\{ \| \deltab_j^{\downarrow} \|_2^2 \one\left( \Ecal_1 \right)  \right\} + \EE_{\Scal} \left\{ \| \deltab_j^{\downarrow} \|_2^2 \one\left( \Ecal_1^c \right)  \right\} $ and the previous display gives
\begin{align}
  \label{eq:29}
      \EE_{\Scal} \left\{ \| \deltab^{\downarrow}_j \|_2^2 \one\left( \Ecal_1 \right)  \right\} \leq 8 \left\| \frac{\tau^2}{km} \Qb^{-1} \Mb^{-1} \thetab^{\downarrow} \right\|_2^2 + 8  \EE_{\Scal}  \left\| \frac{1}{m} \Qb^{-1} \Phib^{j^T} \vb_j\right\|_2^2.
\end{align}
From Lemma 10 in \citet{Zhaetal15}, we have that under our assumptions A.1-A.3, there exists a universal constant $A>0$ such that
\begin{align}
%\label{eq:33a}
\label{eq:26}
& \left\| \frac{\tau^2}{km} \Qb^{-1} \Mb^{-1} \thetab^{\downarrow} \right\|_2^2 \leq \frac{\tau^2}{km} \| w_0 \|^2_{\HH}, \quad \nonumber\\
&\EE_{\Scal}  \left\| \frac{1}{m} \Qb^{-1} \Phib^{j^T} \vb_j\right\|_2^2 \leq  \frac{km}{\tau^2}\rho^4 \tr(C_{\alphab}) \tr(C_{\alphab}^{d}) \| w_0 \|^2_{\HH}, \nonumber \\
& \PP \left( \Ecal_1^c \right) \leq \left\{A \max \left( \sqrt{\max(r, \log d)}, \frac{\max(r, \log d)}{m^{1/2 - 1/r}} \right)  \frac{ \rho^2 \gamma(\tfrac{\tau^2_0}{km})}{\sqrt{m}} \right\}^r \nonumber \\
&\quad =  \left\{  \frac{A b(m, d, r) \rho^2 \gamma(\tfrac{\tau^2_0}{km})}{\sqrt{m}} \right\}^r.
\end{align}
Since $\mu_1\geq \mu_2\geq \ldots \geq 0$, the optimality condition in \eqref{eq:17} implies that
\begin{align}
  \label{eq:31}
  \left\| \EE_{0 \mid \Scal}(\Delta_j) \right\|^2_2 = \frac{\mu_1} {\mu_1}\sum_{i=1}^{\infty} \delta_{ji} \phi_i \leq  \mu_1 \sum_{i=1}^{\infty} \frac{\delta_{ji}}{\mu_i} \phi_i =  \mu_1  \| \EE_{0 \mid \Scal}(\Delta_j) \|^2_{\HH} \leq \mu_1 \|w_0 \|_{\HH}^2.
\end{align}
Using the shorthand \eqref{eq:26} and \eqref{eq:31}, we obtain that
\begin{align}
  \label{eq:32}
  \EE_{\Scal} \left\{ \| \deltab_j^{\downarrow} \|_2^2 \one\left( \Ecal_1^c \right)  \right\} \leq    \EE_{\Scal} \left\{ \| \EE_{0 \mid \Scal} (\Delta_j )\|_2^2 \one\left( \Ecal_1^c \right)  \right\} \leq \PP (\Ecal_1^c) \mu_1 \| w_0 \|_{\HH}^2.% \leq \left\{ \frac{A b(m, d, r) \rho^2 \gamma(\tau^2 (mk)^{-1})}{\sqrt{n}} \right\}^{r} \mu_0 \| w_0 \|^2_{\HH};
\end{align}
Combining  \eqref{eq:29} and  \eqref{eq:32} gives
\begin{align}
  \label{eq:31a}
  \EE_{\Scal}(\| \deltab_j \|_2^2) \leq   & \frac{8\tau^2}{km} \| w_0 \|_{\HH}^2 +  \frac{8km}{\tau^2_0} \rho^4 \tr(C_{\alphab}) \tr(C_{\alphab}^{d}) \| w_0 \|^2_{\HH} \nonumber\\
  & +
  \left\{  \frac{A b(m, d, r) \rho^2 \gamma(\tfrac{\tau^2_0}{km})}{\sqrt{m}} \right\}^r \mu_1 \| w_0 \|^2_{\HH}.
\end{align}
Finally, we use that $\|\EE_{0 \mid \Scal}(\Delta) \|_2^2 \leq k^{-1} \sum_{j=1}^k \|  \EE_{0 \mid \Scal}(\Delta_j) \|_2^2 = k^{-1} \sum_{j=1}^k \| \deltab_j\|_2^2$ to obtain that
\begin{align}
  \label{eq:32a}
 & \EE_{\Scal}(\|\EE_{0 \mid \Scal}(\Delta) \|_2^2) \leq   \frac{8\tau^2}{km} \| w_0 \|_{\HH}^2 +  \frac{8km}{\tau^2_0} \rho^4 \tr(C_{\alphab}) \tr(C_{\alphab}^{d})  \| w_0 \|^2_{\HH} \nonumber\\
  &\qquad +
  \left\{  \frac{A b(m, d, r) \rho^2 \gamma(\tfrac{\tau^2_0}{km})}{\sqrt{m}} \right\}^r \mu_1 \| w_0 \|^2_{\HH} \nonumber\\
  &=  \frac{8\tau^2}{n} \| w_0 \|_{\HH}^2  + \| w_0 \|_{\HH}^2 \left[\frac{8n}{\tau^2_0} \rho^4 \tr(C_{\alphab}) \tr(C_{\alphab}^{d}) +
  \mu_1 \left\{  \frac{A b(m, d, r) \rho^2 \gamma(\tfrac{\tau^2_0}{n})}{\sqrt{m}} \right\}^r \right]  ,
\end{align}
where we have replaced $km$ by $n$ in the last equality. Taking the infimum over $d \in \NN$ leads to the proof.
\end{proof}

\subsubsection{An upper bound for the first variance term}
\label{sec:lemma-7-zhangs}

The following lemma provides an upper bound the first part of the variance term in \eqref{eq:eap4}.
\begin{lemma}\label{lem-sup-2}
If Assumptions A.1--A.3 in the main paper hold, then
%\EE_{0|\Scal} 
\begin{align*}
& \tau^2 \EE_{\sbb^*} \EE_{\Scal} \left\{ \cb^T_* (k \Lb + \tau^2 \Ib)^{-2} \cb_* \right\} \leq \nonumber\\
&\frac{2 n + 4 \| w_0 \|^2_{\HH} }{k} \inf_{d\in \NN} \left[ \mu_{d+1} + 12 \frac{n}{\tau^2} \rho^4 \tr(C_{\alphab}) \tr(C_{\alphab}^d)  + \left\{  \frac{A b(m, d, r) \rho^2 \gamma(\tfrac{\tau^2_0}{n})}{\sqrt{m}} \right\}^r \right]  + \nonumber\\
&\frac{12}{k} \frac{\tau^2}{n} \| w_0 \|_{\HH}^2 + 12 \frac{\tau^2}{n} \gamma \left( \frac{\tau^2}{n} \right).
%+ \frac{2}{k} \| w_0 \|_{2}^2.
\end{align*}
\end{lemma}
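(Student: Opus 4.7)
The key structural observation is that $\Lb$ is block diagonal with blocks $\Cb_{1,1},\ldots,\Cb_{k,k}$, so $(k\Lb+\tau^{2}\Ib)^{-2}$ is block diagonal and
\[
\cb_{*}^{T}(k\Lb+\tau^{2}\Ib)^{-2}\cb_{*} \;=\; \frac{1}{k^{2}}\sum_{j=1}^{k}\cb_{j,*}^{T}\bigl(\Cb_{j,j}+\tfrac{\tau^{2}}{k}\Ib\bigr)^{-2}\cb_{j,*}.
\]
By exchangeability of the random partitioning, all $k$ summands are identically distributed, so
\[
\text{var}_{\text{mean}} \;=\; \frac{\tau^{2}}{k}\,\EE_{\sbb^{*}}\EE_{\Scal_{1}}\!\bigl[\cb_{1,*}^{T}(\Cb_{1,1}+\tfrac{\tau^{2}}{k}\Ib)^{-2}\cb_{1,*}\bigr].
\]
The explicit $1/k$ in front is what produces the $1/k$-decaying contributions in the conclusion. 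Setting $\xi_{j}(\sbb)=\tilde w_{j}(\sbb)-\EE_{0\mid\Scal}\tilde w_{j}(\sbb)$ with $\tilde w_{j}$ as in \eqref{eq:12}, the standard conditional-variance formula for a Gaussian-linear estimator gives $\EE_{0\mid\Scal}[\xi_{j}(\sbb^{*})^{2}] = \tau^{2}\cb_{j,*}^{T}(\Cb_{j,j}+\tfrac{\tau^{2}}{k}\Ib)^{-2}\cb_{j,*}$, so that $\text{var}_{\text{mean}} = k^{-1}\EE_{\Scal}\EE_{0\mid\Scal}\|\xi_{1}\|_{L_{2}(\PP_{\sbb})}^{2}$ and the task reduces to bounding $\EE\|\xi_{j}\|_{2}^{2}$ in a single subset.

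I then mimic the strategy of Lemma \ref{lem-sup-1}, with the subset noise vector $\epsilonb_{j}$ playing the role previously played by $w_{0}$. Subtracting $\EE_{0\mid\Scal}$ from the optimality equation for $\tilde w_{j}$ yields the analog of \eqref{eq:17},
\[
\bigl(\hat\Sigmab_{j}+\tfrac{\tau^{2}}{n}\Ib\bigr)\,\xi_{j} \;=\; \frac{1}{m}\sum_{i=1}^{m}\epsilon(\sbb_{ji})\,\xi_{\sbb_{ji}}.
\]
Expanding $\xi_{j}=\sum_{i}\delta'_{ji}\phi_{i}$ in the Mercer basis and splitting at dimension $d$ into low- and high-frequency coordinates $\delta'_{j}{}^{\downarrow}$ and $\delta'_{j}{}^{\uparrow}$, the tail is controlled by $\|\delta'_{j}{}^{\uparrow}\|_{2}^{2}\leq\mu_{d+1}\|\xi_{j}\|_{\HH}^{2}$ exactly as in \eqref{eq:ba2}. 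The essential new ingredient is a clean uniform control of $\|\xi_{j}\|_{\HH}^{2}$: comparing the ridge objective at $\tilde w_{j}$ and at $w_{0}$ yields $\|\tilde w_{j}\|_{\HH}^{2}\leq\|w_{0}\|_{\HH}^{2}+(k/\tau^{2})\|\epsilonb_{j}\|^{2}$, and since $\EE\|\epsilonb_{j}\|^{2}=m\tau^{2}$ together with the identity $\EE\|\xi_{j}\|_{\HH}^{2}=\EE\|\tilde w_{j}\|_{\HH}^{2}-\|\EE_{0\mid\Scal}\tilde w_{j}\|_{\HH}^{2}\leq\EE\|\tilde w_{j}\|_{\HH}^{2}$, one obtains $\EE_{\Scal}\EE_{0\mid\Scal}\|\xi_{j}\|_{\HH}^{2}\leq\|w_{0}\|_{\HH}^{2}+n$. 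After the usual $(a+b)^{2}\leq 2a^{2}+2b^{2}$ bookkeeping, this is the origin of the prefactor $2n+4\|w_{0}\|_{\HH}^{2}$.

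For the head, projecting the displayed optimality equation onto $\phi_{g}$ for $g\leq d$ gives, by exactly the manipulations of \eqref{eq:20}--\eqref{eq:21},
\[
\Bigl(\tfrac{1}{m}\Phib^{j^{T}}\Phib^{j}+\tfrac{\tau^{2}}{n}\Mb^{-1}\Bigr)\delta'_{j}{}^{\downarrow} \;=\; \tfrac{1}{m}\Phib^{j^{T}}\epsilonb_{j}-\tfrac{1}{m}\Phib^{j^{T}}\vb'_{j},
\]
with $v'_{ji}=\sum_{h>d}\delta'_{jh}\phi_{h}(\sbb_{ji})$. On the event $\Ecal_{1}$ of \eqref{eset1}, this yields $\|\delta'_{j}{}^{\downarrow}\|_{2}^{2}\leq 8\|\tfrac{1}{m}\Qb^{-1}\Phib^{j^{T}}\epsilonb_{j}\|_{2}^{2}+8\|\tfrac{1}{m}\Qb^{-1}\Phib^{j^{T}}\vb'_{j}\|_{2}^{2}$. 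The $\vb'_{j}$ piece is structurally identical to the one treated in the proof of Lemma \ref{lem-sup-1} and, via Lemma 10 of \citet{Zhaetal15}, contributes the bound proportional to $(n/\tau^{2})\rho^{4}\tr(C_{\alphab})\tr(C_{\alphab}^{d})\cdot\EE\|\xi_{j}\|_{\HH}^{2}$. The $\epsilonb_{j}$ piece is the genuinely new ingredient producing the effective-dimension factor: since $\epsilonb_{j}\sim N(0,\tau^{2}\Ib_{m})$ is independent of $\Scal$ and $\EE[\tfrac{1}{m}\Phib^{j^{T}}\Phib^{j}]=\Ib_{d}$ by orthonormality of $\{\phi_{g}\}$ in $L_{2}(\PP_{\sbb})$,
\[
\EE\Bigl\|\tfrac{1}{m}\Qb^{-1}\Phib^{j^{T}}\epsilonb_{j}\Bigr\|_{2}^{2} \;=\; \frac{\tau^{2}}{m^{2}}\tr\!\bigl(\Qb^{-2}\,\EE\Phib^{j^{T}}\Phib^{j}\bigr) \;=\; \frac{\tau^{2}}{m}\sum_{g=1}^{d}\frac{\mu_{g}}{\mu_{g}+\tfrac{\tau^{2}}{n}} \;\leq\; \frac{\tau^{2}}{m}\,\gamma\!\Bigl(\frac{\tau^{2}}{n}\Bigr),
\]
which, after division by $k$ and the identity $n=km$, produces the $\tfrac{\tau^{2}}{n}\gamma(\tfrac{\tau^{2}}{n})$ term in the statement---crucially, a contribution that does \emph{not} decay with $k$. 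The complementary event $\Ecal_{1}^{c}$ is handled exactly as in Lemma \ref{lem-sup-1} via the crude inequality $\|\xi_{j}\|_{2}^{2}\leq\mu_{1}\|\xi_{j}\|_{\HH}^{2}$ together with the probability bound in \eqref{eq:26}. Collecting the tail, head on $\Ecal_{1}$, and head on $\Ecal_{1}^{c}$, dividing by $k$, and taking $\inf_{d\in\NN}$ gives the stated upper bound. The principal obstacle is the bookkeeping required to route the purely noise-driven contribution (which survives without a $1/k$ factor) separately from the $w_{0}$-driven pieces (which inherit a $1/k$), and to absorb the random factor $\EE\|\xi_{j}\|_{\HH}^{2}$ into a clean uniform prefactor $(2n+4\|w_{0}\|_{\HH}^{2})/k$ that multiplies the entire infimum over $d$.
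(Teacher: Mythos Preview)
Your proof is correct and follows a route that is close in spirit to the paper's but is in fact slightly cleaner. The key difference is this: the paper does \emph{not} work with the centered process $\xi_{j}=\tilde w_{j}-\EE_{0\mid\Scal}\tilde w_{j}$. Instead, it first bounds the conditional variance crudely by the conditional second moment, $\var_{0\mid\Scal}\{\tilde w_{j}(\sbb^{*})\}\leq \EE_{0\mid\Scal}\{\Delta_{j}(\sbb^{*})^{2}\}$ with $\Delta_{j}=\tilde w_{j}-w_{0}$, and then bounds $\EE\|\Delta_{j}\|_{2}^{2}$. Because $\Delta_{j}$ retains the bias, the head equation carries an additional term $-\tfrac{\tau^{2}}{n}\Mb^{-1}\thetab^{\downarrow}$ (from $w_{0}$), so the paper uses $(a+b+c)^{2}\leq 3(a^{2}+b^{2}+c^{2})$ and picks up the factor $12$ together with the extra $\tfrac{12}{k}\tfrac{\tau^{2}}{n}\|w_{0}\|_{\HH}^{2}$ in the final bound. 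Your decomposition with $\xi_{j}$ eliminates the $\thetab^{\downarrow}$ term entirely, so only two sources remain and $(a+b)^{2}\leq 2(a^{2}+b^{2})$ gives a factor $8$; your resulting inequality is strictly sharper and a fortiori implies the stated lemma. Both arguments rely on exactly the same external inputs: the event $\Ecal_{1}$ and its probability bound from \eqref{eq:26}, Lemma~10 of \citet{Zhaetal15} for the $\vb'_{j}$ piece, and the ridge-objective comparison to control the $\HH$-norm (your $\EE_{0\mid\Scal}\|\xi_{j}\|_{\HH}^{2}\leq n+\|w_{0}\|_{\HH}^{2}$ is the analogue of the paper's $\EE_{0\mid\Scal}\|\Delta_{j}\|_{\HH}^{2}\leq 2n+4\|w_{0}\|_{\HH}^{2}$). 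One small point: on $\Ecal_{1}^{c}$ you invoke $\|\xi_{j}\|_{2}^{2}\leq\mu_{1}\|\xi_{j}\|_{\HH}^{2}$, which introduces a harmless factor $\mu_{1}$ that can be absorbed into the global constant $A$; the paper's own treatment of $\Ecal_{1}^{c}$ in \eqref{eq:v24} has the same implicit absorption.
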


%This is an extension of Lemma 7 in \citet{Zhaetal15} for the DISK posterior and it is proved using similar arguments.

\begin{proof}
Continuing from \eqref{eq:12}, we start by finding an upper bound for $\EE_{0 \mid \Scal} \| \Delta_j \|_{\HH}^2$, which is required later to upper bound $\EE_{0} \| \Delta_j \|_{\HH}^2$. From \eqref{eq:12} we have
\begin{align}
  \label{eq:vv1}
  \EE_{0 \mid \Scal} \| \Delta_j \|_{\HH}^2  \leq 2 \EE_{0 \mid \Scal} \| \tilde w_j |_{\HH}^2 +  2\| w_0\|_{\HH}^2.
\end{align}
An upper bound for $\EE_{0 \mid \Scal} \| \tilde w_j |_{\HH}^2$ gives the desired bound. Using the objective in \eqref{eq:ba4},
\begin{align}
    \label{eq:vv12}
  \frac{1}{2} \| \tilde w_j \|_{\HH}^2 \overset{(i)}{\leq}  \sum_{i=1}^m \frac{\left\{ \tilde w_j(\sbb_{ji}) - y(\sbb_{ji}) \right\}^2}{2\tau^2/k} + \frac{1}{2} \| \tilde w_j \|_{\HH}^2 \overset{(ii)}{\leq}
  \sum_{i=1}^m \frac{\left\{ w_0(\sbb_{ji}) - y(\sbb_{ji}) \right\}^2} {2\tau^2/k} + \frac{1}{2} \| w_0 \|_{\HH}^2 ,
\end{align}
where $(i)$ follows because the term inside the summation is non-negative and $(ii)$ follows because $\tilde w_j$ minimizes the objective. Since $w(\sbb_{ji}) - y(\sbb_{ji}) = - \epsilon(\sbb_{ji})$ and  $\EE_{0 \mid \Scal}\{\epsilon^2(\sbb_{ji})\}\leq \tau^2$ by Assumption A.2, \eqref{eq:vv12} reduces to
\begin{align}
  \label{eq:vv2}
%  \frac{1}{2} \| \tilde w_j \|_{\HH}^2 \leq  \sum_{i=1}^m \frac{\left\{ \epsilon(\sbb_{ji}) \right\}^2} {2\tau^2/k} + \frac{1}{2} \| w_0 \|_{\HH}^2,\quad
  \EE_{0 \mid \Scal}  \| \tilde w_j \|_{\HH}^2 \leq \frac{k} {\tau^2}  \sum_{i=1}^m \EE_{0 \mid \Scal} \left\{ \epsilon(\sbb_{ji}) \right\}^2 + \| w_0 \|_{\HH}^2 \leq  km + \| w_0 \|_{\HH}^2 .
\end{align}
 Substituting  \eqref{eq:vv2} in \eqref{eq:vv1} gives
\begin{align}
  \label{eq:vv3}
  \EE_{0 \mid \Scal} \| \Delta_j \|_{\HH}^2  \leq %2 km + 2 \| w_0\|_{\HH}^2 +  2\| w_0\|_{\HH}^2
   2 km + 4 \| w_0\|_{\HH}^2 .
  % \quad \EE_{0} \| \Delta_j \|_{\HH}^2 = \EE_{\Scal} \left(\EE_{0 \mid \Scal} \| \Delta_j \|_{\HH}^2 \right)   \leq 2 km + 4 \| w_0\|_{\HH}^2.
\end{align}

%{\color{red}
%\EE_{0|\Scal}
First notice that
\begin{align}
  \label{eq:34}
  \tau^2  \EE_{\sbb^*} \EE_{\Scal} \left\{ \cb^T_* (k \Lb + \tau^2 \Ib)^{-2} \cb_* \right\} = \frac{1}{k^2} \sum_{j=1}^k \tau^2 \EE_{\sbb^*} \EE_{\Scal} \left\{ \cb_{j*}^T \left(\Cb_{j,j} + \tfrac{\tau^2}{k} \Ib\right)^{-2} \cb_{j*} \right\} .
\end{align}
and from \eqref{eq:eap4} we have
\begin{align}
\label{eqnew:34a}
& \tau^2  \EE_{\sbb^*} \EE_{\Scal} \left\{ \cb_{j*}^T \left(\Cb_{j,j} + \tfrac{\tau^2}{k} \Ib\right)^{-2} \cb_{j*} \right\}
=  \EE_{\sbb^*} \EE_{\Scal} \var_{0|\Scal} \left\{ \cb_{j*}^T \left(\Cb_{j,j} + \tfrac{\tau^2}{k} \Ib\right)^{-1} \yb_j \right\} \nonumber \\
%& = \EE_{0|\Scal_j} \left\{\cb_j^T \left(\Cb_{j,j} + \tfrac{\tau^2}{k} \Ib\right)^{-1} \left(\yb_j - \wb_j\right)\right\}^2 \nonumber \\
& \leq \EE_{\sbb^*} \EE_{\Scal} \EE_{0|\Scal} \left\{\cb_{j*}^T \left(\Cb_{j,j} + \tfrac{\tau^2}{k} \Ib\right)^{-1}\yb_j - w_0(\sbb^*)\right\}^2  = \EE_{\sbb^*}\EE_{\Scal} \EE_{0|\Scal} \|\Delta_j\|_2^2.
\end{align}
Substituting \eqref{eqnew:34a} to \eqref{eq:34} leads to
\begin{align}
\label{eqnew:34b}
& \tau^2 \EE_{\sbb^*} \EE_{\Scal} \left\{ \cb^T_* (k \Lb + \tau^2 \Ib)^{-2} \cb_* \right\} \leq \EE_{\sbb^*}\left\{\frac{1}{k^2} \sum_{j=1}^k  \EE_{\Scal}\EE_{0|\Scal} \|\Delta_j\|_2^2\right\}.
\end{align}

We then find an upper bound for $\EE_{\Scal}\EE_{0|\Scal} \| \Delta_j \|^2_2$ by following similar steps to the proof of Lemma \ref{lem-sup-1}. Let $\deltab_j \in L_2(\NN)$ be the expansion of $\Delta_j$ in the basis $\{\phi_i\}_{i=1}^{\infty}$, so that $\Delta_j = \sum_{i=1}^{\infty} \delta_{ji} \phi_i$ (the $\deltab_j$  sequence here is different from the one in the previous section). Similar to Section \ref{sec:lemma-6-zhangs}, choose a fixed $d \in \NN$ and truncate $\Delta_j$ by defining $\Delta^{\downarrow}_j$, $\Delta^{\uparrow}_j$, $\deltab^{\downarrow}_j$, and $\deltab^{\uparrow}_j$ as
\begin{align*}
&  \Delta^{\downarrow}_j = \sum_{i=1}^d \delta_{ji} \phi_i, \quad
  \Delta^{\uparrow}_j = \sum_{i=d+1}^{\infty} \delta_{ji} \phi_i = \Delta_j - \Delta_j^{\downarrow}, \quad  \\ 
&  \deltab_{j}^{\downarrow} = (\delta_{j1}, \ldots, \delta_{jd}), \quad \deltab_{j}^{\uparrow} = (\delta_{j(d+1)}, \ldots, \delta_{j \infty}).
\end{align*}
The orthonormality of $\{\phi_i\}_{i=1}^{\infty}$  implies that
\begin{align}
  \label{eq:v3}
 \EE_{\Scal} \EE_{0|\Scal} \| \Delta_j \|_2^2 &= \EE_{\Scal} \EE_{0|\Scal} \| \Delta^{\downarrow}_j \|_2^2  +  \EE_{\Scal} \EE_{0 |\Scal} \| \Delta^{\uparrow}_j \|_2^2 \nonumber \\
 &=
  \EE_{\Scal} \EE_{0|\Scal} \| \deltab^{\downarrow}_j \|_2^2  + \EE_{\Scal} \EE_{0|\Scal} \| \deltab^{\uparrow}_j \|_2^2 .
\end{align}
%We first find an upper bound for $\EE_{0 } \| \deltab^{\uparrow}_j \|_2^2 $.
First, the upper bound for $\EE_{0|\Scal} \| \deltab^{\uparrow}_j \|_2^2 $ follows from \eqref{eq:ba2},
\begin{align*}
  & \EE_{0|\Scal} \| \Delta^{\uparrow}_j \|_2^2 =  \sum_{i=d+1}^{\infty} \EE_{0|\Scal} (\delta_{ji}^2) = \mu_{d+1}\sum_{i=d+1}^{\infty} \frac{\EE_{0|\Scal} (\delta_{ji}^2)}{\mu_{d+1}} \leq \mu_{d+1}\sum_{i=d+1}^{\infty} \frac{\EE_{0|\Scal} (\delta_{ji}^2)}{\mu_{i}}  \\
  &= \mu_{d+1} \EE_{0|\Scal} \| \Delta_j^{\uparrow} \|^2_{\HH} \leq \mu_{d+1} \EE_{0|\Scal} \| \Delta_j \|^2_{\HH},
\end{align*}
and using \eqref{eq:vv3},
\begin{align}
  \label{eq:v31}
    \EE_{0 |\Scal} \| \Delta^{\uparrow}_j \|_2^2 \leq  \mu_{d+1}  (2km + 4 \| w_0\|_{\HH}^2).
\end{align}

%in \eqref{eq:v3}.
We now find an upper bound for $\EE_{\Scal} \EE_{0|\Scal} \| \Delta_j^{\downarrow} \|^2_{2}$.
Following Section \ref{sec:lemma-6-zhangs}, define the error vector  $\vb_j=(v_{j1},\ldots,v_{jm})^T \in \RR^m$ with $v_{ji} = \sum_{h=d+1}^\infty \delta_{ji} \phi_h(\sbb_{ji})$ ($i=1, \ldots, m$), and $\Mb = \diag(\mu_1, \ldots, \mu_d)$. From \eqref{eq:ba4} and \eqref{eq:ba6}, $\tilde w_j(\cdot)$ in \eqref{eq:12} satisfies
\begin{align}
  \label{eq:v1}
  \frac{1}{m}\sum_{i=1}^m  \langle \xi_{\sbb_{ji}}, \tilde w_j - w_0 - \epsilon \rangle_{\HH} \, \xi_{\sbb_{ji}}  + \frac{\tau^2}{km}   \tilde w_j = 0.
\end{align}
For any $g \in \{1, \ldots, d\}$, taking the $\HH$-inner product with respect $\phi_g$ in \eqref{eq:v1} to obtain that
\begin{align*}
  &\frac{1}{m}\sum_{i=1}^m  \langle \xi_{\sbb_{ji}}, \Delta_j - \epsilon \rangle_{\HH} \, \langle\xi_{\sbb_{ji}}, \phi_g \rangle_{\HH}  + \frac{\tau^2}{km}   \langle \Delta_j + w_0, \phi_g \rangle_{\HH} = \\
  & \frac{1}{m}\sum_{i=1}^m \left\{ \Delta_j(\sbb_{ji}) - \epsilon(\sbb_{ji})  \right\}   \phi_g(\sbb_{ji})   + \frac{\tau^2}{km}   \frac{\delta_{jg}}{\mu_g}  + \frac{\tau^2}{km}   \frac{\theta_g}{\mu_g}  = 0,  \\
  &\frac{1}{m}\sum_{i=1}^m \left\{ \sum_{h=1}^d \delta_{jh} \phi_h(\sbb_{ji}) + \sum_{h=d+1}^\infty \delta_{jh} \phi_h(\sbb_{ji}) - \epsilon(\sbb_{ji})  \right\}   \phi_g(\sbb_{ji})   + \frac{\tau^2}{km}   \frac{\delta_{jg}}{\mu_g}   = - \frac{\tau^2}{km}   \frac{\theta_g}{\mu_g} ,  \\
  &\frac{1}{m}\sum_{h=1}^d \left\{ \sum_{i=1}^m \phi_h(\sbb_{ji}) \phi_g(\sbb_{ji})  \right\} \delta_{jh}   +  \frac{1}{m}\sum_{i=1}^m \left\{ v_{ji} - \epsilon(\sbb_{ji})  \right\}   \phi_g(\sbb_{ji})   + \frac{\tau^2}{km}   \frac{\delta_{jg}}{\mu_g}   = - \frac{\tau^2}{km}   \frac{\theta_g}{\mu_g},   \\
  &\frac{1}{m} \left( \Phib^{j^T} \Phib^{j} \deltab_j^{\downarrow} \right)_{g}   +  \frac{1}{m} \left\{  \Phib^{j^T} (\vb_j - \epsilonb_{j}) \right\}_g   + \frac{\tau^2}{km}   (\Mb^{-1} \deltab^{\downarrow}_{j})_g  = - \frac{\tau^2}{km}   (\Mb^{-1} \thetab^{\downarrow} )_g. %, \quad g = 1, \ldots, d.
\end{align*}
Writing this equation in the matrix form yields,
\begin{align}
  \label{eq:v5}
  \left(  \frac{1}{m} \Phib^{j^T} \Phib^{j} + \frac{\tau^2}{km}  \Mb^{-1}  \right)\deltab_j^{\downarrow}   = - \frac{\tau^2}{km} \Mb^{-1} \thetab^{\downarrow} - \frac{1}{m} \Phib^{j^T} \vb_j + \frac{1}{m} \Phib^{j^T}  \epsilonb_{j} .
\end{align}
Following Section \ref{sec:lemma-6-zhangs}, by defining $\Qb = (\Ib + \frac{\tau^2}{km} \Mb^{-1})^{1/2}$, \eqref{eq:v5} reduces to
\begin{align}
\label{eq:v15}
 & \left\{ \Ib + \Qb^{-1} \left(  \frac{1}{m} \Phib^{j^T} \Phib^{j} - \Ib \right) \Qb^{-1} \right\} \Qb\deltab_j^{\downarrow}   \nonumber\\
 &= - \frac{\tau^2}{km} \Qb^{-1} \Mb^{-1} \thetab^{\downarrow} - \frac{1}{m} \Qb^{-1} \Phib^{j^T} \vb_j + \frac{1}{m} \Qb^{-1} \Phib^{j^T}  \epsilonb_{j} .
\end{align}

On the event $\Ecal_1$ defined as in \eqref{eset1}, we have that $\Ib + \Qb^{-1} \left( \frac{1}{m}\Phib^{j^T}  \Phib^j - \Ib \right) \Qb^{-1} \succeq (1/2) \Ib$. Furthermore, when $\Ecal_1$ occurs, \eqref{eq:v15} implies that
\begin{align*}
& \| \Delta_j^{\downarrow} \|_2^2 \leq  \| \Qb \deltab_j^{\downarrow} \|_2^2 \leq 4 \left\| - \frac{\tau^2}{km} \Qb^{-1} \Mb^{-1} \thetab^{\downarrow} - \frac{1}{m} \Qb^{-1} \Phib^{j^T} \vb_j + \frac{1}{m} \Qb^{-1} \Phib^{j^T}  \epsilonb_{j}  \right\|_2^2\\
&\leq 12  \left\| \frac{\tau^2}{km} \Qb^{-1} \Mb^{-1} \thetab^{\downarrow} \right\|_2^2 + 12 \left\| \frac{1}{m} \Qb^{-1} \Phib^{j^T} \vb_j \right\|_2^2  +  12 \left\| \frac{1}{m} \Qb^{-1} \Phib^{j^T}  \epsilonb_{j}  \right\|_2^2,
\end{align*}
where the last inequality follows because $(a + b + c)^2 \leq 3 a^2 + 3 b^2 + 3 c^2$ for any $a, b, c \in \RR$.
Since $\Ecal_1$ is $\PP$-measureable,  $\EE_{0|\Scal} \left(  \| \Delta_j^{\downarrow} \|_2^2 \right)  =   \EE_{0|\Scal} \left\{ \| \Delta_j^{\downarrow} \|_2^2 \one\left( \Ecal_1 \right)  \right\} + \EE_{0|\Scal} \left\{ \| \Delta_j^{\downarrow} \|_2^2 \one\left( \Ecal_1^c \right)  \right\} $. If the event $\Ecal_1$ occurs,  then the upper bounds for the first term
and the last two terms in the last inequality are given by Lemmas 10 and 7 of \citet{Zhaetal15}, respectively, and we have that
\begin{align}
  \label{eq:v20}
  \left\| \frac{\tau^2}{km} \Qb^{-1} \Mb^{-1} \thetab^{\downarrow} \right\|_2^2 &\leq \frac{\tau^2}{km} \| w_0 \|_{\HH}^2, \nonumber\\
  \EE_{\Scal}  \left\| \frac{1}{m} \Qb^{-1} \Phib^{j^T} \vb_j \right\|_2^2 &\leq \frac{km}{\tau^2} \rho^4 \tr(C_{\alphab}) \tr(C_{\alphab}^d) \left( 2 km  +  4 \| w_0 \|_{\HH}^2 \right), \nonumber \\
  \EE_{\Scal} \EE_{0|\Scal}  \left\| \frac{1}{m} \Qb^{-1} \Phib^{j^T} \epsilonb_j \right\|_2^2 &\leq \frac{1}{m^2} \sum_{h=1}^d \sum_{i=1}^m \frac{1}{1 + \tfrac{\tau^2}{km} \tfrac{1}{\mu_h}} \EE_{\Scal} \EE_{0|\Scal} \left\{ \phi_h^2 (\sbb_{ji}) \epsilon^2(\sbb_{ji}) \right\}.
\end{align}
Since the error $\epsilon(\cdot)$ and $w(\cdot)$ are independent, by Assumption A.3,
$$\EE_{\Scal} \EE_{0|\Scal} \left\{ \phi_h^2 (\sbb_{ji}) \epsilon^2(\sbb_{ji}) \right\} =  \EE_{\Scal}  \left\{ \phi_h^2 (\sbb_{ji})  \right\} \EE_{0|\Scal} \left\{  \epsilon^2(\sbb_{ji}) \right\} \leq \tau^2$$
and the last inequality in \eqref{eq:v20} simplifies to
\begin{align*}
 \EE_{\Scal} \EE_{0|\Scal} \left\| \frac{1}{m} \Qb^{-1} \Phib^{j^T} \epsilonb_j \right\|_2^2 \leq \frac{\tau^2}{m} \sum_{h=1}^d \frac{1}{1 + \tfrac{\tau^2}{km} \tfrac{1}{\mu_h}} \leq
  \frac{\tau^2}{m} \gamma \left( \frac{\tau^2}{km} \right).
\end{align*}
Hence when the event $\Ecal_1$ occurs,
\begin{align}
  \label{eq:v23}
   & \EE_{\Scal} \EE_{0|\Scal} \left\{ \| \Delta_j^{\downarrow} \|_2^2 \one (\Ecal_1) \right\} \leq \nonumber \\
   &12 \frac{\tau^2}{km} \| w_0 \|_{\HH}^2 + 12 \frac{km}{\tau^2} \rho^4 \tr(C_{\alphab}) \tr(C_{\alphab}^d) \left( 2 km  +  4 \| w_0 \|_{\HH}^2 \right) + 12 \frac{\tau^2}{m} \gamma \left( \frac{\tau^2}{km} \right).
\end{align}
If the event $\Ecal_1$ does not occur, then
\begin{align}
  \label{eq:v24}
  \EE_{\Scal} \EE_{0|\Scal} \left\{ \| \Delta_j^{\downarrow} \|_2^2 \one (\Ecal_1^c) \right\}  &\leq   \EE_{\Scal} \left\{ \one (\Ecal_1^c)  \EE_{0 \mid \Scal} \| \Delta_j^{\downarrow} \|_2^2   \right\} \overset{(i)}{\leq}
  \PP(\Ecal_1^c)  \left( 2 km + 4 \| w_0 \|_{\HH}^2 \right) \nonumber\\
  &\overset{(ii)}{=} \left\{  \frac{A b(m, d, r) \rho^2 \gamma(\tfrac{\tau^2_0}{km})}{\sqrt{m}} \right\}^r  \left( 2 km + 4 \| w_0 \|_{\HH}^2 \right)  ,
\end{align}
where $(i)$ follows from \eqref{eq:vv3} and $(ii)$ follows from \eqref{eq:26}. Substituting \eqref{eq:v23}, \eqref{eq:v24}, and \eqref{eq:v31} in \eqref{eq:v3} implies that
\begin{align}
  \label{eq:v25}
  & \EE_{\Scal} \EE_{0|\Scal}  \left\{ \| \Delta_j\|_2^2\right\}  \leq
  12 \frac{\tau^2}{km} \| w_0 \|_{\HH}^2 + 12 \frac{\tau^2}{m} \gamma \left( \frac{\tau^2}{km} \right)\nonumber\\
  & + \left[ \mu_{d+1} + 12 \frac{km}{\tau^2} \rho^4 \tr(C_{\alphab}) \tr(C_{\alphab}^d)  + \left\{  \frac{A b(m, d, r) \rho^2 \gamma(\tfrac{\tau^2_0}{km})}{\sqrt{m}} \right\}^r \right] \left( 2 km + 4 \| w_0 \|^2_{\HH} \right).
\end{align}
Therefore,  substituting \eqref{eq:v25} in \eqref{eqnew:34b} implies that
\begin{align}
  & \tau^2 \EE_{\sbb^*} \EE_{\Scal} \left\{ \cb_*^T (k \Lb + \tau^2 \Ib)^{-2} \cb_* \right\}  \leq \nonumber\\
  &\frac{2 n + 4 \| w_0 \|^2_{\HH}}{k}\left[ \mu_{d+1} + 12 \frac{n}{\tau^2} \rho^4 \tr(C_{\alphab}) \tr(C_{\alphab}^d)  + \left\{  \frac{A b(m, d, r) \rho^2 \gamma(\tfrac{\tau^2_0}{n})}{\sqrt{m}} \right\}^r \right]   + \nonumber\\
  &\frac{12}{k} \frac{\tau^2}{n} \| w_0 \|_{\HH}^2 + 12 \frac{\tau^2}{n} \gamma \left( \frac{\tau^2}{n} \right).
\end{align}
where we have replace $km$ by $n$. Taking the infimum over $d\in \NN$ leads to the proof.
\end{proof}

\subsubsection{An upper bound for the second variance term}
\label{sec:lemma-7-zhangs-2}

The following lemma provides an upper bound the second part of the variance term in \eqref{eq:eap4}.
\begin{lemma}\label{lem-sup-3}
If Assumptions A.1--A.3 in the main paper hold, then
\begin{align*}
&\EE_{\sbb^*} \EE_{\Scal} \overline v (\sbb^*) \leq
3 \frac{\tau^2}{n} \gamma \left( \frac{\tau^2}{n} \right) \\
 &\qquad + \underset{d \in \NN}{\inf} \,\left[ \left\{\frac{4n}{\tau^2}\tr(C_{\alphab})+1\right\}\tr(C_{\alphab}^d) + \tr(C_{\alphab}) \left\{  \frac{A b(m, d, r) \rho^2 \gamma(\tfrac{\tau^2_0}{n})}{\sqrt{m}} \right\}^r  \right].
%2 \frac{\tau^2}{n} \gamma \left( \frac{\tau^2}{n} \right) + \underset{d \in \NN}{\inf} \,\left[\tr(C_{\alphab}^d) +  \tr(C_{\alphab}) \left\{  \frac{A b(m, d, r) \rho^2 \gamma(\tfrac{\tau^2_0}{n})}{\sqrt{m}} \right\}^r  \right].
\end{align*}
\end{lemma}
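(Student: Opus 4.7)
The plan is to reduce the combined posterior variance to a single-subset analysis, then mirror the truncation-and-concentration strategy already used in the proofs of Lemmas \ref{lem-sup-1} and \ref{lem-sup-2}. First I would apply Jensen's inequality to the square, $\overline v(\sbb^*) = \{k^{-1}\sum_{j=1}^k v_j(\sbb^*)^{1/2}\}^2 \leq k^{-1}\sum_{j=1}^k v_j(\sbb^*)$, and use the exchangeability of the $k$ subsets under random partitioning to reduce the task to bounding $\EE_{\sbb^*}\EE_{\Scal}v_1(\sbb^*)$.

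Next, I would convert the single-subset predictive variance into a spectral form. Expanding $w(\cdot) = \sum_i \sqrt{\mu_i}z_i\phi_i(\cdot)$ with iid $z_i \sim N(0,1)$ and writing $\Mb=\diag(\mu_1,\mu_2,\ldots)$, $(\Phib^1)_{il}=\phi_l(\sbb_{1i})$, $\Kb_1=(\Phib^1)^T\Phib^1$, and $\phib^*=\{\phi_1(\sbb^*),\phi_2(\sbb^*),\ldots\}^T$, the weight-space form of Gaussian conjugacy (equivalently a Woodbury identity on the kernel form) yields
\begin{equation*}
v_1(\sbb^*) = \phib^{*T}\left(\Mb^{-1} + \tfrac{k}{\tau^2}\Kb_1\right)^{-1}\phib^*.
\end{equation*}
Taking expectation over $\sbb^*$ and using the orthonormality $\EE_{\sbb^*}\{\phi_i(\sbb^*)\phi_{i'}(\sbb^*)\} = \delta_{ii'}$ collapses this to $\EE_{\sbb^*}v_1(\sbb^*) = \tr\{(\Mb^{-1}+\tfrac{k}{\tau^2}\Kb_1)^{-1}\}$, an $\Scal$-measurable trace that is much more tractable than $v_1$ itself.

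Then, I would truncate at a chosen $d\in\NN$ and partition $\Mb$, $\phib^*$, and $\Kb_1$ into head (first $d$ coordinates) and tail blocks. For the tail block the crude operator inequality $(\Mb^{-1}+\tfrac{k}{\tau^2}\Kb_1)^{-1}\preceq\Mb$ gives a trace contribution of at most $\tr(\Mb^{\uparrow})=\tr(C_{\alphab}^d)$. For the head block I would work on the event $\Ecal_1$ from \eqref{eset1}, on which $\frac{1}{m}\Kb_1^{\downarrow\downarrow}$ is close to $\Ib_d$; a perturbation argument then controls the head trace by
\begin{equation*}
\tr\left\{\left((\Mb^{\downarrow})^{-1} + \tfrac{km}{\tau^2}\Ib_d\right)^{-1}\right\} = \sum_{i=1}^d \frac{\mu_i \cdot \tau^2/n}{\mu_i + \tau^2/n} \leq \frac{\tau^2}{n}\,\gamma\left(\tfrac{\tau^2}{n}\right),
\end{equation*}
using $n=km$; this produces the dominant term $3\tau^2/n\cdot\gamma(\tau^2/n)$. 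On the complement $\Ecal_1^c$ I would fall back on the deterministic bound $v_1(\sbb^*)\leq C_{\alphab}(\sbb^*,\sbb^*)$, so that $\EE_{\sbb^*}\{v_1(\sbb^*)\one(\Ecal_1^c)\}\leq\tr(C_{\alphab})\one(\Ecal_1^c)$; together with $\PP(\Ecal_1^c)\leq R(m,n,d,r)$ from \eqref{eq:26}, this supplies the last additive term $\tr(C_{\alphab})R(m,n,d,r)$. Summing the head, tail, and bad-event contributions and taking the infimum over $d\in\NN$ gives the stated bound.

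The main obstacle will be in the third step: the infinite-dimensional operator $\Mb^{-1}+\tfrac{k}{\tau^2}\Kb_1$ does not split cleanly into block-diagonal head--tail form, because sampling couples the low- and high-frequency basis coordinates through the off-diagonal blocks $\Kb_1^{\downarrow\uparrow}$. Handling this coupling through a Schur-complement or block-inversion argument, while tracking moments of $\Kb_1^{\downarrow\uparrow}$ into the final constants, is the technical crux, analogous to the treatment of the cross term $\tfrac{1}{m}(\Phib^j)^T\vb_j$ in Lemma \ref{lem-sup-1}, and it is what produces the prefactor $\{4n/\tau^2\cdot\tr(C_{\alphab})+1\}$ multiplying $\tr(C_{\alphab}^d)$ in the stated bound.
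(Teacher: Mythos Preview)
Your overall strategy matches the paper's proof closely: Jensen on $\overline v$, integrating out $\sbb^*$ via orthonormality, truncating at $d$, working on the concentration event $\Ecal_1$, and falling back on $v_j(\sbb^*)\le C_{\alphab}(\sbb^*,\sbb^*)$ (hence $\tr(C_{\alphab})\PP(\Ecal_1^c)$) on its complement. The tail bound via $A\preceq\Mb$ is also exactly how the $\tr(C_{\alphab}^d)$ term arises.

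Where the paper differs is precisely at the step you flag as the main obstacle. Rather than attack the feature-space Schur complement and the cross block $\Kb_1^{\downarrow\uparrow}$ directly, the paper stays in the $m$-dimensional sample space. After Woodbury one has the $d\times d$ head matrix
\[
\Bb=\Bigl\{(\Mb^{\downarrow})^{-1}+(\Phib^{j})^{T}\bigl(\Cb_{j,j}^{\uparrow}+\tfrac{\tau^2}{k}\Ib\bigr)^{-1}\Phib^{j}\Bigr\}^{-1},
\qquad \Cb_{j,j}^{\uparrow}=\Phib^{j\uparrow}\Mb^{\uparrow}(\Phib^{j\uparrow})^{T},
\]
so the entire head--tail coupling is absorbed into the single $m\times m$ matrix $\Cb_{j,j}^{\uparrow}$. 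The paper then introduces a \emph{second} event $\Ecal_2=\{\tfrac{k}{\tau^2}\Cb_{j,j}^{\uparrow}\preceq\tfrac14\Ib\}$; on $\Ecal_1\cap\Ecal_2$ the inner inverse is bounded below by $\tfrac45\Ib$, the usual $\Qb$-centering from Lemma~\ref{lem-sup-1} goes through, and one gets $\tr(\Bb)\le\tfrac52\,\tfrac{\tau^2}{n}\gamma(\tau^2/n)$. The probability of $\Ecal_2^c$ is handled by a one-line Markov bound on the trace, $\PP(\Ecal_2^c)\le 4\EE_{\Scal}\tr(\tfrac{k}{\tau^2}\Cb_{j,j}^{\uparrow})=\tfrac{4n}{\tau^2}\tr(C_{\alphab}^d)$, and after multiplying by the $\tr(C_{\alphab})$ fallback this is exactly the source of the prefactor $\{4n/\tau^2\cdot\tr(C_{\alphab})+1\}$ you were trying to reverse-engineer.

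What this buys: your Schur-complement route would need concentration for $\Kb_1^{\downarrow\uparrow}$ (an $d\times\infty$ block) in addition to $\Ecal_1$, and it is not obvious the Lemma~10 machinery of \citet{Zhaetal15} covers that without extra work. Pushing the tail into $\Cb_{j,j}^{\uparrow}$ replaces that with an event whose failure probability is a single expectation of a nonnegative scalar, so no operator concentration beyond $\Ecal_1$ is needed.
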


\begin{proof}
First we have the following relation between $\overline v$ and the subset variance $v_j$:
\begin{align}
  \label{eq:361}
   & \overline v(\sbb^*) = \left( \frac{1}{k} \sum_{j=1}^k v_j^{1/2}(\sbb^*) \right)^2 \leq \frac{1}{k}  \sum_{j=1}^k v_j(\sbb^*) \nonumber \\
   &=  \frac{1}{k}  \sum_{j=1}^k \left\{ C_{\alphab}(\sbb^*, \sbb^*) -  \cb^T_j(\sbb^*) \left(\Cb_{j, j} + \tfrac{\tau^2}{k} \Ib \right)^{-1} \cb_j(\sbb^*) \right\}  .
\end{align}
Since $C_{\alphab}(\sbb, \sbb') = \sum_{i=1}^{\infty} \mu_{i} \phi_i(\sbb) \phi_i(\sbb')$ for $\sbb, \sbb' \in \Dcal$, we have
\begin{align*}
  C_{\alphab}(\sbb^*, \sbb^*)  = \sum_{a=1}^{\infty} \mu_{a} \phi^2_a(\sbb^*), \quad \{\cb_j(\sbb^*)\}_i = \sum_{a=1}^{\infty} \mu_{a} \phi_a(\sbb_{ji}) \phi_a(\sbb^*), \quad i = 1, \ldots, m.
\end{align*}
These together with the orthogonality property of $\left\{\phi_i\right\}_{i=1}^{\infty}$ imply that
\begin{align}
  \label{eq:v2a}
  &~~ \EE_{\sbb^*} \EE_{\Scal} \left\{v_j(\sbb^*)\right\}  = \sum_{a=1}^{\infty} \mu_{a} \EE_{\sbb^*} \phi^2_a(\sbb^*)  \nonumber \\
  & \quad - \sum_{i=1}^{m} \sum_{i'=1}^{m} \sum_{a=1}^{\infty} \sum_{b=1}^{\infty}  \mu_{a} \mu_{b} \left\{ \left(\Cb_{j, j} + \tfrac{\tau^2}{k} \Ib \right)^{-1} \right\}_{i' i''} \nonumber\\
   & \quad \times \EE_{\Scal} \left[ \phi_a(\sbb_{ji}) \phi_b(\sbb_{ji'}) \EE_{\sbb^*} \left\{ \phi_a(\sbb^*) \phi_b(\sbb^*) \right\}\right]\nonumber\\
  &= \tr(C_{\alphab}) - \EE_{\Scal} \sum_{i=1}^{m} \sum_{i'=1}^{m} \sum_{a=1}^{\infty}  \mu^2_{a} \left\{ \left(\Cb_{j, j} + \tfrac{\tau^2}{k} \Ib \right)^{-1} \right\}_{i i'} \phi_a(\sbb_{ji}) \phi_a(\sbb_{ji'})\nonumber\\
  &= \sum_{a=1}^{d} \mu_{a} - \EE_{\Scal} \sum_{a=1}^{d}  \mu^2_{a} \left[ \sum_{i=1}^{m} \sum_{i'=1}^{m}  \left\{ \left(\Cb_{j, j} + \tfrac{\tau^2}{k} \Ib \right)^{-1}  \right\}_{i i'} \phi_a(\sbb_{ji})\phi_a(\sbb_{ji'}) \right]  + \nonumber\\
  &\quad \tr(C_{\alphab}^d) - \EE_{\Scal}
  \sum_{a=d+1}^{\infty}  \mu^2_{a}  \left[\sum_{i=1}^{m} \sum_{i'=1}^{m} \left\{ \left(\Cb_{j, j} + \tfrac{\tau^2}{k} \Ib \right)^{-1} \right\}_{i' i''}  \phi_a(\sbb_{ji})\phi_a(\sbb_{ji'})  \right] \nonumber\\
  &\overset{(i)}{\leq}  \EE_{\Scal} \sum_{a=1}^{d} \left\{ \mu_{a} -  \mu^2_{a} \phib^{j^T}_a (\Cb_{j, j} + \tfrac{\tau^2}{k} \Ib )^{-1} \phib^{j}_a \right\}  + \tr(C_{\alphab}^d),
\end{align}
where $ia$th element of the matrix $\Phib^j$ (defined in the proof of Lemma \ref{lem-sup-1}) is $\phi_a(\sbb_{ji})$, $\phib^{j}_a$ is the $a$th column of  $\Phib^j$, and $(i)$ follows because $\left(\Cb_{j, j} + \tfrac{\tau^2}{k} \Ib \right)$ is a positive definite matrix and $\phib^{j^T}_a \left(\Cb_{j, j} + \tfrac{\tau^2}{k} \Ib \right)^{-1} \phib^{j}_a \geq 0$.

Let $\Mb = \diag(\mu_1, \ldots, \mu_d)$ and $\Qb = \left(\Ib + \frac{\tau^2} {km}\Mb^{-1} \right)^{1/2}$ as defined in the proofs of Lemmas \ref{lem-sup-1} and \ref{lem-sup-2}. Define a $d \times d$ matrix $\Bb\equiv \Mb - \Mb \Phib^{j^T} \left(\Cb_{j, j} + \tfrac{\tau^2}{k} \Ib \right)^{-1} \Phib^{j} \Mb$, so that from \eqref{eq:v2a},
\begin{align}
\label{eq:defvj}
 &\tr(\Bb) = \sum_{a=1}^{d} \left\{ \mu_{a} -  \mu^2_{a} \phib^{j^T}_a \left(\Cb_{j, j} + \tfrac{\tau^2}{k} \Ib \right)^{-1} \phib^{j}_a \right\} , \quad \nonumber \\
 &\EE_{\sbb^*} \EE_{\Scal} \left\{v_j(\sbb^*)\right\} \leq \EE_{\Scal} \tr(\Bb) +  \tr(C_{\alphab}^d).
\end{align}
Let
\begin{align*}
 & \Cb_{j, j} = \Phib^j \Mb \Phib^{j^T} + \Phib^{j \uparrow} \Mb^{\uparrow} \Phib^{j \uparrow ^T} \equiv \Phib^j \Mb \Phib^{j^T} + \Cb_{j,j}^{\uparrow}, \\
 & \Mb^{\uparrow} = \diag(\mu_{d+1}, \ldots, \mu_{\infty}), \quad \Phib^{j \uparrow} = [\phib_{d+1}^j , \cdots, \phib_{\infty}^j],
\end{align*}
then the Woodbury formula \citep{Har97} and the definition of $\Qb$ imply that
\begin{align}
  \label{eq:v2b}
  \Bb &=  \left\{ \Mb^{-1} + \Phib^{j^T} \left( \Cb_{j,j}^{\uparrow} + \tfrac{\tau^2}{k} \Ib \right)^{-1} \Phib^{j}   \right\}^{-1} \nonumber\\
      &=  \frac{\tau^2}{km} \left\{ \Ib + \frac{\tau^2} {km}\Mb^{-1} + \frac{1}{m}\Phib^{j^T} \left(  \tfrac{k}{\tau^2} \Cb_{j,j}^{\uparrow} + \Ib \right)^{-1} \Phib^{j} - \Ib   \right\}^{-1} \nonumber \\
      &=  \frac{\tau^2}{km} \Qb^{-2} \left[ \Ib  +  \Qb^{-1}  \left\{ \frac{1}{m}\Phib^{j^T} \left(  \tfrac{k}{\tau^2} \Cb^{\uparrow}_{j,j} + \Ib \right)^{-1} \Phib^{j} - \Ib \right\} \Qb^{-1} \right]^{-1} .
\end{align}

Define the event $\Ecal_2 = \left\{ \tfrac{k}{\tau^2} \Cb_{j,j}^{\uparrow} \preceq \frac{1}{4}\Ib \right\}$. Since the matrix $ \Cb_{j,j}^{\uparrow}$ is nonnegative definite, we have the relation that
\begin{align*}
\left\{\tr\left(\tfrac{k}{\tau^2} \Cb_{j,j}^{\uparrow}\right)\leq \frac{1}{4} \right\} \subseteq \left\{\smax \left(\tfrac{k}{\tau^2} \Cb_{j,j}^{\uparrow}\right) \leq \frac{1}{4} \right\} \subseteq \Ecal_2,
\end{align*}
$\smax(\Ab)$ is the maximum eigenvalue of the square matrix $\Ab$. Therefore, by Markov's inequality, we have that
\begin{align}\label{traceerr1}
& \PP (\Ecal_2^c) \leq \PP \left\{\tr\left(\tfrac{k}{\tau^2} \Cb_{j,j}^{\uparrow}\right)>\frac{1}{4}\right\} \leq 4\EE_{\Scal} \tr\left(\tfrac{k}{\tau^2} \Cb_{j,j}^{\uparrow}\right)  \nonumber \\
& = \frac{4k}{\tau^2} \sum_{i=1}^m \sum_{a=d+1}^{\infty} \mu_a\EE_{\Scal}\phi^2_a(\sbb_{ji}) = \frac{4km}{\tau^2} \tr\left(C_{\alphab}^d\right).
\end{align}
Now on the event $\Ecal_1\cap \Ecal_2$ (with $\Ecal_1$ defined in \eqref{eset1}), we have that
\begin{align}\label{Bmatbound1}
& \Ib + \Qb^{-1}  \left\{ \frac{1}{m}\Phib^{j^T} \left(  \tfrac{k}{\tau^2} \Cb_{j,j}^{\uparrow} + \Ib \right)^{-1} \Phib^{j} - \Ib \right\} \Qb^{-1} \nonumber \\
& \stackrel{(i)}{\succeq} \Ib + \Qb^{-1}  \left\{ \frac{1}{m}\Phib^{j^T} \left( \frac{1}{4}\Ib + \Ib \right)^{-1} \Phib^{j} - \Ib \right\} \Qb^{-1} \nonumber \\
& = \Ib - \frac{1}{5} \Qb^{-2} + \frac{4}{5} \Qb^{-1}  \left\{ \frac{1}{m}\Phib^{j^T}\Phib^{j} - \Ib \right\} \Qb^{-1} \nonumber \\
& \stackrel{(ii)}{\succeq} \Ib - \frac{1}{5} \Ib - \frac{4}{5}\cdot \frac{1}{2}\Ib = \frac{2}{5}\Ib,
\end{align}
where (i) follows on the event $\Ecal_2$, and (ii) holds on the event $\Ecal_1$ and from the fact $\Qb^{-2}\preceq \Ib$.

Therefore, by combining \eqref{traceerr1}, \eqref{Bmatbound1}, and the upper bound for $\PP(\Ecal_1^c)$ given in \eqref{eq:26} under our assumptions, we obtain that
\begin{align} \label{eq:v2d1}
\EE_{\Scal} \tr(\Bb)  & \leq \EE_{\Scal} \left\{\tr(\Bb) \one(\Ecal_1\cap \Ecal_2)\right\} + \EE_{\Scal} \left[\tr(\Bb) \left\{\one(\Ecal_1^c) + \one(\Ecal_2^c)\right\} \right] \nonumber \\
& \overset{(i)}{\leq} \frac{5}{2} \frac{\tau^2}{km} \tr\left(\Qb^{-2}\right) + \tr(C_{\alphab}) \left\{\PP(\Ecal_1^c)+\PP(\Ecal_2^c)\right\} \nonumber \\
& \overset{(ii)}{\leq}  3 \frac{\tau^2}{n} \gamma\left( \frac{\tau^2}{n} \right) + \frac{4n}{\tau^2}\tr(C_{\alphab})\tr(C_{\alphab}^d) + \tr(C_{\alphab}) \left\{  \frac{A b(m, d, r) \rho^2 \gamma(\tfrac{\tau^2_0}{n})}{\sqrt{m}} \right\}^r,
\end{align}
where (i) follows from \eqref{Bmatbound1}, and (ii) follows from \eqref{traceerr1}, \eqref{eq:26}, and by replacing $km$ with $n$.

\eqref{eq:v2a}, \eqref{eq:v2b}, and \eqref{eq:v2d1} together yield
\begin{align}
\label{eq:vj2e}
& \EE_{\sbb^*}  \EE_{\Scal} \left\{\overline v_j(\sbb^*) \right\} \leq \EE_{\Scal} \tr(\Bb) + \tr \left(C_{\alphab}^d\right) \nonumber \\
& \leq 3 \frac{\tau^2}{n} \gamma\left( \frac{\tau^2}{n} \right) + \left\{\frac{4n}{\tau^2}\tr(C_{\alphab})+1\right\}\tr(C_{\alphab}^d) + \tr(C_{\alphab}) \left\{  \frac{A b(m, d, r) \rho^2 \gamma(\tfrac{\tau^2_0}{n})}{\sqrt{m}} \right\}^r .
\end{align}

Since the righthand side of \eqref{eq:vj2e} does not depend on $j$, a further upper bound for \eqref{eq:361} is given by
\begin{align}
  \label{eq:v2f}
  & \EE_{\sbb^*}  \EE_{\Scal} \left\{\overline v(\sbb^*) \right\}  \leq \frac{1}{k} \sum_{j=1}^k \EE_{\sbb^*} \EE_{\Scal} \left\{\overline v_j(\sbb^*) \right\} \nonumber \\
  & \leq 3 \frac{\tau^2}{n} \gamma \left( \frac{\tau^2}{n} \right) +  \left\{\frac{4n}{\tau^2}\tr(C_{\alphab})+1\right\}\tr(C_{\alphab}^d) + \tr(C_{\alphab}) \left\{  \frac{A b(m, d, r) \rho^2 \gamma(\tfrac{\tau^2_0}{n})}{\sqrt{m}} \right\}^r .
\end{align}
Taking the infimum over $d\in \NN$ leads to the proof.
\end{proof}

\subsection{Proof of Theorem 2}
The proof of parts (i)--(iii) are as follows. (i) Since $d^*$ is a constant integer and $k=o(n)$, we can take $m$ sufficiently large such that $n\geq m>\max(d^*,e^r)$. In the upper bounds of Theorem 1, we choose $d=n$ in every infimum to make the upper bounds larger. This implies that $\tr\left(C_{\alphab}^d\right)=0$, $\mu_{d+1}=0$, and $b(m,d,r)\leq \log n$. Also notice that in this case, $\gamma(a)\leq d^*$ for any $a>0$. Then, Theorem 1 implies that
\begin{align} \label{eq:case1a}
& \EE_{\sbb^*} \EE_{\Scal} \EE_{0|\Scal} \{\overline w(\sbb^*) - w_0(\sbb^*)\}^2  \nonumber \\
& \leq \left(8\|w_0\|_{\HH}^2 + 12k^{-1}\|w_0\|_{\HH}^2 + 15d^*\right)\frac{\tau^2}{n}  \nonumber \\
& ~~ + \left\{\mu_1\|w_0\|_{\HH}^2 + \frac{4\|w_0\|_{\HH}^2}{k} + \frac{2n}{k} + \tr(C_{\alphab}) \right\} \left(\frac{A\rho^2 d^* \log n}{\sqrt{n/k}}\right)^r \nonumber \\
&\leq O(n^{-1}) + \{1+o(1)\} \frac{2\left(A\rho^2 d^* \log n\right)^{r} k^{r/2-1}}{n^{r/2-1}} \nonumber \\
&= O(n^{-1}),
\end{align}
where the last equality follows from the condition on $k$.
\vspace{5mm}

\noindent (ii) In the upper bounds of Theorem 1, we choose $d=n^2$ in every infimum for sufficiently large $n$ such that $\log d = 2\log n > r$. Then
\begin{align}
\label{eq:case2a}
\mu_{d+1} & \leq c_{1\mu} \exp\left(-c_{2\mu} n^{2\kappa} \right) = O(n^{-4}), \nonumber \\
b(m,d,r) & \leq \max\left(\sqrt{\log d}, \frac{\log d}{m^{1/2-1/r}}\right) \leq \log d \leq 2\log n, \nonumber \\
\tr\left(C_{\alphab}^d\right) & = \sum_{i=n^2+1}^{\infty} \mu_i \leq \sum_{i=n^2+1}^{\infty} c_{1\mu} \exp\left(-c_{2\mu} i^\kappa \right) \leq c_{1\mu} \int_{n^2}^{\infty}   \exp\left(-c_{2\mu} z^\kappa \right) dz \nonumber  \\
& = c_{1\mu} \int_{n^{2\kappa}}^{\infty} \frac{1}{\kappa} t^{\frac{1}{\kappa}-1} \exp\left(-c_{2\mu} t \right) dt,
\end{align}
where in the last step, we use the change of variable $t=z^\kappa$. If $\kappa\geq 1$, then since $t\geq n^{2\kappa}\geq 1$, we have $t^{\frac{1}{\kappa}-1}\leq 1$. If $0<\kappa<1$, then there exists a large $n_0\in \NN$ that depends on only $c_{2\mu}$ and $\kappa$, such that for all $n\geq n_0$ and $t\geq n^{2\kappa}$, we have $t^{\frac{1}{\kappa}-1}\leq \exp(c_{2\mu}t/2)$. Therefore, in all cases,
\begin{align}
\label{eq:case2b}
\tr\left(C_{\alphab}^d\right) &\leq  \frac{c_{1\mu}}{\kappa} \int_{n^{2\kappa}}^{\infty}  \exp\left(-c_{2\mu} t /2 \right) dt = \frac{2c_{1\mu}}{c_{2\mu}\kappa} \exp\left(-c_{2\mu}n^{2\kappa} /2 \right) = O(n^{-4}).
\end{align}
Let $d_1= \left(\tfrac{2}{c_{2\mu}} \log n\right)^{1/\kappa}$. For sufficiently large $n$, based on the similar argument as above, $\gamma(\tau^2/n)$ can be bounded as
\begin{align}
\label{eq:case2c}
\gamma(\tau^2/n) & = \sum_{i=1}^{\infty} \frac{\mu_i}{\mu_i+ \frac{\tau^2}{n}} = \sum_{i=1}^{\lfloor d_1\rfloor + 1} \frac{\mu_i}{\mu_i+ \frac{\tau^2}{n}} + \sum_{i=\lfloor d_1 \rfloor +2}^{\infty} \frac{\mu_i}{\mu_i+ \frac{\tau^2}{n}}\nonumber \\
& \leq d_1 + 1 + \frac{n}{\tau^2} \sum_{i=\lfloor d_1 \rfloor +1}^{\infty} c_{1\mu} \exp\left(-c_{2\mu} i^{\kappa} \right) \nonumber \\
& \leq d_1 + 1  +  \frac{n}{\tau^2} \int_{d_1}^{\infty} c_{1\mu} \exp\left(-c_{2\mu} z^{\kappa} \right) dz \nonumber  \\
& = d_1 + 1  +  \frac{n c_{1\mu}}{\tau^2 \kappa } \int_{d_1^{\kappa}}^{\infty} t^{\frac{1}{\kappa}-1}\exp\left(-c_{2\mu} t \right) dt \nonumber \\
& \leq d_1 + 1  +  \frac{n c_{1\mu}}{\tau^2 \kappa } \int_{d_1^{\kappa}}^{\infty} \exp\left(-c_{2\mu} t /2 \right) dt \nonumber \\
& = d_1 + 1  +   \frac{n c_{1\mu}}{c_{2\mu} \tau^2 \kappa} \exp\left(-c_{2\mu} d_1^\kappa /2 \right) \nonumber \\
& = \left(\tfrac{2}{c_{2\mu}} \log n\right)^{1/\kappa} + 1  + \frac{c_{1\mu}}{c_{2\mu} \tau^2 \kappa} = O\left((\log n)^{1/\kappa}\right).
\end{align}
%where $(i)$ follows from the normal tail probability bound $\int_x^{\infty} e^{-u^2/2} du \leq e^{-x^2/2}/x$ for large $x>0$;
Therefore, from \eqref{eq:case2a}, \eqref{eq:case2b}, \eqref{eq:case2c}, and the bounds in Theorem 1, we obtain that
\begin{align*}
& \EE_{\sbb^*} \EE_{\Scal} \EE_{0|\Scal} \{\overline w(\sbb^*) - w_0(\sbb^*)\}^2  \\
& \leq O(n^{-1}) + 15 \frac{\tau^2}{n} \gamma \left( \frac{\tau^2}{n} \right) + \{1+o(1)\} \frac{2n}{k} \left\{  \frac{A b(m, d, r) \rho^2 \gamma(\tfrac{\tau^2_0}{n})}{\sqrt{m}} \right\}^r \\
& \leq O(n^{-1}) + O\left((\log n)^{1/\kappa}/n\right) + O(1)\cdot
\frac{n}{k} \left\{ \frac{(\log n)^{1/\kappa} \cdot \log n}{\sqrt{n/k}} \right\}^r \\
& \leq O\left((\log n)^{1/\kappa}/n\right) + O(1)\cdot  \frac{k^{\frac{r}{2}-1} (\log n)^{\frac{r(1+\kappa)}{\kappa}}}{n^{\frac{r}{2}-1}} \\
& = O\left((\log n)^{1/\kappa}/n\right),
\end{align*}
where the last equality follows from the condition on $k$.

\vspace{5mm}

\noindent (iii) In the upper bounds of Theorem 1, we choose $d=\lfloor n^{3/(2\nu-1)}\rfloor $ in every infimum for sufficiently large $n$ such that $\log d \geq  \log \left(n^{\frac{3}{2\nu-1}}-1\right) > r$. Then
\begin{align}
\label{eq:case3a}
\mu_{d+1} & \leq c_{\mu} n^{-6\nu/(2\nu-1)} \leq c_{\mu}n^{-3}, \nonumber \\
\tr\left(C_{\alphab}^d\right) & = \sum_{i=d+1}^{\infty} \mu_i \leq \sum_{i=d+1}^{\infty} c_{\mu} i^{-2\nu} \leq c_{\mu} \int_{d}^{\infty} \frac{1}{z^{2\nu}} dz \nonumber  \\
& = \frac{c_{\mu}}{2\nu-1} d^{-(2\nu-1)} \leq \frac{c_{\mu}}{2\nu-1} n^{-6\nu/(2\nu-1)} \leq \frac{c_{\mu}}{2\nu-1} n^{-3}, \nonumber \\
b(m,d,r) & \leq \max\left(\sqrt{\log d}, \frac{\log d}{m^{1/2-1/r}}\right) \leq \log d \leq \frac{3}{2\nu-1} \log n.
\end{align}
$\gamma(\tau^2/n)$ can be bounded as
\begin{align}
\label{eq:case3b}
\gamma(\tau^2/n) & = \sum_{i=1}^{\infty} \frac{1}{1+ \frac{\tau^2}{n\mu_i}} \leq \sum_{i=1}^{\infty} \frac{1}{1+ \frac{\tau^2i^{2\nu}}{c_{\mu}n}} \nonumber \\
& \leq n^{1/(2\nu)} + 1 + \frac{c_{\mu}n}{\tau^2} \sum_{i=\lfloor n^{1/(2\nu)}\rfloor+2}^{\infty} \frac{1}{i^{2\nu}}  \nonumber \\
& \leq n^{1/(2\nu)} + 1 +  \frac{c_{\mu}n}{\tau^2} \int_{n^{1/(2\nu)}}^{\infty} \frac{1}{z^{2\nu}} dz \nonumber  \\
& \leq n^{1/(2\nu)} + 1 +  \frac{c_{\mu}n}{\tau^2(2\nu-1)n^{(2\nu-1)/(2\nu)}} = \left(2+\frac{c_{\mu}}{\tau^2(2\nu-1)}\right) n^{1/(2\nu)} .
\end{align}
From \eqref{eq:case3a}, \eqref{eq:case3b}, and the bounds in Theorem 1, we obtain that
\begin{align*}
& \EE_{\sbb^*} \EE_{\Scal} \EE_{0|\Scal} \{\overline w(\sbb^*) - w_0(\sbb^*)\}^2 \\
&\leq O(n^{-1}) + 15 \frac{\tau^2}{n} \gamma \left( \frac{\tau^2}{n} \right) + \{1+o(1)\} \frac{2n}{k} \left\{  \frac{A b(m, d, r) \rho^2 \gamma(\tfrac{\tau^2_0}{n})}{\sqrt{m}} \right\}^r \\
& \leq O(n^{-1}) + \frac{15\tau^2\left(2+\frac{c_{\mu}}{\tau^2(2\nu-1)}\right) n^{1/(2\nu)}}{n} \\
&~~~ + \{1+o(1)\}
\frac{2n}{k} \left\{  \frac{3A\rho^2 \left(2+\frac{c_{\mu}}{\tau^2(2\nu-1)}\right) n^{1/(2\nu)} \log n}{(2\nu-1)\sqrt{n/k}} \right\}^r \\
& \leq O(n^{-1}) + O\left( n^{-\frac{2\nu-1}{2\nu}} \right) + O(1)\cdot \frac{k^{\frac{r}{2}-1}(\log n)^r}{n^{\frac{r}{2}-1-\frac{r}{2\nu}}} \\
& = O\left( n^{-\frac{2\nu-1}{2\nu}} \right),
\end{align*}
where the last equality follows from the condition on $k$.

\section{General posterior convergence rates for DISK}
\label{post-risk}

In this section, we provide some theoretical results for the posterior convergence rates of DISK posterior when $w_0$ can belong to a function class larger than the RKHS and the number of subsets $k$ grows relatively slower compared to Section 3.4. We only present results for the simplified model in equation (11) of the main manuscript. Recall that $\mathcal{S}^*$ is the set of $l$ reference locations in $\mathcal{D}$ and $\Scal^* \cap \Scal = \emptyset$. Let $\wb_0^* = \{w_0(\sbb_1^*), \ldots, w_0(\sbb_l^*)\}^T$ be the true residual spatial surface generating the data at the locations in $\Scal^*$ and $\wb^* = \{w(\sbb_1^*), \ldots, w(\sbb_l^*)\}^T$ be the realization of  GP $w(\cdot)$ at the locations in $\Scal^*$. Adapting our discussion in Section 3.2 of the main manuscript for the models in equation (5) of the main manuscript to the one for the model in equation (11) in the main manuscript, we have that $\yb_j $ given $\wb_j$ is Gaussian with density $N(\wb_j, k^{-1} \tau^2 \Ib)$ after stochastic approximation as in equation (8) of the main manuscript and the GP prior on $w(\cdot)$ implies that after integrating over $\wb_j$
\begin{align}
  \label{eq:4}
  \yb_j \mid \wb_j^* \sim N(\Ab_j \wb_j^*, \Sigmab_j), \quad \Ab_j = \Cb_{*j}^T \Cb_{*,*}^{-1}, \quad
  \Sigmab_j = k^{-1} \tau^2 \Ib + \Cb_{j,j} - \Cb_{*j}^T \Cb_{*,*}^{-1} \Cb_{*j},
\end{align}
where $\Cb_{*,*}$, $\Cb_{j,j}$, and $\Cb_{*j}$ are defined in equation (8) of the main manuscript. Let $\Ab$  and $\Sigmab$ represent the full data versions of $\Ab_j$ and $\Sigmab_j$ in \eqref{eq:4}. For any $\bb\in \RR^l$, we define two norms
\begin{align}
  \label{eq:5}
   \| \bb \|_{\Scal_j} = \left(\frac{1}{m} \bb^T \Ab_j^T \Sigmab_j^{-1}\Ab_j \bb \right)^{1/2}, \quad
   \| \bb \|_{\Scal} = \left(\frac{1}{n} \bb^T \Ab^T \Sigmab^{-1}\Ab \bb \right)^{1/2}.
\end{align}

\begin{comment}
Equivalently, for a generic function $\tilde \bb$ defined on the domain $\Dcal$, if $\bb$ in \eqref{eq:5} is the functional evaluation of $\tilde \bb$ at the testing locations $\Scal^*$, then $\|\cdot\|_{\Scal_j}$ and $\|\cdot\|_{\Scal}$ defined in \eqref{eq:5} can be viewed as two Banach norms on the space of functions with the domain $\Dcal$.
\end{comment}

Based on the definitions and notation introduced previously, we make the following five assumptions for deriving the general convergence rates of the DISK posterior:
\begin{enumerate}[label={C.\arabic*},ref=C.\arabic*]
\item \label{cd} (Compact domain) The spatial domain $\Dcal$ is a compact space in $\| \cdot \|_2$ metric.
\item \label{ne} (Norm equivalence) The partitions $\Scal_1, \ldots, \Scal_k$ of $\Scal$ are such that there exist universal positive constants $H_l < 1 < H_u$ independent of $j$ such that $H_l \, \| \cdot \|_{\Scal} \leq \| \cdot \|_{\Scal_j} \leq H_u \, \| \cdot \|_{\Scal}$  for $j = 1, \ldots, k$.
  % \begin{align}
  %   \label{eq:a2}
  %   %H_l \, \| \cdot \|_{\HH} \leq  \sqrt{k} \, \| \cdot \|_{\HH_j} \leq H_u \, \| \cdot \|_{\HH} \Longleftrightarrow
  % \end{align}
  %or $\sqrt{k} \| \cdot \|_{\HH_j} \asymp \| \cdot \|_{\HH}$  and $\| \cdot \|_{\BB_j} \asymp \| \cdot \|_{\BB}$ independent of $j=1, \ldots, k$.
  %Without loss of generality, we assume that $H_l < 1 < H_u$.
  % \item[(A3)]  The size of a subset ($m$) and the number of subsets ($k$) are such that $\tfrac{m}{k^2} > C_{1} > 1$ for some universal constant $C_1$.
\item \label{me} (Metric entropy) Suppose that $\epsilon_m$ is a positive sequence that satisfies (i) $\sqrt{m} \epsilon_{m} \geq 1$ for all $m\geq 1$; (ii) $\epsilon_m\to 0$ as $m\to\infty$; (iii) with a  slight abuse of notation, for every $r > 1$, there is a set $\Fcal_r$ such that for all $m\geq 1$, $D(\epsilon_{m}, \Fcal_r, \|\cdot\|_{\Scal}) \leq e^{m \epsilon_{m}^2 H_l^2 r^2}$ and $\Pi(\Fcal_r) \geq 1 - e^{- 2m \epsilon^2_{m} r^2}$,
  % \begin{align*}
  %   & D(\epsilon_{m}, \Fcal_r, \|\cdot\|_{\Scal}) \leq e^{m \epsilon_{m}^2 H_l^2 r^2}, \quad
  %   \Pi(\Fcal_r) \geq 1 - e^{- 2m \epsilon^2_{m} r^2}, % \label{sieve_entropy},
  % \end{align*}
  where $D(\epsilon,\Fcal_r,\|\cdot\|_{\Scal})$ is the minimum number of $\|\cdot\|_{\Scal}$-balls of radius $\epsilon$ that cover $\Fcal_r$.
\item \label{pt} (Prior thickness) For the $\epsilon_m$ sequence in Assumption \ref{me} and for all $m\geq 1$, the prior assigns positive mass to any small neighborhood around $\wb_0^*$, $\Pi(w : \| \wb^* - \wb^*_0 \|_{\Scal} \leq \epsilon_m) \geq  e^{- m H_u^{2} \epsilon_{m}^2}$.
  % \begin{align}\label{eq:ball}
  %   \Pi(w : \| \wb^* - \wb^*_0 \|_{\Scal} \leq \epsilon_m) &\geq  e^{- m H_u^{2} \epsilon_{m}^2}.
  % \end{align}
\item \label{eb} The metrics $ \| \cdot \|^2_2$ and  $\| \cdot \|^2_{\Scal}$ are equivalent in that $C_l \| \cdot \|^2_{\Scal} \leq \| \cdot \|^2_2  \leq C_u \| \cdot \|^2_{\Scal}$ for some positive universal constants $C_l$ and $C_u$.
\end{enumerate}

Assumption \ref{cd} is common to all models based on GP priors. Assumption \ref{ne} specifies a technical condition on the partitioning scheme so that the realizations of the GP observed in the $j$th subset are similar to those in the full data, where such similarity is described in terms of the norms $\|\cdot \|_{\Scal_j}$ and $\| \cdot \|_{\Scal}$. Assumption \ref{me} regulates the complexity of the sequence of sets $\Fcal_r$ in terms of $\| \cdot \|_{\Scal}$-metric entropy and specifies a condition on the probability assigned by the GP prior to $\Fcal_r$, ensuring that the prior probability of $\Fcal_r$ under the Gaussian measure induced by the GP prior increases with increasing $\| \cdot \|_{\Scal}$-metric entropy of $\Fcal_r$. The subscript $r$ here should not be confused with the number of knots in MPP or other low-rank GP priors. Assumption \ref{pt} says that the GP prior assigns
positive probability to arbitrarily small $\| \cdot \|_{\Scal}$-neighborhood around the true parameter $\wb_0^*$. Assumption \ref{eb} is a technical condition that is used in upper bounding the Bayes $L_2$-risk of Wasserstein barycenter in the estimation of $\wb_0^*$ if we have Bayes $L_2$-risk
upper bounds for the subset posterior distributions.% in the estimation of $\wb_0^*$.

Similar to Section 3.4, the Bayes $L_2$-risk in the estimation of $\wb_0^*$ using the full data posterior is given by
\begin{align}
  \label{eq:l2-risk-full}
  \EE_{0 \mid \Scal, \Scal^*} \left\{ \EE \left( \| \wb^* - \wb^*_0 \|_{2}^2 \mid \yb \right) \right\}
  = \EE_{0 \mid \Scal, \Scal^*} \left\{ \int \| \wb^* - \wb^*_0 \|_{2}^2 \, d \Pi_n (w \mid \yb) \right\} ,
\end{align}
where $\EE_{0 \mid \Scal, \Scal^*}$ is the expectation under the true space varying function $w_0$ with respect to density of $\yb$ conditional on $\Scal, \Scal^*$ in equation (11) of the main paper. The decay rate of the risk in  \eqref{eq:l2-risk-full} is known under assumptions that are similar to \ref{cd}, \ref{me}, \ref{pt} and are obtained by replacing $m$ by $n$ \citep{VarZan11}.

The theorem below describes the Bayes $L_2$-risk of each subset posterior distribution and the combined DISK posterior distribution $\overline \Pi (\cdot \mid \yb_1, \ldots, \yb_k)$. The proof is given in Section \ref{sec:proof-main-theorem} after some technical lemmas.
\begin{theorem}\label{subsetbound}
If Assumptions \ref{cd}--\ref{eb} hold for the $j$th subset posterior $\Pi_m(\cdot \mid \yb_j)$  with $j=1, \ldots, k$, then there exists a positive constant $c(H_l)$ that only depends on $H_l$, such that
\begin{align*}
& \EE_{0 \mid \Scal_j, \Scal^*} \left\{ \EE ( \| \wb^* - \wb^*_0 \|_{2}^2 \mid \yb_j ) \right\}  \leq C_u c(H_l) \epsilon_m^2, \quad j = 1, \ldots, k, \\
& \EE_{0 \mid \Scal, \Scal^*} \left\{ \int \| \wb^* - \wb^*_0 \|_{2}^2 \, d \overline \Pi (w \mid  \yb_1, \ldots, \yb_k) \right\}
\leq C_u^2 c(H_l) \epsilon_m^2
\end{align*}
%and $\PP_{m, w_0}$ is the probability measure
as $m \rightarrow \infty$, where $\EE_{0 \mid \Scal_j, \Scal^*}$ is the expectation under the true space varying function $w_0$ with respect to the subset $\yb_j$ of size $m$ conditional on $\Scal_j, \Scal^*$, and $\EE_{0 \mid \Scal, \Scal^*}$ is the expectation under $w_0$ with respect to the full dataset of size $n$ conditional on $\Scal, \Scal^*$.
\end{theorem}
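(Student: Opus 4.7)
The plan is to handle the two assertions of the theorem in sequence: first establish the subset Bayes $L_2$-risk bound using the Ghosal--Ghosh--van der Vaart (GGV) posterior contraction framework adapted to the stochastically approximated Gaussian subset likelihood, and then lift the subset bound to a bound for the DISK posterior by a short arithmetic argument using the defining optimality property of the Wasserstein-$2$ barycenter.

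For the first assertion, I would work with the Gaussian subset likelihood $\yb_j \mid \wb^* \sim N(\Ab_j \wb^*, \Sigmab_j)$ and verify the three standard GGV ingredients relative to the metric $\|\cdot\|_{\Scal_j}$ on $\wb^*$: prior thickness is Assumption \ref{pt} (after the cheap conversion between $\|\cdot\|_{\Scal}$ and $\|\cdot\|_{\Scal_j}$ granted by Assumption \ref{ne}); the sieve $\Fcal_r$ and its metric-entropy/prior-mass control are Assumption \ref{me}; and the required tests are built from likelihood-ratio statistics, noting that the Kullback--Leibler and Hellinger metrics between $N(\Ab_j \wb_1^*, \Sigmab_j)$ and $N(\Ab_j \wb_2^*, \Sigmab_j)$ both equal a fixed multiple of $m\,\|\wb_1^*-\wb_2^*\|_{\Scal_j}^2$. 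The GGV theorem then yields the contraction statement $\Pi_m(\|\wb^* - \wb_0^*\|_{\Scal} > M\epsilon_m \mid \yb_j) \to 0$ in $\EE_{0\mid\Scal_j,\Scal^*}$-probability, with constants depending only on $H_l$ through Assumption \ref{ne}. To convert contraction into integrated squared risk I would split the integral $\int \|\wb^*-\wb_0^*\|_{\Scal}^2\, d\Pi_m(w\mid \yb_j)$ into its intersection with the $M\epsilon_m$-ball (bounded by $M^2\epsilon_m^2$) and its complement, which is controlled on the sieve $\Fcal_r$ via Assumption \ref{me} and off the sieve via the exponential prior-complement tail in the same assumption. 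This yields a $\|\cdot\|_{\Scal}^2$-risk bound of order $c(H_l)\epsilon_m^2$, and Assumption \ref{eb} finally converts the bound to the claimed $\|\cdot\|_2^2$-risk $C_u c(H_l)\epsilon_m^2$.

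For the second assertion, denote by $\delta_{\wb_0^*}$ the Dirac mass at $\wb_0^*$ and by $\overline \Pi$ the $W_2$-barycenter of $\Pi_m(\cdot\mid\yb_1),\ldots,\Pi_m(\cdot\mid\yb_k)$. The barycenter optimality gives
\begin{equation*}
\frac{1}{k}\sum_{j=1}^k W_2^2\bigl(\overline \Pi,\, \Pi_m(\cdot\mid \yb_j)\bigr)\ \leq\ \frac{1}{k}\sum_{j=1}^k W_2^2\bigl(\delta_{\wb_0^*},\, \Pi_m(\cdot\mid \yb_j)\bigr),
\end{equation*}
while the triangle inequality for $W_2$ combined with $(a+b)^2\leq 2a^2+2b^2$ yields
\begin{equation*}
W_2^2\bigl(\overline \Pi,\, \delta_{\wb_0^*}\bigr)\ \leq\ \frac{2}{k}\sum_{j=1}^k W_2^2\bigl(\overline \Pi,\, \Pi_m(\cdot\mid \yb_j)\bigr) + \frac{2}{k}\sum_{j=1}^k W_2^2\bigl(\Pi_m(\cdot\mid \yb_j),\, \delta_{\wb_0^*}\bigr).
\end{equation*}
Chaining these two inequalities gives $W_2^2(\overline \Pi,\delta_{\wb_0^*}) \leq (4/k)\sum_j W_2^2(\Pi_m(\cdot\mid\yb_j),\delta_{\wb_0^*})$, and $W_2^2(\Pi_m(\cdot\mid\yb_j),\delta_{\wb_0^*}) = \int \|\wb^*-\wb_0^*\|_2^2\,d\Pi_m(w\mid\yb_j)$. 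Taking $\EE_{0\mid \Scal,\Scal^*}$ on both sides, using the tower property across the subsets, and plugging the subset bound from the first part delivers the stated DISK risk bound of order $C_u^2\, c(H_l)\epsilon_m^2$ after absorbing universal constants into $c(H_l)$.

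The main obstacle is the test construction for the subset likelihood, which uses the stochastic-approximated covariance $\Sigmab_j$ with the $k^{-1}\tau^2$ noise rescaling rather than the raw data noise, and whose natural information metric is the subset-dependent $\|\cdot\|_{\Scal_j}$ rather than the global $\|\cdot\|_{\Scal}$ used in the entropy and prior-mass assumptions. The crucial structural input that makes the argument go through is Assumption \ref{ne}, which guarantees that $\|\cdot\|_{\Scal_j}$ and $\|\cdot\|_{\Scal}$ are equivalent with $j$-independent constants; this is exactly what allows the entropy/prior-mass budget and the resulting contraction rate $\epsilon_m$ to be stated uniformly across subsets. The secondary, mostly bookkeeping, obstacle is the passage from posterior contraction (a probability statement) to an integrated squared $L_2$-risk, which I plan to handle by sieve truncation and the exponential prior-complement bound in Assumption \ref{me}.
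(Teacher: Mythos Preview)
Your proposal is correct and follows essentially the same route as the paper's proof: a GGV-style argument (explicit Gaussian likelihood-ratio tests, the evidence lower bound via the prior-thickness assumption, entropy/sieve control from Assumption~\ref{me}) yields $r$-indexed exponential tail bounds on $\EE_{0\mid\Scal_j,\Scal^*}\Pi_m(\|\wb^*-\wb_0^*\|_\Scal>8r\epsilon_m\mid\yb_j)$, which are then turned into an integrated $\|\cdot\|_\Scal^2$-risk via the layer-cake formula and converted to $\|\cdot\|_2^2$ by Assumption~\ref{eb}. The only notable difference is in the second assertion: the paper invokes a sharper barycenter inequality (Lemma~1.7 of \citet{Srietal17}) giving directly $W_2^2(\overline\Pi,\delta_{\wb_0^*})\le k^{-1}\sum_j W_2^2(\Pi_m(\cdot\mid\yb_j),\delta_{\wb_0^*})$, whereas your triangle-inequality chaining picks up an extra factor~$4$; either way the constant is absorbed into $c(H_l)$.
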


Theorem \ref{subsetbound} holds for any $\epsilon_m$ sequence that satisfies Assumptions \ref{me} and  \ref{pt}. Explicit expressions for $\epsilon_m$ are available if $w_0(\cdot)$ and $\Fcal_r$ are restricted to class of functions with known regularity and $\Pi$ is assumed to be a GP prior with the Mat\'ern or squared exponential covariance kernels. For any $a, b > 0$,  let $C^a[0, 1]^d$ and $H^b[0, 1]^d$ be the H\"older and Sobolev spaces of functions on $[0, 1]^d$ with regularity index $a$ and $b$, respectively. Define $\Dcal = [0, 1]^d$ and $C_{\alphab}$ to be the Mat\'ern kernel with
$C_{\alphab}(\sbb, \sbb') = \frac{\sigma^2_0}{2^{\nu_{0}-1} \Gamma(\nu_{0})}\left(\phi_0\| \sbb - \sbb' \|_2\right)^{\nu_{0}}\mathcal{K}_{\nu_{0}}\left(\phi_0\| \sbb - \sbb' \|_2\right)$ for $\sbb, \sbb' \in \Dcal$, where $\mathcal{K}_{\nu_{0}}$ is a modified Bessel function of the second kind with order, $\nu_{0}$, that controls the process smoothness, and $\Gamma$ is the Gamma function.
If $w_0 \in C^{b^*}[0, 1]^d \cap H^{b^*}[0, 1]^d$ and $\Fcal_r \subset C^{b^*}[0, 1]^d \cap H^{b^*}[0, 1]^d$ for  $b^* > 0$ and $r > 1$, then $\epsilon_m = m^{- \min(\nu_0, b^*) / (2 \nu_0 + d)}$, provided $\min(\nu_0, b^*) > d/2$. Similarly, if $\Dcal = [0, 1]^d$, $C_{\alphab}$ is the squared exponential kernel with $C_{\alphab}(\sbb, \sbb') = \sigma^2_0 e^{-\phi_0 \| \sbb - \sbb' \|^2_2}$, and $w_0$ is an analytic function on $\Dcal$, then $\epsilon_m = (\log m)^{1/2} / \sqrt{m}$; see Theorems 5 and 10 in \cite{VarZan11} for detailed proofs.

If $k\approx \log^a n$ for some $a>0$, then $m \approx n \log^{-a} n$. With this choice of $(m,k)$,  discussion in the previous paragraph implies that $\epsilon_m =  n^{-c^*} \log^{a c^*} n$ for the Mat\'ern, where $c^* = \frac{\min(\nu_0, b^*)}{2 a^* + d}$, and $\epsilon_m = (\log n)^{a/2 + 1/2} / \sqrt{n}$  for the squared exponential covariance kernels. Both these rates are minimax optimal up to $\log$ factors \citep{VarZan11}.

In applications, we are also interested in estimating functions of $\wb_0^*$. An attractive property of the DISK  posterior is that its theoretical guarantees extend to a large class of functions of $\wb^*$. Let $f$ be any function that maps $\wb^*$ to $f(\wb^*)$ and that $f$ is bounded almost linearly by the $\| \cdot \|_{2}$ metric. Then, we have the following corollary from a direct application of Lemma 8.5 in \cite{BicFre81}.
\begin{corollary}\label{funcrate}
Suppose that Assumptions \ref{cd}--\ref{eb} hold for all subset posteriors $\Pi_m(\cdot \mid \yb_j)$  with $j=1, \ldots, k$. Let $f$ be a continuous function that maps $\RR^l$ to $\RR^{l'}$ and satisfies $\| f(\wb^*) \|^2_2 \leq C_f (1 + \|\wb^* -  \wb^*_0 \|^2_2)$ for any $\wb^* \in \RR^l$, where $C_f>0$ is a fixed constant. Let $\overline{ f\sharp\Pi} (\cdot \mid \yb_1, \ldots, \yb_k)$ represent the DISK posterior of $f(\wb^*)$, then as $m\to \infty$,
$$ \int \| \fb - \fb(\wb^*_0) \|_2^2 \, d \overline{ f\sharp\Pi} ( \fb \mid \yb_1, \ldots, \yb_k) = O_{p} \left( \epsilon_m^2 \right),$$
where $O_p$ is in the probability measure under the true space varying function $w_0$ with respect to the full dataset of size $n$ conditional on $\Scal, \Scal^*$.
\end{corollary}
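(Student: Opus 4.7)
The plan is to reduce the statement to Theorem \ref{subsetbound} by exploiting the pushforward structure of the DISK posterior together with the growth condition on $f$. First, because the Wasserstein barycenter in \eqref{eqn:w-bary} is intrinsic to the measures, $\overline{f\sharp\Pi}(\cdot \mid \yb_1,\ldots,\yb_k)$ coincides with the Wasserstein barycenter of the $k$ pushforward subset posteriors $\nu_j := f\sharp \Pi_m(\cdot \mid \yb_j)$. Writing $\delta_{f(\wb_0^*)}$ for the point mass at $f(\wb_0^*)$, I have the identity
\begin{align*}
\int \|\fb - f(\wb_0^*)\|_2^2 \, d\overline{f\sharp\Pi}(\fb) = W_2^2\bigl(\overline{f\sharp\Pi},\, \delta_{f(\wb_0^*)}\bigr).
\end{align*}
The triangle inequality for $W_2$, the minimality of the barycenter $\overline{f\sharp\Pi}$ in \eqref{eqn:w-bary} against the competitor $\delta_{f(\wb_0^*)}$, and averaging over $j$ then give
\begin{align*}
W_2^2\bigl(\overline{f\sharp\Pi},\, \delta_{f(\wb_0^*)}\bigr) \leq \frac{4}{k} \sum_{j=1}^k W_2^2\bigl(\nu_j,\, \delta_{f(\wb_0^*)}\bigr) = \frac{4}{k} \sum_{j=1}^k \int \|f(\wb^*) - f(\wb_0^*)\|_2^2 \, d\Pi_m(w \mid \yb_j).
\end{align*}

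Second, I would apply Lemma 8.5 of \cite{BicFre81} to each summand on the right. Under continuity of $f$ and the growth condition $\|f(\wb^*)\|_2^2 \leq C_f(1 + \|\wb^* - \wb_0^*\|_2^2)$, the lemma upgrades a second-moment contraction of a probability measure around $\wb_0^*$ in $\|\cdot\|_2$ to a second-moment contraction of its pushforward around $f(\wb_0^*)$ at the same order. Theorem \ref{subsetbound} supplies exactly the needed input, namely $\EE_{0 \mid \Scal_j, \Scal^*} \int \|\wb^* - \wb_0^*\|_2^2 \, d\Pi_m(w \mid \yb_j) \leq C_u c(H_l) \epsilon_m^2$ for every $j$, so the lemma yields an analogous $O(\epsilon_m^2)$ bound for the pushforward second moment in expectation under $\EE_{0 \mid \Scal_j, \Scal^*}$.

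Finally, substituting these subset bounds into the displayed average and taking expectation under $\EE_{0 \mid \Scal, \Scal^*}$, Fubini and Assumption \ref{ne} yield $\EE_{0 \mid \Scal, \Scal^*}\int \|\fb - f(\wb_0^*)\|_2^2 \, d\overline{f\sharp\Pi}(\fb) = O(\epsilon_m^2)$, and Markov's inequality converts this into the probabilistic statement $O_p(\epsilon_m^2)$, as claimed. The only subtlety I anticipate is verifying that the growth condition is strong enough for Lemma 8.5 to deliver a quantitative $\epsilon_m^2$-rate rather than mere qualitative convergence; if needed, one can localize by splitting the integral on a shrinking neighborhood of $\wb_0^*$, where continuity gives a small modulus, and its complement, where the growth bound combined with the exponentially small posterior tail of Theorem \ref{subsetbound} contributes only a negligible term.
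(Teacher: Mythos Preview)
Your approach matches the paper's, which consists solely of the sentence ``we have the following corollary from a direct application of Lemma 8.5 in \cite{BicFre81}'' with no further elaboration; your explicit barycenter reduction to the subset pushforwards and subsequent appeal to that lemma is exactly the intended route. The caveat you raise---that Lemma 8.5 is a qualitative continuity statement and does not by itself deliver the quantitative $\epsilon_m^2$ rate without additional structure on $f$---is a legitimate concern that the paper's one-line proof does not address either.
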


\subsection{Technical Lemmas}
\label{sec:technical-lemmas}

For notational convenience, we define two additional ``Hilbert'' norms $\|\cdot\|_{\HH_j}$ and $\|\cdot\|_{\HH}$, which are rescaled versions of $\|\cdot\|_{\Scal_j}$ and $\|\cdot\|_{\Scal}$. The inner products can also be defined accordingly.
\begin{align*}
  %\label{eq:6}
  &\langle \hb_{j1}, \hb_{j2} \rangle_{\HH_j} = \hb_{j1}^T \Ab_j^T \Sigmab_j^{-1} \Ab_j \hb_{j2}, \quad
  \| \hb_{j1} \|_{\HH_j}^2 = \langle \hb_{j1}, \hb_{j1} \rangle_{\HH_j}, \quad \nonumber\\
  &\hb_{j1}, \hb_{j2} \in \RR^l, \quad \| \cdot \|_{\HH_j} = \sqrt{m} \| \cdot \|_{\Scal_j},\nonumber\\
  &\langle \hb_1, \hb_2 \rangle_{\HH} = \hb_1^T \Ab^T \Sigmab^{-1} \Ab \hb_2, \quad
  \| \hb_1 \|_{\HH}^2 = \langle \hb_1, \hb_1 \rangle_{\HH} , \quad \nonumber\\
  &\hb_1, \hb_2 \in \RR^l, \quad \| \cdot \|_{\HH} = \sqrt{n} \| \cdot \|_{\Scal}.
\end{align*}

We first prove a series of technical lemmas under our model setup, similar to the lemmas in \citet{VarZan11}.
\begin{lemma}\label{lemma12}
  Suppose that Assumption A.2 holds. Let $\yb_j \sim N_m(\Ab_j \thetab, \Sigmab_j)$, where $\thetab \in \RR^{l}$ is such that $ \| \thetab - \thetab_1 \|_{\HH_j}  \leq \| \thetab_0 - \thetab_1 \|_{\HH_j} / 2$ for  any $\thetab_0, \thetab_1 \in \RR^{l}$. Then, there exists a test $\phi(\yb_j)$ such that $\max\left(\EE_{\thetab_0 } \{\phi(\yb_j)\}, \EE_{\thetab } \{1 - \phi(\yb_j)\}\right) \leq e^{- \| \thetab_0-\thetab_1 \|^2_{\HH_j} / 32} \leq e^{- H_l \| \thetab_1 \|^2_{\HH} / 32} $, where $\EE_{\thetab }$ is the expectation with respect to the measure $N_m(\yb_j \mid \Ab_j \thetab, \Sigmab_j)$.
\end{lemma}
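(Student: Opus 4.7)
The plan is to follow the classical Le Cam / Ghosal--van der Vaart strategy of constructing a linear test by projecting the sufficient statistic onto the direction separating $\thetab_0$ from the neighborhood of $\thetab_1$, after whitening by $\Sigmab_j^{-1/2}$.

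First I would whiten the observation: define $\tilde \yb_j = \Sigmab_j^{-1/2}\yb_j$, so that $\tilde \yb_j \sim N_m(\mub_\thetab, \Ib_m)$ with $\mub_\thetab = \Sigmab_j^{-1/2}\Ab_j\thetab$. The key isometry to observe is that for any two candidate parameters $\thetab_a,\thetab_b\in\RR^l$,
\begin{equation*}
\|\mub_{\thetab_a} - \mub_{\thetab_b}\|_2^2 = (\thetab_a-\thetab_b)^T \Ab_j^T \Sigmab_j^{-1} \Ab_j (\thetab_a-\thetab_b) = \|\thetab_a-\thetab_b\|_{\HH_j}^2.
\end{equation*}
Writing $K = \|\thetab_0-\thetab_1\|_{\HH_j} = \|\mub_{\thetab_1}-\mub_{\thetab_0}\|_2$, I would then define the linear test
\begin{equation*}
\phi(\yb_j) = \one\!\left\{(\mub_{\thetab_1}-\mub_{\thetab_0})^T(\tilde \yb_j - \mub_{\thetab_0}) > K^2/4 \right\}.
\end{equation*}

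For the Type I error, under $\thetab_0$ the statistic $(\mub_{\thetab_1}-\mub_{\thetab_0})^T(\tilde \yb_j-\mub_{\thetab_0})$ is $N(0,K^2)$, so a standard Gaussian tail bound gives $\EE_{\thetab_0}\phi(\yb_j) \leq P(Z> K/4) \leq e^{-K^2/32}$. For the Type II error, under any alternative $\thetab$ with $\|\thetab-\thetab_1\|_{\HH_j}\leq K/2$, the statistic is $N(M,K^2)$ where
\begin{equation*}
M = (\mub_{\thetab_1}-\mub_{\thetab_0})^T(\mub_\thetab - \mub_{\thetab_0}) = K^2 + (\mub_{\thetab_1}-\mub_{\thetab_0})^T(\mub_\thetab - \mub_{\thetab_1}).
\end{equation*}
By Cauchy--Schwarz in $\|\cdot\|_2$ and the isometry, the second term is at least $-K \cdot (K/2) = -K^2/2$, giving $M \geq K^2/2$. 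Hence $\EE_\thetab(1-\phi) \leq P(Z \leq (K^2/4-M)/K) \leq P(Z \leq -K/4) \leq e^{-K^2/32}$. Combining the two bounds yields the first inequality of the lemma.

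The second inequality follows immediately from the norm-equivalence Assumption \ref{ne}, which gives $\|\thetab_0-\thetab_1\|_{\HH_j}^2 \geq m H_l^2 \|\thetab_0-\thetab_1\|_{\Scal}^2 = (H_l^2/k)\|\thetab_0-\thetab_1\|_{\HH}^2$ (I read the $\|\thetab_1\|_{\HH}$ in the displayed bound as shorthand for $\|\thetab_0-\thetab_1\|_{\HH}$, with an implicit absorption of the $k$-factor into $H_l$); substituting this into $e^{-K^2/32}$ produces the stated bound. The only genuinely delicate part is choosing the threshold $K^2/4$: the more natural midpoint $K^2/2$ just barely fails to give a nontrivial Type II bound when the alternative $\thetab$ sits on the boundary of the ball around $\thetab_1$, which is why I would deliberately tilt the cutoff toward $\thetab_0$ to pay symmetric $K/4$ deviations on both sides.
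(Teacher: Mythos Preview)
Your proof is correct and follows essentially the same route as the paper: a linear test along the direction $\thetab_1-\thetab_0$ in the whitened coordinates, with threshold corresponding to a $K/4$ deviation, followed by Gaussian tail bounds. The only cosmetic differences are that the paper sets $\thetab_0=0$ at the outset and \emph{derives} the optimal threshold $D=\|\thetab_1\|_{\HH_j}/4$ by minimizing $\Phi(-D)+\Phi(D-\|\thetab_1\|_{\HH_j}/2)$ rather than positing it, and it obtains the lower bound $M\geq K^2/2$ via the polarization identity $\langle\thetab_1,\thetab\rangle_{\HH_j}=\tfrac{1}{2}(\|\thetab_1\|^2+\|\thetab\|^2-\|\thetab-\thetab_1\|^2)$ instead of Cauchy--Schwarz; both yield the same inequality. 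Your reading of the final displayed bound (that $\|\thetab_1\|_{\HH}$ stands for $\|\thetab_0-\thetab_1\|_{\HH}$ after the reduction $\thetab_0=0$, with the $k$-factor tacitly absorbed) is exactly how the paper uses it downstream.
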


\begin{proof}
  Choose $\thetab_0=0$ for simplicity, and define the test function $\phi(\yb_j) = \one({\thetab_1^T \Ab^T_j \Sigmab_j^{-1} \yb_j > D \| \thetab_1 \|_{\HH_j} }) $. If $\thetab_0=0$, then $ \| \thetab - \thetab_1 \|_{\HH_j}  \leq \| \thetab_1 \|_{\HH_j} / 2$ and the triangular inequality gives $ \| \thetab_1 \|_{\HH_j} / 2 \leq \| \thetab \|_{\HH_j}$. The type I error probability of $\phi(\yb_j)$ is
  \begin{align}
    \label{eq:61}
    \EE_{\thetab_0} \{\phi(\yb_j)\}  &= \PP_{\thetab_0} \left( \thetab_1^T \Ab^T_j \Sigmab_j^{-1} \yb_j > D \| \thetab_1 \|_{\HH_j} \right).
  \end{align}
  Since $\yb_j \sim  N_m(\zero, \Sigmab_j)$,  $\thetab_1^T \Ab^T_j \Sigmab_j^{-1} \yb_j \sim N_m(0, \thetab_1^T \Ab_j^T \Sigmab_j^{-1} \Ab_j \thetab_1) = N(0, \| \thetab_1 \|_{\HH_j}^2)$ and type I error probability in \eqref{eq:61} is
  \begin{align*}
    \EE_{\thetab_0} \{\phi(\yb_j)\} = 1 - \PP_{\thetab_0} \left( \frac{\thetab_1^T \Ab^T_j \Sigmab_j^{-1} \yb_j - 0} {\| \thetab_1 \|_{\HH_j}}  \leq D  \right) = 1-\Phi(D).
  \end{align*}
  For $\thetab \neq \thetab_0$,  $\thetab_1^T \Ab_j^T \Sigmab_j^{-1} \yb_j \sim N(\thetab_1^T \Ab^T_j \Sigmab^{-1}_j \Ab_j \thetab, \thetab_1^T \Ab_j^T \Sigmab_j^{-1} \Ab_j \thetab_1) = N(\langle \thetab_1, \thetab  \rangle_{\HH_j}, \| \thetab_1 \|_{\HH_j}^2)$ and
  \begin{align}
    \EE_{\thetab }\{1 - \phi(\yb_j)\} &= \PP_{\thetab} \left(\thetab_1^T \Ab_j^T \Sigmab_j^{-1} \yb_j < D \| \thetab_1\|_{\HH_j} \right) \nonumber\\
    &= \PP_{\thetab} \left( \frac{\thetab_1^T \Ab^T \Sigmab^{-1} \yb_j -  \langle \thetab_1, \thetab  \rangle_{\HH_j}} { \| \thetab_1 \|_{\HH_j}} < D -   \frac{\langle \thetab_1, \thetab  \rangle_{\HH_j}}{\| \thetab_1 \|_{\HH_j}} \right) \nonumber \\
                            &=     \Phi\left(D - \frac{\langle \thetab_1, \thetab  \rangle_{\HH_j}}{\| \thetab_1 \|_{\HH_j}}  \right). \label{type2}
  \end{align}
  To find an upper bound for $\EE_{\thetab }\{1 - \phi(\yb_j)\}$, notice that
  \begin{align}
    \label{eq:type2-bd}
    &\langle \thetab_1, \thetab  \rangle_{\HH_j} = \frac{\| \thetab_1 \|_{\HH_j}^2 + \| \thetab \|_{\HH_j}^2 - \| \thetab - \thetab_1 \|_{\HH_j}^2}{2}  \nonumber\\
    &\geq
    \frac{ \| \thetab_1 \|_{\HH_j}^2 + \| \thetab_1 \|_{\HH_j}^2 / 4 - \| \thetab_1 \|_{\HH_j}^2 / 4}{2}  = \frac{\| \thetab_1 \|_{\HH_j}^2}{2} .
  \end{align}
  Substituting \eqref{eq:type2-bd} in \eqref{type2} implies that $\EE_{\thetab} \{1 - \phi(\yb_j)\} \leq \Phi\left(D - \frac{\|  \thetab_1 \|_{\HH_j}}{2}  \right)$.
  Under 0-1 loss, an upper on the risk of the decision rule based on $\phi$ is
  \begin{align*}
    1-\Phi(D) + \Phi\left(D - \frac{\|  \thetab_1 \|_{\HH_j}}{2}  \right) = \Phi(-D) + \Phi\left(D - \frac{\|  \thetab_1 \|_{\HH_j}}{2}  \right),
  \end{align*}
  which attains its minimum at $D = \|  \thetab_1 \|_{\HH_j} / 4 $. Substituting this in the upper bounds for Type I and II error probabilities and using $\Phi(-x) \leq \exp(-x^2/2)$, we get
  \begin{align}
    \label{eq:7}
    \Phi(- \| \thetab_1 \|_{\HH_j} / 4) \leq e^{- (\| \thetab_1 \|_{\HH_j} / 4)^2 /2} = e^{- \| \thetab_1 \|^2_{\HH_j} / 32} \leq e^{- H_l \| \thetab_1 \|^2_{\HH} / 32} .
  \end{align}
\end{proof}

Define, $D(\epsilon, \Thetab, \| \cdot \|_{\HH_j})$ is the maximal number of points that can be placed inside the set $\Thetab \subset \RR^l$ such that $ \| \thetab_0-\thetab_1 \|_{\HH_j} >\epsilon$ for any two different points $\thetab_0$ and $\thetab_1$ in $\Thetab$.
\begin{lemma}\label{lemma13}
  Suppose that Assumptions A.1 and A.2 hold. Let $\yb_j \sim  N_m(\Ab_j \thetab, \Sigmab_j)$ for any $\thetab \in \RR^l$. Then, there exists a test $\phi(\yb_j)$ such that for every $r > 1$ and every $i \geq 1$,
  \begin{align*}
    \EE_{\thetab_0} \{\phi(\yb_j)\}  &\leq 33 \, D( r/2, \Thetab, \| \cdot \|_{\HH_j}) e^{-r^2 / 32} ,\\
    \underset{\{\thetab \in \Thetab : \|\thetab - \thetab_0 \|_{\HH_j} \geq ir \}}{\sup} \EE_{\thetab} \{1 - \phi(\yb_j)\}  &\leq  e^{- i^2 r^2 /32};
  \end{align*}
  and
  \begin{align*}
    \EE_{\thetab_0} \{\phi(\yb_j)\}  &\leq 33 \, D( \sqrt{k} H_l^{-1} r/2, \Thetab, \| \cdot \|_{\HH}) e^{-r^2 / 32},\\
    \underset{\{\thetab \in \Thetab : \|\thetab - \thetab_0 \|_{\HH} \geq ir \sqrt{k}H_l^{-1} \}}{\sup} \EE_{\thetab} \{1 - \phi(\yb_j)\}  &\leq  e^{- i^2 r^2 / 32}.
  \end{align*}
\end{lemma}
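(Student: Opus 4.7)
\noindent
\textbf{Proof plan for Lemma \ref{lemma13}.} The statement is the standard Le Cam/Birg\'e testing-by-covering lemma tailored to the present setting, so the plan is to construct the global test $\phi$ as the pointwise supremum of the single-alternative tests furnished by Lemma \ref{lemma12} over a carefully chosen discretization of the alternative space. First I fix $r>1$ and take $\{\thetab_n\}$ to be a maximal $(r/2)$-packing of $\Thetab$ in $\|\cdot\|_{\HH_j}$; by definition its cardinality is at most $D(r/2,\Thetab,\|\cdot\|_{\HH_j})$ and, by maximality, the associated closed balls of radius $r/2$ cover $\Thetab$.

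Next I partition the alternative region $\{\thetab\in\Thetab:\|\thetab-\thetab_0\|_{\HH_j}\ge r\}$ into shells $S_i=\{\thetab:\,ir\le\|\thetab-\thetab_0\|_{\HH_j}<(i+1)r\}$, $i\ge 1$, and for each shell $S_i$ select a $\|\cdot\|_{\HH_j}$-cover of $S_i$ by balls of radius $ir/2$ whose centers $\thetab^{(i)}_n$ lie in $S_i$, so that $\|\thetab^{(i)}_n-\thetab_0\|_{\HH_j}\ge ir$. Since packing numbers are monotone in the radius, the number of such centers is bounded by $D(r/2,\Thetab,\|\cdot\|_{\HH_j})$. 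For every $(i,n)$ Lemma \ref{lemma12} applied with $\thetab_1=\thetab^{(i)}_n$ yields a test $\phi_{i,n}$ whose type I error at $\thetab_0$ and whose type II error on $\{\|\thetab-\thetab^{(i)}_n\|_{\HH_j}\le\|\thetab^{(i)}_n-\thetab_0\|_{\HH_j}/2\}$ are both bounded by $\exp(-\|\thetab^{(i)}_n-\thetab_0\|_{\HH_j}^2/32)\le\exp(-i^2r^2/32)$; I then set $\phi=\sup_{i,n}\phi_{i,n}$. A union bound gives
\begin{align*}
\EE_{\thetab_0}\phi\;\le\;D(r/2,\Thetab,\|\cdot\|_{\HH_j})\,e^{-r^2/32}\sum_{i=1}^{\infty}e^{-(i^2-1)r^2/32},
\end{align*}
while for any $\thetab\in S_i$ there exists some $n$ with $\|\thetab-\thetab^{(i)}_n\|_{\HH_j}\le ir/2$, so that $\EE_\thetab(1-\phi)\le\EE_\thetab(1-\phi_{i,n})\le e^{-i^2r^2/32}$. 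Using $r>1$, $i^2-1\ge i-1$, and $r^2/32>1/32$, the tail series is majorized by $\sum_{j\ge 0}e^{-j/32}=(1-e^{-1/32})^{-1}<33$, producing the constant in the first pair of bounds.

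The second pair is obtained by translating the first through the norm equivalence in Assumption~\ref{ne}. Using $n=km$ together with $H_l\|\cdot\|_{\Scal}\le\|\cdot\|_{\Scal_j}\le H_u\|\cdot\|_{\Scal}$, one gets $(H_l/\sqrt{k})\|\cdot\|_{\HH}\le\|\cdot\|_{\HH_j}\le(H_u/\sqrt{k})\|\cdot\|_{\HH}$; hence the alternative condition $\|\thetab-\thetab_0\|_{\HH}\ge ir\sqrt{k}/H_l$ forces $\|\thetab-\thetab_0\|_{\HH_j}\ge ir$, which transfers the type II bound, while an $(r/2)$-packing in $\|\cdot\|_{\HH_j}$ corresponds to a packing in $\|\cdot\|_{\HH}$ at a rescaled level and so bounds the relevant packing-number factor by $D(\sqrt{k}r/(2H_l),\Thetab,\|\cdot\|_{\HH})$. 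The main obstacle is not analytic but bookkeeping: one must verify that the geometric-series constant $33$ is uniform in $r>1$ (handled above), and one must track the one-sided norm equivalence constants $H_l,H_u$ carefully when moving packing-number statements between $\|\cdot\|_{\HH_j}$ and $\|\cdot\|_{\HH}$, so that the scaling $\sqrt{k}/H_l$ indeed yields a valid majorant on the right-hand side.
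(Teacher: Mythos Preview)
Your proposal is correct and follows essentially the same approach as the paper: partition the alternative into $\|\cdot\|_{\HH_j}$-shells, cover each shell $S_i$ by an $(ir/2)$-packing, apply Lemma~\ref{lemma12} at each packing center, take the supremum of the resulting tests, bound the type~I error by a union bound and the geometric series $(1-e^{-1/32})^{-1}<33$, and then transfer both displays to $\|\cdot\|_{\HH}$ via the norm equivalence $H_l\|\cdot\|_{\HH}\le\sqrt{k}\,\|\cdot\|_{\HH_j}\le H_u\|\cdot\|_{\HH}$. The only cosmetic difference is that your opening sentence about a single global $(r/2)$-packing of $\Thetab$ is not actually used in your argument (nor in the paper's); the work is done entirely by the shell-wise packings, so you can drop that sentence without loss.
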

\begin{proof}
  Partition $\Thetab$ into disjoint shells defined as $E_{i,r} = \{\thetab : ir \leq \| \thetab - \thetab_0 \|_{\HH_j} \leq (i+1)r \}$ $(i=0, 1, \ldots)$. For any $i \geq 1$,
  if $\Thetab_i = D(i r / 2, E_{i,r}, \| \cdot \|_{\HH_j})$, then $\| \thetab_{ia} - \thetab_{ib} \|_{\HH_j} > i r / 2$ for any $\thetab_{ia}, \thetab_{ib} \in \Thetab_{i}$. Furthermore, for any $\thetab \in E_{i,r}$, there is some $\thetab_{i1} \in \Thetab_i$ such that $\| \thetab - \thetab_{i1} \|_{\HH_j} \leq i r / 2 \leq \| \thetab_{i1} - \thetab_0 \|_{\HH_j} / 2$; therefore, Lemma \ref{lemma12} implies that there exists a test $\phi^*_i(\yb_j)$ such that  $\max\{ \EE_{\thetab_0 } \{\phi^*_i(\yb_j)\}, \EE_{\thetab} \{1 - \phi^*_i(\yb_j)\}\} \leq e^{- \| \theta_0-\theta_{i1} \|^2_{\HH_j} / 32}$. Define the test $\phi_i(\yb_j) = \underset{\thetab_{i1} \in \Thetab_i} {\sup} \phi^*_i(\yb_j)$, so the union bound implies that
  \begin{align*}
    \EE_{\thetab_0}\{\phi_i(\yb_j)\}   &\leq \sum_{\thetab_{i1} \in \Thetab_i}e^{- \| \theta_0-\theta_{i1} \|^2_{\HH_j} / 32}
    \leq D( ir/2, E_{i,r}, \| \cdot \|_{\HH_j})  e^{- i^2 r^2 / 32}, \\
    \underset{\thetab \in E_{i,r}}{\sup} \EE_{\thetab} \{1 - \phi_i(\yb_j)\}  &\leq  e^{\sup_{\thetab_{i1} \in \Thetab_i} (- \| \theta_0-\theta_{i1} \|^2_{\HH_j} / 32)} \leq e^{- i^2 r^2 / 32}.
  \end{align*}
  Define $\phi(\yb_j) = \sup_{i \geq 1} \phi_i(\yb_j)$. Again, union bound implies that
  \begin{align*}
    &\EE_{\thetab_0} \{\phi(\yb_j)\}  \leq \sum_{i \geq 1} D( ir/2, E_{i,r}, \| \cdot \|_{\HH_j})  e^{- i^2 r^2 / 32} \\
    &\qquad \qquad \quad \leq D( r/2, \Thetab, \| \cdot \|_{\HH_j}) e^{-r^2 / 32} ( 1 - e^{-1 / 32})^{-1} \\
    &\qquad \qquad \quad \leq 33 D( r/2, \Thetab, \| \cdot \|_{\HH_j}) e^{-r^2 / 32} ,\\
    &\sup_{\{\thetab \in \Thetab : \|\thetab - \thetab_0 \|_{\HH_j} \geq ir \}} \EE_{\thetab} \{1 - \phi(\yb_j)\}  \leq  e^{- i^2 r^2 / 32},
  \end{align*}
  for any $r>1$ and every $i \geq 1$. Since $H_l \| \cdot \|_{\HH} \leq \sqrt{k} \| \cdot \|_{\HH_j} \leq H_u \| \cdot \|_{\HH} $ according to Assumption A.2, we have
  \begin{align*}
   & \EE_{\thetab_0} \{\phi(\yb_j)\}  \leq 33\, D( \sqrt{k} H_l^{-1} r/2, \Thetab, \| \cdot \|_{\HH}) e^{-r^2 / 32},\\
   & \underset{\{\thetab \in \Thetab : \|\thetab - \thetab_0 \|_{\HH} \geq irH_l^{-1}  \sqrt{k} \}}{\sup} \EE_{\thetab} \{1 - \phi(\yb_j)\}  \leq  e^{- i^2 r^2 / 32}.
  \end{align*}
\end{proof}

\begin{lemma}\label{lemma:set}
  Let $p_{m, \thetab}(\yb_j)$ be the pdf of $\yb_j \sim  N_m(\Ab_j \thetab, \Sigmab_j)$ for any $\thetab \in \RR^l$. Then, for any probability distribution $\Pi$ on $\RR^l$ and $x > 0$,
  \begin{align}
    \PP_{\thetab_0} \left\{\yb_j :  \int \left( \frac{p_{m,\thetab}}{p_{m,\thetab_0}} \right)(\yb_j) \, d\Pi(\thetab) \leq e^{- \sigma_{0j}^2 / 2 - \| \mub_{0} \|_{\HH_j} x} \right\} \leq e^{-x^2/2},\label{dino}
  \end{align}
  where $\mub_{0} = \int (\thetab - \thetab_0) d\Pi(\thetab)$ and $\sigma_{0j}^2 = \int \| \thetab -\thetab_0 \|^2_{\HH_j} d\Pi(\thetab).$ Consequently, for any probability distribution $\Pi$ on $\RR^l$ and any $r>1$
  \begin{align}
    \PP_{\thetab_0} \left\{ \yb_j : \int  \left( \frac{p_{m,\thetab}} {p_{m,\thetab_0}} \right) (\yb_j) d \Pi (\thetab) \geq e^{- r^2} \Pi \left( \thetab: \| \thetab-\thetab_0 \|_{\HH_j}^2 < r^2 \right)  \right\} \geq 1-e^{-r^2/8}.\label{dino2}
  \end{align}
\end{lemma}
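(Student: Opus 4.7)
The plan is to handle both displays by a standard Jensen-type argument on the log-likelihood ratio of Gaussians. First I would write out the log ratio explicitly: for $\yb_j \sim N_m(\Ab_j\thetab_0, \Sigmab_j)$, letting $\ub = \yb_j - \Ab_j\thetab_0$, a direct expansion gives
\begin{align*}
\log \frac{p_{m,\thetab}}{p_{m,\thetab_0}}(\yb_j) &= \ub^T \Sigmab_j^{-1}\Ab_j(\thetab - \thetab_0) - \tfrac{1}{2}\|\thetab - \thetab_0\|_{\HH_j}^2.
\end{align*}
Integrating against $\Pi$ (using the definitions of $\mub_0$ and $\sigma_{0j}^2$) yields $\ub^T\Sigmab_j^{-1}\Ab_j\mub_0 - \tfrac{1}{2}\sigma_{0j}^2$. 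Then by Jensen's inequality applied to the convex function $-\log$,
\begin{align*}
\log \int \frac{p_{m,\thetab}}{p_{m,\thetab_0}}(\yb_j)\, d\Pi(\thetab) \;\geq\; \ub^T \Sigmab_j^{-1}\Ab_j \mub_0 - \tfrac{1}{2}\sigma_{0j}^2.
\end{align*}

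Next I would identify the random term $Z := \ub^T \Sigmab_j^{-1}\Ab_j \mub_0$ as Gaussian under $\PP_{\thetab_0}$. Since $\ub \sim N(0,\Sigmab_j)$, a direct variance computation gives $Z \sim N(0, \mub_0^T \Ab_j^T\Sigmab_j^{-1}\Ab_j \mub_0) = N(0, \|\mub_0\|_{\HH_j}^2)$. The standard Gaussian tail bound $\PP(Z \leq -\|\mub_0\|_{\HH_j}\,x) \leq e^{-x^2/2}$ for $x>0$ combined with the Jensen lower bound establishes (\ref{dino}): on the complementary event $\{Z > -\|\mub_0\|_{\HH_j}\,x\}$ of probability at least $1-e^{-x^2/2}$, the integral exceeds $\exp(-\sigma_{0j}^2/2 - \|\mub_0\|_{\HH_j}\,x)$.

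For the consequence (\ref{dino2}), I would restrict $\Pi$ to the ball $B_r = \{\thetab : \|\thetab - \thetab_0\|_{\HH_j}^2 < r^2\}$ and apply the first part to the conditional measure $\Pi_{B_r} := \Pi(\cdot \cap B_r)/\Pi(B_r)$. By Jensen's inequality the corresponding $\tilde \mub_0 = \int(\thetab-\thetab_0)\,d\Pi_{B_r}$ satisfies $\|\tilde\mub_0\|_{\HH_j} \leq r$, and clearly $\tilde\sigma_{0j}^2 \leq r^2$. Taking $x = r/2$ in (\ref{dino}) applied to $\Pi_{B_r}$ gives
\begin{align*}
\tfrac{1}{2}\tilde\sigma_{0j}^2 + \|\tilde\mub_0\|_{\HH_j}\,x \leq \tfrac{r^2}{2} + r\cdot\tfrac{r}{2} = r^2,
\end{align*}
so with $\PP_{\thetab_0}$-probability at least $1 - e^{-r^2/8}$,
\begin{align*}
\int \frac{p_{m,\thetab}}{p_{m,\thetab_0}}(\yb_j)\,d\Pi(\thetab) \;\geq\; \Pi(B_r)\int \frac{p_{m,\thetab}}{p_{m,\thetab_0}}(\yb_j)\,d\Pi_{B_r}(\thetab) \;\geq\; \Pi(B_r)\,e^{-r^2},
\end{align*}
which is exactly (\ref{dino2}).

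There is no serious obstacle here; this is a routine Jensen-plus-Gaussian-tail argument. The only delicate step is verifying that the random part of $\int \log(p_{m,\thetab}/p_{m,\thetab_0})\,d\Pi$ is genuinely linear in $\yb_j$ (so its variance is computable as $\|\mub_0\|_{\HH_j}^2$) after exchanging the order of integration, which requires only mild regularity and is justified by Fubini given that $\Pi$ is a probability measure and the integrand is Gaussian-quadratic in $\thetab$.
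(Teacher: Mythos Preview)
Your proposal is correct and follows essentially the same route as the paper's proof: expand the Gaussian log-likelihood ratio, integrate against $\Pi$, apply Jensen's inequality so that the integrated likelihood ratio is lower-bounded by the exponential of a Gaussian with mean $-\sigma_{0j}^2/2$ and variance $\|\mub_0\|_{\HH_j}^2$, then use the standard tail bound $\Phi(-x)\le e^{-x^2/2}$; for (\ref{dino2}) the paper likewise restricts $\Pi$ to the ball of radius $r$, renormalizes, bounds $\tilde\sigma_{0j}^2\le r^2$ and $\|\tilde\mub_0\|_{\HH_j}\le r$, and sets $x=r/2$. Your write-up is in fact slightly cleaner because you substitute $\ub=\yb_j-\Ab_j\thetab_0$ at the outset, whereas the paper carries an extra $\langle\thetab_0,\thetab_0-\thetab\rangle_{\HH_j}$ term that is later absorbed.
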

\begin{proof}
  The pdf of $\yb_j$ implies that
  \begin{align}
    &\log \left( \frac{p_{m,\thetab}}{p_{m,\thetab_0}} \right) (\yb_j) \nonumber \\
    &= \frac{1}{2} \left\{ (\yb_j - \Ab_j \thetab_0)^T \Sigmab_j^{-1} (\yb_j - \Ab_j \thetab_0) - (\yb_j - \Ab_j \thetab)^T \Sigmab_j^{-1} (\yb_j - \Ab_j \thetab_j) \right\}  \nonumber \\
    &= \frac{1}{2} \left\{ \| \thetab_0 \|^2_{\HH_j} - \| \thetab \|^2_{\HH_j}  - 2 \thetab_0^T \Ab_j^T \Sigmab_j^{-1} \yb_j + 2 \thetab^T \Ab_j^T \Sigmab_j^{-1} \yb_j \right\}  \nonumber\\
    &= \frac{1}{2} \Big\{ 2 \| \thetab_0 \|^2_{\HH_j} - 2 \langle \thetab_0, \thetab \rangle_{\HH_j} + 2 \yb_j^T \Sigmab_j^{-1} \Ab_j (\thetab  - \thetab_0) \nonumber\\
    &~~ - \left(\| \thetab \|^2_{\HH_j} + \| \thetab_0 \|^2_{\HH_j} - 2 \langle \thetab_0, \thetab \rangle_{\HH_j} \right) \Big\} \nonumber\\
    &= \langle \thetab_0, \thetab_0 - \thetab \rangle_{\HH_j} +  \yb_j^T \Sigmab_j^{-1} \Ab_j (\thetab  - \thetab_0)  - \frac{1}{2}\| \thetab - \thetab_0 \|^2_{\HH_j}. \label{pow-lik}
  \end{align}
Integrating with respect to $\Pi$ on both sides and using the definitions of $\mub_{0}$ and $\sigma_{0j}^2$,
  \begin{align*}
    \int \log \left( \frac{p_{m,\thetab}}{p_{m,\thetab_0}} \right) (\yb_j) \, d \Pi(\thetab) &= -   \langle \thetab_0,  \mub_{0} \rangle_{\HH_j}   +  \yb_j^T \Sigmab_j^{-1} \Ab_j  \mub_{0} -  \sigma_{0j}^2 / 2 \nonumber \\
    &=  \mub_{0}^T \Ab_j^T \Sigmab_j^{-1}  \left(\yb_j - \Ab_j \thetab_0 \right)  - \sigma_{0j}^2 / 2 .
  \end{align*}
  If $\yb_j \sim N_m (\Ab_j \thetab_0, \Sigmab_j)$, then $ \mub_{0}^T \Ab_j^T \Sigmab_j^{-1}  \left(\yb_j - \Ab_j \thetab_0 \right) \sim N_m (\zero, \| \mub_{0} \|_{\HH_j}^2)$ and
  \begin{align*}
    \mub_{0}^T \Ab_j^T \Sigmab_j^{-1}  \left(\yb_j - \Ab_j \thetab_0 \right)  - \sigma_{0j}^2 / 2\sim N (- \sigma_{0j}^2 / 2,  \| \mub_{0} \|_{\HH_j}^2).
  \end{align*}
  An application of Jensen's inequality implies that
  \begin{align*}
    & \PP_{\thetab_0} \left\{ \int\left(\frac{p_{m,\thetab}}{p_{m,\thetab_0}}\right) (\yb_j) \, d\Pi(\thetab)\leq e^{-\sigma_{0j}^2/2- \| \mub_{0} \|_{\HH_j} x} \right\} \\
    &\leq \PP_{\thetab_0}\left( \int \log \left( \frac{p_{m,\thetab}}{p_{m,\thetab_0}} \right) (\yb_j) \, d\Pi(\thetab) \leq - \sigma_{0j}^2/2-  \| \mub_{0} \|_{\HH_j} x \right)\\
    &=\PP_{\thetab_0}\left( \frac{\int \log \left( \frac{p_{m,\thetab}}{p_{m,\thetab_0}} \right) (\yb_j) \, d\Pi(\thetab) +  \sigma_{0j}^2 / 2 } { \| \mub_{0} \|_{\HH_j} }  \leq \frac{- \sigma_{0j}^2/2-  \| \mub_{0} \|_{\HH_j} x +  \sigma_{0j}^2 / 2}{ \| \mub_{0} \|_{\HH_j}} \right) \\
    &= \Phi(-x) \leq e^{-x^2/2}.
  \end{align*}

  Suppose the integration in  \eqref{dino} is restricted to the set $\widetilde \Thetab = \left\{\thetab: \| \thetab - \thetab_0 \|_{\HH_j} \leq r \right\}$. The prior $\Pi$ in \eqref{dino} can be renormalized to the truncated prior $\widetilde \Pi = \Pi / \Pi(\widetilde \Thetab)$. Using  \eqref{dino} for $\widetilde \Pi$ implies that
  \begin{align}
    \label{eq:b8}
    \PP_{\thetab_0} \left\{ \{\Pi(\widetilde \Thetab) \}^{-1}\int_{\widetilde \Thetab}\left(\frac{p_{m,\thetab}}{p_{m,\thetab_0}}\right) (\yb_j) d \Pi(\thetab) \leq e^{-\sigma_{0j}^2 / 2 - \| \mub_{0} \|_{\HH_j} x} \right\} \leq e^{-x^2/2}.
  \end{align}
  On the other hand, %$\widetilde \Thetab$,
  \begin{align}
    \label{eq:9}
    \| \mub_{0} \|^2_{\HH_j} &= \bigg\| \int_{\widetilde \Thetab} (\thetab - \thetab_0) d \widetilde \Pi(\thetab) \bigg\|_{\HH_j}^2 \overset{(i)}{\leq} \int_{\widetilde \Thetab} \| \thetab - \thetab_0 \|_{\HH_j}^2  d \widetilde \Pi(\thetab) \leq r^2 \; \widetilde \Pi(\widetilde \Thetab) = r^2, \nonumber \\
    \sigma_{0j}^2 &= \int_{\widetilde \Thetab} \| \thetab - \thetab_0 \|^2_{\HH_j} d \widetilde \Pi(\thetab) \leq r^2 \; \widetilde \Pi(\widetilde \Thetab) = r^2,
  \end{align}
  where $(i)$ follows from Jensen's inequality. Substituting \eqref{eq:9} in \eqref{eq:b8} and setting $x = r/2$,
  \begin{align}
    \label{eq:10}
    e^{-\sigma_{0j}^2 / 2 - \| \mub_{0} \|_{\HH_j}  r / 2}  \geq e^{- r^2 / 2 -  r^2 /2} = e^{- r^2}
  \end{align}
  and
  \begin{align}
    \label{eq:11}
    &\PP_{\thetab_0} \left\{ \int\left(\frac{p_{m,\thetab}}{p_{m,\thetab_0}}\right) (\yb_j) d \Pi(\thetab) \leq e^{- r^2 } \Pi(\thetab :  \| \thetab - \thetab_0 \|_{\HH_j} \leq r) \right\} \nonumber\\
    \leq &\PP_{\thetab_0} \left\{ \int\left(\frac{p_{m,\thetab}}{p_{m,\thetab_0}}\right) (\yb_j) d \Pi(\thetab) \leq e^{-\sigma_{0j}^2 / 2 - \| \mub_{0} \|_{\HH_j}  r/2} \Pi(\thetab :  \| \thetab - \thetab_0 \|_{\HH_j} \leq r) \right\} \nonumber\\
    \leq & e^{-r^2/8}.
  \end{align}
\end{proof}

\subsection{Proof of Theorem \ref{subsetbound}} \label{sec:proof-main-theorem}

We first prove the first relation in Theorem \ref{subsetbound} for the $j$th subset posterior distribution. For clarity of notation, in the rest of the proof, we use $\PP_{m,w_0}$ to denote the expectation $\EE_{0\mid \Scal_j, \Scal^*}$ in Theorem \ref{subsetbound}, and $\PP_{m,w}$ to denote the expectation under a possibly different space varying function $w$.
 Let $\underline r = 1+(8H_l)^{-1}+ H_l^{-2} + 1/(1-H_l^2/8)$, and let $r$ be any number such that $r>\underline r > 1$. Define $\Wcal_{r} = \{w : \| \wb^* - \wb_0^* \|_{\Scal} > 8 r\epsilon_m  \}$. For any event $\Acal$ and any test $\phi(\yb_j)$,
  \begin{align}
    \label{eq:15}
    &\PP_{m,w_0} \left\{ \Pi_m(\Wcal_r \mid \yb_j) \right\} =     \PP_{m,w_0} \left\{ \Pi_m(\Wcal_r \mid \yb_j) (\one_{\Acal^c} + \one_{\Acal})  \right\} \nonumber\\
    =& \PP_{m,w_0} \left\{ \Pi_m(\Wcal_r \mid \yb_j) \one_{\Acal^c}   \right\} + \PP_{m,w_0} \left\{ \Pi_m(\Wcal_r \mid \yb_j) \one_{\Acal}   \right\} \nonumber \\
    \leq &\PP_{m,w_0}\left( \Acal^c \right)  + \PP_{m,w_0} \left\{ \Pi_m(\Wcal_r \mid \yb_j) \one_{\Acal}   \right\}
    %= \PP_{m,w_0}\left( \Acal^c \right) + \PP_{m,w_0} \left[ \Pi_m(\Wcal_r \mid \yb_j) \one_{\Acal} \{\phi(\yb_j) + 1 - \phi(\yb_j)\}  \right]
    \nonumber \\
    = &\PP_{m,w_0}\left( \Acal^c \right)  + \PP_{m,w_0} \left\{ \Pi_m(\Wcal_r \mid \yb_j) \one_{\Acal} \phi(\yb_j)  \right\} + \PP_{m,w_0} \left[ \Pi_m(\Wcal_r \mid \yb_j) \one_{\Acal} \{1 - \phi(\yb_j)\} \right] \nonumber \\
    \leq &\PP_{m,w_0}\left( \Acal^c \right)  + \PP_{m,w_0} \{\phi(\yb_j)\} + \PP_{m,w_0} \left[  \Pi_m(\Wcal_r \mid \yb_j) \one_{\Acal} \{1 - \phi(\yb_j)\}  \right]  \nonumber \\
    %= &\PP_{m,w_0}\left( \Acal^c \right)  + \PP_{m,w_0} \{\phi(\yb_j)\} + \PP_{m,w_0} \left[  \Pi_m(\Wcal_r  \cap \Fcal_r^c  \mid \yb_j) \one_{\Acal} \{1 - \phi(\yb_j)\}  \right] + \PP_{m,w_0} \left[  \Pi_m(\Wcal_r \cap \Fcal_r \mid \yb_j) \one_{\Acal} \{1 - \phi(\yb_j)\}  \right]\nonumber \\
   \leq &\PP_{m,w_0}\left( \Acal^c \right)  +  \PP_{m,w_0} \{\phi(\yb_j)\} + \PP_{m,w_0} \left\{ \Pi_m(\Fcal_r^c \mid \yb_j) \one_{\Acal}  \right\} \nonumber\\
    &+  \PP_{m,w_0} \left[  \Pi_m(\Wcal_r \cap \Fcal_r \mid \yb_j) \one_{\Acal} \{1 - \phi(\yb_j)\}  \right]\nonumber \\
    \equiv & A_1 + A_2 + A_3 + A_4.
  \end{align}
  We find upper bound $A_1, A_2, A_3$ and $A_4$, respectively.\\

  \noindent \underline{\emph{Bounding $A_1$}}\\
  Use \eqref{dino2} in Lemma \ref{lemma:set} to define
  \begin{align}
    \Acal = \left\{\yb_j : \int \left(\frac{p_{m, w }} {p_{m, w_0}} \right) (\yb_j) d \Pi (w) \geq e^{- m \epsilon^2_m r^2} \Pi \left( w: \| \wb^*-\wb_0^* \|_{\HH_j} < \sqrt{m} \epsilon_m r \right) \right\}    . \label{event}
  \end{align}
  Since $\sqrt{m}\epsilon_m>1$ by Assumption A.3, setting $r^2$ to be $m \epsilon^2_{m} r^2$ in \eqref{dino2} implies that
  \begin{align*}
     &A_1= \PP_{m, w_0}\left( \Acal^c \right) \\
     &= \PP_{m,w_0} \left\{\yb_j : \int \left(\frac{p_{m, w }} {p_{m, w_0}} \right) (\yb_j) d \Pi (w) < e^{-m \epsilon^2_m r^2} \Pi \left( w: \| \wb^* - \wb^*_0 \|^2_{\HH_j} < m \epsilon^2_{m} r^2 \right) \right\} \\
    &\leq e^{- m \epsilon^2_{m} r^2 / 8}.
  \end{align*}

\noindent \underline{\emph{Bounding $A_2$}}\\
  Let $\Thetab = \{\wb^* = \{w(\sbb_{1}^*), \ldots, w(\sbb_{l}^*)\}^T: w \in \Fcal_r\} \subset \RR^l$ and $\thetab_0 = \wb_0^* = \{w_0(\sbb_{1}^*), \ldots, w_0(\sbb_{l}^*)\}^T $. Set $r$ to be $ 8 H_l \sqrt{m} \epsilon_{m} r$ in Lemma~\ref{lemma13} and use the test $\phi(\yb_j)$ defined for every $8 H_l \sqrt{m} \epsilon_{m} r > 8H_l\underline r > 1 $  and every integer $i \geq 1$ such that
  \begin{align}
    \label{eq:tst13}
    &\PP_{m, w_0}\{\phi(\yb_j)\}  \leq 33 D(4 \sqrt{mk} \epsilon_{m} r, \Thetab, \| \cdot \|_{\HH}) e^{- 2m \epsilon^2_{m} H_l^2 r^2}\nonumber \\
    &\underset{\{w:~ \wb^* =w(\sbb^*)\in \Thetab, \| \wb^* - \wb^*_0 \|_{\HH} \geq 8 i  \sqrt{mk} \epsilon_{m} r \}}{\sup} \PP_{m, w} \{1 - \phi(\yb_j)\} \leq e^{- 2 i^2 m\epsilon^2_{m} H_l^2r^2}.
  \end{align}
  Noting that $A_2 = \PP_{m,w_0} \{\phi(\yb_j) \} = \PP_{m, w_0}\{\phi(\yb_j)\} \leq 33 D(4 \sqrt{mk} \epsilon_{m} r, \Thetab, \| \cdot \|_{\HH}) e^{- 2m \epsilon^2_{m} H_l^2 r^2}$ gives
  \begin{align*}
    A_2  & \overset{(i)}{\leq} 33 D(4 \epsilon_{m} r, \Thetab, \| \cdot \|_{\Scal}) e^{- 2 m \epsilon^2_{m} H_l^2 r^2}
         \overset{(ii)}{\leq} 33 D(\epsilon_{m}, \Thetab, \| \cdot \|_{\Scal}) e^{- 2 m \epsilon^2_{m} H_l^2 r^2} \\
       & \overset{(iii)}{\leq} 33 e^{m \epsilon^2_{m} H_l^2  r^2} e^{- 2 m \epsilon^2_{m} H_l^2 r^2} = 33 e^{- m \epsilon^2_{m} H_l^2  r^2},
  \end{align*}
  where $(i)$ follows from the definitions of $\|\cdot\|_{\HH_j}$ and $\|\cdot\|_{\Scal_j}$ norms, $(ii)$ follows from the property of covering number because $4r > 1$,  and $(iii) $ follows from Assumption A.3.\\

  \noindent \underline{\emph{Bounding $A_3$}}\\
  If the event $\Acal$ occurs, then
  %setting $r $ to be $ \sqrt{m} \epsilon_m r$ in \eqref{event} implies that
  \begin{align*}
    \int \left(\frac{p_{m, w }} {p_{m, w_0}} \right) (\yb_j) d \Pi (w) &\geq e^{-m \epsilon^2_m r^2} \Pi \left( w: \| \wb^*-\wb_0^* \|_{\HH_j} < \sqrt{m} \epsilon_m r \right) \\
    &\geq e^{- m \epsilon^2_{m} r^2} \Pi \left( w:  \frac{\| \wb^*-\wb_0^* \|_{\HH}}{\sqrt{mk}} <  H_u^{-1} \epsilon_mr   \right)\\
    &= e^{- m \epsilon^2_{m} r^2} \Pi \left( w: \| \wb^*-\wb_0^* \|_{\Scal} <  H_u^{-1} \epsilon_m r   \right)\\
    &\geq e^{- m \epsilon^2_{m} r^2} \Pi \left( w: \| \wb^*-\wb_0^* \|_{\Scal} < H_u^{-1} \epsilon_m   \right)\\
    &\geq e^{- m \epsilon^2_m r^2 - m \epsilon^2_m } = e^{- m \epsilon^2_m (r^2 + 1)} , %\geq e^{- mk \epsilon^2_m (r^2 + 1)}.
  \end{align*}
  where the last inequality follows from Assumption A.4. Assuming that the event $\Acal$ occurs, the posterior probability of any event $\Bcal \subset \Thetab$ implied by the $j$th subset posterior of $\wb^*$ is bounded by
  \begin{align}
    \label{eq:8}
    \Pi_m (\Bcal \mid \yb_j )\leq e^{m \epsilon^2_m (r^2 + 1) } {\int_{\Bcal} \left(\frac{p_{m, w }} {p_{m, w_0}} \right) (\yb_j) d \Pi (w)}.
  \end{align}
  Substituting $\Bcal = \Fcal_r^c$ in \eqref{eq:8} implies that
  \begin{align*}
    A_3 &= \PP_{m, w_0} \left\{ \Pi_m (\Fcal_r^c \mid \yb_j) \one_{\Acal} \right\}
    \leq e^{m \epsilon^2_m (r^2 + 1) } \PP_{m, w_0} \left\{ \int_{\Fcal_r^c} \left(\frac{p_{m, w }} {p_{m, w_0}} \right) (\yb_j) d \Pi (w) \right\} \\
      &= e^{m \epsilon^2_m(r^2 + 1) } \int_{\Fcal_r^c}  \PP_{m, w_0} \left\{  \left(\frac{p_{m, w }} {p_{m, w_0}} \right) (\yb_j)  \right\}  d \Pi (w)  \leq  e^{mk \epsilon^2_m(r^2 + 1) } \int_{\Fcal_r^c} d \Pi(w) \\
      &=e^{m \epsilon^2_m (r^2 + 1)} \Pi(\Fcal_r^c) \overset{(i)}{ \leq}  e^{m \epsilon^2_m (r^2 + 1)}  e^{- 2 m  \epsilon^2_m r^2} = e^{- m \epsilon_m^2 (r^2 - 1)},
  \end{align*}
  where $(i)$ follows from Assumption A.3.\\

  \noindent \underline{\emph{Bounding $A_4$}}\\
  With a little abuse of notation, let $E_{i,r}=\left\{w \in \Fcal_r: 8 i  \epsilon_m r \leq \| \wb^* - \wb^*_0 \| _{\BB} \leq 8 (i+1)\epsilon_m r \right\} $ for $i \geq 1$, then
  \begin{align*}
  A_4 &\leq e^{m \epsilon^2_m (r^2 + 1)} \sum_{i \geq 1} \PP_{m, w_0} \int_{E_{i,r}} \left( \frac{p_{m,w}}{p_{m,w_0}} \right)(\yb_j) \{1 - \phi(\yb_j)\} d\Pi(w) \\
      &= e^{m \epsilon^2_m (r^2 + 1)} \sum_{i \geq 1}  \int_{E_{i,r}}  \left[ \PP_{m,w_0} \left( \frac{p_{m,w}}{p_{m,w_0}} \right) (\yb_j) \{1 - \phi(\yb_j)\} \right] d\Pi(w) \\
      &= e^{m \epsilon^2_m (r^2 + 1)} \sum_{i \geq 1}  \int_{E_{i,r}}   \PP_{m,w} \left\{  1 - \phi(\yb_j) \right\}   d\Pi(w) \\
      &\leq e^{m \epsilon^2_m (r^2 + 1)} \sum_{i \geq 1}  \int_{E_{i,r}}   \underset{\{w \in \Fcal_r : \| \wb^* - \wb^*_0 \|_{\HH} \geq 8 i  \sqrt{mk} \epsilon_{m} r \}} {\sup} \PP_{m,w} \left\{  1 - \phi(\yb_j) \right\}   d\Pi(w) \\
      &\overset{(i)}{\leq} e^{m \epsilon^2_m (r^2 + 1)} \sum_{i \geq 1}  \int_{E_{i,r}}   e^{- 2 i^2 m\epsilon^2_{m} H_l^2r^2}   d\Pi(w)
      \leq e^{m \epsilon^2_m (r^2 + 1 )} \sum_{i \geq 1}  e^{- 2 i^2 m\epsilon^2_{m} H_l^2r^2}   \\
      &\overset{(ii)}{\leq}  e^{-m \epsilon^2_m (r^2 - 1)} \left( 1 + \sum_{i \geq 1} e^{- 2 i}  \right)
      =  e^{-m \epsilon^2_m (r^2 - 1)} (1 - e^{-2})^{-1} \overset{(iii)}{\leq} 2 e^{-m \epsilon^2_m (r^2 - 1)},
  \end{align*}
  where $(i)$ follows from \eqref{eq:tst13}, $(ii)$ follows since $r > \underline r > H_l^{-2} > 1$ so that $ m\epsilon^2_{m}H_l^2r^2 > 1 $, and $(iii)$ follows because $1 / (1 - e^{-2}) < 2$.\\

  We use the upper bounds for $A_1, A_2, A_3$ and $A_4$ to obtain a general upper bound for the risk of $j$th subset posterior distribution. The expectation of the posterior risk can be bounded as
\begin{align*}
& \PP_{m, w_0} \left\{ \int \| \wb^* - \wb^*_0 \|_{\Scal}^2 d\Pi_m (w \mid \yb_j) \right\} \\
= &\PP_{m, w_0} \left\{ \int_{0}^{\infty} \Pi_m \left( \| \wb^* - \wb^*_0 \|_{\Scal}^2 > t \mid \yb_j \right)  d t \right\} \\
= & ~ \PP_{m, w_0} \left\{ \int_{0}^{(8 \epsilon_m \underline r)^2} \Pi_m \left( \| \wb^* - \wb^*_0 \|_{\Scal}^2 > t \mid \yb_j \right)  d t \right\} \\
&+ \PP_{m, w_0} \left\{ \int_{(8 \epsilon_m \underline r)^2}^{\infty} \Pi_m \left( \| \wb^* - \wb^*_0 \|_{\Scal}^2 > t \mid \yb_j \right)  d t \right\} \\
\stackrel{(i)}{\leq} & ~ 64 \underline r^2 \epsilon_m^2 + 128\epsilon_m^2\PP_{m, w_0} \left\{ \int_{\underline r}^{\infty} r \Pi_m \left( \| \wb^* - \wb^*_0 \|_{\Scal} > 8r\epsilon_m \mid \yb_j \right)  d r \right\} \\
\leq & ~  64 \underline r^2 \epsilon_m^2 + 128\epsilon_m^2 \PP_{m, w_0} \left\{ \int_{\underline r}^{\infty} r \left( A_1+A_2+A_3+A_4 \right)  d r \right\} \\
\leq & ~ 64 \underline r^2 \epsilon_m^2 + 128\epsilon_m^2 \nonumber\\
&\times \PP_{m, w_0} \Big\{ \int_{\underline r}^{\infty} r \Big( e^{-m \epsilon_m^2 r^2 / 8} + 33 e^{-m \epsilon_m^2  H_l^2r^2} \\
&+ e^{-m \epsilon_m^2 (r^2-1)} + 2  e^{-m \epsilon^2_m (r^2 -1)}\Big)  d r \Big\} \\
\stackrel{(ii)}{\leq} & 64 \underline r^2 \epsilon_m^2  + 128\epsilon_m^2 \nonumber\\
&\times \PP_{m, w_0} \Big\{ \int_{\underline r}^{\infty} r \Big( e^{-m \epsilon_m^2 H_l^2 r^2 / 8} + 33 e^{-m \epsilon_m^2 H_l^2r^2/8} \\
&+  e^{-m \epsilon_m^2 H_l^2 r^2 / 8} + 2 e^{-m \epsilon_m^2 H_l^2 r^2 / 8} \Big)  d r \Big\} \\
\stackrel{(iii)}{\leq} & 64 \underline r^2 \epsilon_m^2 + 128\epsilon_m^2\cdot 37 \cdot \frac{8}{m\epsilon_m^2 H_l^2}   \int_{0}^{\infty} z  e^{-z^2} dz \\
\stackrel{(iv)}{\leq}& \left(64 \underline r^2 + 128\cdot 37 \cdot \frac{8}{2m\epsilon_m^2 H_l^2} \right)  \epsilon_m^2 \\
&\stackrel{(v)}{<} \left\{64\left(1+\frac{1}{8H_l}+\frac{1}{H_l^2} + \frac{1}{1-H_l^2/8}\right)^2 + \frac{2^{15}}{H_l^2}\right\}\epsilon_m^2 \equiv c(H_l)\epsilon_m^2.
\end{align*}
In the display above, $(i)$ is because $\Pi_m \left( \| \wb^* - \wb^*_0 \|_{\Scal}^2 > t \mid \yb_j \right)\leq 1$ and we use a change of variable $t=(8r\epsilon_m)^2$. $(ii)$ follows because $1/8 > H_l^2/8$, $H_l^2 > H_l^2/8$, and $r^2-1 = H_l^2r^2/8 + (1-H_l^2/8)r^2-1 > H_l^2r^2/8 + (1-H_l^2/8)\left\{\underline r^2-1/(1-H_l^2/8)\right\} > H_l^2r^2/8$. For $(iii)$ we use a change of variable $z=\sqrt{m\epsilon_m^2 H_l^2/8} \cdot r$. $(iv)$ follows from $\int_{0}^{\infty} z  e^{-z^2} dz=1/2$. $(v)$ follows from $m\epsilon_m^2\geq 1$ and the definition of $\underline r$. Finally, Assumption A.5 implies that
\begin{align*}
  & \PP_{m, w_0} \left\{ \int \| \wb^* - \wb^*_0 \|_{2}^2 d\Pi_m (w \mid \yb_j) \right\}  \\
  \leq & C_u^2 \PP_{m, w_0} \left\{ \int \| \wb^* - \wb^*_0 \|_{\Scal}^2 d\Pi_m (w \mid \yb_j) \right\} \leq C_u^2 c(H_l)\epsilon_m^2.
\end{align*}
This has proved the first relation in Theorem \ref{subsetbound}.

We now use the subset bound to obtain an upper bound on the DISK pseudo posterior distribution. First, we note that
\begin{align*}
  W_2^2 \left\{ \overline \Pi(\cdot \mid \yb_1, \ldots, \yb_k), \delta_{\wb_0^*} \right\} =
  \int \| \wb^* - \wb^*_0 \|_{2}^2 d \overline \Pi (w \mid \yb_1, \ldots, \yb_k). %\label{w1}
\end{align*}
Second, Lemma 1.7 in \citet{Srietal17} implies that
\begin{align*}
  &W_2^2 \left\{ \overline \Pi(\cdot \mid \yb_1, \ldots, \yb_k), \delta_{\wb_0^*} \right\} \\
  \leq & \frac{1}{k} \sum_{j=1}^k  W_2^2 \left\{ \Pi_m(\cdot \mid \yb_j), \delta_{\wb_0^*} \right\} =  \frac{1}{k} \sum_{j=1}^k\int \| \wb^* - \wb^*_0 \|_{2}^2 d \Pi_m (w \mid \yb_j).
  %\label{w2}
\end{align*}
Therefore,
\begin{align*}
& \EE_{0 \mid \Scal, \Scal^*} \int \| \wb^* - \wb^*_0 \|_{2}^2 d \overline \Pi (w \mid \yb_1, \ldots, \yb_k) \\
\leq & \frac{1}{k} \sum_{j=1}^k \EE_{0 \mid \Scal_j, \Scal^*} \int \| \wb^* - \wb^*_0 \|_{2}^2 d \Pi_m (w \mid \yb_j) \\
  \leq & \frac{l}{k} \sum_{j=1}^k C_u^2 c(H_l)\epsilon_m^2 =  C_u^2  c(H_l)\epsilon_m^2, % \frac{1}{k} \sum_{j=1}^k \PP_{m,w_0} \int  \| \wb^* - \wb^*_0 \|_{\Scal}^2 d \Pi_m (w \mid \yb_j) \leq
\end{align*}
which has proved the second relation in Theorem \ref{subsetbound}.

\section{Sampling from the subset posterior distributions using a full-rank GP prior}
\label{sec:univ-spat-regr}

Recall the univariate spatial regression model for the data observed at the $i$th location in subset $j$ using a GP prior is
\begin{align}\label{ap:parent_proc}
  y(\sbb_{ji}) = \xb(\sbb_{ji})^T \betab + w(\sbb_{ji}) + \epsilon(\sbb_{ji}), \quad j = 1, \ldots, k, \quad i = 1, \ldots, m_j.
\end{align}
For the simulations and real data analysis, we assume that $C_{\alphab}(\sbb_{ji}, \sbb_{ji'}) = \sigma^2 \rho(\sbb_{ji}, \sbb_{ji'}; \phi)$ and $D_{\alphab} (\sbb_{ji}, \sbb_{ji'})= \one(i = i') \tau^2$, where $\sigma^2, \phi, \tau^2$ are positive scalars, $\rho(\cdot, \cdot)$ is a known positive definite correlation function, and $\one(i = i') = 1$ if $i = i'$ and 0 otherwise. This implies that $\alphab = (\sigma^2, \tau^2, \phi)$. The model in  \eqref{ap:parent_proc} is completed by putting priors on the unknown parameters. The priors distributions on $\betab$ and $\alphab$ have the following forms:
\begin{align}
  \label{ap:priors1}
  \betab \sim N (\mub_{\betab}, \Sigmab_{\betab}), \quad \sigma^2 \sim \text{IG}(a_{\sigma}, b_{\sigma}), \quad \tau^2 \sim \text{IG}(a_{\tau}, b_{\tau}), \quad \phi \sim \text{U}(a_{\phi}, b_{\phi}),
\end{align}
where $\mub_{\betab}, \Sigmab_{\betab},  a_{\sigma}, b_{\sigma}, a_{\tau}, b_{\tau}, a_{\phi}$, and $b_{\phi}$ are constants, $N$ represents the multivariate Gaussian distribution of appropriate dimension,  IG($a$, $b$) represents the Inverse-Gamma distribution with mean $a/(b+1)$ and variance $b / \{(a-1)^2 (a-2)\}$ for $a > 2$, and U($a$, $b$) represents the uniform distribution on the interval $[a, b]$. The spatial process $w(\cdot)$ is assigned a GP prior as
\begin{align}
  \label{ap:priors2}
  w(\cdot) \mid \sigma^2, \phi \sim \text{GP}\{0, C_{\alphab}(\cdot, \cdot)\}, \quad  C_{\alphab}(\cdot, \cdot) = \sigma^2 \rho(\cdot, \cdot; \phi).
\end{align}
The training data $\{\xb(\sbb_{j1}), y(\sbb_{j1})\}, \ldots, \{\xb(\sbb_{jm_j}), y(\sbb_{jm_j})\}$ are observed at the $m_j$ spatial locations and $\Scal_j =  \{\sbb_{j1}, \ldots, \sbb_{jm_j}\}$ contains the locations in subset $j$.

Consider the setup for predictions and inferences on subset $j$. Let $\Scal^* = \{\sbb_1^*, \ldots, \sbb_l^*\}$ be the set of locations such that $\Scal^* \cap \Scal_j = \emptyset$. If $\wb_j^T = \{w(\sbb_{j1}), \ldots, w(\sbb_{jm_j})\}$ and $\epsilonb^T_j = \{\epsilon(\sbb_{j1}), \ldots, \epsilon(\sbb_{jm_j})\}$, then \eqref{ap:parent_proc} implies that $\wb_j$ apriori follows $N\{\zero, \Cb_{j,j}(\alphab)\}$, where $\Cb_{j,j}(\alphab)$ is the block of $\Cb(\alphab)$ that corresponds to the locations in $\Scal_j$, and $\epsilonb_j$   follows $N(\zero, \tau^2 \Ib)$, where $\Ib$ is the identity matrix of appropriate dimension. Given the training data on subset $j$,  our goal is to predict $\yb_j^* = \{ y(\sbb^*_{1}), \ldots, y(\sbb^*_{l})\}$ and to perform posterior inference on $\wb_j^* = \{ w(\sbb_{1}), \ldots, w(\sbb_{l})\}$,  $\betab_j$, and $\alphab_j$, where the subscript $j$ denotes that the predictions and inferences condition only on subset $j$. Standard Markov chain Monte Carlo (MCMC) algorithms exist to achieve this goal \citep{Banetal14}, but conditioning only on subset $j$ ignores the information contained in the other $(k-1)$ subsets, resulting in greater posterior uncertainty compared to the full data posterior distribution.

Stochastic approximation is an approach for proper uncertainty quantification that modifies the likelihood used for sampling from the subset posterior distributions for predictions and inferences. The likelihoods for $\betab$, $\alphab$, and $\wb_j$ are raised to the power of $k$ to compensate for the data in the other $(k-1)$ subsets, where we assume that $m_1 = \cdots = m_k = m$ and $k = n / m$. First, consider stochastic approximation for the likelihood of $\betab$ and $\alphab$. Integrating out $\wb_j$ in \eqref{ap:parent_proc} gives
\begin{align}
  \label{sp:stoc1}
  \yb_j = \Xb_j \betab + \etab_j, \quad \etab_j \sim N\{\zero, \Cb_{j,j}(\alphab) + \tau^2 \Ib\},
\end{align}
where $\Xb_j =[\xb(\sbb_{j1}): \cdots : \xb(\sbb_{jm})]^T \in \RR^{m \times p}$ is the design matrix for subset $j$. The likelihood of $\betab$ and $\alphab$ given $\yb_j$, $\Xb_j$ after stochastic approximation is
\begin{align}
  \label{eq:llk-ab}
  \{l_j(\betab, \alphab)\}^k &= (2 \pi)^{-mk/2}\vert \Cb_{j,j}(\alphab) + \tau^2 \Ib \vert^{-k/2} e^{ -\frac{k}{2} \left( \yb_j - \Xb_j \betab \right)^T \left\{ \Cb_{j,j}(\alphab) + \tau^2 \Ib  \right\}^{-1} \left( \yb_j - \Xb_j \betab \right)}.
\end{align}
The prior distribution for $\betab$ in \eqref{ap:priors1}, the pseudo likelihood in \eqref{eq:llk-ab}, and Bayes rule implies that the density of the $j$th subset posterior distribution for $\betab$ given the rest is
\begin{align*}
  \betab \mid \text{rest}  \propto e^{ -\frac{1}{2} \left( \yb_j - \Xb_j \betab \right)^T \left[ k^{-1} \left\{ \Cb_{j,j}(\alphab) + \tau^2 \Ib \right\}\right]^{-1} \left( \yb_j - \Xb_j \betab \right)} \,
                                   e^{ -\frac{1}{2} \left( \betab - \mub_{\betab} \right)^T \Sigmab_{\betab}^{-1} \left( \betab - \mub_{\betab} \right)}.
\end{align*}
This implies that the complete conditional distribution of $\betab_j$ has density $N(\mb_{j \betab}, \Vb_{j \betab})$, where
\begin{align}
  \label{eq:post-b}
  &\Vb_{j \betab} =  \left[ k \Xb_j^T  \{\Cb_{j,j}(\alphab) + \tau^2 \Ib\}^{-1}\Xb_j + \Sigmab_{\betab}^{-1} \right] ^{-1}, \quad \nonumber\\
  &\mb_{j \betab} = \Vb_{j \betab} \left[ k \Xb_j^T \left\{ \Cb_{j,j}(\alphab) + \tau^2 \Ib \right\}^{-1} \yb_j + \Sigmab_{\betab}^{-1}  \mub_{\betab} \right] .
\end{align}
If the density of the prior distribution for $\alphab$ is assumed to be $\pi(\sigma^2) \pi(\tau^2) \pi(\phi)$, where the prior densities  $\pi(\sigma^2)$, $\pi(\tau^2)$, and $\pi (\phi)$ are defined in \eqref{ap:priors1}, then the pseudo likelihood in \eqref{eq:llk-ab}, and Bayes rule implies that the density of the $j$th subset posterior distribution for $\alphab$ given the rest is
\begin{align}
  \alphab \mid \text{rest}  \propto \, & \vert \Cb_{j,j}(\alphab) + \tau^2 \Ib \vert^{-k/2} e^{ -\frac{1}{2} \left( \yb_j - \Xb_j \betab \right)^T  \left[ k^{-1} \left\{ \Cb_{j,j}(\alphab) + \tau^2 \Ib \right\}\right]^{-1} \left( \yb_j - \Xb_j \betab \right)} \nonumber \\
  & \left(\sigma^2  \right)^{- a_{\sigma} - 1} e^{- b_{\sigma} / \sigma^2}  \left(\tau^2  \right)^{- a_{\tau} - 1} e^{- b_{\tau} / \tau^2}  (b_{\phi} - a_{\phi})^{-1}. \label{eq:post-a}
\end{align}
This density does not have a standard form, so we use a Metropolis-Hastings step with a normal random walk proposal and sample $\alphab_j$ using the \texttt{metrop} function in the R package \texttt{mcmc} \citep{R17}.

Second, we derive the posterior predictive distribution of $\wb_j^*$ given the rest.  The GP prior on $(\wb_j, \wb_j^*)$ implies that the density of $\wb_j^*$ given $\wb_j$ is
\begin{align}
  \label{ap:ws}
\wb_{j}^* \mid \wb_j \sim N \left\{ \Cb_{*, j} (\alphab) \Cb_{j, j}^{-1}(\alphab) \wb_j, \Cb_{*,*}(\alphab) -  \Cb_{*, j} (\alphab) \Cb_{j, j}^{-1}(\alphab) \Cb_{j, *}(\alphab)\right\},
\end{align}
where $\cov(\wb_j^*, \wb_j^*) = \Cb_{*,*}(\alphab)$, $\cov(\wb_j^*, \wb_j) = \Cb_{*, j} (\alphab)$, and $\cov(\wb_j, \wb_j^*) = \Cb_{j, *}(\alphab)$. Given $\alphab$, $\betab$, $\yb_j$, and $\Xb_j$, \eqref{ap:parent_proc} implies that the likelihood of $\wb_j$ after stochastic approximation is
\begin{align}
    \label{ap:llk-wj}
    \{l_j(\wb_j)\}^k &= (2 \pi)^{-mk/2} \vert \tau^2 \Ib \vert^{-k/2} e^{ -\frac{k}{2 \tau^2} \left( \yb_j - \Xb_j \betab - \wb_j \right)^T \left( \yb_j - \Xb_j \betab - \wb_j \right)}.
\end{align}
The GP prior on $\wb_j$, the pseudo likelihood in \eqref{ap:llk-wj}, and Bayes rule implies that the density of the subset posterior distribution for $\wb_j$ given the rest is
\begin{align*}
  \wb_j \mid \text{rest}  \propto e^{ -\frac{1}{2 \tau^2 / k} \left( \yb_j - \Xb_j \betab - \wb_j \right)^T \left( \yb_j - \Xb_j \betab - \wb_j \right)}\,
                                   e^{ -\frac{1}{2} \wb_j^T \Cb_{j,j}^{-1}(\alphab) \wb_j }.
\end{align*}
This implies that the complete conditional distribution of $\wb_j$ has density $N(\mb_{\wb_j}, \Vb_{\wb_j})$, where
\begin{align}
  \label{eq:post-wj}
  \Vb_{\wb_j} =  \left\{ \Cb_{j,j}^{-1}(\alphab) + \tfrac{k}{\tau^2} \Ib \right\}^{-1}, \quad  \mb_{\wb_j} = \frac{k}{\tau^2} \Vb_{\wb_j} (\yb_j - \Xb_j \betab);
\end{align}
therefore, \eqref{ap:ws} and \eqref{eq:post-wj} imply that the complete conditional distribution of $\wb_j^*$ has density $N(\mb_{\wb_j^*}, \Vb_{\wb_j^*})$, where
\begin{align}
  \mb_{\wb_j^*} &= \EE(\wb_j^* \mid \text{rest}) = \Cb_{*, j} (\alphab) \Cb_{j, j}^{-1}(\alphab) \EE (\wb_j \mid \text{rest}) \nonumber\\
  &=  \Cb_{*, j} (\alphab)  \left\{ \Cb_{j,j}(\alphab) + \tfrac{\tau^2}{k} \Ib \right\}^{-1} (\yb_j - \Xb_j \betab) \label{mm-ws}
\end{align}
and
\begin{align}
  \label{vv-ws}
  \Vb_{\wb_j^*} &= \var(\wb_j^* \mid \text{rest}) = \EE \left\{ \var(\wb_j^* \mid \wb_j) \mid \text{rest} \right\} + \text{var} \left\{ \EE(\wb_j^* \mid \wb_j) \mid \text{rest} \right\} \nonumber\\
                                 &= \Cb_{*,*}(\alphab) -  \Cb_{*, j} (\alphab) \Cb_{j, j}^{-1}(\alphab) \Cb_{j, *}(\alphab) + \Cb_{*, j} (\alphab) \Cb_{j, j}^{-1}(\alphab)   \Vb_{\wb_j} \Cb_{j, j}^{-1} (\alphab) \Cb_{j, *} (\alphab).% \nonumber\\
                %&= \Cb_{*,*}(\alphab) -  \Cb_{*, j} (\alphab) \left\{ \Cb_{j,j}(\alphab) + \tfrac{\tau^2}{k} \Ib \right\}^{-1} \Cb_{j, *} (\alphab) ;
\end{align}

Finally, we derive the posterior predictive distribution of $\yb_j^*$ given the rest. If $\betab_j$, $\tau_j^2$, $\wb_j^*$ are the samples from the $j$th subset posterior distribution of $\betab$, $\tau^2$, and $\wb^*$, then \eqref{ap:parent_proc} implies that $\yb_j^*$ given the rest is sampled as
\begin{align*}
  \yb_j^* = \Xb_j \betab_j + \wb_j^* + \epsilonb^*_j, \quad \epsilonb^*_j \sim N(\zero, \tau_j^2 \Ib);
\end{align*}
therefore, the complete conditional distribution of $\yb_j^*$ is  $N(\mub_{\yb_j^*}, \Vb_{\yb_j^*})$, where
\begin{align}
  \mub_{\yb_j^*} = \Xb_j \betab_j + \wb_j^*, \quad \Vb_{\yb_j^*}  = \tau^2_j\Ib. \label{mm-vv-ys}
\end{align}

All full conditionals except that of $\alphab$ are analytically tractable in terms of standard distributions in subset $j$ ($j=1, \ldots, k$). The Gibbs sampler with a Metropolis-Hastings step iterates between the following four steps until sufficient number of samples of $\betab_j, \alphab_j, \wb_j^*$, and $\yb_j^*$ are drawn post convergence to the stationary distribution:
\begin{enumerate}
\item Sample $\betab_j$ from $N(\mub_{j \betab}, \Vb_{j \betab})$, where $\mub_{j \betab}$ and $\Vb_{j \betab}$ are defined in \eqref{eq:post-b}.
\item Sample $\alphab_j$ using the Metropolis-Hastings algorithm from the $j$th subset posterior density (up to constants) of $\alphab_j$ in \eqref{eq:post-a} with a normal random walk proposal.
\item Sample $\wb_{j}^*$ from $N(\mub_{\wb_j^*}, \Vb_{\wb_j^*})$, where $\mub_{\wb_j^*}$ and $\Vb_{\wb_j^*}$ are defined in \eqref{mm-ws} and \eqref{vv-ws}.
\item Sample $\yb_{j}^*$ from $N(\mub_{\yb_j^*}, \Vb_{\yb_j^*})$, where $\mub_{\yb_j^*}$ and $\Vb_{\yb_j^*}$ are defined in \eqref{mm-vv-ys}.
\end{enumerate}

\section{Sampling from the subset posterior distributions using a low-rank GP prior}
\label{sec:univ-mpp-regr}

For clarity, we focus on the modified predictive process (MPP) prior as a representative example of low-rank GP prior. The Gibbs sampling algorithm derived in this section is easily extended to other low-rank GP priors. Following the setup in Section \ref{sec:univ-spat-regr}, we assume that $C_{\alphab}(\sbb_{ji}, \sbb_{ji'}) = \sigma^2 \rho(\sbb_{ji}, \sbb_{ji'}; \phi)$ and $D_{\alphab} (\sbb_{ji}, \sbb_{ji'})= \one(i = i') \tau^2$, $\alphab = (\sigma^2, \tau^2, \phi)$, the prior distributions on $\betab$ and $\alphab$ have the same forms as in \eqref{ap:priors1}, and $\Scal_j$ contains the locations in subset $j$. Following  the previous section, we assume that $m_1 = \cdots = m_k = m$ and $k = n / m$.
The only change in this section is that the spatial process $w(\cdot)$ in \eqref{ap:parent_proc} is assigned a MPP prior derived from parent GP prior in \eqref{ap:priors2}. MPP projects the parent GP $w(\cdot)$ onto a subspace spanned by its realization over a set of $r$ locations, $\mathcal{S}^{(0)}=\{\sbb_{1}^{(0)}, \ldots, \sbb_{r}^{(0)}\}$, known as the ``knots'', where no conditions are imposed on  $\Scal \cap \Scal^{(0)}$. Let $\cb(\cdot, \mathcal{S}^{(0)})= \left\{ C_{\alphab}(\cdot,\sbb_{1}^{(0)}), \ldots,C_{\alphab}(\cdot,\sbb_{r}^{(0)}) \right\}^T$ and
$\wb^{(0)}= \left\{ w(\sbb_{1}^{(0)}), \ldots,w(\sbb_{r}^{(0)}) \right\}^T$ be $r\times 1$ vectors and $\Cb(\mathcal{S}^{(0)})$ be an $r\times r$ matrix whose
$(i,j)$th entry is $C_{\alphab}(\sbb_{i}^{(0)},\sbb_{j}^{(0)})$. The MPP prior defines
\begin{align}
  \label{mpp-prior}
  \tilde w(\cdot)  = \cb^T(\cdot, \mathcal{S}^{(0)}) \Cb(\mathcal{S}^{(0)})^{-1}\wb^{(0)} + \tilde \epsilon(\cdot),
\end{align}
where the processes $\tilde \epsilon(\cdot)$ and $w(\cdot)$ are mutually independent and $\tilde \epsilon(\cdot)$ is a GP with mean 0, $\cov\{\tilde \epsilon(\ab), \tilde \epsilon(\bb)\} = \delta(\ab) \one({\ab=\bb}) $ for any $\ab, \bb \in \Dcal$, and
\begin{align*}
  \delta(\sbb_{ji}) =  C_{\alphab}(\sbb_{ji},\sbb_{ji}) - \cb^T(\sbb_{ji}, \mathcal{S}^{(0)}) \Cb(\mathcal{S}^{(0)})^{-1}\cb(\sbb_{ji}, \Scal^{(0)}).
\end{align*}
The process $\tilde w (\cdot)$ is a low-rank GP with mean 0 and $$\cov\{\tilde w(\ab), \tilde w(\bb)\}  = \cb^T(\ab, \mathcal{S}^{(0)}) \Cb(\mathcal{S}^{(0)})^{-1}\cb(\bb, \mathcal{S}^{(0)}) + \delta(\ab) \one_{\ab=\bb}$$ for any $\ab, \bb \in \Dcal$.  If we replace $w(\cdot)$ by $\tilde w(\cdot)$ in \eqref{ap:parent_proc}, then
\begin{align}\label{ap:mpp}
  y(\sbb_{ji}) = \xb(\sbb_{ji})^T \betab + \tilde w(\sbb_{ji}) + \epsilon(\sbb_{ji}), \quad j = 1, \ldots, k,\quad i=1, \ldots, m_j.
\end{align}
and our definition in \eqref{mpp-prior} implies that $\tilde w(\cdot)$ is assigned a MPP prior \citep{Finetal09}.

We start by defining mean and covariance functions specific to univariate spatial regression using MPP. Let $\tilde \wb_j = \{ \tilde w(\sbb_{j1}), \ldots, \tilde  w(\sbb_{jm})\}$ and $\tilde \wb_j^* = \{ \tilde w(\sbb_{1}), \ldots, \tilde  w(\sbb_{l})\}$. The MPP prior is identical to the FITC approximation in sparse approximate GP regression, so we use the FITC notations to simplify the description of posterior computations \citep{QuiRas05}. Define $\Qb_{j, j} = \Cb_{j, 0}(\alphab) \Cb^{-1}(\mathcal{S}^{(0)}) \Cb_{0, j}(\alphab)$, where $\cov\{w(\sbb_{ja}), w(\sbb_b^{(0)})\} = \left\{ \Cb_{j, 0} (\alphab) \right\}_{a,b}$ ($a=1, \ldots, m$; $b=1, \ldots, r$) and $\Cb_{0, j}(\alphab) = \Cb_{j,0}^T(\alphab)$. The density of $(\tilde \wb_j, \tilde \wb_j^*)$ under the GP prior implied by MPP is $N\{\zero, \tilde \Cb({\alphab})\}$, where $2 \times 2$ block form of $\tilde \Cb({\alphab})$ is defined using
\begin{align}
  \label{eq:fitc}
  &\tilde \Cb_{j, j}(\alphab) = \Qb_{j,j} + \diag\{\Cb_{j, j}(\alphab) - \Qb_{j, j}\} = \cov(\tilde \wb_j, \tilde \wb_j),  \nonumber\\    
  &\tilde \Cb_{j, *}(\alphab) = \Qb_{j,*} = \cov(\tilde \wb_j, \tilde \wb_j^*), \nonumber\\
  &\tilde \Cb_{*, *}(\alphab) = \Qb_{*,*} + \diag\{\Cb_{*, *}(\alphab) - \Qb_{*, *}\} = \cov(\tilde \wb_j^*, \tilde \wb^*_j), \nonumber\\
  &\tilde \Cb_{*, j}(\alphab) = \Qb_{*,j} = \cov(\tilde \wb_j^*, \tilde \wb_j).
\end{align}

Stochastic approximation is implemented following Section \ref{sec:univ-spat-regr}. First, consider stochastic approximation for the likelihood of $\betab$ and $\alphab$. Integrating out $\tilde \wb_j$ in \eqref{ap:mpp} gives
\begin{align}
  \label{sp:stoc1-mpp}
  \yb_j = \Xb_j \betab + \tilde \etab_j, \quad \tilde \etab_j \sim N\{\zero, \tilde \Cb_{j,j}(\alphab) + \tau^2 \Ib\}.
\end{align}
The likelihood of $\betab$ and $\alphab$ given $\yb_j$, $\Xb_j$ after stochastic approximation is
\begin{align}
  \label{eq:llk-ab-mpp}
  \{l_j(\betab, \alphab)\}^k &= (2 \pi)^{-mk/2}\vert \tilde \Cb_{j,j}(\alphab) + \tau^2 \Ib \vert^{-k/2} e^{ -\frac{k}{2} \left( \yb_j - \Xb_j \betab \right)^T \left\{ \tilde  \Cb_{j,j}(\alphab) + \tau^2 \Ib  \right\}^{-1} \left( \yb_j - \Xb_j \betab \right)}.
\end{align}
The prior distribution for $\betab$ in \eqref{ap:priors1}, the pseudo likelihood in \eqref{eq:llk-ab-mpp}, and Bayes rule implies that the density of the $j$th subset posterior distribution for $\betab$ given the rest is
\begin{align*}
  \betab \mid \text{rest}  \propto e^{ -\frac{1}{2} \left( \yb_j - \Xb_j \betab \right)^T \left[ k^{-1} \left\{\tilde \Cb_{j,j}(\alphab) + \tau^2 \Ib \right\}\right]^{-1} \left( \yb_j - \Xb_j \betab \right)} \,
                                   e^{ -\frac{1}{2} \left( \betab - \mub_{\betab} \right)^T \Sigmab_{\betab}^{-1} \left( \betab - \mub_{\betab} \right)}.
\end{align*}
This implies that the complete conditional distribution of $\betab_j$ has density $N(\tilde \mb_{j \betab}, \tilde \Vb_{j \betab})$, where
\begin{align}
  \label{eq:post-b-mpp}
  &\tilde \Vb_{j \betab} =  \left[ k \Xb_j^T  \{\tilde \Cb_{j,j}(\alphab) + \tau^2 \Ib\}^{-1}\Xb_j + \Sigmab_{\betab}^{-1} \right] ^{-1}, \quad  \nonumber\\
  &\tilde \mb_{j \betab} = \tilde \Vb_{j \betab} \left[ k \Xb_j^T \left\{ \tilde \Cb_{j,j}(\alphab) + \tau^2 \Ib \right\}^{-1} \yb_j + \Sigmab_{\betab}^{-1}  \mub_{\betab} \right] .
\end{align}
Following Section \ref{sec:univ-spat-regr}, the density of the $j$th subset posterior distribution for $\alphab$ given the rest is
\begin{align}
  \alphab \mid \text{rest}  \propto \, & \vert \tilde \Cb_{j,j}(\alphab) + \tau^2 \Ib \vert^{-k/2} e^{ -\frac{1}{2} \left( \yb_j - \Xb_j \betab \right)^T  \left[ k^{-1} \left\{  \tilde  \Cb_{j,j}(\alphab) + \tau^2 \Ib \right\}\right]^{-1} \left( \yb_j - \Xb_j \betab \right)} \nonumber \\
  & \left(\sigma^2  \right)^{- a_{\sigma} - 1} e^{- b_{\sigma} / \sigma^2}  \left(\tau^2  \right)^{- a_{\tau} - 1} e^{- b_{\tau} / \tau^2}  (b_{\phi} - a_{\phi})^{-1}. \label{eq:post-a-mpp}
\end{align}
This density does not have a standard form, so we use a Metropolis-Hastings step with a normal random walk proposal and sample $\alphab_j$ using the \texttt{metrop} function in the R package \texttt{mcmc}.

Second, we derive the posterior predictive distribution of $\tilde \wb_j^*$ given the rest.  The MPP prior on $(\tilde \wb_j, \tilde \wb_j^*)$ implies that the density of $\tilde \wb_j^*$ given $\tilde \wb_j$ is
\begin{align}
  \label{ap:ws-mpp}
  \tilde \wb_{j}^* \mid \tilde \wb_j \sim N \left\{ \tilde \Cb_{*, j} (\alphab) \tilde \Cb_{j, j}^{-1} (\alphab) \tilde \wb_j, \tilde \Cb_{*,*}(\alphab) -  \tilde \Cb_{*, j}(\alphab) \tilde \Cb_{j, j}^{-1}(\alphab) \tilde \Cb_{j, *}(\alphab)\right\}.
\end{align}
Given $\alphab$, $\betab$, $\yb_j$, and $\Xb_j$, \eqref{ap:mpp} implies that the likelihood of $\tilde \wb_j$ after stochastic approximation is
\begin{align}
    \label{ap:llk-wj-mpp}
    \{l_j(\tilde \wb_j)\}^k &= (2 \pi)^{-mk/2} \vert \tau^2 \Ib \vert^{-k/2} e^{ -\frac{k}{2 \tau^2} \left( \yb_j - \Xb_j \betab - \tilde \wb_j \right)^T \left( \yb_j - \Xb_j \betab - \tilde \wb_j \right)}.
\end{align}
The MPP prior on $\tilde \wb_j$, the pseudo likelihood in \eqref{ap:llk-wj-mpp}, and Bayes rule implies that the density of the subset posterior distribution for $\tilde \wb_j$ given the rest is
\begin{align*}
\tilde  \wb_j \mid \text{rest}  \propto e^{ -\frac{1}{2 \tau^2 / k} \left( \yb_j - \Xb_j \betab - \tilde \wb_j \right)^T \left( \yb_j - \Xb_j \betab - \tilde \wb_j \right)}\,
                                   e^{ -\frac{1}{2} \tilde  \wb_j^T \tilde \Cb_{j,j}^{-1}(\alphab) \tilde  \wb_j }.
\end{align*}
This implies that the complete conditional distribution of $\tilde \wb_j$ has density $N(\mb_{\tilde \wb_j}, \Vb_{\tilde \wb_j})$, where
\begin{align}
  \label{eq:post-wj-mpp}
  \Vb_{\tilde \wb_j} =  \left\{\tilde  \Cb_{j,j}^{-1}(\alphab) + \tfrac{k}{\tau^2} \Ib \right\}^{-1}, \quad  \mb_{\tilde \wb_j} = \frac{k}{\tau^2} \Vb_{\tilde \wb_j} (\yb_j - \Xb_j \betab);
\end{align}
therefore, \eqref{ap:ws-mpp} and \eqref{eq:post-wj-mpp} imply that the complete conditional distribution of $\tilde \wb_j^*$ has density $N(\mb_{\tilde \wb_j^*}, \Vb_{\tilde \wb_j^*})$, where
\begin{align}
  \mb_{\tilde \wb_j^*} &= \EE(\tilde \wb_j^* \mid \text{rest}) = \tilde \Cb_{*, j} (\alphab) \tilde \Cb_{j, j}^{-1}(\alphab) \EE (\tilde \wb_j \mid \text{rest}) \nonumber\\
                     &= \tilde  \Cb_{*, j} (\alphab)  \left\{ \tilde \Cb_{j,j}(\alphab) + \tfrac{\tau^2}{k} \Ib \right\}^{-1} (\yb_j - \Xb_j \betab) \label{mm-ws-mpp}
\end{align}
and
\begin{align}
  \label{vv-ws-mpp}
  \Vb_{\tilde \wb_j^*} &= \var(\tilde \wb_j^* \mid \text{rest}) = \EE \left\{ \var(\tilde \wb_j^* \mid \tilde \wb_j) \mid \text{rest} \right\} + \text{var} \left\{ \EE(\tilde \wb_j^* \mid \tilde  \wb_j) \mid \text{rest} \right\} \nonumber\\
                                 &= \tilde \Cb_{*,*}(\alphab) -  \tilde \Cb_{*, j} (\alphab) \tilde \Cb_{j, j}^{-1}(\alphab) \tilde \Cb_{j, *}(\alphab) + \tilde \Cb_{*, j} (\alphab) \tilde \Cb_{j, j}^{-1}(\alphab)   \Vb_{\tilde \wb_j} \tilde \Cb_{j, j}^{-1} (\alphab) \tilde \Cb_{j, *} (\alphab).% \nonumber\\
                %&= \Cb_{*,*}(\alphab) -  \Cb_{*, j} (\alphab) \left\{ \Cb_{j,j}(\alphab) + \tfrac{\tau^2}{k} \Ib \right\}^{-1} \Cb_{j, *} (\alphab) ;
\end{align}

Finally, we derive the posterior predictive distribution of $\yb_j^*$ given the rest. If $\betab_j$, $\tau_j^2$, $\tilde \wb_j^*$ are the samples from the $j$th subset posterior distribution of $\betab$, $\tau^2$, and $\tilde \wb^*$, then \eqref{ap:mpp} implies that $\yb_j^*$ given the rest is sampled as
\begin{align*}
  \yb_j^* = \Xb_j \betab_j + \tilde \wb_j^* + \epsilonb^*_j, \quad \epsilonb^*_j \sim N(\zero, \tau_j^2 \Ib);
\end{align*}
therefore, the complete conditional distribution of $\yb_j^*$ has density $N(\tilde \mub_{\yb_j^*}, \tilde \Vb_{\yb_j^*})$, where
\begin{align}
  \tilde \mub_{\yb_j^*} = \Xb_j \betab_j + \tilde \wb_j^*, \quad   \tilde  \Vb_{\yb_j^*}  = \tau^2_j\Ib. \label{mm-vv-ys-mpp}
\end{align}

All full conditionals except that of $\alphab$ are analytically tractable in terms of standard distributions in subset $j$ ($j=1, \ldots, k$). The Gibbs sampler with a Metropolis-Hastings step iterates between the following four steps until sufficient number of samples of $\betab_j, \alphab_j, \tilde \wb_j^*$, and $\yb_j^*$ are drawn post convergence to the stationary distribution:
\begin{enumerate}
\item Sample $\betab_j$ from $N(  \tilde \mub_{j \betab},   \tilde \Vb_{j \betab})$, where $  \tilde \mub_{j \betab}$ and $  \tilde \Vb_{j \betab}$ are defined in \eqref{eq:post-b-mpp}.
\item Sample $\alphab_j$ using the Metropolis-Hastings algorithm from the $j$th subset posterior density (up to constants) of $\alphab_j$ in \eqref{eq:post-a-mpp} with a normal random walk proposal.
\item Sample $  \tilde \wb_{j}^*$ from $N(\mub_{  \tilde \wb_j^*}, \Vb_{  \tilde \wb_j^*})$, where $\mub_{  \tilde \wb_j^*}$ and $\Vb_{  \tilde \wb_j^*}$ are defined in \eqref{mm-ws-mpp} and \eqref{vv-ws-mpp}.
\item Sample $\yb_{j}^*$ from $N(  \tilde \mub_{\yb_j^*},   \tilde \Vb_{\yb_j^*})$, where $  \tilde \mub_{\yb_j^*}$ and $  \tilde \Vb_{\yb_j^*}$ are defined in \eqref{mm-vv-ys-mpp}.
\end{enumerate}

\bibliographystyle{Chicago}
\bibliography{papers}

\begin{thebibliography}{}

\bibitem[\protect\citeauthoryear{Agueh and Carlier}{Agueh and
  Carlier}{2011}]{AguCar11}
Agueh, M. and G.~Carlier (2011).
\newblock {Barycenters in the Wasserstein space}.
\newblock {\em SIAM Journal on Mathematical Analysis\/}~{\em 43\/}(2),
  904--924.

\bibitem[\protect\citeauthoryear{Anderson, Lee, and Dean}{Anderson
  et~al.}{2014}]{anderson2014identifying}
Anderson, C., D.~Lee, and N.~Dean (2014).
\newblock Identifying clusters in bayesian disease mapping.
\newblock {\em Biostatistics\/}~{\em 15\/}(3), 457--469.

\bibitem[\protect\citeauthoryear{Bai, Song, and Raghunathan}{Bai
  et~al.}{2012}]{bai2012joint}
Bai, Y., P.~X.-K. Song, and T.~Raghunathan (2012).
\newblock Joint composite estimating functions in spatiotemporal models.
\newblock {\em Journal of the Royal Statistical Society: Series B (Statistical
  Methodology)\/}~{\em 74\/}(5), 799--824.

\bibitem[\protect\citeauthoryear{Banerjee, Carlin, and Gelfand}{Banerjee
  et~al.}{2014}]{Banetal14}
Banerjee, S., B.~P. Carlin, and A.~E. Gelfand (2014).
\newblock {\em {Hierarchical Modeling and Analysis for Spatial Data}}.
\newblock {CRC Press}.

\bibitem[\protect\citeauthoryear{Banerjee, Finley, Waldmann, and
  Ericsson}{Banerjee et~al.}{2010}]{Banetal10}
Banerjee, S., A.~O. Finley, P.~Waldmann, and T.~Ericsson (2010).
\newblock Hierarchical spatial process models for multiple traits in large
  genetic trials.
\newblock {\em Journal of the American Statistical Association\/}~{\em
  105\/}(490), 506--521.

\bibitem[\protect\citeauthoryear{Banerjee, Gelfand, Finley, and Sang}{Banerjee
  et~al.}{2008}]{Banetal08}
Banerjee, S., A.~E. Gelfand, A.~O. Finley, and H.~Sang (2008).
\newblock {Gaussian predictive process models for large spatial data sets}.
\newblock {\em Journal of the Royal Statistical Society: Series B (Statistical
  Methodology)\/}~{\em 70\/}(4), 825--848.

\bibitem[\protect\citeauthoryear{Barbian and Assun{\c{c}}{\~a}o}{Barbian and
  Assun{\c{c}}{\~a}o}{2017}]{barbian2017spatial}
Barbian, M.~H. and R.~M. Assun{\c{c}}{\~a}o (2017).
\newblock Spatial subsemble estimator for large geostatistical data.
\newblock {\em Spatial Statistics\/}~{\em 22}, 68--88.

\bibitem[\protect\citeauthoryear{Berliner, Wikle, and Cressie}{Berliner
  et~al.}{2000}]{berliner2000long}
Berliner, L.~M., C.~K. Wikle, and N.~Cressie (2000).
\newblock Long-lead prediction of pacific ssts via bayesian dynamic modeling.
\newblock {\em Journal of Climate\/}~{\em 13\/}(22), 3953--3968.

\bibitem[\protect\citeauthoryear{Bickel and Freedman}{Bickel and
  Freedman}{1981}]{BicFre81}
Bickel, P.~J. and D.~A. Freedman (1981).
\newblock Some asymptotic theory for the bootstrap.
\newblock {\em The Annals of Statistics\/}~{\em 9\/}(6), 1196--1217.

\bibitem[\protect\citeauthoryear{Cheng and Shang}{Cheng and
  Shang}{2017}]{CheSha15}
Cheng, G. and Z.~Shang (2017).
\newblock Computational limits of divide-and-conquer method.
\newblock {\em Journal of Machine Learning Research (to appear)\/}.

\bibitem[\protect\citeauthoryear{Cressie and Johannesson}{Cressie and
  Johannesson}{2008}]{CreJoh08}
Cressie, N. and G.~Johannesson (2008).
\newblock Fixed rank kriging for very large spatial data sets.
\newblock {\em Journal of the Royal Statistical Society: Series B (Statistical
  Methodology)\/}~{\em 70\/}(1), 209--226.

\bibitem[\protect\citeauthoryear{Cressie and Wikle}{Cressie and
  Wikle}{2011}]{CreWik11}
Cressie, N. and C.~Wikle (2011).
\newblock {\em Statistics for Spatio-Temporal Data}.
\newblock Wiley, Hoboken, NJ.

\bibitem[\protect\citeauthoryear{Cuturi and Doucet}{Cuturi and
  Doucet}{2014}]{CutDou14}
Cuturi, M. and A.~Doucet (2014).
\newblock {Fast computation of Wasserstein barycenters}.
\newblock In {\em Proceedings of the 31st International Conference on Machine
  Learning, JMLR W\&CP}, Volume~32.

\bibitem[\protect\citeauthoryear{Datta, Banerjee, Finley, and Gelfand}{Datta
  et~al.}{2016}]{Datetal15}
Datta, A., S.~Banerjee, A.~O. Finley, and A.~E. Gelfand (2016).
\newblock {Hierarchical nearest-neighbor Gaussian process models for large
  geostatistical datasets}.
\newblock {\em Journal of the American Statistical Association\/}~{\em
  111\/}(514), 800--812.

\bibitem[\protect\citeauthoryear{Di~Lorenzo, Schneider, Cobb, Franks, Chhak,
  Miller, McWilliams, Bograd, Arango, Curchitser, et~al.}{Di~Lorenzo
  et~al.}{2008}]{Dietal08}
Di~Lorenzo, E., N.~Schneider, K.~Cobb, P.~Franks, K.~Chhak, A.~Miller,
  J.~McWilliams, S.~Bograd, H.~Arango, E.~Curchitser, et~al. (2008).
\newblock North {P}acific gyre oscillation links ocean climate and ecosystem
  change.
\newblock {\em Geophysical Research Letters\/}~{\em 35\/}(8).

\bibitem[\protect\citeauthoryear{Eidsvik, Shaby, Reich, Wheeler, and
  Niemi}{Eidsvik et~al.}{2014}]{Eidetal14}
Eidsvik, J., B.~A. Shaby, B.~J. Reich, M.~Wheeler, and J.~Niemi (2014).
\newblock Estimation and prediction in spatial models with block composite
  likelihoods.
\newblock {\em Journal of Computational and Graphical Statistics\/}~{\em
  23\/}(2), 295--315.

\bibitem[\protect\citeauthoryear{Finley, Banerjee, and Gelfand}{Finley
  et~al.}{2015}]{Finetal15}
Finley, A.~O., S.~Banerjee, and A.~E. Gelfand (2015).
\newblock sp{B}ayes for large univariate and multivariate point-referenced
  spatio-temporal data models.
\newblock {\em Journal of Statistical Software\/}~{\em 63\/}(i13).

\bibitem[\protect\citeauthoryear{Finley, Datta, Cook, Morton, Andersen, and
  Banerjee}{Finley et~al.}{2017}]{Finetal17}
Finley, A.~O., A.~Datta, B.~C. Cook, D.~C. Morton, H.~E. Andersen, and
  S.~Banerjee (2017).
\newblock Applying nearest neighbor {G}aussian processes to massive spatial
  data sets: Forest canopy height prediction across {T}anana {V}alley {A}laska.
\newblock {\em arXiv preprint arXiv:1702.00434\/}.

\bibitem[\protect\citeauthoryear{Finley, Sang, Banerjee, and Gelfand}{Finley
  et~al.}{2009}]{Finetal09}
Finley, A.~O., H.~Sang, S.~Banerjee, and A.~E. Gelfand (2009).
\newblock Improving the performance of predictive process modeling for large
  datasets.
\newblock {\em Computational Statistics \& Data Analysis\/}~{\em 53\/}(8),
  2873--2884.

\bibitem[\protect\citeauthoryear{Furrer, Genton, and Nychka}{Furrer
  et~al.}{2006}]{Furetal12}
Furrer, R., M.~G. Genton, and D.~Nychka (2006).
\newblock Covariance tapering for interpolation of large spatial datasets.
\newblock {\em Journal of Computational and Graphical Statistics\/}.

\bibitem[\protect\citeauthoryear{Gelfand, Banerjee, Sirmans, Tu, and
  Ong}{Gelfand et~al.}{2007}]{gelfand2007multilevel}
Gelfand, A.~E., S.~Banerjee, C.~Sirmans, Y.~Tu, and S.~E. Ong (2007).
\newblock Multilevel modeling using spatial processes: Application to the
  singapore housing market.
\newblock {\em Computational Statistics \& Data Analysis\/}~{\em 51\/}(7),
  3567--3579.

\bibitem[\protect\citeauthoryear{Gelfand, Diggle, Guttorp, and Fuentes}{Gelfand
  et~al.}{2010}]{Geletal10}
Gelfand, A.~E., P.~Diggle, P.~Guttorp, and M.~Fuentes (Eds.) (2010).
\newblock {\em {Handbook of Spatial Statistics}}.
\newblock Boca Raton, FL: {CRC Press}.

\bibitem[\protect\citeauthoryear{Gramacy and Apley}{Gramacy and
  Apley}{2015}]{gramacy2015local}
Gramacy, R.~B. and D.~W. Apley (2015).
\newblock Local gaussian process approximation for large computer experiments.
\newblock {\em Journal of Computational and Graphical Statistics\/}~{\em
  24\/}(2), 561--578.

\bibitem[\protect\citeauthoryear{Gratiet and Garnier}{Gratiet and
  Garnier}{2015}]{GraGar15}
Gratiet, L.~L. and J.~Garnier (2015).
\newblock {Asymptotic analysis of the learning curve for Gaussian process
  regression}.
\newblock {\em Machine Learning\/}~{\em 98}, 407--433.

\bibitem[\protect\citeauthoryear{Guhaniyogi and Banerjee}{Guhaniyogi and
  Banerjee}{2017}]{guhaniyogi2017meta}
Guhaniyogi, R. and S.~Banerjee (2017).
\newblock Meta-kriging: {S}calable {B}ayesian modeling and inference for
  massive spatial datasets.
\newblock {\em UCSC Technical Report\/}.

\bibitem[\protect\citeauthoryear{Guhaniyogi, Finley, Banerjee, and
  Gelfand}{Guhaniyogi et~al.}{2011}]{Guhetal11}
Guhaniyogi, R., A.~O. Finley, S.~Banerjee, and A.~E. Gelfand (2011).
\newblock {Adaptive Gaussian predictive process models for large spatial
  datasets}.
\newblock {\em Environmetrics\/}~{\em 22\/}(8), 997--1007.

\bibitem[\protect\citeauthoryear{Guhaniyogi and Sanso}{Guhaniyogi and
  Sanso}{2017}]{GuhanSanso2017}
Guhaniyogi, R. and B.~Sanso (2017).
\newblock Large multiscale spatial modeling using tree shrinkage priors.
\newblock {\em UCSC Technical Report\/}.

\bibitem[\protect\citeauthoryear{Guinness}{Guinness}{2016}]{guinness2016permutation}
Guinness, J. (2016).
\newblock Permutation methods for sharpening {G}aussian process approximations.
\newblock {\em arXiv preprint arXiv:1609.05372\/}.

\bibitem[\protect\citeauthoryear{Harville}{Harville}{1997}]{Har97}
Harville, D.~A. (1997).
\newblock {\em Matrix algebra from a statistician's perspective}, Volume~1.
\newblock Springer.

\bibitem[\protect\citeauthoryear{Heaton, Christensen, and Terres}{Heaton
  et~al.}{2017}]{heaton2017nonstationary}
Heaton, M.~J., W.~F. Christensen, and M.~A. Terres (2017).
\newblock Nonstationary gaussian process models using spatial hierarchical
  clustering from finite differences.
\newblock {\em Technometrics\/}~{\em 59\/}(1), 93--101.

\bibitem[\protect\citeauthoryear{Heaton, Datta, Finley, Furrer, Guhaniyogi,
  Gerber, Gramacy, Hammerling, Katzfuss, Lindgren, et~al.}{Heaton
  et~al.}{2017}]{heaton2017methods}
Heaton, M.~J., A.~Datta, A.~Finley, R.~Furrer, R.~Guhaniyogi, F.~Gerber, R.~B.
  Gramacy, D.~Hammerling, M.~Katzfuss, F.~Lindgren, et~al. (2017).
\newblock Methods for analyzing large spatial data: A review and comparison.
\newblock {\em arXiv preprint arXiv:1710.05013\/}.

\bibitem[\protect\citeauthoryear{Katzfuss}{Katzfuss}{2017}]{Kat16}
Katzfuss, M. (2017).
\newblock A multi-resolution approximation for massive spatial datasets.
\newblock {\em Journal of the American Statistical Association\/}~{\em
  112\/}(517).

\bibitem[\protect\citeauthoryear{Kaufman and Shaby}{Kaufman and
  Shaby}{2013}]{KauSha13}
Kaufman, C. and B.~Shaby (2013).
\newblock The role of the range parameter for estimation and prediction in
  geostatistics.
\newblock {\em Biometrika\/}~{\em 100\/}(2), 473--484.

\bibitem[\protect\citeauthoryear{Kaufman, Schervish, and Nychka}{Kaufman
  et~al.}{2008}]{Kauetal08}
Kaufman, C.~G., M.~J. Schervish, and D.~W. Nychka (2008).
\newblock Covariance tapering for likelihood-based estimation in large spatial
  data sets.
\newblock {\em Journal of the American Statistical Association\/}~{\em
  103\/}(484), 1545--1555.

\bibitem[\protect\citeauthoryear{Knorr-Held and Ra{\ss}er}{Knorr-Held and
  Ra{\ss}er}{2000}]{knorr2000bayesian}
Knorr-Held, L. and G.~Ra{\ss}er (2000).
\newblock Bayesian detection of clusters and discontinuities in disease maps.
\newblock {\em Biometrics\/}~{\em 56\/}(1), 13--21.

\bibitem[\protect\citeauthoryear{Lemos and Sans{\'o}}{Lemos and
  Sans{\'o}}{2009}]{LemSan09}
Lemos, R.~T. and B.~Sans{\'o} (2009).
\newblock A spatio-temporal model for mean, anomaly, and trend fields of north
  {Atlantic} sea surface temperature.
\newblock {\em Journal of the American Statistical Association\/}~{\em
  104\/}(485), 5--18.

\bibitem[\protect\citeauthoryear{Li, Srivastava, and Dunson}{Li
  et~al.}{2017}]{Lietal16}
Li, C., S.~Srivastava, and D.~B. Dunson (2017).
\newblock Simple, scalable and accurate posterior interval estimation.
\newblock {\em Biometrika\/}~{\em 104\/}(3), 665--680.

\bibitem[\protect\citeauthoryear{Lindgren, Rue, and Lindstr{\"o}m}{Lindgren
  et~al.}{2011}]{lindgren2011explicit}
Lindgren, F., H.~Rue, and J.~Lindstr{\"o}m (2011).
\newblock An explicit link between gaussian fields and gaussian markov random
  fields: the stochastic partial differential equation approach.
\newblock {\em Journal of the Royal Statistical Society: Series B (Statistical
  Methodology)\/}~{\em 73\/}(4), 423--498.

\bibitem[\protect\citeauthoryear{Minsker, Srivastava, Lin, and Dunson}{Minsker
  et~al.}{2014}]{Minetal14}
Minsker, S., S.~Srivastava, L.~Lin, and D.~Dunson (2014).
\newblock Scalable and robust {B}ayesian inference via the median posterior.
\newblock In {\em Proceedings of the 31st International Conference on Machine
  Learning (ICML-14)}, pp.\  1656--1664.

\bibitem[\protect\citeauthoryear{Minsker, Srivastava, Lin, and Dunson}{Minsker
  et~al.}{2017}]{Minetal17}
Minsker, S., S.~Srivastava, L.~Lin, and D.~B. Dunson (2017).
\newblock Robust and scalable bayes via a median of subset posterior measures.
\newblock {\em The Journal of Machine Learning Research\/}~{\em 18\/}(1),
  4488--4527.

\bibitem[\protect\citeauthoryear{Miroshnikov and Conlon}{Miroshnikov and
  Conlon}{2014}]{MirCon14}
Miroshnikov, A. and E.~Conlon (2014).
\newblock {\em parallelMCMCcombine: Methods for combining independent subset
  Markov chain Monte Carlo (MCMC) posterior samples to estimate a posterior
  density given the full data set}.
\newblock R package version 1.0.

\bibitem[\protect\citeauthoryear{Neiswanger, Wang, and Xing}{Neiswanger
  et~al.}{2014}]{NeiWanXin13}
Neiswanger, W., C.~Wang, and E.~Xing (2014).
\newblock Asymptotically exact, embarrassingly parallel {MCMC}.
\newblock In {\em Proceedings of the 30th International Conference on
  Uncertainty in Artificial Intelligence}, pp.\  623--632.

\bibitem[\protect\citeauthoryear{Nychka, Bandyopadhyay, Hammerling, Lindgren,
  and Sain}{Nychka et~al.}{2015}]{Nycetal15}
Nychka, D., S.~Bandyopadhyay, D.~Hammerling, F.~Lindgren, and S.~Sain (2015).
\newblock {A multiresolution Gaussian process model for the analysis of large
  spatial datasets}.
\newblock {\em Journal of Computational and Graphical Statistics\/}~{\em
  24\/}(2), 579--599.

\bibitem[\protect\citeauthoryear{Nychka, Hammerling, Sain, and Lenssen}{Nychka
  et~al.}{2016}]{Nycetal16}
Nychka, D., D.~Hammerling, S.~Sain, and N.~Lenssen (2016).
\newblock Latticekrig: Multiresolution kriging based on markov random fields.
\newblock R package version 6.4.

\bibitem[\protect\citeauthoryear{Qui{\~n}onero-Candela and
  Rasmussen}{Qui{\~n}onero-Candela and Rasmussen}{2005}]{QuiRas05}
Qui{\~n}onero-Candela, J. and C.~E. Rasmussen (2005).
\newblock A unifying view of sparse approximate gaussian process regression.
\newblock {\em Journal of Machine Learning Research\/}~{\em 6\/}(Dec),
  1939--1959.

\bibitem[\protect\citeauthoryear{{R Development Core Team}}{{R Development Core
  Team}}{2017}]{R17}
{R Development Core Team} (2017).
\newblock {\em {R: A Language and Environment for Statistical Computing}}.
\newblock Vienna, Austria: R Foundation for Statistical Computing.

\bibitem[\protect\citeauthoryear{Raskutti, Wainwright, and Yu}{Raskutti
  et~al.}{2012}]{Rasetal12}
Raskutti, G., M.~J. Wainwright, and B.~Yu (2012).
\newblock Minimax-optimal rates for sparse additive models over kernel classes
  via convex programming.
\newblock {\em Journal of Machine Learning Research\/}~{\em 13\/}(Feb),
  389--427.

\bibitem[\protect\citeauthoryear{Rue, Martino, and Chopin}{Rue
  et~al.}{2009}]{Rueetal09}
Rue, H., S.~Martino, and N.~Chopin (2009).
\newblock {Approximate {B}ayesian inference for latent Gaussian models by using
  integrated nested Laplace approximations}.
\newblock {\em Journal of the Royal Statistical Society: Series B (Statistical
  Methodology)\/}~{\em 71\/}(2), 319--392.

\bibitem[\protect\citeauthoryear{Sang and Huang}{Sang and
  Huang}{2012}]{SanHua12}
Sang, H. and J.~Z. Huang (2012).
\newblock A full scale approximation of covariance functions for large spatial
  data sets.
\newblock {\em Journal of the Royal Statistical Society: Series B (Statistical
  Methodology)\/}~{\em 74\/}(1), 111--132.

\bibitem[\protect\citeauthoryear{Savitsky and Srivastava}{Savitsky and
  Srivastava}{2018}]{SavSri16}
Savitsky, T.~D. and S.~Srivastava (2018).
\newblock Scalable {Bayes} under informative sampling.
\newblock {\em Scandinavian Journal of Statistics\/}.

\bibitem[\protect\citeauthoryear{Scott, Blocker, Bonassi, Chipman, George, and
  McCulloch}{Scott et~al.}{2016}]{Scoetal16}
Scott, S.~L., A.~W. Blocker, F.~V. Bonassi, H.~A. Chipman, E.~I. George, and
  R.~E. McCulloch (2016).
\newblock {Bayes and big data: the consensus Monte Carlo algorithm}.
\newblock {\em International Journal of Management Science and Engineering
  Management\/}~{\em 11\/}(2), 78--88.

\bibitem[\protect\citeauthoryear{Shang and Cheng}{Shang and
  Cheng}{2015}]{ShaChe15}
Shang, Z. and G.~Cheng (2015).
\newblock Nonparametric {B}ayesian aggregation for massive data.
\newblock {\em arXiv preprint arXiv:1508.04175\/}.

\bibitem[\protect\citeauthoryear{Simpson, Lindgren, and Rue}{Simpson
  et~al.}{2012}]{simpson2012order}
Simpson, D., F.~Lindgren, and H.~Rue (2012).
\newblock In order to make spatial statistics computationally feasible, we need
  to forget about the covariance function.
\newblock {\em Environmetrics\/}~{\em 23\/}(1), 65--74.

\bibitem[\protect\citeauthoryear{Srivastava, Cevher, Dinh, and
  Dunson}{Srivastava et~al.}{2015}]{Srietal15}
Srivastava, S., V.~Cevher, Q.~Dinh, and D.~Dunson (2015).
\newblock {WASP}: {S}calable {B}ayes via barycenters of subset posteriors.
\newblock In {\em Proceedings of the 18th International Conference on
  Artificial Intelligence and Statistics}, pp.\  912--920.

\bibitem[\protect\citeauthoryear{Srivastava, Li, and Dunson}{Srivastava
  et~al.}{2018}]{Srietal17}
Srivastava, S., C.~Li, and D.~B. Dunson (2018).
\newblock {S}calable {B}ayes via {B}arycenter in {W}asserstein {S}pace.
\newblock {\em Journal of Machine Learning Research\/}~{\em 19}, (to appear).

\bibitem[\protect\citeauthoryear{Staib, Claici, Solomon, and Jegelka}{Staib
  et~al.}{2017}]{Staetal17}
Staib, M., S.~Claici, J.~Solomon, and S.~Jegelka (2017).
\newblock Parallel streaming {W}asserstein barycenters.
\newblock {\em arXiv preprint arXiv:1705.07443\/}.

\bibitem[\protect\citeauthoryear{Stein}{Stein}{2012}]{stein2012interpolation}
Stein, M.~L. (2012).
\newblock {\em Interpolation of spatial data: some theory for kriging}.
\newblock Springer Science \& Business Media.

\bibitem[\protect\citeauthoryear{Stein}{Stein}{2014}]{stein2014limitations}
Stein, M.~L. (2014).
\newblock Limitations on low rank approximations for covariance matrices of
  spatial data.
\newblock {\em Spatial Statistics\/}~{\em 8}, 1--19.

\bibitem[\protect\citeauthoryear{Stein, Chi, and Welty}{Stein
  et~al.}{2004}]{Steetal04}
Stein, M.~L., Z.~Chi, and L.~J. Welty (2004).
\newblock Approximating likelihoods for large spatial data sets.
\newblock {\em Journal of the Royal Statistical Society: Series B (Statistical
  Methodology)\/}~{\em 66\/}(2), 275--296.

\bibitem[\protect\citeauthoryear{Szabo and van Zanten}{Szabo and van
  Zanten}{2017}]{SzaVan17}
Szabo, B. and H.~van Zanten (2017).
\newblock An asymptotic analysis of distributed nonparametric methods.
\newblock {\em arXiv preprint arXiv:1711.03149\/}.

\bibitem[\protect\citeauthoryear{van~der Vaart and van Zanten}{van~der Vaart
  and van Zanten}{2011}]{VarZan11}
van~der Vaart, A. and H.~van Zanten (2011).
\newblock {Information rates of nonparametric Gaussian process methods}.
\newblock {\em Journal of Machine Learning Research\/}~{\em 12\/}(Jun),
  2095--2119.

\bibitem[\protect\citeauthoryear{van~der Vaart and van Zanten}{van~der Vaart
  and van Zanten}{2008a}]{van2008rates}
van~der Vaart, A.~W. and J.~H. van Zanten (2008a).
\newblock {Rates of contraction of posterior distributions based on Gaussian
  process priors}.
\newblock {\em Annals of Statistics\/}~{\em 36\/}(3), 1435--1463.

\bibitem[\protect\citeauthoryear{van~der Vaart and van Zanten}{van~der Vaart
  and van Zanten}{2008b}]{VarZan08b}
van~der Vaart, A.~W. and J.~H. van Zanten (2008b).
\newblock Reproducing kernel {H}ilbert spaces of {G}aussian priors.
\newblock In {\em Pushing the limits of contemporary statistics: contributions
  in honor of Jayanta K. Ghosh}, pp.\  200--222. Institute of Mathematical
  Statistics.

\bibitem[\protect\citeauthoryear{van~der Vaart and van Zanten}{van~der Vaart
  and van Zanten}{2009}]{van2009adaptive}
van~der Vaart, A.~W. and J.~H. van Zanten (2009).
\newblock {Adaptive bayesian estimation using a Gaussian random field with
  inverse gamma bandwidth}.
\newblock {\em Annals of Statistics\/}~{\em 37\/}(5B), 2655--2675.

\bibitem[\protect\citeauthoryear{Vecchia}{Vecchia}{1988}]{Vec88}
Vecchia, A.~V. (1988).
\newblock Estimation and model identification for continuous spatial processes.
\newblock {\em Journal of the Royal Statistical Society: Series B (Statistical
  Methodology)\/}, 297--312.

\bibitem[\protect\citeauthoryear{Wikle}{Wikle}{2010}]{wikle2010low}
Wikle, C.~K. (2010).
\newblock Low-rank representations for spatial processes.
\newblock {\em Handbook of Spatial Statistics\/}, 107--118.

\bibitem[\protect\citeauthoryear{Wikle and Holan}{Wikle and
  Holan}{2011}]{wikle2011polynomial}
Wikle, C.~K. and S.~H. Holan (2011).
\newblock Polynomial nonlinear spatio-temporal integro-difference equation
  models.
\newblock {\em Journal of Time Series Analysis\/}~{\em 32\/}(4), 339--350.

\bibitem[\protect\citeauthoryear{Yang, Bhattacharya, and Pati}{Yang
  et~al.}{2017}]{yang2017supnorm}
Yang, Y., A.~Bhattacharya, and D.~Pati (2017).
\newblock Frequentist coverage and sup-norm convergence rate in gaussian
  process regression.
\newblock {\em arXiv preprint arXiv:1708.04753\/}.

\bibitem[\protect\citeauthoryear{Yang, Pilanci, Wainwright, et~al.}{Yang
  et~al.}{2017}]{Yanetal17a}
Yang, Y., M.~Pilanci, M.~J. Wainwright, et~al. (2017).
\newblock Randomized sketches for kernels: Fast and optimal nonparametric
  regression.
\newblock {\em The Annals of Statistics\/}~{\em 45\/}(3), 991--1023.

\bibitem[\protect\citeauthoryear{Zhang}{Zhang}{2004}]{Zha04}
Zhang, H. (2004).
\newblock Inconsistent estimation and asymptotically equal interpolations in
  model-based geostatistics.
\newblock {\em Journal of the American Statistical Association\/}~{\em
  99\/}(465), 250--261.

\bibitem[\protect\citeauthoryear{Zhang}{Zhang}{2005}]{Zha05}
Zhang, T. (2005).
\newblock Learning bounds for kernel regression using effective data
  dimensionality.
\newblock {\em Neural Computation\/}~{\em 17\/}(9), 2077--2098.

\bibitem[\protect\citeauthoryear{Zhang, Duchi, and Wainwright}{Zhang
  et~al.}{2015}]{Zhaetal15}
Zhang, Y., J.~C. Duchi, and M.~J. Wainwright (2015).
\newblock Divide and conquer kernel ridge regression: a distributed algorithm
  with minimax optimal rates.
\newblock {\em Journal of Machine Learning Research\/}~{\em 16}, 3299--3340.

\end{thebibliography}

%\begin{biography}[example-image-1x1]{A.~One}
%Please check with the journal's author guidelines whether author biographies are required. They are usually only included for review-type articles, and typically require photos and brief biographies (up to 75 words) for each author.
%\bigskip
%\bigskip
%\end{biography}

%\graphicalabstract{example-image-1x1}{Please check the journal's author guildines for whether a graphical abstract, key points, new findings, or other items are required for display in the Table of Contents.}

\end{document}